\keywords{
Algebraic logic,
bunched logic,
concurrent Kleene algebra,
correspondence theory,
hyperdoctrine,
Kripke semantics,
modal logic,
non-classical logic,
predicate logic,
program logic,
separation logic,
Stone-type duality,
substructural logic.
}
\newcommand{\logicfont}[1]{\textbf{#1}}
\newcommand{\catfont}[1]{\mathrm{#1}}
\newcommand{\algfont}[1]{\mathbb{#1}}
\newcommand{\framefont}[1]{\mathcal{#1}}
\newcommand{\functorfont}[1]{\mathcal{#1}}
\newcommand{\Prop}{\mathrm{Prop}}
\newcommand{\Valuation}{\mathcal{V}}
\newcommand{\pow}[1]{\ensuremath{\mathcal{P}(#1)}}
\newcommand{\filter}[1]{[#1)} 
\newcommand{\ideal}[1]{(#1]} 
\newcommand{\dbrace}[1]{\{\!\{ #1 \}\!\}}
\newcommand{\Atom}[1]{\mbox{\rm #1}}
\newcommand{\gcomp}{\mathop{@}\nolimits}
\newcommand{\gcompE}[2]{#1 \mathop{@_{\mathcal{E}}}\nolimits#2}  
\newcommand*{\myalign}[2]{\multicolumn{1}{#1}{#2}}
\newcommand*{\vtiny}[1]{{\text{\scalebox{.7}{#1}}}}
\newcommand{\gimp}{\ensuremath{\mathop{- \!\!\!\!\! - \!\!\! \blacktriangleright}}}
\newcommand{\limp}{\ensuremath{\mathop{\blacktriangleright \!\!\!\! - \!\!\!\!\! -}}}
\newcommand{\wand}{\mathbin{-\mkern-6mu*}} 
\newcommand{\dnaw}{\mathbin{*\mkern-6mu-}} 
\newcommand\mtop{\top^{*}}
\newcommand{\mor}{%
  \mathbin{\ooalign{$\vee$\cr\hss\raisebox{0.9ex}{\scriptsize $*$}\hss}}} 
\newcommand\lslash{\mathbin{\ooalign{$\slash$\cr
  \hidewidth\raise1ex\hbox{\scriptsize${*}\mkern4mu$}\cr}}} 
\newcommand\rslash{\mathbin{\ooalign{$\backslash$\cr
  \hidewidth\raise1ex\hbox{\scriptsize${*}\mkern-1mu$}\cr}}} 
\newcommand\mbot{\mathop{\ooalign{$\bot$\cr
  \hidewidth\raise0.95ex\hbox{\scriptsize${*}\mkern-2mu$}\cr}}} 
\newcommand\mneg{ \mathbin{\ooalign{$\neg$\cr\hss\raisebox{0.4ex}{\scriptsize $*$}\hss}}}
\newcommand{\septraction}{%
  \mathrel{\mbox{$\hspace*{-0.03em}\mathord{-}\hspace*{-0.66em}
  \mathord{-}\hspace*{-0.155em}\mathord{\circledast}$\hspace*{0.05em}}}}
\newcommand{\Cslash}{\mathbin{\ooalign{$\backslash$\cr
  \hidewidth\raise0.8ex\hbox{\scriptsize${\blacktriangledown}\mkern-0.3mu$}\cr}}}
\newcommand{\Rres}{\ensuremath{\mathop{- \!\!\!\!\! - \!\!\! \bullet}}}
\newcommand{\Lres}{\ensuremath{\mathop{\bullet \!\!\!\! - \!\!\!\!\! -}}}
\newcommand{\Interp}[1]{\llbracket#1\rrbracket}
\newcommand\Parimp{\mathbin{\ooalign{$\setminus$\cr
  \hidewidth\raise1ex\hbox{\scriptsize${*}\mkern-2mu$}\cr}}}
\newcommand\Parbot{\mathbin{\ooalign{$\bot$\cr
  \hidewidth\raise1ex\hbox{\scriptsize${*}\mkern-2mu$}\cr}}}
\newcommand{\PrimePredicate}[2]{\ensuremath{P(#1) = \begin{cases} 1 & \text{if } #2 \\
				0 & \text{otherwise} \end{cases}}}
\newcommand{\realization}[1]{\lfloor#1\rfloor}
\newcommand{\rseq}{\mathbin{-\mkern-2mu\triangleright}} 
\newcommand{\lseq}{\mathbin{\triangleright\mkern-2mu-}} 
\newcommand{\mseq}{\mathbin{\ooalign{$\hidewidth;\hidewidth$\cr$\phantom{\land}$}}}
\begin{document}

\title{Stone-Type Dualities for Separation Logics}

\author[S. Docherty]{Simon Docherty\rsuper{a}}
\address{\lsuper{a}University College London, London, UK}
\email{simon.docherty@ucl.ac.uk}

\author[D. Pym]{David Pym\rsuper{{a, b}}}
\address{\lsuper{b}The Alan Turing Institute, London, UK}
\email{d.pym@ucl.ac.uk}

\begin{abstract}
Stone-type duality theorems, which relate algebraic and relational/topological models, are
important tools in logic because --- in addition to elegant abstraction --- they strengthen
soundness and completeness to a categorical equivalence, yielding a framework through
which both algebraic and topological methods can be brought to bear on a logic. We give a
systematic treatment of Stone-type duality for the structures that interpret bunched logics,
starting with the weakest systems, recovering the familiar \logicfont{BI} and Boolean BI (\logicfont{BBI}), and extending
to both classical and intuitionistic Separation Logic. 
We demonstrate the uniformity and modularity of this analysis by additionally capturing the
bunched logics obtained by extending \logicfont{BI} and \logicfont{BBI} with modalities and multiplicative connectives corresponding
to disjunction, negation and falsum. This includes the logic of separating modalities (\logicfont{LSM}), De Morgan BI (\logicfont{DMBI}), Classical BI (\logicfont{CBI}), and the sub-classical family of logics extending Bi-intuitionistic (B)BI (\logicfont{Bi}(\logicfont{B})\logicfont{BI}). We additionally obtain as corollaries
soundness and completeness theorems for the specific Kripke-style models of these logics
as presented in the literature: for \logicfont{DMBI}, the sub-classical logics extending \logicfont{BiBI} and a new bunched logic, Concurrent Kleene BI (connecting our work to Concurrent Separation Logic), this is the first time soundness and completeness theorems have been proved. We thus obtain a comprehensive semantic account of the multiplicative variants of all standard propositional connectives in the bunched logic setting. This approach synthesises a variety of techniques
from modal, substructural and categorical logic and contextualizes the `resource semantics'
interpretation underpinning Separation Logic amongst them. 
This enables the application of algebraic and topological methods to both Separation
Logic and the systems of bunched logics it is built upon. Conversely, the new notion of
\emph{indexed frame} (generalizing the standard memory model of Separation Logic) and its
associated completeness proof can easily be adapted to other non-classical predicate logics.
\end{abstract}

\maketitle

\section{Introduction}
\subsection{Background}
Bunched logics, beginning with O'Hearn and Pym's \logicfont{BI}~\cite{OP99}, have
proved to be exceptionally useful tools in modelling and reasoning about
computational and information-theoretic phenomena such as resources, the
structure of complex systems, and access control~\cite{LGL,LGL-AC,IJCAR}. Perhaps the most striking example is Separation
Logic~\cite{Reynolds:LICS,YO02} (via Pointer Logic~\cite{IO00}), a specific
theory of first-order (Boolean) BI with primitives for mutable data structures.

\begin{figure}
    \hrule
    \vspace{1mm}
    \centering
    \includegraphics[scale=0.25]{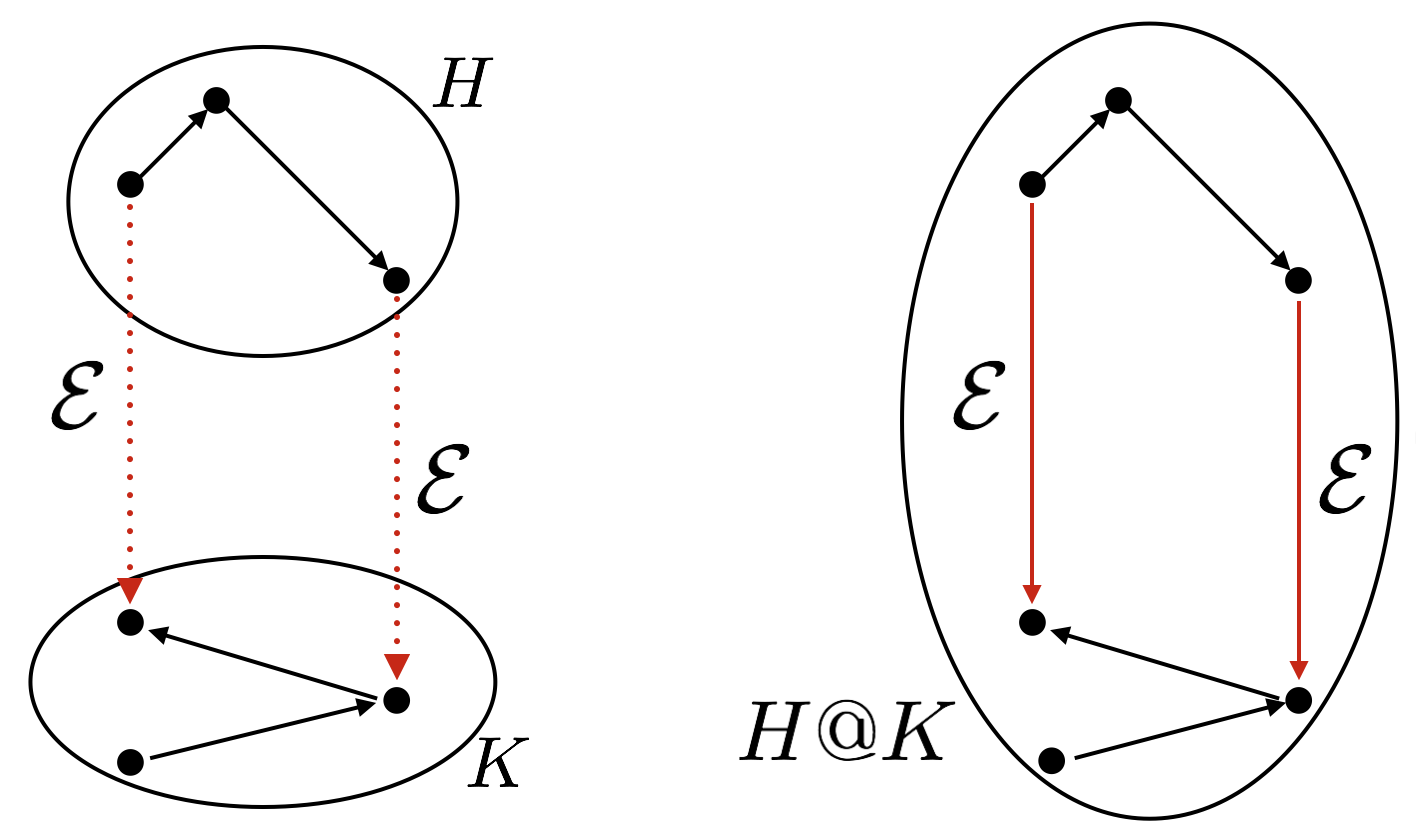}
    \caption{{A layered graph $H \gcomp_\mathcal{E} K$}}
    \vspace{1mm}
    \hrule%
    \label{fig:layered-graph}
\end{figure}

The weakest bunched systems are the so-called layered graph logics~\cite{LGL,IJCAR}.
These logics have a multiplicative conjunction that is
neither associative nor commutative, together with its associated
implications, and additives that may be classical or intuitionistic.
These systems can be used to describe the decomposition of directed graphs into layers (see Fig.~\ref{fig:layered-graph}),
with applications such as complex
systems modelling (e.g.,~\cite{LGL,IJCAR}) and issues in security concerning the
relationship of policies and the systems to which they are intended to
apply (e.g.,~\cite{LGL-AC,IJCAR}). Strengthening the multiplicative conjunction to
be associative and commutative and adding a multiplicative unit yields \logicfont{BI}, for intuitionistic additives,
and Boolean BI (\logicfont{BBI}), for classical additives. A number of extensions of these logics can also be defined which include multiplicative variants on all of the standard propositional connectives~\cite{displayed,CBI,subclassical}. Further extensions
include additive and multiplicative modalities and, with the addition of
parametrization of modalities on actions, Hennessy--Milner-style process 
logics~\cite{CP09,AP2016}. Yet further extensions include additive and
multiplicative epistemic modalities~\cite{LSM,epistemic}, with applications in
security modelling.

All of the applications of bunched logics to reasoning about
computational and informa\-tion-theoretic phenomena essentially rely on
the interpretation of the truth-functional models of these systems
known as \emph{resource semantics}. Truth-functional models of bunched
logics are, essentially, constructed from pre- or partially ordered
partial monoids~\cite{GMP05} which, in resource semantics, are interpreted as
describing how resource-elements can be combined (monoid composition)
and compared (order). The program logic known as \emph{Separation Logic}~\cite{IO00,Reynolds:LICS,YO02}
is a specific theory of first-order bunched logic based on the partial monoid of elements of
the heap (with the order being simply equality). Separation Logic has
found industrial-strength application to static analysis through
Facebook's Infer tool (\url{fbinfer.com}).

Stone's representation theorem for Boolean algebras~\cite{Stone}
establishes that every Boolean algebra is isomorphic to a field of sets. Specifically,
every Boolean algebra $\algfont{A}$ is isomorphic to the algebra of
clopen subsets of its associated \emph{Stone space}~\cite{StoneSpaces}
$S(\algfont{A})$. This result generalizes to a
family of Stone-type duality theorems which establish equivalences
between certain categories of topological spaces and categories of
partially ordered sets.  From the point of view of logic, Stone-type
dualities strengthen the semantic equivalence of truth-functional (such
as \logicfont{BI}'s resource semantics or Kripke's semantics for intuitionistic
logic) and algebraic (such as BI algebras or Heyting
algebras) models to a dual equivalence of categories. This is useful for a number of reasons: on the
one hand, it provides a theoretically convenient abstract characterization of
semantic interpretations and, on the other, it provides a systematic
approach to soundness and completeness theorems, via the close
relationship between the algebraic structures and Hilbert-type proof
systems. Beyond this, Stone-type dualities set up a framework through which techniques
from both algebra and topology can be brought to bear on a logic.

\subsection{Contributions}

In this paper, we give a systematic account of resource semantics via a family of Stone-type
duality theorems that encompass the range of
systems from the layered graph logics, via \logicfont{BI} and \logicfont{BBI}, to
Separation Logic. Our analysis is extended to bunched logics featuring multiplicative variants of all of the standard propositional connectives~\cite{displayed,CBI,subclassical} and --- through straightforward combination with the analogous results from the modal logic literature --- is additionally given for the modal and
epistemic systems extending \logicfont{(B)BI}~\cite{modalbi,LSM,epistemic}. As corollaries we retrieve the soundness and completeness of the standard Kripke models in the literature as well as several new ones.


Soundness and completeness theorems for bunched logics and their extensions tend to be proved through labelled tableaux countermodel procedures~\cite{LSM,epistemic,GMP05,Larchey-Wendling16} that must be specified on a logic-by-logic basis, or by lengthy translations into auxilliary modal logics axiomatized by Sahlqvist~\cite{Sahlqvist1975} formulae~\cite{CBI,subclassical,context}. A notable exception to this (and precursor of the completeness result for \logicfont{(B)BI} given in the present work) is Galmiche \& Larchey-Wendling's completeness result for what they call relational \logicfont{BBI}~\cite{Galmiche2006}. We predict our framework will increase the ease with which completeness theorems can be proved for future bunched logics, as the family of duality theorems are modular in the sense that all that need be verified in an extension of a bunched logic is that the additional axioms and structure that define the extension are also witnessed by the duality. In particular, we give an exhaustive treatment of the structure required for every multiplicative propositional connective in the bunched logic setting as well as the correspondence theory for a selection of axioms that includes all those that define known systems. The only remaining ingredient for a comprehensive treatment of bunched logic semantics is the identification of a Sahlqvist-like class of bunched logic formulas from which Kripke completeness can automatically be obtained. However, the results of the present work lay a clear foundation for such a class to be identified, as the duality theoretic techniques Sambin \& Vaccaro use in their proof of Sahlqvist's theorem~\cite{Sambin1989} should be adaptable given our duality theorems.

Of particular interest in the present paper are bunched logics with intuitionistic additives. In proving completeness for bunched logics via translations into Sahlqvist-axiomatized modal logics, some connectives must be converted into their `diamond-like' De Morgan dual and back using Boolean negation to apply Sahlqvist's theorem. For example, magic wand $\wand$ must be encoded as septraction --- that is, $\varphi \septraction \psi := \neg (\neg \varphi \wand \neg \psi)$
--- in a logically equivalent Sahlqvist-axiomatized modal logic. This is not possible on weaker-than-Boolean bases, which necessarily lack Boolean negation. Our direct proof method side steps this issue and as a result we are able to give the first completeness proofs for the intuitionistic variants of bunched logics that were not amenable to the proof method used for their Boolean counterparts. We further demonstrate the viability of this proof technique for bunched logics by specifying a new logic \logicfont{CKBI}, derived from the interpretation of Concurrent Separation Logic in concurrent Kleene algebra, and prove it sound and complete via duality. More generally, the notion of \emph{indexed frame} (generalizing the standard model of Separation Logic) and its associated completeness proof can easily be adapted to other non-classical predicate logics.


All of the structures given in existing algebraic and relational approaches to
Separation Logic --- including~\cite{Brotherston2014,CalcagnoOY07,DANG2011221,Dockins2009,DongolGS15} --- are instances of the structures utilized in the present work. Thus these approaches are all proved
sound with respect to the standard semantics on store-heap pairs by the results of this paper.
In particular, we strengthen Biering et al.'s~\cite{Biering2005} interpretation of Separation Logic in BI hyperdoctrines to a dual equivalence of categories. To do so we synthesise a variety of related work from modal~\cite{tarski, correspondence,goldblatt}, relevant~\cite{Entailment2,Urquhart1996}, substructural~\cite{Dunn2008} and categorical logic~\cite{Coumans2010}, and thus mathematically some of what follows may be familiar, even if the application to Separation Logic is not. Much of the theory these areas enjoy is produced by way of algebraic and topological arguments. We hope that by recontextualizing the resource semantics of bunched logics in this way similar theory can be given for both Separation Logic and its underlying systems. As a preliminary example (which can be found in the first author's PhD thesis~\cite{DochertyThesis}), we have used the framework given here to prove a bunched logic variant of the Goldblatt-Thomason theorem, which characterises the classes of model that are definable by bunched logic formulae.

\subsection{Structure}
The paper proceeds as described below. In Section~\ref{sec:preliminaries}, we introduce the core bunched logics: the weakest systems \logicfont{LGL} and \logicfont{ILGL}, the resource logics \logicfont{BI} and \logicfont{BBI}, and finally the program logic Separation Logic (in both classical and intuitionistic form). In Section~\ref{sec:layeredgraphlogics}, we define the algebraic and topological structures suitable for interpreting (\logicfont{I})\logicfont{LGL} and give representation and duality theorems relating them. In Section~\ref{sec:bunchedimplications} these results are extended to the logics of bunched implications, (\logicfont{B})\logicfont{BI}. In Section~\ref{sec:separationlogic}, we consider categorical structure appropriate for giving algebraic and truth-functional semantics for first-order (B)BI (\logicfont{FO}(\textbf{B})\textbf{BI}). We recall how \logicfont{FO}(\textbf{B})\textbf{BI} can be interpreted on (B)BI hyperdoctrines and define new structures called \emph{indexed (B)BI frames}. Crucially, we show that the standard models of Separation Logic are instantiations of an indexed frames. We then extend (\logicfont{B})\logicfont{BI} duality to give a dual equivalence of categories between the category of (B)BI hyperdoctrines and the category of indexed (B)BI frames.
In Section~\ref{sec:extension} we extend the theory of the previous sections to bunched logics featuring the full range of multiplicative propositional connectives, as well as normal and separating modal extensions of (\logicfont{B})\logicfont{BI}. In doing so we give new completeness theorems for De Morgan BI and the intuitionistic subclassical bunched logics, and specify a new propositional logic connecting our work to Concurrent Separation Logic, \logicfont{CKBI}, which also fits into our framework. In Section~\ref{sec:conclusions}, we consider possibilities for further work as a result of the duality theorems.

The present work is an extended and expanded version of a conference
paper presented at MFPS XXXIII~\cite{MFPS}. New to this version are the corresponding results for the intuitionistic variants of the logics considered there, as well as the extension to bunched logics with additional modalities and multiplicatives. 
Proofs have been included wherever length permits. A further expansion of this work can be found in the first author's PhD thesis~\cite{DochertyThesis}.

\section{Preliminaries}\label{sec:preliminaries}

In this section we give an introduction to the core bunched logics we consider in this paper: layered graph logics, logics of bunched implications and Separation Logic. In particular, we focus on the intended models of these logics, which will be generalised to classes of frames amenable to our duality theoretic framework in what follows.

\subsection{Layered Graph Logics}\label{subsec:graphlogics}

We begin by presenting the layered graph logics \logicfont{LGL}~\cite{LGL} and \logicfont{ILGL}~\cite{IJCAR}.
First, we give a formal, graph-theoretic definition of layered graph that, we claim,
captures the concept as used in modelling complex systems~\cite{LGL,LGL-AC,IJCAR}.
Informally, two layers in a directed graph are connected by a specified set
of edges, each element of which starts in the upper layer and ends in the lower layer.

Given a directed graph, $\mathcal{G}$, we refer to its \emph{vertex set} and its \emph{edge set} by
$V(\mathcal{G})$ and $E(\mathcal{G})$ respectively, while its set of subgraphs is denoted
$S\!g(\mathcal{G})$, with $H \subseteq \mathcal{G} \text{ iff } H \in S\!g(\mathcal{G})$. For a
\emph{distinguished edge set} $\mathcal{E} \subseteq E(\mathcal{G})$, the \emph{reachability relation}
$\leadsto_{\mathcal{E}}$ on $S\!g(\mathcal{G})$ is defined $H \leadsto_{\mathcal{E}} K$ iff
a vertex of $K$ can be reached from a vertex of $H$ by an $\mathcal{E}$-edge. This generates a partial composition $@_{\mathcal{E}}$ on subgraphs, with $\gcompE{H}{K} \downarrow$ (where $\downarrow$ denotes definedness) iff
$V(H) \cap V(K) = \emptyset, H \leadsto_{\mathcal{E}} K \text{ and } K \not \leadsto_{\mathcal{E}} H$. Output is given by the graph union of the two subgraphs
and the $\mathcal{E}$-edges between them. We say $G$ is  a \emph{layered graph} (with
 respect to $\mathcal{E}$) if there exist $H$, $K$ such that $\gcompE{H}{K}\downarrow$ and
$G = \gcompE{H}{K}$ (see Fig.~\ref{fig:layered-graph}). Layering is evidently neither commutative
nor (because of the definedness condition) associative.

\begin{figure}
\centering
\hrule
\vspace{-3mm}
\setlength\tabcolsep{7pt}
\setlength\extrarowheight{15pt}
\begin{tabular}{lclclcl}
0. & $\cfrac{}{\neg \neg \varphi \vdash \varphi}$   & 1. & $\cfrac{}{\varphi\vdash\varphi}$ & 2. & $\cfrac{}{\varphi\vdash\top}$  \\
3.  & $\cfrac{}{\bot\vdash\varphi}$  & 4. & $\cfrac{\eta \vdash \varphi \quad \eta \vdash \psi}{\eta \vdash \varphi \wedge \psi}$  & 5. & $\cfrac{\varphi \vdash \psi_1 \wedge \psi_2}{\varphi \vdash \psi_i}$ \\
6.  & $\cfrac{\varphi \vdash \psi}{\eta \wedge\varphi \vdash \psi }$  & 7. & $\cfrac{\eta \vdash \psi \quad \varphi \vdash \psi}{\eta \lor \varphi \vdash \psi}$
& 8.  & $\cfrac{\varphi\vdash\psi_i} {\varphi\vdash\psi_1\lor\psi_2}$ \\
9. & $\cfrac{\eta \vdash\varphi\to\psi \quad \eta\vdash\varphi}{\eta\vdash\psi}$ &
10. &  $\cfrac{\eta\wedge\varphi\vdash\psi}{\eta\vdash\varphi\to\psi}$ & 11. & $\cfrac{\xi\vdash\varphi\quad\eta\vdash\psi}{\xi*\eta\vdash\varphi*\psi}$  \\
12. & $\cfrac{\eta*\varphi\vdash\psi}{\eta\vdash\varphi\wand\psi}$ & 13. & $\cfrac{\xi\vdash\varphi\wand\psi\quad\eta\vdash\varphi}{\xi*\eta\vdash\psi}$ &  14. & $\cfrac{\eta*\varphi\vdash\psi}{\varphi\vdash\eta\dnaw\psi}$  \\
& & 15. & $\cfrac{\xi\vdash\varphi\dnaw\psi\quad\eta\vdash\varphi}{\eta*\xi\vdash\psi}$
\end{tabular}
\caption[Hilbert rules for layered graph logics.]{Hilbert rules for layered graph logics. $i=1$ or $2$ for 5.\ and 8.}
\vspace{1mm}
\hrule%
  \label{fig:hilbert-lgl}
\end{figure}
\begin{figure}
\centering
\hrule
\vspace{1mm}
\setlength\tabcolsep{3pt}
\setlength\extrarowheight{2pt}
\begin{tabular}{c c c c l r c c c c r r c}
$G$ & $\vDash$ & $\Atom{p}$ & iff & \myalign{l}{$G \in \Valuation(\Atom{p})$} & &
$G$ & $\vDash$ & $\top$ & & &
 $G  \not\vDash \bot$  \\
$G$ & $\vDash$ & $\varphi \land \psi$ & iff & \myalign{l}{$G \vDash \varphi$ and $G \vDash \psi$} &&
$G$ & $\vDash$ & $\varphi \lor \psi$ & iff & \multicolumn{2}{l}{$G \vDash \varphi$ or} $G \vDash \psi$
\\
$G$ & $\vDash$ & $\varphi \rightarrow \psi$ & iff & \myalign{l}{for all $H \succcurlyeq G$, $H \vDash\varphi$ implies $H$  $\vDash$ $\psi$} \\
$G$ & $\vDash$ & $\varphi * \psi$ & iff & \multicolumn{8}{l}{there exists $H, K$ such that $\gcompE{H}{K}\downarrow$, $\gcompE{H}{K} \preccurlyeq G$, $H \vDash \varphi$ and  $K \vDash \psi$} \\
$G$ & $\vDash$ & $\varphi \wand \psi$ & iff & \multicolumn{8}{l}{for all $H, K$ such that $\gcompE{H}{K}\downarrow$ and $G \preccurlyeq H$,  $H \vDash \varphi$ implies $\gcompE{H}{K} \vDash \psi$} \\
$G$ & $\vDash$ & $\varphi \dnaw \psi$ & iff & \multicolumn{8}{l}{for all $H, K$ such that $\gcompE{K}{H}\downarrow$ and $G \preccurlyeq H$,  $H \vDash \varphi$ implies $\gcompE{K}{H} \vDash \psi$}
\end{tabular}
\caption{Satisfaction on layered graphs for \logicfont{(I)LGL}. \logicfont{LGL} is the case where $\preccurlyeq$ is $=$.}
\vspace{1mm}
\hrule%
\label{fig:sat-LGL}
\end{figure}

Let $\Prop$ be a set of atomic propositions, ranged over by $\mathrm{p}$.
The set of all formulae of $\logicfont{LGL}$ and $\logicfont{ILGL}$ is generated by the following grammar:
\[
\varphi::=\mathrm{p}
\mid \top
\mid \bot
\mid \varphi\wedge\varphi
\mid\varphi\lor\varphi
\mid\varphi\to\varphi
\mid\varphi*\varphi
\mid\varphi\wand\varphi
\mid \varphi\dnaw\varphi .
\]
The connectives above are the standard logical connectives,
together with a (non-commutative and non-associative) multiplicative conjunction, $*$, and its associated implications $\wand$ and $\dnaw$, in the spirit of the Lambek calculus~\cite{Lambek1961, Lambek1993}.
We define $\neg \varphi$ as $\varphi \to \bot$. Hilbert-type systems for the logics are given in Fig.~\ref{fig:hilbert-lgl}: \logicfont{ILGL} is specified by rules 1--15, whilst \logicfont{LGL} is specified by rules 0--15. We note that our notation differs here compared from that found in the work of Collinson et al.~\cite{LGL,LGL-AC} and our previous work~\cite{IJCAR,Docherty2018a}.
There, the multiplicative conjunction is given by $\blacktriangleright$, with the associated implications written $\gimp$ and $\limp$. While this notation has the benefit of directly presenting the non-commutativity of the conjunction, we use the $*$ notation associated with the logic of bunched implications uniformly across all logics we consider. This allows us to uniformly present rules, algebras, and constructions that instantiate the same structure independently of the logic under consideration.

\logicfont{LGL} and \logicfont{ILGL} are interpreted on  \emph{scaffolds}: structures $\mathcal{X} = (\mathcal{G}, \mathcal{E}, X, \preccurlyeq)$ where $\mathcal{G}$ is a directed graph, $\mathcal{E}$ is a distinguished edge set, $X \subseteq S\!g(\mathcal{G})$ is such that --- if $\gcompE{H}{K}\downarrow$ --- $H, K \in X$ iff $\gcompE{H}{K} \in X$ and $\preccurlyeq$ is a preorder on $X$. We note that for any $\mathcal{G}, \mathcal{E}$ and $X$ there are always two ``canonical'' orders one can consider: the subgraph and the supergraph relations.

To model the logic soundly in the intuitionistic case, valuations $\Valuation: \Prop \rightarrow \pow{X}$ (where $\pow{X}$ is the power set of $X$) must be \emph{persistent}: for all $G, H \in X$, if $G \in \Valuation(\Atom{p})$ and $G \preccurlyeq H$, then $H \in \Valuation(\Atom{p})$. This has a spatial interpretation when we consider the order to be the subgraph relation: if \Atom{p} designates that a resource is located in the region $G$, it should also hold of the region $H$ containing $G$.

Given a scaffold $\framefont{X}$ and a persistent valuation
$\Valuation$ the satisfaction relation $\vDash_{\Valuation}$ for \logicfont{ILGL} is inductively defined in Fig.~\ref{fig:sat-LGL}; the particular case where $\preccurlyeq$ is equality yields \logicfont{LGL}. As is necessary for a sound interpretation of intuitionistic logic, persistence extends to all formulas with respect to the semantics of Fig.~\ref{fig:sat-LGL}: for all $\varphi$, $G$, $H$, if $G \vDash_{\Valuation} \varphi$ and $G \preccurlyeq H$ then $H \vDash_{\Valuation} \varphi$. We define validity of $\varphi$ in a model defined on $\mathcal{X}$ to mean for all $G \in X, G \vDash_{\Valuation} \varphi$. Validity of $\varphi$ is defined to be validity in all models. Finally, the entailment relation $\varphi \vDash \psi$ holds iff for all models and all $G$, $G \vDash_{\Valuation} \varphi$ implies $G \vDash_{\Valuation} \psi$.

\subsection{BI and Boolean BI}\label{subsec:bunchedimplications}

Next we present the resource logics \logicfont{BI} and \logicfont{BBI}~\cite{OP99}. Let $\Prop$ be a set of atomic propositions, ranged over by $\Atom{p}$. The set of all formulae of $\logicfont{(B)BI}$ is generated by the following grammar:
\[
\varphi::=\mathrm{p}
\mid \top
\mid \bot
\mid \mtop
\mid \varphi\wedge\varphi
\mid\varphi\lor\varphi
\mid\varphi\to\varphi
\mid\varphi*\varphi
\mid\varphi\wand\varphi.
\]
Once again we have the standard logical connectives, this time joined by a commutative, associative multiplicative conjunction $*$ together with its associated implication $\wand$, and unit $\mtop$. By extending rules 1--15 of Fig.~\ref{fig:hilbert-lgl} with the rules of Fig.~\ref{fig:hilbert-bbi} we obtain a system for \logicfont{BI}; extending rules 0--15 of Fig.~\ref{fig:hilbert-lgl} with the rules of Fig.~\ref{fig:hilbert-bbi} yields a system for \logicfont{BBI}. These rules can be straightforwardly read as enforcing commutativity and associativity of the multiplicative conjunction $*$, the adjointness between $*$ and its associated implication $\wand$, and the fact that $\mtop$ is a unit for $*$.

\begin{figure}
\hrule
    \vspace{1mm}
\setlength\tabcolsep{7pt}
\setlength\extrarowheight{15pt}
\begin{tabular}{lclclc}
16. & $\cfrac{}{(\varphi * \psi) * \xi \vdash \varphi * (\psi * \xi)}$ &
17. & $\cfrac{}{\varphi * \psi \vdash \psi * \varphi}$ &
18. & $\cfrac{}{\varphi * \mtop \dashv \vdash \varphi}$
\end{tabular}

\caption{Rules for the \logicfont{(B)BI} Hilbert Systems, $\mathrm{(B)BI}_{\mathrm{H}}$.}
\vspace{1mm}
\hrule%
\label{fig:hilbert-bbi}
\end{figure}

\begin{figure}
\centering
\hrule
\vspace{1mm}
 \setlength\tabcolsep{3pt}
\setlength\extrarowheight{4pt}
\begin{tabular}{c c c c l r c c c c r r c}
$r$ & $\vDash_{\Valuation}$ & $\Atom{p}$ & iff & \myalign{l}{$r \in \Valuation(\Atom{p})$} & &
$r$ & $\vDash_{\Valuation}$ & $\top$ & & &
 $r  \not\vDash_{\Valuation} \bot$  \\
$r$ & $\vDash_{\Valuation}$ & $\varphi \land \psi$ & iff & \myalign{l}{$r \vDash_{\Valuation} \varphi$ and $r\vDash_{\Valuation} \psi$} &&
$r$ & $\vDash_{\Valuation}$ & $\varphi \lor \psi$ & iff & \multicolumn{2}{l}{$r \vDash_{\Valuation} \varphi$ or} $r \vDash_{\Valuation} \psi$
\\
$r$ & $\vDash_{\Valuation}$ & $\varphi \rightarrow \psi$ & iff & \myalign{l}{for all $r' \succcurlyeq r$, $r' \vDash_{\Valuation}\varphi$ implies $r'$  $\vDash_{\Valuation}$ $\psi$;} && $r$ & $\vDash_{\Valuation}$ & $\mtop$ & iff & \myalign{l}{$r \in E$} \\
$r$ & $\vDash_{\Valuation}$ & $\varphi * \psi$ & iff & \multicolumn{8}{l}{there exists $r', r''$ such that  $r \in r' \circ r''$, $r' \vDash_{\Valuation} \varphi$ and  $r'' \vDash_{\Valuation} \psi$} \\
$r$ & $\vDash_{\Valuation}$ & $\varphi \wand \psi$ & iff & \multicolumn{8}{l}{for all $r', r''$ such that $r'' \in r \circ r'$, $r' \vDash_{\Valuation} \varphi$ implies $r'' \vDash_{\Valuation} \psi$} \\
\end{tabular}
\caption{Satisfaction on UDMFs for \logicfont{(B)BI}. \logicfont{BBI} is the case where $\preccurlyeq$ is =.}
\vspace{1mm}
\hrule%
\label{fig:sat-BBI}
\end{figure}

Models of these logics arise from the analysis of the abstract notion of resource first outlined in the original work on \logicfont{BI}~\cite{OP99}. There, the principal properties were determined to be composability and comparison, formalized by a commutative and associative composition $\circ$ with unit $e$, and a pre- or partial order $\preccurlyeq$, respectively. The interplay between composition and comparison in this analysis is formalized by a \emph{bifunctorality} condition: if $a \preccurlyeq a'$ and $b \preccurlyeq b'$ then $a \circ a' \preccurlyeq b \circ b'$. Structures $\mathcal{M} = (M, \preccurlyeq, \circ, e)$ satisfying these conditions are known as \emph{resource monoids}.

\logicfont{(B)BI} is interpreted on ordered monoidal structures that generalize the notion of resource monoid in various ways. For example, $\circ$ can be generalized to be non-deterministic and/or partial, the unit $e$ can be generalized to a set of units $E$, and the bifunctorality condition can be generalized to a number of different coherence conditions between $\circ$ and $\preccurlyeq$. The majority of these choices define the same notion of validity, although there are some sharp boundaries (an analysis of some of these issues can be found in~\cite{Larchey-Wendling2014}), and all of them can be grouped under the name \emph{resource semantics}. As an example we give a class of models that corresponds closely to the standard intuitionistic model of Separation Logic. This class will be further generalized in Section~\ref{sec:bunchedimplications} for the purposes of the duality theorem.

A \emph{monoidal frame} is a structure $\mathcal{X} = (X, \preccurlyeq, \circ, E)$ s.t. $\preccurlyeq$ is a preorder, $\circ: X^2 \rightarrow \mathcal{P}(X)$ is commutative operation satisfying \emph{non-deterministic associativity}
\[ \forall x, y, z, t, s: s \in x \circ y \text{ and } t\in s \circ z \text{ implies } \exists s'(s' \in y \circ z \text{ and } t \in x \circ s),  \] and $E$ is a set satisfying, for all $x, y \in X$, the conditions
(Unit Existence) $\exists e \in E(x \in x \circ e)$; (Coherence) $x \in y \circ e \land e \in E \rightarrow y \preccurlyeq x$; and (Closure) $e \in E \land e \preccurlyeq e' \rightarrow e' \in E$.
It is \emph{downwards-closed} if, whenever $x \in y \circ z$, $y' \preccurlyeq y$ and $z' \preccurlyeq z$, there exists $x' \preccurlyeq x$ such that $x' \in y' \circ z'$. Conversely it is \emph{upwards-closed} if, whenever $x \in y \circ z$ and $x \preccurlyeq x'$, there exists $y' \succcurlyeq y$ and $z' \succcurlyeq z$ such that $x' \in y' \circ z'$.

The notions of upwards and downwards closure are properties that the majority of models of Separation Logic satisfy. Cao et al.~\cite{CCA17} show that when a model satisfies both upwards and downwards closure, the semantic clauses of $*$ and $\wand$ can be given identically in the classical and intuitionistic cases.

Given an upwards and downwards closed monoidal frame (UDMF) $\mathcal{X}$ and a persistent valuation $\mathcal{V}: \Prop \rightarrow \mathcal{P}(\mathrm{X})$, the satisfaction relation $\vDash_{\Valuation}$ is inductively defined in Fig.~\ref{fig:sat-BBI}. Once again, a semantics for \logicfont{BBI} is obtained as the particular case where the order $\preccurlyeq$ is equality $=$. As with $\logicfont{ILGL}$, persistence extends to all formulas of $\logicfont{BI}$, with upwards and downwards closure ensuring that this is the case for formulas of the form $\varphi * \psi$ and $\varphi \wand \psi$. Validity and entailment are defined in much the same way as the case for \logicfont{(I)LGL}.

\subsection{Separation Logic}\label{subsec:separationlogic}

Separation Logic~\cite{Primer}, introduced by Ishtiaq and O'Hearn~\cite{IO00}, and Reynolds~\cite{Reynolds:LICS}, is an extension of
Hoare's program logic which addresses reasoning about programs that access and mutate shared data structures. The usual
presentation of Separation Logic is based on Hoare triples --- for reasoning about the state of imperative programs ---
of the form $ \{ \,\varphi \, \} \, C \, \{ \,\psi \, \}$,
where $C$ is a program command, $\varphi$ is pre-condition for $C$, and $\psi$ is a post-condition for $C$. The formulas $\varphi$ and $\psi$ are given by the following grammar:
\[
\varphi::= E = E'
\mid E \mapsto F
\mid \top
\mid \bot
\mid \mathrm{Emp}
\mid \varphi\wedge\varphi
\mid\varphi\lor\varphi
\mid\varphi\to\varphi
\mid\varphi*\varphi
\mid\varphi\wand\varphi
\mid \exists v. \varphi
\mid \forall v. \varphi.
\]
Reynolds' programming language is a simple language of commands with a Lisp-like set-up for creating and accessing cons
cells: $C \, ::= \, x := E \mid x := E.i \mid E.i := E' \mid x := cons(E_1,E_2) \mid \dots \,$. Here the expressions $E$
of the language are built up using booleans, variables, etc., $cons$ cells, and atomic expressions.
Separation Logic thus facilitates verification procedures for programs that alter the heap.

A key feature of Separation Logic is the local reasoning provided by the Frame Rule,
\[\frac{\{ \,\varphi \, \} \, C \, \{ \,\psi \, \} }{\{ \,\varphi \ast \chi\, \} \, C \, \{ \,\psi \ast \chi\, \} },\]
where $\chi$ does not include any free variables modified by the program $C$. Static analysis procedures based on the
Frame Rule form the basis of Facebook's Infer tool (\url{fbinfer.com}) that is deployed in its code production. The decomposition of
the analysis that is facilitated by the Frame Rule is critical to the practical deployability of Infer.

Separation Logic can usefully and safely be seen (see~\cite{YO02} for the details) as a presentation of (B)BI Pointer
Logic~\cite{IO00}. The semantics of (B)BI Pointer Logic, a theory of (first-order) \logicfont{(B)BI}, is an instance of \logicfont{(B)BI}'s resource
semantics in which the monoid of resources is constructed from the program's heap. In detail, this model
has two components, the store and the heap. The store is a partial function mapping from variables
to values $a \in \mbox{\rm Val}$ (e.g., integers) and the heap is a partial function from natural numbers to values.
In logic, the store is often called the valuation, and the heap is a possible world.
In programming languages, the store is sometimes called the environment. Within this set-up, the atomic
formulae of (B)BI Pointer Logic include equality between expressions, $E = E'$, and, crucially, the points-to predicate,
$E \mapsto F$.

We use the following additional notation: $dom(h)$ denotes the domain of definition of a heap $h$ and
$dom(s)$ is the domain of  a store $s$; $h \# h'$ denotes that $dom(h) \cap dom(h') = \emptyset$;
$h \cdot h'$ denotes the union of functions with disjoint domains, which is undefined if the domains
overlap; $h \sqsubseteq h'$ denotes that the graph of $h$ is a subgraph of $h'$; $[]$ denotes the \emph{empty heap} that is nowhere defined; $[ f \mid v \mapsto a ]$ is the partial function that is equal to $f$ except that $v$ maps to $a$;
expressions $E$ are built up from variables and constants, and so determine denotations
$\dbrace{E}s \in {\rm Val}$.

With this basic data we can define the satisfaction relations for BI and BBI pointer logic. BI pointer logic is given by extending the \emph{intuitionistic} semantic clause for $\mapsto$,
\[ s, h \vDash E \!\mapsto\! F{\ } \mbox{\rm iff } \dbrace{E}s \in dom(h) \mbox{\rm and } h(\dbrace{E}s) = \dbrace{F}s, \] 
with those in Fig.~\ref{fig:sat-PL}; similarly, BBI pointer logic is given by extending the \emph{classical} semantic clause for $\mapsto$,
\[ s , h \vDash E \!\mapsto\! F{\  } \mbox{\rm  iff }  \dbrace{E}s = {\rm dom}(h) \mbox{\rm and } h(\dbrace{E}s) = \dbrace{F}s, \] 
with those in Fig.~\ref{fig:sat-PL}, where $\sqsubseteq$ is replaced for $=$ in the clauses for $\rightarrow$ and $\mathrm{Emp}$. The judgement, $s, h \vDash \varphi$, says that the assertion $\varphi$ holds for a given store and heap, assuming that the free variables of $\varphi$ are contained in the domain of $s$. The classical interpretation of $\mapsto$ requires $E$ to be the only active address in the current
heap, whereas the intuitionistic interpretation is the weaker judgement that $E$ is \emph{at least} one of the active addresses in the current heap.

\begin{figure}
\centering
\hrule
\vspace{1mm}
 \setlength\tabcolsep{2pt}
\setlength\extrarowheight{4pt}
\begin{tabular}{c c l c l r c c c c r r c}
$s, h$ & $\vDash$ & $E = E'$ & iff & \myalign{l}{$\dbrace{E}s = \dbrace{E'}s$} & &
$s,h$ & $\vDash$ & $\top$ & &&
 $s, h  \not\vDash \bot$  \\
$s, h$ & $\vDash$ & $\varphi \land \psi$ & iff & \myalign{l}{$s, h \vDash \varphi$ and $s, h \vDash \psi$} &&
$s, h$ & $\vDash$ & $\varphi \lor \psi$ & iff & \multicolumn{2}{l}{$s, h \vDash \varphi$ or} $s, h \vDash \psi$
\\
$s, h$ & $\vDash$ & $\varphi \rightarrow \psi$ & iff & \myalign{l}{for all $h' \sqsupseteq h$, $h'$ $\vDash$ $\varphi$ implies $h'$  $\vDash$ $\psi$} && $s, h$ & $\vDash$ & $\Atom{Emp}$ & iff & \myalign{l}{$h \sqsupseteq []$} \\
$s, h$ & $\vDash$ & $\varphi * \psi$ & iff & \multicolumn{8}{l}{there exists $h', h''$ s.t. $h\# h'$, $h = h' \cdot h''$, $s, h' \vDash \varphi$ and  $s, h'' \vDash \psi$} \\
$s, h$ & $\vDash$ & $\varphi \wand\psi$ & iff & \multicolumn{8}{l}{for all $h'$ such that $h \# h',  s, h' \vDash \varphi$ implies $s, h \cdot h' \vDash \psi$} \\
$s, h$ & $\vDash$ & $\exists v. \varphi$ & iff & \multicolumn{8}{l}{there exists $a \in \mathrm{Val}$,  $[s \mid v \mapsto a], h \vDash \varphi$} \\
$s, h$ & $\vDash$ & $\forall v. \varphi$ & iff & \multicolumn{8}{l}{for all $a \in \mathrm{Val}$,  $[s \mid v \mapsto a], h \vDash \varphi$} \\
\end{tabular}
\caption{Satisfaction for (B)BI Pointer Logic. The BBI variant replaces $\sqsupseteq$ with =.}
\vspace{1mm}
\hrule%
\label{fig:sat-PL}
\end{figure}

The technical reason for this difference is the requirement of persistence with respect to the heap extension ordering $\sqsubseteq$ for soundness of the intuitionistic semantics.  
This leads to some quirks in the intuitionistic model: for example, the multiplicative unit $\mathrm{Emp}$ collapses to $\top$ as every heap extends the empty heap. Some aspects remain the same: as heaps are upwards and downwards closed (in the sense defined in Section~\ref{subsec:bunchedimplications}) the semantic clauses for $*$ and $\wand$ can be given identically. Most importantly, in both cases descriptions of larger heaps can be built up using $\ast$, and this coheres with the local reasoning provided by the Frame Rule.

When Separation Logic is given as BI pointer logic it is known as \emph{Intuitionistic Separation Logic}, whereas the formulation as BBI pointer logic is known as \emph{Classical Separation Logic}. Although both were defined in the paper introducing Separation Logic as a theory of bunched logic~\cite{IO00}, the classical variant has taken precedence in both theoretical work and practical implementations. A key exception to this is the higher-order Concurrent Separation Logic framework IRIS, which is based on Intuitionistic Separation Logic~\cite{IRIS}. IRIS requires an underlying intuitionistic logic as it utilizes the `later' modality $\triangleright$~\cite{later}, which collapses to triviality ($\triangleright \varphi \leftrightarrow \top$ for all $\varphi$) in the presence of the law of excluded middle.

\section{Layered Graph Logics}\label{sec:propintduality}\label{sec:layeredgraphlogics}

\subsection{Algebra and Frames for (I)LGL}
We begin our analysis with the weakest systems, \logicfont{LGL} and \logicfont{ILGL}. Each of the logics we consider can be obtained by extending the basic structures associated with these logics and so we are able to systematically extend the theory in each case by accounting for just the extensions to the structure. First, we consider lattice-based algebras suitable for interpreting (\logicfont{I})\logicfont{LGL}.

\begin{defi}[(I)LGL Algebra]\hfill
\begin{enumerate}
\item An \emph{ILGL algebra} is an algebra $\mathbb{A} = (A, \land, \lor, \rightarrow, \top, \bot, *, \wand, \dnaw)$
such that $(A, \land, \lor, \rightarrow, \top, \bot)$ is a Heyting algebra and $*, \wand, \dnaw$ are binary operations on $A$ satisfying, for all $a, b, c \in A$, \[a * b \leq c \text{ iff } a \leq b \wand c \text{ iff } b \leq a \dnaw c.\]
\item A \emph{LGL algebra} is an ILGL algebra $\mathbb{A} = (A, \land, \lor, \rightarrow, \top, \bot, *, \wand, \dnaw)$ for which $(A, \land, \lor, \rightarrow, \top, \bot)$ is a Boolean algebra. \qed%
\end{enumerate}
\end{defi}

\noindent
Residuation of $*$, $\wand$ and $\dnaw$ with respect to the underlying lattice order entails a number of useful properties that are utilised in what follows.

\begin{prop}[cf.~\cite{Jipsen}]\label{prop-alg-prop}
Let $\mathbb{A}$ be an (I)LGL algebra. Then, for all $a, b, a', b' \in A$ and $X, Y \subseteq A$,  we have the following:
\begin{enumerate}
\item If $a \leq a'$ and $b \leq b'$ then $a * b \leq a' * b'$;
\item If $\bigvee X$ and $\bigvee Y$ exist then $\bigvee_{x \in X, y \in Y} x * y$ exists and
	$(\bigvee X) * (\bigvee Y) =  \bigvee_{x \in X, y \in Y} x * y$;
\item If $a = \bot$ or $b = \bot$ then $a * b = \bot$;
\item If $\bigvee X$ exists then for any $z \in A$, $\bigwedge_{x \in X} (x \wand z)$ and $\bigwedge_{x \in X} (x 	\dnaw z)$ exist with \[\bigwedge_{x \in X} (x \wand z) = (\bigvee X) \wand z \text{ and } \bigwedge_{x \in X} (x 			\dnaw z) = (\bigvee X) \dnaw z;\]
\item If $\bigwedge X$ exists then for any $z \in A$, $\bigwedge_{x \in X} (z \wand x)$ and
	$\bigwedge_{x \in X} (z \dnaw x)$ exist with \[ \bigwedge_{x \in X} (z \wand x) = z \wand (\bigwedge X)
	\text{ and } \bigwedge_{x \in X} (z \dnaw x) = z \dnaw (\bigwedge X); \;\mbox{\rm and}\]
\item $a \wand \top = a \dnaw \top = \bot \wand a =  \bot \dnaw a = \top$. \qed%
\end{enumerate}
\end{prop}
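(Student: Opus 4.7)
The unifying tool throughout will be the two-sided residuation law $a * b \leq c \iff a \leq b \wand c \iff b \leq a \dnaw c$, together with the Yoneda-style principle that an element is determined by its upper (or lower) set. Each item is routine once we commit to systematically rewriting every inequality using adjointness. I would prove them in order, with later items depending on earlier ones.

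For (1), the plan is to split $a * b \leq a' * b'$ through the intermediate term $a * b'$. From $a' * b' \leq a' * b'$, residuation on the right gives $b' \leq a' \wand (a' * b')$, so $b \leq a' \wand (a' * b')$ by transitivity, and de-residuating yields $a' * b \leq a' * b'$. Symmetrically, using $\dnaw$ applied to $a \leq a'$, we get $a * b \leq a' * b$, and compose. For (2), I would establish that both $(\bigvee X) * (\bigvee Y)$ and the family $\{x*y : x \in X, y \in Y\}$ have the same upper bounds: for any $z$, a chain of residuations shows $(\bigvee X) * (\bigvee Y) \leq z$ iff $\bigvee X \leq (\bigvee Y) \wand z$ iff $(\forall x)\,x \leq (\bigvee Y) \wand z$ iff $(\forall x)\,x * \bigvee Y \leq z$ iff $(\forall x)\,\bigvee Y \leq x \dnaw z$ iff $(\forall x,y)\,y \leq x \dnaw z$ iff $(\forall x,y)\,x * y \leq z$. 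Hence the iterated join exists and equals $(\bigvee X) * (\bigvee Y)$. Item (3) is then the degenerate case of (2) with $X = \emptyset$ (so $\bigvee X = \bot$), since the iterated join over an empty index is $\bot$.

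For (4), the same technique works: for any $y$, $y \leq \bigwedge_{x} (x \wand z)$ iff $(\forall x)\, y \leq x \wand z$ iff $(\forall x)\, x * y \leq z$ iff $(\bigvee X) * y \leq z$ by (2), iff $y \leq (\bigvee X) \wand z$. This simultaneously witnesses existence of the meet and its identification with $(\bigvee X) \wand z$; the $\dnaw$ case is symmetric. Item (5) is dual and even simpler, as no appeal to (2) is needed: $y \leq \bigwedge_{x}(z \wand x)$ iff $(\forall x)\, z * y \leq x$ iff $z * y \leq \bigwedge X$ iff $y \leq z \wand \bigwedge X$, and analogously for $\dnaw$. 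Finally, (6) drops out of the preceding calculations: $a * \top \leq \top$ trivially, giving $a \wand \top = \top$; and $\bot * \top = \bot \leq a$ by (3), giving $\bot \wand a = \top$. The $\dnaw$ identities are symmetric.

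There is no genuine obstacle here: the proposition is a standard consequence of residuation, as the citation to Jipsen suggests. The only point requiring care is the convention that each biconditional chain simultaneously establishes \emph{existence} of the relevant join or meet and identifies it, which is justified because an element of a poset is uniquely determined by its principal filter (or ideal). Keeping track of both the $\wand$-direction and the $\dnaw$-direction uniformly avoids any duplication of work.
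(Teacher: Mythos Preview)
The paper does not actually prove this proposition: it is stated with a terminal \qed and attributed to Jipsen as a standard fact about residuated structures, so there is no argument to compare against. Your overall strategy---reduce every item to the residuation law and the principle that an element is determined by its set of upper (or lower) bounds---is precisely the intended one, and each item is organised correctly.

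However, you have systematically reversed the roles of $\wand$ and $\dnaw$ relative to the paper's convention $a * b \leq c \iff a \leq b \wand c \iff b \leq a \dnaw c$. For instance, in (1) the step ``from $a' * b' \leq a' * b'$, residuation gives $b' \leq a' \wand (a'*b')$'' is wrong under this convention: what you obtain is $b' \leq a' \dnaw (a'*b')$, and it is \emph{that} inequality which, after using $b \leq b'$, de-residuates to $a' * b \leq a' * b'$. The same swap recurs in (4), where $y \leq x \wand z$ is equivalent to $y * x \leq z$, not $x * y \leq z$; your chain should read $(\forall x)\, y * x \leq z$ iff $y * (\bigvee X) \leq z$ iff $y \leq (\bigvee X) \wand z$. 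Likewise in (5) and (6) (e.g.\ $\top \leq a \wand \top$ requires $\top * a \leq \top$, not $a * \top \leq \top$). Because $*$ is non-commutative in (I)LGL these are genuine errors, but they are purely notational: once you align the residuals with the paper's convention every chain goes through unchanged in structure.
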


\noindent
Interpretations of \logicfont{(I)LGL} on an (I)LGL algebra work as follows: let $\Valuation: \Prop \rightarrow A$ be an assignment of propositional variables to elements of the algebra; this is uniquely extended to an interpretation $\llbracket - \rrbracket$ of every \logicfont{ILGL} formula by induction, with $\Interp{\Atom{p}} = \Valuation(\Atom{p}), \Interp{\top} = \top$ and $\Interp{\bot} = \bot$ as base cases:
\[
\begin{array}{llllllll}
\llbracket \varphi \land \psi \rrbracket &= \llbracket \varphi \rrbracket \land \llbracket \psi \rrbracket  			&\llbracket \varphi \lor \psi \rrbracket &= \llbracket \varphi \rrbracket \lor \llbracket \psi \rrbracket &
\llbracket \varphi \rightarrow \psi \rrbracket & = \llbracket \varphi \rrbracket \rightarrow \llbracket \psi \rrbracket \\
  			\Interp{\varphi * \psi} & = \Interp{\varphi} * \Interp{\psi}
 			&\Interp{\varphi \wand \psi} & = \Interp{\varphi} \wand \Interp{\psi}
 &\Interp{\varphi \dnaw \psi} & = \Interp{\varphi} \dnaw \Interp{\psi}.\\
\end{array}
\]

An interpretation $\Interp{-}$ on a(n) (I)LGL algebra $\mathbb{A}$ \emph{satisfies} $\varphi$ iff $\Interp{\varphi} = \top$; $\varphi$ is valid on (I)LGL algebras iff $\varphi$ is satisfied by every interpretation $\Interp{-}$ on every ILGL algebra $\mathbb{A}$. By constructing Lindenbaum-Tarski algebras from the Hilbert systems of Fig.~\ref{fig:hilbert-lgl} we obtain the following soundness and completeness theorem for (I)LGL algebras.

\begin{thm}[Algebraic Soundness \& Completeness]\label{thm:algilglcomplete}
For all \logicfont{(I)LGL} formulas $\varphi$, $\psi$, $\varphi \vdash \psi$ is provable in $\mathrm{(I)LGL}_{\mathrm{H}}$ iff, for all (I)LGL algebras $\mathbb{A}$ and all interpretations $\Interp{-}$ on $\mathbb{A}$, $\llbracket \varphi \rrbracket \leq \Interp{\psi}$. \qed%
\end{thm}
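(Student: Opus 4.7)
The plan is to prove both directions separately, with soundness by induction on proof derivations and completeness by a Lindenbaum-Tarski construction, as hinted in the statement.

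For soundness, I would proceed by induction on the height of a derivation in $\mathrm{(I)LGL}_{\mathrm{H}}$, showing that if $\varphi \vdash \psi$ is derivable then $\Interp{\varphi} \leq \Interp{\psi}$ in every (I)LGL algebra under every interpretation. The propositional rules 1--10 are validated directly by the Heyting algebra axioms (and rule 0 by the extra Boolean structure in the \logicfont{LGL} case). The multiplicative rules 11--15 are precisely the rules that enforce the three-way residuation $a * b \leq c \iff a \leq b \wand c \iff b \leq a \dnaw c$, so each of these unfolds into a one-line algebraic check: for instance, rule 12 is the left-to-right direction of the adjunction between $*$ and $\wand$, and rule 11 follows from monotonicity of $*$ (Proposition~\ref{prop-alg-prop}(1)), which itself is a consequence of residuation.

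For completeness I would build the Lindenbaum-Tarski algebra in the usual way. Define $\varphi \equiv \psi$ iff both $\varphi \vdash \psi$ and $\psi \vdash \varphi$ are provable, and let $[\varphi]$ denote the equivalence class. The first substantial step is to show $\equiv$ is a congruence with respect to every connective, which for the additives uses rules 4--10 and for $*, \wand, \dnaw$ uses rules 11--15 together with reflexivity (rule 1). Define operations on equivalence classes pointwise and set $[\varphi] \leq [\psi]$ iff $\varphi \vdash \psi$; this is well-defined by the congruence property and yields a Heyting algebra from the additive rules (a Boolean algebra in the \logicfont{LGL} case, once rule 0 is added). The residuation laws for $*, \wand, \dnaw$ are immediate from rules 12--15 read as biconditionals at the level of provability, so the quotient is genuinely an (I)LGL algebra.

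With this algebra in hand, define the canonical interpretation $\Valuation(\Atom{p}) = [\Atom{p}]$ and prove by induction on $\varphi$ that $\Interp{\varphi} = [\varphi]$. The base cases and the propositional cases are routine, and each multiplicative case reduces to the definition of the quotient operation. This yields the contrapositive direction of completeness: if $\varphi \vdash \psi$ is not provable, then $[\varphi] \not\leq [\psi]$, so $\Interp{\varphi} \not\leq \Interp{\psi}$ in the Lindenbaum-Tarski algebra.

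The main obstacle I anticipate is purely bookkeeping: verifying the congruence property for the three residuated operators in the non-commutative, non-associative setting requires a careful and symmetric use of rules 11--15, since one cannot appeal to the more familiar commutative adjunction tricks available for \logicfont{BI}. Everything else is essentially mechanical once the congruence and residuation checks are settled.
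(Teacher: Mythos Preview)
Your proposal is correct and matches the paper's approach exactly: the paper does not spell out a proof but simply remarks that the result follows by constructing Lindenbaum--Tarski algebras from the Hilbert systems, which is precisely the completeness half of your plan, with soundness left implicit as the routine induction you describe. Your anticipated bookkeeping around the non-commutative, non-associative residuated connectives is accurate but, as you note, mechanical once rules 11--15 are unpacked.
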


Next we generalize the intended model of the logics to a class of relational structures.

\begin{defi}[(I)LGL Frame]\label{defn:ilglframe}
An \emph{ILGL frame} is a triple $\mathcal{X} = (X, \preccurlyeq, \circ)$ where $X$ is a set, $\preccurlyeq$ a preorder on $X$ and $\circ: X^2 \rightarrow \pow{X}$ a binary operation. An \emph{LGL frame} is an ILGL frame for which the order $\preccurlyeq$ is equality $=$. \qed%
\end{defi}

Here $\circ$ is a generalization of Section~\ref{subsec:graphlogics}'s $\mathop{@}\nolimits$. We can therefore reconfigure the semantics given in Fig.~\ref{fig:sat-LGL} to give a satisfaction relation on (I)LGL frames by straightforward substitutions of $\circ$ (see Fig.~\ref{fig:sat-LGL-frame}). Partiality is encoded by the fact $\circ$ has codomain $\pow{X}$ and thus $x \circ y = \emptyset$ holds when $\circ$ is undefined for $x, y$. However, $\circ$ is strictly more general as it does not necessarily satisfy the partial determinism that holds of scaffolds: $G =\gcompE{H}{K}$ and $G' = \gcompE{H}{K}$ implies $G = G'$. As $\circ$ is nondeterministic, it can equivalently be seen as a ternary relation, though we maintain the partial function notation to emphasise its interpretation as a composition operator.

\begin{figure}
\centering
\hrule
\vspace{1mm}
\setlength\tabcolsep{3pt}
\setlength\extrarowheight{2pt}
\begin{tabular}{c c c c l r c c c c r r c}
$x$ & $\vDash_{\mathcal{V}}$ & $\mathrm{p}$ & iff & \myalign{l}{$x \in \mathcal{V}(\mathrm{p})$} \\
$x$ & $\vDash_{\mathcal{V}}$ & $\top$ & & always & \\
$x$ &  $\vDash_{\mathcal{V}}$ & $\bot$ & & never & \\
$x$ & $\vDash_{\mathcal{V}}$ & $\varphi \land \psi$ & iff & \myalign{l}{$x \vDash_{\mathcal{V}} \varphi$ and $x \vDash_{\mathcal{V}} \psi$} \\
$x$ & $\vDash_{\mathcal{V}}$ & $\varphi \lor \psi$ & iff & $x \vDash_{\mathcal{V}} \varphi$ or $x \vDash_{\mathcal{V}} \psi$ \\
$x$ & $\vDash_{\mathcal{V}}$ & $\varphi \rightarrow \psi$ & iff & \myalign{l}{for all $y \succcurlyeq x$, $y \vDash_{\mathcal{V}}\varphi$ implies $y$  $\vDash_{\mathcal{V}}$ $\psi$} \\
$x$ & $\vDash_{\mathcal{V}}$ & $\varphi * \psi$ & iff & \multicolumn{8}{l}{there exists $x', y, z$ s.t. $x \succcurlyeq x' \in y \circ z$, $y \vDash_{\mathcal{V}} \varphi$ and  $z \vDash_{\mathcal{V}} \psi$} \\
$x$ & $\vDash_{\mathcal{V}}$ & $\varphi \wand \psi$ & iff & \multicolumn{8}{l}{for all $x', y, z$ s.t. $x' \succcurlyeq x$ and $z \in x' \circ y$: $y \vDash_{\mathcal{V}} \varphi$ implies $z \vDash_{\mathcal{V}} \psi$} \\
$x$ & $\vDash_{\mathcal{V}}$ & $\varphi \dnaw \psi$ & iff & \multicolumn{8}{l}{for all $x', y, z$ s.t. $x' \succcurlyeq x$ and $z \in y \circ x'$:  $y \vDash_{\mathcal{V}} \varphi$ implies $z \vDash_{\mathcal{V}} \psi$}
\end{tabular}
\caption[Satisfaction for (I)LGL.]{Satisfaction for (I)LGL\@. LGL is given by the case where $\succcurlyeq$ is $=$.}
\vspace{1mm}
\hrule%
\label{fig:sat-LGL-frame}
\end{figure}

The structure of what follows will be reflected across each bunched logic we consider. First, we equip the frames associated with the logic with an appropriate notion of morphism to obtain a category. Next we set up the dual functors $Com$ and $Pr$ for transforming algebras into frames (and vice versa) and homomorphisms into frame morphisms (and vice versa), and prove a representation theorem falls out of this relationship. Finally we add appropriate topological structure to the frames to obtain a category of topological spaces that is dually equivalent to the category of algebras. 

(I)LGL frames can be equipped a notion of morphism to obtain categories $\mathrm{ILGLFr}$ and $\mathrm{LGLFr}$. As in modal logic, morphisms have the following connection with the logics: if there exists a surjective (I)LGL morphism $g: \mathcal{X} \rightarrow \mathcal{X}'$ then any \logicfont{(I)LGL} formula valid in $\mathcal{X}$ is also valid in $\mathcal{X}'$.

\begin{defi}[ILGL Morphism]\label{defn:ilglmorphism}
Given ILGL frames $\mathcal{X}$ and $\mathcal{X}'$, an \emph{ILGL morphism} is a map $g: X \rightarrow X'$ satisfying
\begin{enumerate}
\item $x \preccurlyeq y$ implies $g(x) \preccurlyeq' g(y)$,
\item $g(x) \preccurlyeq' y'$ implies there exists $y \in X$ s.t. $x \preccurlyeq y$ and $g(y) = y'$,
\item $x \in y \circ z$ implies $g(x) \in g(y) \circ' g(z)$,
\item $w' \preccurlyeq' g(x)$ and $w' \in y' \circ' z'$ implies there exists $w, y, z \in X$ s.t. $w \preccurlyeq x$, $w \in y \circ z$, $y' \preccurlyeq' g(y)$ and $z' \preccurlyeq' g(z)$,
\item $g(x) \preccurlyeq' w'$ and $z' \in w' \circ' y'$ implies there exists $w, y, z \in X$ s.t. $x \preccurlyeq w$, $z \in w \circ y$, $y' \preccurlyeq' g(y)$ and $g(z) \preccurlyeq' z'$, and
\item $g(x) \preccurlyeq' w'$ and $z' \in y' \circ' w'$ implies there exists $w, y, z \in X$ s.t. $x \preccurlyeq w$, $z \in y \circ w$, $y' \preccurlyeq' g(y)$ and $g(z) \preccurlyeq' z'$. \qed%
\end{enumerate}
\end{defi}

\noindent
For \logicfont{LGL}, replacing $\preccurlyeq$ with $=$ collapses the above definition to a simpler and more familiar notion of morphism.

\begin{defi}[LGL Morphism (cf.~\cite{Brotherston2014hybrid})]\label{def:layeredpmorphism} Given LGL frames $\mathcal{X}$ and $\mathcal{X}'$, a \emph{LGL morphism} is a map $g: \mathcal{X} \rightarrow \mathcal{X}'$ satisfying
\begin{enumerate}
\item $x \in y \circ z$ implies $g(x) \in g(y) \circ' g(z)$,
\item $g(x) \in y' \circ' z'$ implies there exists $y, z \in X$ s.t. $x \in y \circ z, g(y) = y' \text{ and } g(z) = z'$,
\item $z' \in g(x) \circ' y'$ implies there exists $y, z \in X$ s.t. $z \in x \circ y, g(y) = y' \text{ and } g(z) = z'$, and
\item $z' \in y' \circ' g(x)$ implies there exists $y, z \in X$ s.t. $z \in y \circ x, g(y) = y' \text{ and } g(z) = z'$. \qed%
\end{enumerate}
\end{defi}

\noindent
A variant of this definition is used in the context of BBI by Brotherston \& Villard~\cite{Brotherston2014hybrid} to demonstrate that the logic is not sufficiently expressive to axiomatise a number of properties common to models of separation logic. Urquhart~\cite{Urquhart1996} defines similar maps in order to define dualities for relevant logic and Bimb\'{o} \& Dunn~\cite{Dunn2008} generalise Urquhart's definition further to give morphisms that respect residuals on the dual algebras of frames for gaggles; it is this definition that we modify for our morphisms --- bunched logics can be seen to be variants of gaggles with extra operators and/or axioms.

\subsection{Duality for (I)LGL}
We now give representation and duality theorems for ILGL algebras. As a corollary we obtain the equivalence of the relational semantics to the algebraic semantics, as well as its completeness with respect to the Hilbert system of Fig.~\ref{fig:hilbert-lgl}. We first define two transformations that underpin functors between the category of (I)LGL algebras and the category of (I)LGL frames.

\begin{defi}[ILGL Complex Algebra]\label{defn:ilglcomplex}
Given an ILGL frame $\mathcal{X}$, the \emph{complex algebra} of $\mathcal{X}$ is given by $Com^{\text{\scalebox{.7}{ILGL}}}(\mathcal{X}) = (\mathcal{P}_{\succcurlyeq}(X), \cap, \cup, \Rightarrow, X, \emptyset, \bullet_{\mathcal{X}}, \Rres_{\mathcal{X}}, \Lres_{\mathcal{X}})$ where 
\begin{align*}
	\mathcal{P}_{\succcurlyeq}(X) &= \{ A \subseteq X \mid \text{ if } a \in A \text{ and } a \preccurlyeq b \text{ then } b \in A \} \\
	A \Rightarrow B &= \{ x \mid \text{if } x \preccurlyeq x' \text{ and } x' \in A \text{ then } x' \in B \} \\
	A \bullet_{\mathcal{X}} B &= \{ x \mid \text{there exists } w, y, z \text{ s.t } w \preccurlyeq x, w \in y \circ z, y \in A \text{ and } z \in B \} \\
	 A \Rres\nolimits_{\mathcal{X}} B &= \{ x \mid \text{for all } w, y, z, \text{ if } x \preccurlyeq w, z \in w \circ y \text{ and } y \in A \text{ then } z \in B \}  \\
	 A \Lres\nolimits_{\mathcal{X}} B &= \{ x \mid \text{for all } w, y, z, \text{ if } x \preccurlyeq w, z \in y \circ w \text{ and } y \in A \text{ then } z \in B \}. \tag*{\qed}
\end{align*}
\end{defi}

\noindent
Each operator here maps upwards-closed sets to upwards-closed sets so this is well defined. By substituting $\preccurlyeq$ for $=$, the above definition collapses to one suitable for \logicfont{LGL}.

\begin{defi}[LGL Complex Algebra]
Given a LGL frame $\framefont{X}$, the \emph{complex algebra} of $\framefont{X}$ is given by
$Com^{\text{\scalebox{.7}{LGL}}}(\framefont{X}) = (\pow{X}, \cap, \cup, \setminus, X, \emptyset, \bullet_{\mathcal{X}}, \Rres_{\mathcal{X}}, \Lres_{\mathcal{X}})$,
where $\bullet_{\mathcal{X}}, \Rres_{\mathcal{X}}$, and $\Lres_{\mathcal{X}}$ are defined as follows: \[ \begin{array}{cl}
A \Rightarrow B &= \overline{A} \cup B \\
A \bullet_{\mathcal{X}} B  &=  \{ x \mid \text{there exists } y \in A, z \in B \text{ s.t. } x \in y \circ z \} \\
A \Rres_{\mathcal{X}} B  &=  \{ x \mid \text{for all } y, z \in X, z \in x \circ y \text{ and } y \in A \text{ implies } z \in B \} \\
A \Lres_{\mathcal{X}} B  &=  \{ x \mid \text{for all } y, z \in X, z \in y \circ x \text{ and } y \in A \text{ implies }
z \in B \}.
\end{array}
\]
where $\overline{A}$ denotes set compliment. \qed%
\end{defi}

The residuated structure in each case is then easy to verify.

\begin{lem}\label{lem:ilglcomplex}
Given an (I)LGL frame $\mathcal{X}$, $Com^{\text{\scalebox{.7}{(I)LGL}}}(\mathcal{X})$ is an (I)LGL algebra. \qed%
\end{lem}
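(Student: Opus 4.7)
The plan is to verify the two separate parts of the algebra definition: (i) that the additive reduct forms a Heyting algebra in the ILGL case and a Boolean algebra in the LGL case, and (ii) that the triple $(\bullet_{\mathcal{X}}, \Rres_{\mathcal{X}}, \Lres_{\mathcal{X}})$ satisfies the double residuation law $A \bullet_{\mathcal{X}} B \subseteq C$ iff $A \subseteq B \Rres_{\mathcal{X}} C$ iff $B \subseteq A \Lres_{\mathcal{X}} C$ with respect to subset inclusion. For the ILGL case there is an additional preliminary: one must check that each of the three multiplicative operations preserves upwards-closedness, so that they are well-defined on $\mathcal{P}_{\succcurlyeq}(X)$.

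For the LGL case the additive reduct is the standard powerset Boolean algebra, so nothing needs verifying there. Residuation is a direct unfolding: for the forward direction of $A \bullet_{\mathcal{X}} B \subseteq C$ $\Rightarrow$ $A \subseteq B \Rres_{\mathcal{X}} C$, take $y \in A$, and suppose $z \in y \circ b$ with $b \in B$; then $z \in A \bullet_{\mathcal{X}} B \subseteq C$. For the converse, take $x \in A \bullet_{\mathcal{X}} B$ with witness $x \in y \circ z$, $y \in A$, $z \in B$; then $y \in B \Rres_{\mathcal{X}} C$ gives $x \in C$. The argument for $\Lres_{\mathcal{X}}$ is symmetric, using the opposite argument slot of $\circ$.

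For the ILGL case, closure under the Heyting operations is the standard fact that upwards-closed subsets of a preordered set form a Heyting algebra with the given $\Rightarrow$. Upwards-closure of $A \bullet_{\mathcal{X}} B$ follows by transitivity of $\preccurlyeq$: if $w \preccurlyeq x \preccurlyeq x'$ and $w \in y \circ z$, then the same witness shows $x' \in A \bullet_{\mathcal{X}} B$. Upwards-closure of $A \Rres_{\mathcal{X}} B$ and $A \Lres_{\mathcal{X}} B$ is immediate from the universal quantification over $w \succcurlyeq x$ in their definitions. For residuation, the forward direction of $A \bullet_{\mathcal{X}} B \subseteq C$ $\Rightarrow$ $A \subseteq B \Rres_{\mathcal{X}} C$ again follows by unwinding: given $y \in A$, $y \preccurlyeq w$, $z \in w \circ b$, $b \in B$, upwards-closure of $A$ yields $w \in A$, whence $z \in A \bullet_{\mathcal{X}} B \subseteq C$. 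The converse uses reflexivity of $\preccurlyeq$ and upwards-closure of $C$: given $x \in A \bullet_{\mathcal{X}} B$ via $w \preccurlyeq x$, $w \in y \circ z$, $y \in A$, $z \in B$, the hypothesis gives $y \in B \Rres_{\mathcal{X}} C$, and taking $y \preccurlyeq y$ yields $w \in C$, so $x \in C$. The residuation for $\Lres_{\mathcal{X}}$ is entirely analogous.

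The main obstacle, such as it is, lies in the ILGL case, where the definitions of $\bullet_{\mathcal{X}}, \Rres_{\mathcal{X}}, \Lres_{\mathcal{X}}$ have been designed precisely to interact correctly with the preorder $\preccurlyeq$ — in particular the existential $w \preccurlyeq x$ in $\bullet_{\mathcal{X}}$ and the universal $x \preccurlyeq w$ in the residuals. One has to take care to use the upwards-closure of $A$, $B$, $C$ at the right points and to exploit reflexivity and transitivity of $\preccurlyeq$ where the forward and backward directions demand it; but once the calculation is carried out for $\Rres_{\mathcal{X}}$ the case of $\Lres_{\mathcal{X}}$ is symmetric by swapping the arguments of $\circ$.
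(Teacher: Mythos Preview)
Your proposal is correct and follows exactly the routine verification the paper has in mind: the lemma is stated with no proof beyond the remark that ``the residuated structure in each case is then easy to verify,'' and your argument spells out precisely that check (well-definedness on upwards-closed sets, the Heyting/Boolean structure, and the two residuation equivalences). Nothing more is expected here.
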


Any valuation $\Valuation$ on an (I)LGL frame $\mathcal{X}$ generates an interpretation $\Interp{-}_{\Valuation}$ on the complex algebra $Com^{\text{\scalebox{.7}{(I)LGL}}}(\mathcal{X})$. A straightforward inductive argument then shows satisfiability coincides on these two models.

\begin{prop}\label{prop:ilglsat}
For any relational (I)LGL model $(\mathcal{X}, \Valuation)$, $x \vDash_{\mathcal{V}} \varphi$ iff $x \in \Interp{\varphi}_{\Valuation}$. \qed%
\end{prop}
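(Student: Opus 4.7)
The plan is to prove Proposition~\ref{prop:ilglsat} by straightforward structural induction on the formula $\varphi$, showing at each step that the relational satisfaction clause from Fig.~\ref{fig:sat-LGL-frame} coincides pointwise with the algebraic operation used in Definition~\ref{defn:ilglcomplex} of $Com^{\text{\scalebox{.7}{(I)LGL}}}(\mathcal{X})$. The interpretation $\Interp{-}_{\Valuation}$ is the unique one extending the assignment $\Valuation(\mathrm{p}) \in \mathcal{P}_{\succcurlyeq}(X)$, so I will tacitly use along the way the fact (from Lemma~\ref{lem:ilglcomplex}) that $\Interp{\varphi}_{\Valuation}$ is always upwards-closed with respect to $\preccurlyeq$; persistence of satisfaction on the frame side is the semantic counterpart of this, and is proved by the same induction.

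First I would handle the base cases. For $\varphi = \mathrm{p}$, both $x \vDash_{\Valuation} \mathrm{p}$ and $x \in \Interp{\mathrm{p}}_{\Valuation}$ unfold to $x \in \Valuation(\mathrm{p})$ by definition; the cases $\varphi = \top$ and $\varphi = \bot$ follow because the algebra has $\top = X$ and $\bot = \emptyset$. For the additive inductive cases $\land$ and $\lor$, the satisfaction clauses are patently the set-theoretic intersection and union, which are exactly the meet and join in $\mathcal{P}_{\succcurlyeq}(X)$. The case $\varphi = \psi \rightarrow \chi$ requires matching the satisfaction clause ``for all $y \succcurlyeq x$, $y \vDash_{\Valuation} \psi$ implies $y \vDash_{\Valuation} \chi$'' against the complex-algebra definition $A \Rightarrow B = \{x \mid x \preccurlyeq x' \text{ and } x' \in A \text{ imply } x' \in B\}$, which is definitionally the same after applying the inductive hypothesis.

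The multiplicative cases are equally direct but require reading off the clauses carefully. For $\varphi = \psi * \chi$, the satisfaction clause ``there exist $x', y, z$ with $x \succcurlyeq x' \in y \circ z$, $y \vDash_{\Valuation} \psi$, $z \vDash_{\Valuation} \chi$'' matches the definition of $\bullet_{\mathcal{X}}$ in Definition~\ref{defn:ilglcomplex} letter-for-letter once the inductive hypothesis is substituted. The cases $\varphi = \psi \wand \chi$ and $\varphi = \psi \dnaw \chi$ similarly line up verbatim with the $\Rres_{\mathcal{X}}$ and $\Lres_{\mathcal{X}}$ clauses.

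There is no genuine obstacle here; the proposition is essentially a bookkeeping exercise confirming that the frame semantics of Fig.~\ref{fig:sat-LGL-frame} has been designed to be the pointwise transcription of the complex-algebra operations. The only place where care is needed is keeping the direction of $\preccurlyeq$ straight in the $\rightarrow$, $\wand$, and $\dnaw$ clauses and making sure upwards-closedness of the inductively-given sets is preserved; both are already baked into Definition~\ref{defn:ilglcomplex}. The \logicfont{LGL} case is then recovered immediately by specialising $\preccurlyeq$ to $=$.
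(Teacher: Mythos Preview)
Your proposal is correct and is exactly the ``straightforward inductive argument'' the paper invokes in the sentence preceding the proposition; the paper gives no further detail, and your case-by-case matching of the clauses in Fig.~\ref{fig:sat-LGL-frame} against the operations of Definition~\ref{defn:ilglcomplex} is precisely what is intended.
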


Conversely we can create ILGL frames from ILGL algebras. First we will need a sequence of definitions. A \emph{filter} on a bounded distributive lattice $\mathbb{A}$ is a non-empty set $F \subseteq A$ such that, for all $x, y \in A$, (i) $x \in F$ and $x \leq y$ implies $y \in F$, and (ii) $x, y \in F$ implies $x \land y \in F$. It is a \emph{proper} filter if it additionally satisfies (iii) $\bot \not\in F$. It is \emph{prime} if in addition it satisfies (iv) $x \lor y \in F$ implies $x \in F$ or $y \in F$. If $\mathbb{A}$ is a Boolean algebra and $F$ a proper filter, (iv) is equivalent to (v) $x \in F$ or $\neg x \in F$ and (vi) $F$ is a maximal proper filter with respect to $\subseteq$. The order-dual notion of a filter is an \emph{ideal}, with proper and prime ideals defined as one would expect. Importantly, if $I$ is a prime ideal then the complement $\overline{I}$ is a prime filter.

Given a non-empty subset $X$ of an algebra $\mathbb{A}$, we define
\[
    \filter{X} := \{ a \in \mathbb{A} \mid \exists b_1, \ldots, b_n \in X: a \geq b_1 \land \cdots \land b_n \}
\]
to be the \emph{filter generated by $X$}. It is easily seen that this is the least filter (with respect to set theoretic inclusion) containing the set $X$. Dually, we obtain the \emph{ideal generated by X} as $\ideal{X} := \{ a \in \mathbb{A} \mid \exists b_1, \ldots, b_n \in X: b_1 \lor \cdots \lor b_n \geq a \}$, and this is the least ideal containing $X$. For a singleton $\{ a\}$ we write $\filter{a}$ and $\ideal{a}$ for $\filter{\{a\}}$ and $\ideal{\{a\}}$.

We will frequently need to prove the existence of prime filters/ideals satisfying certain properties in order to prove our duality theoretic framework captures the structures associated with bunched logics. To this end we generalize a concept of Galmiche \& Larchey-Wendling~\cite{Galmiche2006} that gives a systematic method for showing such prime filters/ideals exist. First some terminology: a \emph{$\subseteq$-chain} is a sequence of sets ${(X_{\alpha})}_{\alpha < \lambda}$ such that $\alpha \leq \alpha'$ implies $X_{\alpha} \subseteq X_{\alpha'}$. A basic fact about proper filters (ideals) is that the union of a $\subseteq$-chain of proper filters (ideals) is itself a proper filter (ideal). We lift the terminology to $n$-tuples of sets by determining ${(X^1_{\alpha}, \ldots, X^n_{\alpha})}_{\alpha < \lambda}$ to be a $\subseteq$-chain if each ${(X^i_{\alpha})}_{\alpha < \lambda}$ is a $\subseteq$-chain.

\begin{defi}[Prime Predicate]\label{defn:primepredicate}
A \emph{prime predicate} is a map $P: \mathbb{F}_{\mathbb{A}}^n \times \mathbb{I}_{\mathbb{A}}^m \rightarrow \{ 0, 1\}$, where $n, m \geq 0$ and $n + m \geq 1$, such that
\begin{enumerate}[a)] 
\item Given a $\subseteq$-chain ${(F^0_{\alpha}, \ldots, F^n_{\alpha}, I^0_{\alpha}, \ldots, I^m_{\alpha})}_{\alpha < \lambda}$ of proper filters/ideals, \[\min \{ P(F^0_{\alpha}, \ldots, I^m_{\alpha}) \mid \alpha < \lambda \} \leq P(\bigcup_\alpha F^0_\alpha, \ldots, \bigcup_{\alpha} I^m_{\alpha})\]
\item $P(\ldots, H_0 \cap H_1, \ldots) \leq \max\{P(\ldots, H_0, \ldots), P(\ldots, H_1, \ldots)\}$. \qed%
\end{enumerate}
\end{defi}

\noindent
A prime predicate is a property that can hold of an $(n+m)$-tuple of proper filters and ideals that is evaluated to true or false. The two simple conditions it must satisfy to be a prime predicate are that the truth of the property persists from a chain of tuples of proper filters and ideals to their component-wise union and that if an n-tuple of proper filters and ideals for which one component is an intersection $H_0 \cap H_1$ is evaluated true, at least one of the tuples of prime filters and ideals obtained by replacing that intersection with either $H_0$ or $H_1$ is also evaluated true. Our definition differs from that of Galmiche \& Larchey-Wendling in a key way: our notion of prime predicate takes an argument of tuples of proper filters and ideals rather than a single filter, as we frequently need to prove the simultaneous existence of prime filters and ideals satisfying a particular condition. This will always follow from the following important lemma, which is easily proved using Zorn's lemma.

\begin{lem}[Prime Extension Lemma]\label{lem:primeextension}
If $P$ is an $(n+m)$-ary prime predicate and $F_0, \ldots, F_n, I_0, \ldots, I_m$ an $(n+m)$-tuple of proper filters and ideals such that
\[
    P(F_0, \ldots, F_n, I_0, \ldots, I_m) = 1
\]
then there exists a $(n+m)$-tuple of prime filters and ideals $F^{pr}_0, \ldots, F^{pr}_n, I^{pr}_0, \ldots I^{pr}_m$ such that
\begin{equation*}
    P(F^{pr}_0, \ldots, F^{pr}_n, I^{pr}_0, \ldots I^{pr}_m)=1 \tag*{\qed}
\end{equation*}
\end{lem}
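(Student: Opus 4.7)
The plan is to apply Zorn's lemma in the standard Birkhoff prime filter/ideal style, with condition (a) furnishing chain completeness and condition (b) driving the argument that maximal elements are prime. Define
\[ \mathcal{S} = \{ (G_0, \ldots, G_n, J_0, \ldots, J_m) : G_i, J_j \text{ proper}, F_i \subseteq G_i, I_j \subseteq J_j, P(G_0, \ldots, J_m) = 1 \}, \]
ordered componentwise by inclusion. The hypothesised tuple witnesses $\mathcal{S} \neq \emptyset$. Given a $\subseteq$-chain in $\mathcal{S}$, its componentwise union is again a tuple of proper filters/ideals (since unions of $\subseteq$-chains of proper filters/ideals are proper filters/ideals), and condition (a) ensures $P$ still evaluates to $1$ there. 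Zorn's lemma therefore delivers a maximal $(F^*_0, \ldots, F^*_n, I^*_0, \ldots, I^*_m) \in \mathcal{S}$.

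Next I would use (b) to force each component of the maximal element to be prime. I will sketch the argument for a filter coordinate $F^*_i$; the ideal case is order dual. Assume for contradiction that $F^*_i$ is not prime, so there exist $a, b \in A$ with $a \vee b \in F^*_i$ yet $a, b \notin F^*_i$. Set $H_0 = \filter{F^*_i \cup \{a\}}$ and $H_1 = \filter{F^*_i \cup \{b\}}$, each a filter strictly containing $F^*_i$. Distributivity of $\mathbb{A}$ gives $H_0 \cap H_1 = F^*_i$: indeed, if $x \geq f_0 \wedge a$ and $x \geq f_1 \wedge b$ for some $f_0, f_1 \in F^*_i$, then $x \geq (f_0 \wedge f_1 \wedge a) \vee (f_0 \wedge f_1 \wedge b) = (f_0 \wedge f_1) \wedge (a \vee b)$, which lies in $F^*_i$. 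Applying condition (b) with $F^*_i$ replaced by $H_0 \cap H_1$ in the tuple then yields $P(\ldots, H_k, \ldots) = 1$ for some $k \in \{0,1\}$, contradicting maximality provided $H_k$ is a proper filter.

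The one genuine subtlety, and the main obstacle, is ruling out improperness of $H_0$ and $H_1$, since otherwise condition (b) might only hand back an improper extension on which $P$ is (by convention) $0$. This is precisely where $a \vee b \in F^*_i$ pulls its weight: if $\bot \in H_0$ then some $f \in F^*_i$ satisfies $f \wedge a = \bot$, whence distributivity gives $f \wedge b = (f \wedge a) \vee (f \wedge b) = f \wedge (a \vee b) \in F^*_i$, forcing $b \in F^*_i$ by upward closure and contradicting our choice; the same argument excludes $\bot \in H_1$. The ideal case is fully dual, using the identity $(x \vee a) \wedge (x \vee b) = x \vee (a \wedge b)$ together with $a \wedge b \in I^*_j$ to show $\ideal{I^*_j \cup \{a\}} \cap \ideal{I^*_j \cup \{b\}} = I^*_j$ and to rule out improperness. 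The generalisation from single filters/ideals to tuples is essentially cosmetic: condition (b) only governs splits in a single coordinate at a time, so the classical single-component argument simply runs in the chosen coordinate with all other entries held fixed.
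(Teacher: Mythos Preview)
Your proposal is correct and follows exactly the approach the paper indicates: the paper simply states the lemma ``is easily proved using Zorn's lemma'' and gives no further detail, so you have supplied precisely the standard argument the authors had in mind. The handling of the properness of $H_0$ and $H_1$ via $a \vee b \in F^*_i$ and distributivity is the right way to close the one nontrivial gap.
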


We now return to the task at hand.

\begin{defi}[Prime Filter (I)LGL Frame]\label{defn:ilglprime}
Given an (I)LGL algebra $\mathbb{A}$, the \emph{prime filter frame} of $\mathbb{A}$ is given by $Pr^{\text{\scalebox{.7}{(I)LGL}}}(\mathbb{A}) = (Pr(A), \subseteq, \circ_{\mathbb{A}})$ where 
\begin{equation*}
F \circ_{\mathbb{A}} F' = \{ F'' \in Pr(\mathbb{A}) \mid \forall a \in F, \forall b \in F': a * b \in F'' \}. \tag*{\qed} 
\end{equation*}
\end{defi}

Of course, because of the structure of prime filters on Boolean algebras, the order $\subseteq$ collapses to $=$ for a prime filter LGL frame, as we would expect.

\begin{lem}\label{lem:ilglprime}
Given an (I)LGL algebra $\mathbb{A}$, $Pr^\vtiny{(I)LGL}(\mathbb{A})$ is an (I)LGL frame. \qed
\end{lem}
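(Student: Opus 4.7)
The plan is to unfold Definition~\ref{defn:ilglframe} and check its three requirements of the triple $(Pr(A), \subseteq, \circ_{\mathbb{A}})$ directly, treating the ILGL case first and then deriving the LGL case by specialization. First, $Pr(A)$ is a set by construction, being the collection of prime filters of the underlying bounded distributive lattice of $\mathbb{A}$. Second, $\subseteq$ is clearly reflexive and transitive, hence a preorder (in fact a partial order) on $Pr(A)$. Third, the assignment $\circ_{\mathbb{A}}: (F, F') \mapsto \{ F'' \in Pr(\mathbb{A}) \mid \forall a \in F, \forall b \in F': a * b \in F'' \}$ is by its very definition a function $Pr(A)^2 \to \pow{Pr(A)}$. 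Since these three checks are precisely what Definition~\ref{defn:ilglframe} demands, the ILGL case is settled with no further work.

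For the LGL case, the extra requirement is that the preorder be literally equality. Here the Boolean structure of $\mathbb{A}$ does the work: it is a standard fact (cf.\ the remarks in the excerpt preceding the definition of generated filters) that on a Boolean algebra a proper filter is prime iff it is maximal, since primeness is equivalent to conditions (v) and (vi). Thus if $F \subseteq F'$ with $F, F' \in Pr(\mathbb{A})$, suppose for contradiction that there is some $a \in F' \setminus F$; by primeness of $F$ we would have $\neg a \in F \subseteq F'$, whence $\bot = a \land \neg a \in F'$, contradicting properness of $F'$. Hence $F = F'$, so $\subseteq$ restricted to $Pr(\mathbb{A})$ coincides with $=$, as required.

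There is no genuine obstacle here: every clause is either definitional or an immediate consequence of elementary prime filter theory. If anything deserves highlighting it is the LGL case, where the change of the order from $\subseteq$ to $=$ is not an arbitrary stipulation but a consequence of the Boolean hypothesis on $\mathbb{A}$; this is also what will make the LGL variant of the dual functor $Pr$ compatible with the LGL frame morphisms of Definition~\ref{def:layeredpmorphism} in the sequel.
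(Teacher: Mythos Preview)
Your proposal is correct and follows exactly the approach the paper takes: the paper provides no explicit proof (the lemma carries a bare \qed), treating all three ILGL-frame requirements as definitional, and the collapse of $\subseteq$ to $=$ in the Boolean case is noted explicitly just after Definition~\ref{defn:ilglprime} as a consequence of the structure of prime filters on Boolean algebras. Your write-up simply spells out what the paper leaves implicit.
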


In analogy with Stone's representation theorem for Boolean algebras, we can give a representation theorem for (I)LGL algebras using these constructions. In particular, for ILGL algebras this extends the representation theorem for Heyting algebras~\cite{DH01}, whereas for LGL algebras this extends Stone's theorem. These results are closely related to various representation theorems for algebras with operators (e.g.,~\cite{tarski},~\cite{goldblatt}). The key difference  is the use of a single operation $\circ$ for the operator $*$ \emph{and} its non-operator adjoints $\wand$ and $\dnaw$. The derived structure required to take care of these adjoints was not investigated in the frameworks of J\'onsson-Tarski or Goldblatt but has been in the context of gaggle theory~\cite{Dunn2008,Dunn1990}. There the result for LGL algebras can be found as a particular case of that for Boolean gaggles (\cite{Dunn2008}, Theorem 1.4.16).

\begin{thm}[Representation Theorem for (I)LGL Algebras]\label{thm:ilglrep}
Every (I)LGL algebra is isomorphic to a subalgebra of a complex algebra. Specifically, given an (I)LGL algebra $\mathbb{A}$, the map $\theta_{\mathbb{A}}: \mathbb{A} \rightarrow Com^\vtiny{(I)LGL}(Pr^\vtiny{(I)LGL}(\mathbb{A}))$ defined $\theta_{\mathbb{A}}(a) = \{ F \in Pr^\vtiny{(I)LGL}(\mathbb{A}) \mid a \in F \}$ is an embedding. 
\end{thm}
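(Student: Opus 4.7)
The plan is to verify that $\theta_{\mathbb{A}}$ is well-defined (its image lies in $\mathcal{P}_{\succcurlyeq}(Pr(\mathbb{A}))$), that it preserves all nine operations of an (I)LGL algebra, and that it is injective. Well-definedness is immediate, and preservation of $\top, \bot, \wedge, \vee$ is a routine consequence of properties of prime filters. The remaining cases are all handled by the same technique: build a tailored prime predicate and apply Lemma~\ref{lem:primeextension} to extract the prime filters (and/or a prime ideal) witnessing the nontrivial inclusion.

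For injectivity, if $a \not\leq b$ then the unary predicate $P(F) = 1 \iff a \in F \text{ and } b \notin F$ is prime --- its chain condition is trivial, and $b \notin F_0 \cap F_1$ forces $b \notin F_i$ for some $i$ --- and it holds of $\filter{a}$ (since $b \in \filter{a}$ would force $a \leq b$). Lemma~\ref{lem:primeextension} yields a prime filter separating $a$ from $b$. The Heyting implication case is analogous: if $a \to b \notin F_0$ then $b \notin \filter{F_0 \cup \{a\}}$ (otherwise $c \wedge a \leq b$ for some $c \in F_0$, whence $c \leq a \to b$ and $a \to b \in F_0$), and the predicate $P(F) = 1 \iff \filter{F_0 \cup \{a\}} \subseteq F \text{ and } b \notin F$ supplies the required prime extension.

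The first substantive connective is $*$. The inclusion $\theta(a) \bullet \theta(b) \subseteq \theta(a * b)$ follows from the definition of $\circ_{\mathbb{A}}$ and upward closure of $F$. For the reverse, assume $a * b \in F$ and consider
\[
P(G, H) = 1 \iff a \in G,\; b \in H,\; \text{and } c * d \in F \text{ for all } c \in G,\, d \in H.
\]
The pair $(\filter{a}, \filter{b})$ is proper (if either $a$ or $b$ were $\bot$, Proposition~\ref{prop-alg-prop}(3) would place $\bot = a*b$ in $F$) and satisfies $P$ by monotonicity (Proposition~\ref{prop-alg-prop}(1)). For the primality condition at position $G$, counterexamples $c_i \in G_i$, $d_i \in H$ with $c_i * d_i \notin F$ yield $c_0 \vee c_1 \in G_0 \cap G_1$ and $d_0 \wedge d_1 \in H$ with $(c_0 \vee c_1) * (d_0 \wedge d_1) = \bigl(c_0 * (d_0 \wedge d_1)\bigr) \vee \bigl(c_1 * (d_0 \wedge d_1)\bigr)$ (Proposition~\ref{prop-alg-prop}(2)), contradicting primality of $F$; position $H$ is symmetric.

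The main obstacle is preserving $\wand$ (and dually $\dnaw$), as the construction must coordinate three prime objects simultaneously. The easy direction uses $(a \wand b) * a \leq b$. For the converse, assume $a \wand b \notin F$ and define
\[
P(F', G, I) = 1 \iff F \subseteq F',\; a \in G,\; b \in I,\; \text{and } c * d \notin I \text{ for all } c \in F',\, d \in G.
\]
Initialize with $(F, \filter{a}, \ideal{b})$. Properness of $\ideal{b}$ follows from Proposition~\ref{prop-alg-prop}(6): $b = \top$ would give $a \wand b = \top \in F$. The initial product condition uses residuation together with Proposition~\ref{prop-alg-prop}(1): if $c * d \leq b$ with $c \in F$ and $d \geq a$, then $c \leq d \wand b$, and $(d \wand b) * a \leq (d \wand b) * d \leq b$ yields $d \wand b \leq a \wand b$, forcing $a \wand b \in F$. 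Primality at positions $F'$ and $G$ mirrors the $*$ case; at position $I$, witnesses $c_i * d_i \in I_i$ produce $c := c_0 \wedge c_1 \in F'$ and $d := d_0 \wedge d_1 \in G$ with $c * d \leq c_i * d_i$, placing $c * d \in I_0 \cap I_1$ by downward closure of ideals, a contradiction. Lemma~\ref{lem:primeextension} then produces a prime triple $(F^{\mathrm{pr}}, G^{\mathrm{pr}}, I^{\mathrm{pr}})$; setting $H := \overline{I^{\mathrm{pr}}}$ gives a prime filter with $b \notin H$ and $H \in F^{\mathrm{pr}} \circ_{\mathbb{A}} G^{\mathrm{pr}}$, witnessing $F \notin \theta(a) \Rres \theta(b)$. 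The $\dnaw$ case is entirely symmetric, swapping the two arguments of $\circ_{\mathbb{A}}$.
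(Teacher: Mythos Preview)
Your proposal is correct and follows the same core strategy as the paper: build a tailored prime predicate and invoke Lemma~\ref{lem:primeextension}. Two differences are worth noting. First, the paper simply cites the representation theorem for Heyting algebras for injectivity and the additive connectives, whereas you re-derive these via prime predicates; this is fine and makes the treatment self-contained. Second, for $\wand$ the paper uses a \emph{binary} prime predicate $P(G,I)$ with $F$ held fixed, whereas you use a ternary predicate $P(F',G,I)$ allowing $F' \supseteq F$ to vary. Your version works, but the paper's observation that one may always take $F' = F$ (the witness to failure of $F \in \theta(a) \Rres \theta(b)$ can be found already at $F$ itself) gives a slightly leaner argument. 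One small omission: in your $\wand$ case you check properness of $\ideal{b}$ via Proposition~\ref{prop-alg-prop}(6) but do not explicitly check properness of $\filter{a}$; the same proposition handles it ($a = \bot$ would give $a \wand b = \top \in F$), so this is a one-line fix rather than a gap.
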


\begin{proof} We prove the theorem for ILGL algebras; the case for LGL algebras can be obtained by substituting $\succcurlyeq$ for $=$ throughout. That $\theta_{\mathbb{A}}$ is an embedding and a homomorphism on the Heyting algebra operations is simply the representation theorem for Heyting algebras. It thus remains to show that $\theta_{\mathbb{A}}$ respects $*, \wand$ and $\dnaw$. We focus on the case for $\wand$; the others are similar.

We first note the corner cases: for all $a, b \in A$ we trivially have that $\theta_{\mathbb{A}}(a \wand \top) = \theta_{\mathbb{A}}(a) \Rres_{Pr(\mathbb{A})} \theta_{\mathbb{A}}(\top)$ and $\theta_{\mathbb{A}}(\bot \wand b) = \theta_{\mathbb{A}}(\bot) \Rres_{Pr(\mathbb{A})} \theta_{\mathbb{A}}(b)$ by Proposition~\ref{prop-alg-prop} property 6. Hence it is sufficient to consider $a \wand b$ where $a \neq \bot$ and $b \neq \top$.
 For the inclusion
$\theta_{\mathbb{A}}(a\wand b) \subseteq \theta_{\mathbb{A}}(a) \Rres_{Pr(\mathbb{A})} \theta_{\mathbb{A}}(b)$, assume $a \wand b \in F$ with $F_0, F_1, F_2$ such that $F \subseteq F_0$, $F_2 \in F_0 \circ_{\mathbb{A}} F_1$ and $a \in F_1$. Then $(a \wand b) * a \in F_2$ and so $b \in F_2$ since residuation entails $(a \wand b) * a \leq b$ and $F_2$ is upwards closed. Hence $F \in \theta_{\mathbb{A}}(a) \Rres_{Pr(\mathbb{A})} \theta_{\mathbb{A}}(b)$. 

For the reverse inclusion, consider $F$ such that $a \wand b \not\in F$. We show, for proper filter $G$ and proper ideal $I$, that (abusing notation for $\circ_{\mathbb{A}}$)
\[ P(G, I) =\begin{cases} 1 & \text{if } \overline{I} \in F \circ_{\mathbb{A}} G, a \in G \text{ and } b \in I \\
				0 & \text{otherwise} \end{cases} \] is a prime predicate.
We concentrate on the non-trivial verifications: suppose $P(G \cap G', I) =1$. Clearly, $a \in G, G'$ so suppose for contradiction that there exists $c, c' \in F$, $d \in G$ and $d' \in G'$ such that $c * d, c' * d' \in I$. We have that $c'' := c \land c' \in F$ and $c'' * d, c'' * d' \in I$. This entails $c'' * (d \lor d') = (c'' * d) \lor (c'' * d') \in I$. Since $d \lor d' \in G \cap G'$ we have $c'' * (d \lor d') \not\in I$ by assumption: a contradiction. Hence either $\overline{I} \in F \circ_{\mathbb{A}} G$ or $\overline{I} \in F \circ_{\mathbb{A}} G'$. If $P(G, I \cap I') = 1$ we clearly have $b \in I, I'$, so suppose for contradiction there exist $c, c' \in F$, $d, d' \in G$ such that $c * d \in I$ and $c' * d' \in I'$. $c'' = c \land c' \in F$ and $d'' = d \land d' \in G$ so we have $c'' * d'' \in \overline{I} \cup \overline{I'}$. This means $c'' * d'' \in \overline{I}$ or $c'' * d'' \in \overline{I'}$, but $c * d, c' * d' \geq c'' * d'' \in I \cap I'$, a contradiction. Thus either $\overline{I} \in F \circ_{\mathbb{A}} G$ or $\overline{I'} \in F \circ_{\mathbb{A}} G$.

Hence $P$ is a prime predicate. By our assumption on $a$ and $b$, $\filter{a}$ and $\ideal{b}$ are a proper filter and a proper ideal respectively, and $P(\filter{a}, \ideal{b}) =1$: if $x \in F$ and $y \geq a$ then $x * y \not\leq b$, otherwise by residuation and monotonicity of $*$ we would have $x \leq a \wand b \in F$, a contradiction. Hence by Lemma~\ref{lem:primeextension} there exist prime $G$ and $I$ with $P(G, I) =1$. Letting $G' = \overline{I}$, we have the prime filters  we require.
\end{proof}

That $\theta_{\mathbb{A}}$ is an embedding immediately gives us an analogous result to Prop.~\ref{prop:ilglsat} and thus a soundness and completeness theorem for the ``relational'' semantics of the logics.

\begin{cor}
For all (I)LGL algebras $\mathbb{A}$: given an interpretation $\Interp{-}$, the valuation $\Valuation_{\Interp{-}}(p) = \theta_{\mathbb{A}}(\Interp{p})$ on $Pr^\vtiny{(I)LGL}(\mathbb{A})$ is such that $\Interp{\varphi} \in F$  iff $F \vDash_{\Valuation_{\Interp{-}}} \varphi$. \qed
\end{cor}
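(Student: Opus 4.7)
The plan is to derive this corollary directly from the Representation Theorem (Theorem~\ref{thm:ilglrep}) together with Proposition~\ref{prop:ilglsat}, without any further induction beyond what is already encoded in those results. The key observation is that an interpretation $\Interp{-}$ on $\algfont{A}$, post-composed with the embedding $\theta_{\algfont{A}}$, yields an interpretation on $Com^\vtiny{(I)LGL}(Pr^\vtiny{(I)LGL}(\algfont{A}))$, and this post-composed interpretation coincides with the one generated by the valuation $\Valuation_{\Interp{-}}$ on the frame $Pr^\vtiny{(I)LGL}(\algfont{A})$.

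First I would verify this agreement. Because $\theta_{\algfont{A}}$ is an (I)LGL algebra homomorphism by Theorem~\ref{thm:ilglrep}, it commutes with every connective, so by a straightforward induction on $\varphi$ we obtain $\theta_{\algfont{A}}(\Interp{\varphi}) = \Interp{\varphi}_{\Valuation_{\Interp{-}}}$, where the right-hand side denotes the complex-algebra interpretation induced by $\Valuation_{\Interp{-}}$. The base case holds by definition of $\Valuation_{\Interp{-}}$ (on atomic $p$) and by $\theta_{\algfont{A}}$ preserving $\top$ and $\bot$; the inductive steps for $\land, \lor, \to, *, \wand, \dnaw$ are immediate from homomorphicity.

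Next, applying Proposition~\ref{prop:ilglsat} to the relational model $(Pr^\vtiny{(I)LGL}(\algfont{A}), \Valuation_{\Interp{-}})$ gives $F \vDash_{\Valuation_{\Interp{-}}} \varphi$ iff $F \in \Interp{\varphi}_{\Valuation_{\Interp{-}}}$. Combining with the identification above yields $F \vDash_{\Valuation_{\Interp{-}}} \varphi$ iff $F \in \theta_{\algfont{A}}(\Interp{\varphi})$, and unfolding the definition $\theta_{\algfont{A}}(a) = \{F \in Pr^\vtiny{(I)LGL}(\algfont{A}) \mid a \in F\}$ this is exactly $\Interp{\varphi} \in F$, as required.

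No genuine obstacle arises: the substantive work was already done in proving Theorem~\ref{thm:ilglrep} (where the nontrivial prime-predicate argument handles the residuated operators) and in Proposition~\ref{prop:ilglsat}. The corollary is essentially their composition, and the same argument applies uniformly to both the intuitionistic and Boolean variants by the usual substitution of $\succcurlyeq$ for $=$.
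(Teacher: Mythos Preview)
Your proof is correct and matches the paper's approach: the corollary is stated with a bare \qed, and the sentence introducing it says precisely that it follows from $\theta_{\mathbb{A}}$ being an embedding (Theorem~\ref{thm:ilglrep}) together with Proposition~\ref{prop:ilglsat}. You have simply made explicit the composition of those two results.
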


\begin{cor}[Relational Soundness and Completeness]
For all formulas $\varphi, \psi$ of \logicfont{(I)LGL}: $\varphi \vdash \psi$ is provable in $\mathrm{(I)LGL}_{\mathrm{H}}$ iff $\varphi \vDash \psi$ in the relational semantics. \qed%
\end{cor}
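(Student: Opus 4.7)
The plan is to derive this as an immediate consequence of the chain: algebraic soundness and completeness (Theorem~\ref{thm:algilglcomplete}), the representation theorem (Theorem~\ref{thm:ilglrep}), and the coincidence of relational and algebraic satisfaction (Proposition~\ref{prop:ilglsat} and the preceding corollary). Both directions reduce to transporting truth between a relational model and an associated complex algebra or prime filter frame, so no fresh combinatorics are required.

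For the soundness direction, I would start from a relational model $(\mathcal{X}, \mathcal{V})$ and an arbitrary point $x$ with $x \vDash_{\mathcal{V}} \varphi$. By Lemma~\ref{lem:ilglcomplex} the complex algebra $Com^\vtiny{(I)LGL}(\mathcal{X})$ is an (I)LGL algebra, and $\mathcal{V}$ extends to an interpretation $\Interp{-}_{\mathcal{V}}$ on it. Proposition~\ref{prop:ilglsat} gives $x \in \Interp{\varphi}_{\mathcal{V}}$. If $\varphi \vdash \psi$ is provable in $\mathrm{(I)LGL}_{\mathrm{H}}$, Theorem~\ref{thm:algilglcomplete} yields $\Interp{\varphi}_{\mathcal{V}} \subseteq \Interp{\psi}_{\mathcal{V}}$, so $x \in \Interp{\psi}_{\mathcal{V}}$ and, again by Proposition~\ref{prop:ilglsat}, $x \vDash_{\mathcal{V}} \psi$. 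Quantifying over $\mathcal{X}$, $\mathcal{V}$, and $x$ gives $\varphi \vDash \psi$.

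For completeness I would argue contrapositively. Suppose $\varphi \vdash \psi$ is not provable. By Theorem~\ref{thm:algilglcomplete} there exist an (I)LGL algebra $\mathbb{A}$ and an interpretation $\Interp{-}$ on $\mathbb{A}$ with $\Interp{\varphi} \not\leq \Interp{\psi}$. Applying the representation embedding $\theta_{\mathbb{A}}$ from Theorem~\ref{thm:ilglrep} preserves and reflects order, so $\theta_{\mathbb{A}}(\Interp{\varphi}) \not\subseteq \theta_{\mathbb{A}}(\Interp{\psi})$, and hence there is a prime filter $F \in Pr^\vtiny{(I)LGL}(\mathbb{A})$ with $\Interp{\varphi} \in F$ and $\Interp{\psi} \notin F$. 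The corollary immediately preceding the statement then gives, for the induced valuation $\mathcal{V}_{\Interp{-}}(p) = \theta_{\mathbb{A}}(\Interp{p})$ on $Pr^\vtiny{(I)LGL}(\mathbb{A})$, that $F \vDash_{\mathcal{V}_{\Interp{-}}} \varphi$ and $F \not\vDash_{\mathcal{V}_{\Interp{-}}} \psi$, witnessing $\varphi \not\vDash \psi$.

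There is no real obstacle: all the work has been done in the representation theorem (where the prime-predicate machinery handled the residuated operators) and in the preceding corollary (which packages the inductive comparison of algebraic and relational semantics via $\theta_{\mathbb{A}}$). The only care needed is to note that both directions of the biconditional flow through the same two bridges --- complex algebras for soundness, prime filter frames for completeness --- and that the quantification is handled uniformly by ranging over arbitrary models on each side.
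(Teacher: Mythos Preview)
Your proposal is correct and follows exactly the route the paper intends: the paper gives no written proof for this corollary (it is marked \qed), but the surrounding text makes clear it is meant as an immediate consequence of Theorem~\ref{thm:algilglcomplete}, Proposition~\ref{prop:ilglsat}, Theorem~\ref{thm:ilglrep}, and the preceding corollary, which is precisely the chain you unpack. You have simply made explicit the two bridges (complex algebras for soundness, prime filter frames for completeness) that the paper leaves to the reader.
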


The assignment of complex algebras and prime filter frames can be made functorial by defining $Pr^\vtiny{(I)LGL}(f) = f^{-1}$ and $Com^\vtiny{(I)LGL}(g) = g^{-1}$: that these define the appropriate morphisms is a straightforward but tedious verification. 

\begin{lem}\label{lem:ILGLmorphisms}
The functors $Pr^{\vtiny{(I)LGL}}$ and $Com^{\vtiny{(I)LGL}}$ are well defined. \qed
\end{lem}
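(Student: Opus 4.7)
The plan is to verify four things: (i) $Pr^{\vtiny{(I)LGL}}(g) = g^{-1}$ sends prime filters to prime filters and satisfies the six conditions of Definition~\ref{defn:ilglmorphism}; (ii) $Com^{\vtiny{(I)LGL}}(f) = f^{-1}$ sends upwards-closed sets to upwards-closed sets and preserves all algebraic operations; and (iii) both assignments respect identities and composition, which is immediate from functoriality of inverse image. Throughout, I treat the ILGL case explicitly, noting the LGL case reduces by collapsing $\preccurlyeq$ to $=$.

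For $Pr^{\vtiny{(I)LGL}}$ applied to an (I)LGL algebra homomorphism $g\colon \mathbb{A}\to\mathbb{A}'$, the fact that $g^{-1}$ sends prime filters to prime filters is a standard consequence of $g$ being a lattice homomorphism that preserves $\top,\bot$. Conditions 1 and 3 of Definition~\ref{defn:ilglmorphism} are the ``forth'' directions: $F\subseteq F'$ obviously implies $g^{-1}(F)\subseteq g^{-1}(F')$, and to see that $F''\in F\circ_{\mathbb{A}'}F'$ implies $g^{-1}(F'')\in g^{-1}(F)\circ_{\mathbb{A}}g^{-1}(F')$ one uses $g(a*b)=g(a)*g(b)$. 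The main obstacle is the four existential ``back'' conditions 2, 4, 5, 6: each requires building prime filters with prescribed membership/exclusion properties. In every case I would define a prime predicate $P$ on tuples of proper filters/ideals capturing exactly the desired conditions, verify the two clauses of Definition~\ref{defn:primepredicate} using standard $*$-monotonicity and the distributivity arguments already rehearsed in the proof of Theorem~\ref{thm:ilglrep}, seed $P$ with filters/ideals generated by the relevant elements, and then invoke the Prime Extension Lemma (Lemma~\ref{lem:primeextension}) to obtain prime witnesses. Each of conditions 2, 4, 5, 6 corresponds to a different arity and placement of the filter/ideal arguments, but the template is uniform.

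For $Com^{\vtiny{(I)LGL}}$ applied to an (I)LGL morphism $f\colon \mathcal{X}\to\mathcal{X}'$, upwards-closure of $f^{-1}(A)$ for $A\in\mathcal{P}_{\succcurlyeq}(X')$ follows from condition 1 of Definition~\ref{defn:ilglmorphism}. Preservation of $\cap,\cup,X,\emptyset$ is automatic for inverse image. Preservation of $\Rightarrow$ uses the order back-condition 2 to turn a hypothetical witness of $f(x)\preccurlyeq' y'$ with $y'\in A\setminus B$ into a witness in $X$. Preservation of $\bullet$ splits into two inclusions: $f^{-1}(A)\bullet_{\mathcal{X}}f^{-1}(B)\subseteq f^{-1}(A\bullet_{\mathcal{X}'}B)$ uses condition 3 together with monotonicity of the composition under $\preccurlyeq'$, while the reverse inclusion uses the back-condition 4 to pull an existential witness downstairs. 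Preservation of $\Rres$ and $\Lres$ similarly uses condition 1 together with the back-conditions 5 and 6 respectively; these are precisely the residuation analogues of condition 4, which is why Definition~\ref{defn:ilglmorphism} lists all three.

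Finally, functoriality---preservation of identities and of composition---is immediate because $(\mathrm{id})^{-1}=\mathrm{id}$ and $(g\circ h)^{-1}=h^{-1}\circ g^{-1}$ as set-theoretic operations, and these identities hold on the nose at the level of complex algebras and prime filter frames. The only genuinely substantive content of the lemma is thus the verification of the back-conditions on $Pr^{\vtiny{(I)LGL}}(g)$ via the Prime Extension Lemma; everything else is a bookkeeping exercise.
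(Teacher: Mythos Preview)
Your proposal is correct and matches what the paper expects: the paper itself omits the proof entirely, describing it only as ``a straightforward but tedious verification,'' and your outline---forward conditions by homomorphism/morphism properties, back conditions via prime predicates and Lemma~\ref{lem:primeextension}, and preservation of the complex-algebra operations from the clauses of Definition~\ref{defn:ilglmorphism}---is exactly the shape such a verification takes. One minor note: your variable names $f,g$ are swapped relative to the paper's convention, but this is cosmetic.
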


However we are not yet in a position to give a dual equivalence of categories: in particular, a dual adjunction does not hold between these functors: there exist frames $\mathcal{X}$ such that \emph{no} morphism exists between $\mathcal{X}$ and $Pr^{\vtiny{(I)LGL}}Com^{\vtiny{(I)LGL}}(\mathcal{X})$ (cf. Venema's comments~\cite[pg 352]{Venema2007}). We \emph{can} obtain this by introducing topology, however. We first define ILGL spaces. This definition (necessarily) extends that of the topological duals of Heyting algebra given by Esakia duality~\cite{Esakia}. The coherence conditions on the composition $\circ$ are inspired by those found on the topological duals of gaggles~\cite{Dunn2008}. 




\begin{defi}[ILGL Space]
An \emph{ILGL space} is a structure $\mathcal{X} = (X, \mathcal{O}, \preccurlyeq, \circ)$ such that:
\begin{enumerate}
\item $(X, \mathcal{O}, \preccurlyeq)$ is an Esakia space~\cite{Esakia};
\item $(X, \preccurlyeq, \circ)$ is an ILGL frame;
\item The upwards-closed clopen sets of $(X, \mathcal{O}, \preccurlyeq)$ are closed under $\bullet_{\mathcal{X}}, \Rres_{\mathcal{X}}, \Lres_{\mathcal{X}}$;
\item If $x \not\in y \circ z$ then there exist upwards-closed clopen sets $C_1, C_2$ such that $y \in C_1, z \in C_2$ and $x \not\in C_1 \bullet_{\mathcal{X}} C_2$.
\end{enumerate}
A morphism of ILGL spaces is a continuous ILGL morphism, yielding a category $\mathrm{ILGLSp}$. \qed%
\end{defi}

Once again, substituting $\preccurlyeq$ for $=$ in the definition of ILGL space obtains the topological duals for LGL algebras.

\begin{defi}[LGL Space]
An \emph{LGL space} is a structure $\mathcal{X} = (X, \mathcal{O}, \circ)$ such that
\begin{enumerate}
\item $(X, \mathcal{O})$ is an Stone space~\cite{Stone};
\item $(X, \circ)$ is an LGL frame;
\item The clopen sets of $(X, \mathcal{O})$ are closed under $\bullet_{\mathcal{X}}, \Rres_{\mathcal{X}}, \Lres_{\mathcal{X}}$;
\item If $x \not\in y \circ z$ then there exist clopen sets $C_1, C_2$ such that $y \in C_1, z \in C_2$ and $x \not\in C_1 \bullet_{\mathcal{X}} C_2$.
\end{enumerate}
A morphism of LGL spaces is a continuous LGL morphism, yielding a category $\mathrm{LGLSp}$. \qed%
\end{defi} 

We now adapt $Pr^\vtiny{(I)LGL}$ and $Com^\vtiny{(I)LGL}$ for these categories and define the natural isomorphisms obtaining the duality.  First, it is straightforward to augment the prime filter frame with the topological structure required to make it an (I)LGL space. We define a subbase by $S = \{ \theta_{\mathbb{A}}(a) \mid a \in A \} \cup \{ \overline{\theta_{\mathbb{A}}(a)} \mid a \in A \}$ where $\overline{\theta_{\mathbb{A}}(a)}$ denotes the set complement. This generates a topology $\mathcal{O}_{\mathbb{A}}$ and we can (abusing notation) define $Pr^\vtiny{ILGL}: \mathrm{ILGLAlg} \rightarrow \mathrm{ILGLSp}$ by $Pr^\vtiny{ILGL}(\mathbb{A}) = (Pr(A), \mathcal{O}_{\mathbb{A}}, \subseteq, \circ_{\mathbb{A}})$ and $Pr^\vtiny{ILGL}(f) = f^{-1}$. In the case of \logicfont{LGL}, the sets $\overline{\theta_{\mathbb{A}}(a)}$ are redundant as for every prime filter $F$ on a Boolean algebra $\mathbb{A}$, $a \in A$ implies $a \in F$ or $\neg a \in F$. Hence $\mathcal{B} = \{ \theta_{\mathbb{A}}(a) \mid a \in A \}$ defines a base for a topology $\mathcal{O}_{\mathbb{A}}$ and we can set $Pr^\vtiny{LGL}(\mathbb{A}) = (Pr(A), \mathcal{O}_{\mathbb{A}}, \circ_{\mathbb{A}})$ and $Pr^\vtiny{LGL}(f) = f^{-1}$. 

Conversely, given an ILGL space $\mathcal{X}$ we now take the set of upwards-closed clopen sets $\mathcal{CL}_{\succcurlyeq}(\mathcal{X})$ as the carrier of an ILGL algebra, together with the operations of the ILGL complex algebra; ILGL space property (3) ensures this is well defined. Hence we take $Clop^\vtiny{ILGL}: \mathrm{ILGLSp} \rightarrow \mathrm{ILGLAlg}$ to be $Clop^\vtiny{ILGL}(\mathcal{X}) = (\mathcal{CL}_{\succcurlyeq}(\mathcal{X}), \cap, \cup, \Rightarrow, X, \emptyset, \bullet_{\mathcal{X}}, \Rres_{\mathcal{X}}, \Lres_{\mathcal{X}})$ and $Clop^\vtiny{ILGL}(g) = g^{-1}$.  For \logicfont{LGL} we simply take the clopen sets of the underlying topological space $\mathcal{CL}(\mathcal{X})$ together with the operations of the LGL complex algebra. Hence $Clop^\vtiny{LGL}(\mathcal{X}) = (\mathcal{CL}(\mathcal{X}), \cap, \cup, \setminus, X, \emptyset, \bullet_{\mathcal{X}}, \Rres_{\mathcal{X}}, \Lres_{\mathcal{X}})$ and $Clop^\vtiny{LGL}(g) = g^{-1}$.

The well-definedness of these functors can be seen by straightforwardly combining Stone (Esakia) duality for Boolean (Heyting) algebras with the preceding results relating to (I)LGL structures. For natural transformations, one --- $\theta$ --- is already defined in the representation theorem. The other is given by  $\eta_{\mathcal{X}}(x) = \{ A \in Clop^\vtiny{(I)LGL}(\mathcal{X}) \mid x \in A \}$. That these are natural isomorphisms is an easy extension of the case for Stone (Esakia) duality, with the additional verification that each $\eta_{\mathcal{X}}$ is an isomorphism on the (I)LGL frame structure following from (I)LGL space property (4). We thus obtain the duality theorem. For LGL algebras this is also obtainable as a specific case of Bimb\'o \& Dunn's duality theorem for Boolean gaggles (\cite{Dunn2008}, Theorem 9.2.22).

\begin{thm}[Duality Theorem for \logicfont{(I)LGL}]
$\theta: Id_{\mathrm{(I)LGLAlg}} \rightarrow Clop^{\vtiny{(I)LGL}}_{\succcurlyeq}$ and $\eta: Id_{\mathrm{(I)LGLSp}} \rightarrow Pr^{\vtiny{(I)LGL}}Clop^{\vtiny{(I)LGL}}_{\succcurlyeq}$ form a dual equivalence of categories between $\mathrm{(I)LGLAlg}$ and $\mathrm{(I)LGLSp}$. \qed
\end{thm}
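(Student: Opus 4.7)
The plan is to bootstrap the result from Esakia duality (Stone duality in the LGL case) for the underlying Heyting (Boolean) algebra plus Priestley-style topology, and then verify that the additional residuated-monoidal data transports through. I describe the ILGL case throughout; substituting $=$ for $\preccurlyeq$ yields the LGL case.

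For each ILGL algebra $\mathbb{A}$, Theorem~\ref{thm:ilglrep} already establishes that $\theta_{\mathbb{A}}$ is an injective ILGL algebra homomorphism. To upgrade this to an isomorphism in $\mathrm{ILGLAlg}$, it suffices to show surjectivity onto $\mathcal{CL}_{\succcurlyeq}(Pr^{\vtiny{ILGL}}(\mathbb{A}))$: this is immediate from Esakia duality, since the clopen up-sets of the prime filter space coincide with the images $\theta_{\mathbb{A}}(a)$ for $a \in A$. The subbase $S$ was chosen precisely so that compactness of the space collapses finite Boolean combinations of $\theta$-images back into a single $\theta$-image.

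For each ILGL space $\mathcal{X}$, I need $\eta_{\mathcal{X}}$ to be an isomorphism in $\mathrm{ILGLSp}$. Esakia duality supplies that $\eta_{\mathcal{X}}$ is simultaneously a homeomorphism and an order-isomorphism: injectivity and order-reflection use Priestley separation, surjectivity uses compactness, and continuity is obtained by checking that subbasic sets $\theta(A)$ and $\overline{\theta(A)}$ pull back to $A$ and $\overline{A}$ respectively, which are clopen in $\mathcal{X}$. The paper-specific work is showing that $\eta_{\mathcal{X}}$ preserves and reflects $\circ$. The forward direction is immediate from the definition of $\bullet_{\mathcal{X}}$: if $x \in y \circ z$, then any clopen up-sets $A \ni y$, $B \ni z$ satisfy $x \in A \bullet_{\mathcal{X}} B$, so $A \bullet_{\mathcal{X}} B \in \eta_{\mathcal{X}}(x)$ and hence $\eta_{\mathcal{X}}(x) \in \eta_{\mathcal{X}}(y) \circ_{Clop(\mathcal{X})} \eta_{\mathcal{X}}(z)$. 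The converse is exactly where ILGL space property (4) is required: if $x \not\in y \circ z$, property (4) supplies clopen up-sets $C_1 \ni y$, $C_2 \ni z$ with $x \not\in C_1 \bullet_{\mathcal{X}} C_2$, whence $C_1 \bullet_{\mathcal{X}} C_2 \not\in \eta_{\mathcal{X}}(x)$ and so $\eta_{\mathcal{X}}(x) \not\in \eta_{\mathcal{X}}(y) \circ \eta_{\mathcal{X}}(z)$.

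Naturality of both transformations is a routine diagram chase: for a morphism $f: \mathbb{A} \to \mathbb{B}$ in $\mathrm{ILGLAlg}$, one computes $Pr^{\vtiny{ILGL}}(f)^{-1}(\theta_{\mathbb{A}}(a)) = \{ G \in Pr(\mathbb{B}) \mid f(a) \in G \} = \theta_{\mathbb{B}}(f(a))$, and naturality of $\eta$ follows similarly by unfolding definitions. The main obstacle in the whole argument is not the order-theoretic or topological content, which reuses Esakia/Stone duality cleanly, but rather verifying that ILGL space axioms (3) and (4) are precisely the conditions needed to make $\eta$ respect the residuated-monoidal operators on the dual side; the formulation of ILGL space was designed so that this verification goes through directly.
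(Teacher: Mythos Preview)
Your proposal is correct and follows essentially the same approach as the paper, which simply remarks that the result is ``an easy extension of the case for Stone (Esakia) duality, with the additional verification that each $\eta_{\mathcal{X}}$ is an isomorphism on the (I)LGL frame structure following from (I)LGL space property (4).'' You have filled in precisely the details the paper leaves implicit: surjectivity of $\theta_{\mathbb{A}}$ from Esakia duality, and the use of property (4) to show $\eta_{\mathcal{X}}$ reflects $\circ$.
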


\section{Logics of Bunched Implications}\label{sec:bunchedimplications}
\subsection{Algebra and Frames for (B)BI}
We now extend the results of the previous section to the logics \logicfont{BI} and \logicfont{BBI} by systematically extending the structures defined for \logicfont{LGL} and \logicfont{ILGL}. We begin once again with the lattice-based algebras that interpret the logics.

\begin{defi}[(B)BI Algebra]\label{def:bialgebra}\hfill
\begin{enumerate}
\item A \emph{BI algebra} is an algebra $\mathbb{A} = (A, \land, \lor, \rightarrow, \top, \bot, *, \wand, \mtop)$
such that $(A, \land, \lor, \rightarrow, \top, \bot, *, \wand, \wand)$ is an ILGL algebra and $(A, *, \mtop)$ a commutative monoid: that is, $*$ is commutative and associative with $\mtop$ a unit for $*$.
\item A \emph{BBI algebra} is a BI algebra $\mathbb{A} = (A, \land, \lor, \rightarrow, \top, \bot, *, \wand, \mtop)$ for which $(A, \land, \lor, \rightarrow, \top, \bot)$ is a Boolean algebra. \qed%
\end{enumerate}
\end{defi}

\noindent
Thus BI algebras are the subclass of ILGL algebras with a commutative and associative $*$ that has unit $\mtop$, and likewise for BBI algebras with respect to LGL algebras. In particular, commutativity of $*$, together with residuation, causes $\wand = \dnaw$.  Interpretations on BI and BBI algebras are given in much the same way as ILGL and LGL algebras, with the addition of $\Interp{\mtop} = \mtop$. A soundness and completeness theorem for algebraic interpretations is proved in the same fashion as Theorem~\ref{thm:algilglcomplete}.

\begin{thm}[Algebraic Soundness \& Completeness]\label{thm:bialgcomplete}
For \logicfont{(B)BI} formulas $\varphi$, $\psi$, $\varphi \vdash \psi$ is provable in $\mathrm{(B)BI}_{\mathrm{H}}$ iff for all (B)BI algebras $\mathbb{A}$ and all interpretations $\Interp{-}$ on $\mathbb{A}$, $\Interp{\varphi} \leq \Interp{\psi}$. \qed%
\end{thm}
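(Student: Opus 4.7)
The plan is to mirror the strategy of Theorem~\ref{thm:algilglcomplete}, splitting the argument into a routine soundness direction and a Lindenbaum--Tarski completeness direction, with the additional (B)BI structure (associativity, commutativity, and unit of $*$, together with Boolean negation in the BBI case) handled by the extra rules of Fig.~\ref{fig:hilbert-bbi} (respectively, rule 0).

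For soundness, I would proceed by induction on the length of a derivation in $\mathrm{(B)BI}_{\mathrm{H}}$. The cases for rules 1--15 are already carried out in the proof of Theorem~\ref{thm:algilglcomplete} and go through verbatim once one observes that every BI algebra is, by Definition~\ref{def:bialgebra}, an ILGL algebra (with $\wand = \dnaw$ forced by commutativity of $*$ and residuation). For rules 16, 17, 18, the inductive step is immediate because the corresponding equalities $(a*b)*c = a*(b*c)$, $a*b = b*a$ and $a*\mtop = a$ are precisely the axioms imposed on $(A,*,\mtop)$ by Definition~\ref{def:bialgebra}. In the BBI case, rule 0 is validated by the fact that the underlying lattice is Boolean, so $\neg\neg a = a$.

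For completeness I would build the Lindenbaum--Tarski algebra. Let $\equiv$ be the relation on $\languageILGL$ defined by $\varphi \equiv \psi$ iff both $\varphi \vdash \psi$ and $\psi \vdash \varphi$ are provable in $\mathrm{(B)BI}_{\mathrm{H}}$. Using rules 4--15 together with 16--18, one checks that $\equiv$ is an equivalence relation that is in fact a congruence for each of the connectives $\land,\lor,\rightarrow,*,\wand,\mtop,\top,\bot$; this is the one genuinely tedious step, but each verification is a short syntactic derivation using the rules in Fig.~\ref{fig:hilbert-lgl} and Fig.~\ref{fig:hilbert-bbi}. Let $\mathbb{A}_{\mathrm{LT}}$ be the quotient with the pointwise operations and with order $[\varphi] \leq [\psi]$ iff $\varphi \vdash \psi$ is provable. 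The Heyting algebra structure on $\mathbb{A}_{\mathrm{LT}}$ comes from rules 1--10, exactly as in Theorem~\ref{thm:algilglcomplete}; rules 11--13 give residuation of $*$ with $\wand$; rules 16--18 give the commutative monoid laws for $*$ with unit $\mtop$. In the BBI case, rule 0 together with the additive rules upgrades the Heyting algebra to a Boolean algebra. Hence $\mathbb{A}_{\mathrm{LT}}$ is a (B)BI algebra.

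Finally, define the canonical interpretation $\Interp{-}_{\mathrm{LT}}$ on $\mathbb{A}_{\mathrm{LT}}$ by $\Valuation(\mathrm{p}) = [\mathrm{p}]$; an easy induction on $\varphi$ shows $\Interp{\varphi}_{\mathrm{LT}} = [\varphi]$. Now if $\Interp{\varphi} \leq \Interp{\psi}$ in every (B)BI algebra under every interpretation, then in particular $[\varphi] \leq [\psi]$ in $\mathbb{A}_{\mathrm{LT}}$, which by definition means $\varphi \vdash \psi$ is provable. The hard part is really just checking that $\equiv$ is a congruence for $*$ and $\wand$ and that residuation lifts to the quotient; everything else is immediate from the analogous step in Theorem~\ref{thm:algilglcomplete}, and unlike the relational completeness results to come, there is no need to appeal to prime extension arguments here.
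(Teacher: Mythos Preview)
Your proposal is correct and follows exactly the approach the paper indicates: the paper does not give a detailed proof here, simply noting that it is ``proved in the same fashion as Theorem~\ref{thm:algilglcomplete}'', which in turn is obtained ``by constructing Lindenbaum--Tarski algebras from the Hilbert systems''. You have spelled out precisely this standard construction, with the additional (B)BI rules handled just as one would expect.
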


Next we define a class of relational models appropriate for interpreting \logicfont{(B)BI}. The outermost universal quantification in each frame condition is left implicit for readability.

\begin{defi}[BI Frame]\label{defn:biframe}
A \emph{BI Frame} is a tuple $\mathcal{X} = (X, \succcurlyeq, \circ, E)$ where $(X, \succcurlyeq, \circ)$ is an ILGL frame, $E \subseteq X$ and the following conditions are satisfied:
\[\begin{array}{rlrl}
\text{(Commutativity)} & z \in x \circ y \rightarrow z \in y \circ x & \text{(Closure)} & e \in E \land e' \succcurlyeq e \rightarrow e' \in E \\
\text{(Unit Existence)} & \exists e \in E(x \in x \circ e) & \text{(Coherence)} & e \in E \land x \in y \circ e \rightarrow x \succcurlyeq y \\
\multicolumn{4}{l}{\text{(Associativity) } t' \succcurlyeq t \in x \circ y \land w \in t' \circ z \rightarrow \exists s, s', w'(s' \succcurlyeq s \in y \circ z \land w \succcurlyeq w' \in x \circ s')} \qed
\end{array}\]
\end{defi}

The notion of BI frame and its associated semantics can be defined in many ways, and indeed those given in Section~\ref{subsec:bunchedimplications} can suffice for a duality theorem if one defines the natural transformations and morphisms slightly differently. These choices are driven by the requirements that the model validates the associativity axiom and that the semantic clauses for $*$ and $\wand$ satisfy persistence. This requires careful interplay between the definition of (Associativity), the conditions relating $\circ$ and $\preccurlyeq$, and the semantic clauses for $*$ and $\wand$. One such solution is that given in the original papers on \logicfont{BI}~\cite{OP99}, in which $\circ$ is a deterministic,  associative function that is bifunctorial with respsect to the order. This is complicated somewhat when $\circ$ is non-deterministic and partial, however. A general analysis of the choices available for such a $\circ$ when the definition of the associativity-like condition it satisfies is kept fixed can be found in the work of Cao et al.~\cite{CCA17}.

Here, we use a definition that enables a uniform extension of the structures and theorems relating to \logicfont{ILGL}. In particular, we use a more general than usual formulation of (Associativity) that allows us to directly extend the results of Section~\ref{sec:layeredgraphlogics}. This solution places no coherence conditions on $\circ$ and $\preccurlyeq$ and thus requires what Cao et al.~\cite{CCA17} call the ``strong semantics'' for $*$ and $\wand$ to maintain persistence: precisely the clauses given for (I)LGL frames (Fig.~\ref{fig:sat-LGL-frame}) together with the semantic clause for $\mtop$ given in Fig.~\ref{fig:sat-bi}.
\begin{figure}
\centering
\hrule
\vspace{1mm}
\setlength\tabcolsep{3pt}
\setlength\extrarowheight{2pt}
\begin{tabular}{c c c c l r c c c c r r c}
$x$ & $\vDash_{\mathcal{V}}$ & $\mtop$ & iff & \myalign{l}{$x \in E$} & &
\end{tabular}
\caption{Satisfaction for (B)BI.}
\vspace{1mm}
\hrule%
\label{fig:sat-bi}
\end{figure}
This can be seen as a strict generalization of the UDMF models given in Section~\ref{subsec:bunchedimplications}, as witnessed by the fact that the definition of \emph{non-deterministic associativity} given there implies the frame condition (Associativity) when $\circ$ is additionally upwards and downwards closed.

\begin{prop}
Every upwards and downwards closed monoidal frame is a BI frame. \qed%
\end{prop}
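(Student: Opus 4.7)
The plan is to verify each clause of Definition~\ref{defn:biframe} in turn, noting that four of the five conditions transfer directly and that the real content lies in deriving the stronger form of (Associativity) from the non-deterministic associativity of the monoidal frame.

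First I would observe that (Commutativity), (Unit Existence), (Closure), and (Coherence) are built into the definition of a monoidal frame, so nothing needs to be done for them. All the work is therefore concentrated in verifying (Associativity), which asserts that whenever $t' \succcurlyeq t \in x \circ y$ and $w \in t' \circ z$, there exist $s, s', w'$ with $s' \succcurlyeq s \in y \circ z$ and $w \succcurlyeq w' \in x \circ s'$. The strategy is to bridge the gap between the purely ``set-theoretic'' non-deterministic associativity of monoidal frames and the order-aware associativity required of BI frames using upwards closure once and downwards closure twice.

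Concretely, the chain of inferences I would carry out is as follows. Starting from $t \in x \circ y$ and $t \preccurlyeq t'$, upwards closure supplies $x^* \succcurlyeq x$ and $y^* \succcurlyeq y$ with $t' \in x^* \circ y^*$. Together with $w \in t' \circ z$, this is exactly the left-hand side of the monoidal frame's non-deterministic associativity, which yields some $s^* \in y^* \circ z$ with $w \in x^* \circ s^*$. Downwards closure applied to $s^* \in y^* \circ z$ (using $y \preccurlyeq y^*$ and $z \preccurlyeq z$) produces $s \preccurlyeq s^*$ with $s \in y \circ z$, so taking $s' := s^*$ gives the witness $s' \succcurlyeq s \in y \circ z$. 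A second application of downwards closure to $w \in x^* \circ s'$ (using $x \preccurlyeq x^*$ and $s' \preccurlyeq s'$) produces $w' \preccurlyeq w$ with $w' \in x \circ s'$, completing the required tuple.

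The main obstacle, such as it is, is checking that the two closure properties and the non-deterministic associativity can be composed in the correct order to hit \emph{precisely} the shape of the BI frame's (Associativity) clause — in particular that the intermediate element $s^*$ produced by non-deterministic associativity can be kept as the ``upper'' witness $s'$ while the ``lower'' witness $s$ is recovered by downwards closure, rather than the other way around. Once this bookkeeping is set up correctly, the verification is essentially forced by the definitions and no further frame-theoretic ingenuity is needed.
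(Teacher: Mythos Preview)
Your proposal is correct and matches the approach the paper has in mind: the paper omits the proof entirely (marking it with \qed) but states just before the proposition that ``the definition of \emph{non-deterministic associativity} given there implies the frame condition (Associativity) when $\circ$ is additionally upwards and downwards closed,'' which is precisely the chain of inferences you carry out. Your bookkeeping with $x^*, y^*, s^*$ and the single use of upwards closure followed by two uses of downwards closure is exactly the intended verification.
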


Further, the respective semantic clauses for $*$ and $\wand$ are equivalent when the underlying model is upwards and downwards closed~\cite{CCA17}. The converse does not hold: not every BI frame is upwards or downwards closed. However, every BI frame generates a upwards and downwards closed monoidal frame with an equivalent satisfaction relation. Given a BI frame $\mathcal{X} = (X, \preccurlyeq, \circ, E)$, define its upwards and downwards closure by $\mathcal{X}^{\Uparrow \Downarrow} = (X, \preccurlyeq, \circ^{\Uparrow \Downarrow}, E)$ where $x \in y \circ^{\Uparrow \Downarrow} z$ iff there exist $x', y', z'$ such that $x' \preccurlyeq x$, $y \preccurlyeq y'$, $z \preccurlyeq z'$ and $x' \in y' \circ z'$. By taking care over the respective associativity properties of each frame, this can easily be seen to be an upwards and downwards closed monoidal frame. Letting $\vDash'$ denote the satisfaction relation defined for UDMFs in Fig.~\ref{fig:sat-BBI}, we have the following result.

\begin{prop}
For all BI frames $\mathcal{X}$, persistent valuations $\mathcal{V}$ and formulas $\varphi$ of \logicfont{BI}, $\mathcal{X}, x \vDash_{\mathcal{V}} \varphi$ iff $\mathcal{X}^{\Uparrow \Downarrow}, x \vDash'_{\mathcal{V}} \varphi$. \qed%
\end{prop}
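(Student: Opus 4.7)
The plan is a straightforward structural induction on the formula $\varphi$. The propositional (intuitionistic/Boolean) connectives $\top$, $\bot$, $\land$, $\lor$ and $\rightarrow$, along with atomic propositions, are interpreted identically in the two semantics since they only depend on $\preccurlyeq$ and $\mathcal{V}$, which are unchanged in passing from $\mathcal{X}$ to $\mathcal{X}^{\Uparrow\Downarrow}$. The unit case $\mtop$ is likewise immediate, as $E$ is the same set in both frames. So the real content lies in the multiplicative cases $*$ and $\wand$, where one must translate between the ``strong'' semantics of Figure~\ref{fig:sat-LGL-frame} on $\mathcal{X}$ and the ``standard'' UDMF semantics of Figure~\ref{fig:sat-BBI} on $\mathcal{X}^{\Uparrow\Downarrow}$. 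Throughout, the persistence of $\vDash_{\mathcal{V}}$ for all formulas of \logicfont{BI} (which the paper has already noted) and the induction hypothesis are the two key tools.

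For the case $\varphi * \psi$, the left-to-right direction is direct: given witnesses $x', y, z$ with $x' \preccurlyeq x$, $x' \in y \circ z$, $y \vDash_{\mathcal{V}} \varphi$ and $z \vDash_{\mathcal{V}} \psi$ in $\mathcal{X}$, the triple $(x', y, z)$ itself shows $x \in y \circ^{\Uparrow\Downarrow} z$ (by taking $y' = y$, $z' = z$ in the definition), and the induction hypothesis transports the satisfactions. For the converse, unfolding $x \in y \circ^{\Uparrow\Downarrow} z$ yields $x'$, $y'$, $z'$ with $x' \preccurlyeq x$, $y \preccurlyeq y'$, $z \preccurlyeq z'$ and $x' \in y' \circ z'$; the induction hypothesis then gives $y \vDash_{\mathcal{V}} \varphi$ and $z \vDash_{\mathcal{V}} \psi$, and persistence lifts these to $y'$ and $z'$ to produce witnesses for the strong clause on $\mathcal{X}$.

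For the case $\varphi \wand \psi$, a symmetric argument works. From left to right, given $y$ and $z$ with $z \in x \circ^{\Uparrow\Downarrow} y$ and $y \vDash'_{\mathcal{V}} \varphi$, I unfold to obtain $x' \succcurlyeq x$, $y' \succcurlyeq y$ and $z' \preccurlyeq z$ with $z' \in x' \circ y'$; persistence and the induction hypothesis give $y' \vDash_{\mathcal{V}} \varphi$, the strong clause on $\mathcal{X}$ produces $z' \vDash_{\mathcal{V}} \psi$, and persistence plus the induction hypothesis finish the job. Conversely, any witnesses $x', y, z$ for the strong clause satisfy $z \in x \circ^{\Uparrow\Downarrow} y$ trivially (choose $x' \succcurlyeq x$, and the identity extensions for $y$, $z$), so the UDMF clause applies.

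I do not expect any serious obstacle: the construction of $\mathcal{X}^{\Uparrow\Downarrow}$ and the specific asymmetric form of the strong clauses (a lower witness for $*$, a higher witness for $\wand$) are evidently designed so that persistence absorbs the difference. The only point requiring mild care is ensuring that persistence is invoked in the correct direction for each multiplicative case — raising on the premise side of $\wand$ but on both components of $*$ — so the direction of the inequalities in the definition of $\circ^{\Uparrow\Downarrow}$ matches the direction required to apply the induction hypothesis.
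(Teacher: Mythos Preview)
Your argument is correct. The paper does not actually give a proof of this proposition --- it is marked with a terminal \qed\ and left to the reader --- so there is nothing to compare against beyond noting that your structural induction, with persistence handling the order mismatches in the $*$ and $\wand$ cases, is exactly the intended routine verification.
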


So, as far as the logic is concerned, these choices are academic: they all collapse to the same notion of validity. For models satisfying one or both of upwards and downwards closure this is implicit in the preservation results of Cao et al.~\cite{CCA17}. Our analysis of the remaining class of models that lack both conditions on $\circ$ and $\preccurlyeq$ completes this picture.

\begin{defi}
A \emph{BBI frame} $\framefont{X}$ is a triple $\framefont{X} = (X, \circ, E)$, such that  $(X, \circ)$ is an LGL frame, $E \subseteq X$ and the following conditions are satisfied:
\[	\begin{array}{llllr}
\text{(Commutativity)} & z \in x \circ y \rightarrow z \in y \circ x & \text{(Coherence)} &
 x \in y \circ e \land e \in E \rightarrow y = x. \\ 
\text{(Unit Existence)} & \exists e \in E(x \in x \circ e) &   \\
\multicolumn{4}{l}{\text{(Associativity) } \quad t \in x \circ y \land w \in t \circ z \rightarrow \exists s(s \in y \circ z \land w \in x \circ s)} & \qed
\end{array}  \]
\end{defi}

We note that when $\preccurlyeq$ is substituted for $=$, all sound choices of the (Associativity) axiom that were possible for BI frames collapse to the axiom given here, while every coherence condition on $\circ$ and $\preccurlyeq$ becomes trivial. Thus, in comparison to \logicfont{BI}, there are far fewer choices to be made about \logicfont{BBI} models and so this definition is more familiar, appearing in the literature in precisely the same form as \emph{BBI frames}~\cite{Brotherston2014hybrid} and \emph{non-deterministic monoids}~\cite{Larchey-Wendling2014}, and slightly modified as \emph{multi-unit separation algebras}~\cite{Dockins2009} and \emph{relational frames}~\cite{Galmiche2006}.

The key difference with the latter definitions is that multi-unit separation algebras are cancellative --- $z \in x \circ y$ and $z \in x \circ y'$ implies $y = y'$ --- and relational frames have a single unit. BBI frames do not enforce cancellativity and have multiple units. This difference is crucial for the present work as the duality theorems do not hold when we restrict to frames satisfying either of these properties. This is witnessed by the fact that \logicfont{BBI} is not expressive enough to distinguish between cancellative/non-cancellative models and single unit/multi-unit models~\cite{Brotherston2014hybrid}, all of which define the same notion of validity~\cite{Larchey-Wendling2014}. An interpretation of the duality theorem might thus be that BBI frames are the most general relational structures that soundly and completely interpret \logicfont{BBI}.

\begin{defi}[(B)BI Morphism]
Given (B)BI frames $\mathcal{X}$ and $\mathcal{X'}$, a (B)BI morphism is an ILGL (LGL) morphism $g: \mathcal{X} \rightarrow \mathcal{X}'$ that additionally satisfies $\text{(7) } e \in E \text{ iff } g(e) \in E'$. \qed%
\end{defi}
BI frames together with BI morphisms form a category $\mathrm{BIFr}$, itself a subcategory of $\mathrm{ILGLFr}$; likewise, BBI frames and BBI morphisms form the category $\mathrm{BBIFr}$, a subcategory of $\mathrm{LGLFr}$. Note that commutativity of $\circ$ collapses the final conditions in the definition of (I)LGL morphism when defined on BBI (BI) frames.  As in the case for the layered graph logics, surjective (B)BI morphisms preserve validity in models.

We now relate the two categories of \logicfont{(B)BI} semantic structures with functorial transformations and a representation theorem.

\begin{defi}[(B)BI Complex Algebra]\label{defn:bicomplex}
Given a (B)BI frame $\mathcal{X} $, the complex algebra of $\mathcal{X}$, $Com^\vtiny{(B)BI}(\mathcal{X})$ is given by extending $Com^\vtiny{ILGL}(\mathcal{X})$ ($Com^\vtiny{LGL}(\mathcal{X})$) with the unit set of $\mathcal{X}$, $E$. \qed%
\end{defi}

\begin{lem}\label{lem:bicomplexalgebra}
Given a (B)BI frame $\mathcal{X}$, $Com^\vtiny{(B)BI}(\mathcal{X})$ is a (B)BI algebra. \qed%
\end{lem}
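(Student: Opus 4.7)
The plan is to show this by reducing to Lemma~\ref{lem:ilglcomplex}: since $Com^{\vtiny{(B)BI}}(\mathcal{X})$ is $Com^{\vtiny{(I)LGL}}(\mathcal{X})$ augmented only by the designated element $E$ (to interpret $\mtop$), the underlying $(A, \land, \lor, \rightarrow, \top, \bot, *, \wand, \dnaw)$-reduct is already an (I)LGL algebra. So, per Definition~\ref{def:bialgebra}, what remains is to check that $E$ lives in the carrier, that $\bullet_{\mathcal{X}}$ is commutative and associative, and that $E$ is a two-sided unit for $\bullet_{\mathcal{X}}$. In the BI case, membership of $E$ in $\mathcal{P}_{\succcurlyeq}(X)$ follows directly from the frame axiom (Closure); in the BBI case, $E \subseteq X$ is all that is required.

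Commutativity is immediate: if $x \in A \bullet_{\mathcal{X}} B$ witnessed by $w \preccurlyeq x$, $w \in y \circ z$, $y \in A$, $z \in B$, then (Commutativity) of $\circ$ gives $w \in z \circ y$, so $x \in B \bullet_{\mathcal{X}} A$. For the unit law $A \bullet_{\mathcal{X}} E = A$: the inclusion $(\subseteq)$ uses (Coherence), since a witness $w \preccurlyeq x$, $w \in y \circ e$, $y \in A$, $e \in E$ gives $w \succcurlyeq y$, whence $x \succcurlyeq y \in A$ and upwards-closure of $A$ finishes it; the inclusion $(\supseteq)$ uses (Unit Existence) to produce $e \in E$ with $x \in x \circ e$, which directly witnesses $x \in A \bullet_{\mathcal{X}} E$.

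The main obstacle, as expected, is associativity, which is the only place the delicate (Associativity) axiom is used. To show $(A \bullet_{\mathcal{X}} B) \bullet_{\mathcal{X}} C \subseteq A \bullet_{\mathcal{X}} (B \bullet_{\mathcal{X}} C)$, take $x$ in the left-hand side and unfold the two layers of $\bullet_{\mathcal{X}}$: there exist $u, t', c$ with $u \preccurlyeq x$, $u \in t' \circ c$, $c \in C$, and $t' \in A \bullet_{\mathcal{X}} B$, which in turn provides $t \preccurlyeq t'$ with $t \in a \circ b$, $a \in A$, $b \in B$. Now apply (Associativity) with $x := a, y := b, z := c$ and the pair $t \preccurlyeq t'$, $u \in t' \circ c$: this delivers $s, s', w'$ with $s' \succcurlyeq s \in b \circ c$ and $u \succcurlyeq w' \in a \circ s'$. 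Then $s' \in B \bullet_{\mathcal{X}} C$ (witnessed by $s \preccurlyeq s'$, $s \in b \circ c$, $b \in B$, $c \in C$), and the triple $(w', a, s')$ witnesses $w' \preccurlyeq u \preccurlyeq x$, i.e.\ $x \in A \bullet_{\mathcal{X}} (B \bullet_{\mathcal{X}} C)$.

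For the reverse inclusion I would not prove a dual associativity axiom from scratch; instead I would piggyback on commutativity, chaining
\[
A \bullet_{\mathcal{X}} (B \bullet_{\mathcal{X}} C) = (B \bullet_{\mathcal{X}} C) \bullet_{\mathcal{X}} A \subseteq B \bullet_{\mathcal{X}} (C \bullet_{\mathcal{X}} A) = (C \bullet_{\mathcal{X}} A) \bullet_{\mathcal{X}} B \subseteq C \bullet_{\mathcal{X}} (A \bullet_{\mathcal{X}} B) = (A \bullet_{\mathcal{X}} B) \bullet_{\mathcal{X}} C,
\]
each $\subseteq$ being the direction just established and each $=$ being commutativity. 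The BBI case collapses to the same argument with $\preccurlyeq$ replaced by $=$, so (Associativity) there is the familiar $t \in x \circ y \land w \in t \circ z \Rightarrow \exists s(s \in y \circ z \land w \in x \circ s)$, and all order-manipulations disappear; the delicate point needing care throughout is simply making sure the quantifiers in (Associativity) and in the definition of $\bullet_{\mathcal{X}}$ are matched up correctly.
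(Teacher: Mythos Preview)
Your argument is correct. The paper omits the proof entirely (the statement is tagged with \qed\ and no argument is supplied), so there is no authorial proof to compare against; your route---reduce to Lemma~\ref{lem:ilglcomplex}, then verify commutativity, the unit law via (Coherence)/(Unit Existence), and one direction of associativity directly from the frame axiom with the other obtained by the commutativity rotation---is exactly the expected unpacking and handles the ``strong semantics'' form of $\bullet_{\mathcal{X}}$ cleanly.
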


As a special case of the analogous result for \logicfont{(I)LGL}, we obtain a correspondence between satisfiability on a frame and its complex algebra when the algebraic interpretation is generated by the valuation on the frame.

\begin{prop}
For any (B)BI frame $\mathcal{X}$ and valuation $\Valuation$, $x \vDash_{\Valuation} \varphi$ iff $x \in \Interp{\varphi}_{\mathcal{V}}$. \qed%
\end{prop}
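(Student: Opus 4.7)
The proof is an immediate extension of the analogous result for \logicfont{(I)LGL} (the earlier Proposition on relational (I)LGL models), so the plan is simply a structural induction on $\varphi$ that re-uses that result for the shared connectives and handles the one new connective, $\mtop$, directly.

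First, I would observe that the signature of a (B)BI algebra extends that of an (I)LGL algebra only by the constant $\mtop$, and correspondingly the (B)BI complex algebra $Com^{\vtiny{(B)BI}}(\mathcal{X})$ extends $Com^{\vtiny{(I)LGL}}(\mathcal{X})$ only by the distinguished element $E$. Hence for every formula built from $\{\mathrm{p}, \top, \bot, \wedge, \vee, \rightarrow, *, \wand\}$, the algebraic interpretation on $Com^{\vtiny{(B)BI}}(\mathcal{X})$ agrees with the one on $Com^{\vtiny{(I)LGL}}(\mathcal{X})$, and the satisfaction clauses on the frame are identical to those used for (I)LGL frames. Thus for these cases the inductive step is precisely the corresponding step of the (I)LGL proposition, and can be invoked verbatim.

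The only genuinely new case is $\varphi \equiv \mtop$. Here, by the satisfaction clause for the multiplicative unit (Fig.~\ref{fig:sat-bi}), $x \vDash_{\Valuation} \mtop$ iff $x \in E$. On the algebraic side, by the definition of the (B)BI complex algebra (Definition~\ref{defn:bicomplex}), the interpretation of the constant $\mtop$ is precisely the distinguished unit set $E$ of the frame, so $\Interp{\mtop}_{\Valuation} = E$. The two are therefore equivalent, and the induction closes.

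No obstacle is anticipated: the hard work has already been done in Proposition~\ref{prop:ilglsat} and in Lemma~\ref{lem:bicomplexalgebra}, which ensures that $Com^{\vtiny{(B)BI}}(\mathcal{X})$ is a well-defined (B)BI algebra on a (B)BI frame. The commutativity/associativity/unit axioms play no role in the clause-by-clause correspondence; they are only needed to guarantee that $Com^{\vtiny{(B)BI}}(\mathcal{X})$ satisfies the additional equations of Definition~\ref{def:bialgebra}, which is the content of Lemma~\ref{lem:bicomplexalgebra} rather than of this proposition.
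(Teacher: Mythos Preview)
Your proposal is correct and matches the paper's approach: the paper states this proposition as an immediate special case of the analogous result for \logicfont{(I)LGL} (Proposition~\ref{prop:ilglsat}), and your induction does exactly that, with the only new case $\mtop$ handled by the definitions of $Com^{\vtiny{(B)BI}}(\mathcal{X})$ and the satisfaction clause in Fig.~\ref{fig:sat-bi}.
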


In the other direction we transform BI algebras into BI frames.

\begin{defi}[Prime Filter (B)BI Frame]
Given a (B)BI algebra $\mathbb{A}$, the prime filter frame of $\mathbb{A}$, $Pr^\vtiny{(B)BI}(\mathbb{A})$, is given by extending $Pr^\vtiny{ILGL}(\mathbb{A})$ ($Pr^{\vtiny{LGL}}(\mathbb{A}))$ with $E_{\mathbb{A}} = \{ F \in Pr(A) \mid \mtop \in F \}$. \qed
\end{defi}

That the prime filter frame of a (B)BI algebra is a (B)BI frame follows an argument very similar to that of Galmiche \& Larchey-Wendling's~\cite{Galmiche2006} completeness theorem for the relational semantics of \logicfont{BBI}.

\begin{lem}\label{lem:biprimeframe}
Given a (B)BI algebra $\mathbb{A}$, the prime filter frame $Pr^\vtiny{(B)BI}(\mathbb{A})$ is a (B)BI frame. 
\end{lem}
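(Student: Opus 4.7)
The plan is to verify, for both the BI and BBI cases, each of the conditions of Definition~\ref{defn:biframe} and its BBI analogue, using Lemma~\ref{lem:ilglprime} to supply the underlying (I)LGL frame structure of $Pr^{\vtiny{(B)BI}}(\mathbb{A})$ and the Prime Extension Lemma~\ref{lem:primeextension} for the two nontrivial axioms. Commutativity is immediate from the commutativity of $*$ in $\mathbb{A}$; Closure is just upward-closure of the condition $\mtop \in e$; and Coherence reduces to the unit law $a * \mtop = a$, since $F \in G \circ_{\mathbb{A}} e$ and $\mtop \in e$ force each $a \in G$ to satisfy $a = a * \mtop \in F$, hence $G \subseteq F$.

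For Unit Existence, given a prime filter $F$, I would define $P(G) = 1$ iff $\mtop \in G$ and $a * b \in F$ for all $a \in F$ and $b \in G$. The seed $\filter{\mtop}$ satisfies $P$ because $b \geq \mtop$ gives $a * b \geq a * \mtop = a \in F$. Chain closure is immediate. For the intersection clause, if $P(G_0 \cap G_1) = 1$ but witnesses $a_i \in F, b_i \in G_i$ show $P(G_0) \neq 1 \neq P(G_1)$, then $b_0 \vee b_1 \in G_0 \cap G_1$ and $(a_0 \wedge a_1) * (b_0 \vee b_1) \in F$; Proposition~\ref{prop-alg-prop}(2) rewrites this as $\bigvee_i (a_0 \wedge a_1) * b_i$, and primality of $F$ together with monotonicity contradicts one of the two witnesses. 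Lemma~\ref{lem:primeextension} then produces a prime $e$ with $\mtop \in e$ and $F \in F \circ_{\mathbb{A}} e$.

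The main obstacle is Associativity. I would exploit the extra slack in the BI formulation by setting $S' = S$ and $W' = W$: given $T \subseteq T'$, $T \in F \circ_{\mathbb{A}} G$ and $W \in T' \circ_{\mathbb{A}} K$, the definition of $\circ_{\mathbb{A}}$ together with $T \subseteq T'$ immediately yields $T' \in F \circ_{\mathbb{A}} G$, so it suffices to exhibit a prime $S$ with $S \in G \circ_{\mathbb{A}} K$ and $W \in F \circ_{\mathbb{A}} S$. Define $P(S) = 1$ iff $b * d \in S$ for all $b \in G, d \in K$ and $a * s \in W$ for all $a \in F, s \in S$, and seed with $S_0 := \filter{\{b * d \mid b \in G, d \in K\}}$. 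The key algebraic step is that for any $a \in F$, $b \in G$, $d \in K$ associativity of $*$ in $\mathbb{A}$ gives $(a * b) * d = a * (b * d) \in W$, using $a * b \in T \subseteq T'$ and $W \in T' \circ_{\mathbb{A}} K$. Combined with the inequality $(b_1 \wedge \cdots \wedge b_n) * (d_1 \wedge \cdots \wedge d_n) \leq \bigwedge_i (b_i * d_i)$ and monotonicity of $*$, this verifies the second conjunct of $P(S_0)$; the same inequality, together with Proposition~\ref{prop-alg-prop}(3) applied contrapositively to $a * (b * d) \in W$, establishes properness of $S_0$. Chain closure of $P$ is immediate, and the intersection clause follows exactly as for Unit Existence: primality of $W$ applied to $(a_0 \wedge a_1) * (s_0 \vee s_1) = \bigvee_i (a_0 \wedge a_1) * s_i$ refutes the existence of failing witnesses. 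Lemma~\ref{lem:primeextension} then delivers the required prime $S$, and the BBI case is recovered by substituting $=$ for $\subseteq$ throughout.
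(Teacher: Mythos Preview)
Your proposal is correct and essentially identical to the paper's proof: the easy axioms (Commutativity, Closure, Coherence) are dispatched directly, and Unit Existence and Associativity are each handled via the Prime Extension Lemma with the same prime predicates and the same seed filters (your $S_0 = \filter{\{b*d : b \in G, d \in K\}}$ is exactly the paper's explicitly described $\{a \mid \exists b \in G, d \in K: a \geq b*d\}$). Your preliminary observation that $T \subseteq T'$ already forces $T' \in F \circ_{\mathbb{A}} G$ is a mild streamlining of a step the paper carries out inline.
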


\begin{proof}
Commutativity of $\circ_{\mathbb{A}}$ can be read off the definition, given that $*$ is commutative for (B)BI\@. We also have that $E$ satisfies Closure trivially. We are left to verify Associativity, Unit Existence and Coherence. We note that in the case for BBI, maximality of prime filters collapses all of the inclusions to equalities in what follows, so we just give the argument for BI\@.

First, Associativity. Assume $F_{t'} \supseteq F_{t} \in F_x \circ F_y$ and $F_w \in F_{t'} \circ F_z$. We show that \[ \PrimePredicate{F}{F \in F_y \circ_{\mathbb{A}} F_z \text{ and } F_w \in F_x \circ_{\mathbb{A}} F}\]
is a prime predicate. For a $\subseteq$-chain ${(F_\alpha)}_{\alpha < \lambda}$ such that $P(F_\alpha) = 1$ for all $\alpha$, we straightforwardly have $P(\bigcup_\alpha F_\alpha) = 1$. If $P(F \cap F') =1$, we have that $F, F' \in F_y \circ F_z$ immediately, so suppose for contradiction that there exists $a, a' \in F_x$, $b \in F$, $b' \in F'$ such that $a * b, a' * b' \not\in F_w$. We have that $a'' = a \land a' \in F_x$ and $b \lor b' \in F \cap F'$ so $a'' * (b \lor b') = (a'' * b) \lor (a'' * b') \in F_w$. Since $F_w$ is prime, either $a'' * b \in F_w$ or $a'' * b' \in F_w$. Thus, because $*$ is monotone, $a * b \in F_w$ or $a' * b' \in F_w$,  a contradiction.

Now consider the set $F = \{ a \in \mathbb{A} \mid \exists b \in F_y, c \in F_z: a \geq b * c\}$. We show this is a proper filter satisfying $P(F) = 1$. First, suppose for contradiction that $\bot \in F$. Then there exists $b \in F_y$ and $c \in F_z$ such that $b* c = \bot$. Letting $a \in F_x$ be arbitrary, we have that $a * b \in F_t \subseteq F_{t'}$, so $(a * b) * c = a * (b * c) = a * \bot = \bot \in F_w$, contradicting that $F_w$ is proper. $F$ is clearly upwards-closed; to see it is closed under meets, consider $a, a' \in F$. Then there exists $b, b' \in F_y$ and $c, c' \in F_z$ such that $a \geq b * c$ and $a' \geq b' * c'$. We have that $b \land b' \in F_y$ and $c \land c' \in F_z$, and by monotonicity of $*$, $(b \land b') * (c \land c') \leq a * b, a' * b'$. Hence $(b \land b') * (c \land c') \leq (a * b) \land  (a' * b') \leq c \land c'$ as required.

We now verify that $P(F) = 1$. Let $b \in F_y$ and $c \in F_z$. Clearly $b * c \in F$, so $F \in F_y \circ_{\mathbb{A}} F_z$. If $a \in F_x$ and $a' \geq b * c$ for $b \in F_y$ and $c \in F_z$, we have that $a * a' \geq a * (b * c) = (a * b) * c \in F_w$, since $a * b \in F_t \subseteq F_{t'}$ and $c \in F_z$. Thus $a * a' \in F_w$ and $F_w \in F_x \circ_{\mathbb{A}} F$ as required. We thus obtain a prime $F$ with $P(F) =1$ by the Prime Extension Lemma, which is precisely what is required to satisfy Associativity.

For Unit Existence, let $F$ be an arbitrary prime filter. We show that \[ \PrimePredicate{G}{F \in F \circ_\mathbb{A} G\text{ and } \mtop \in G} \] is a prime predicate. If $P(G_\alpha) =1$ for all $G_\alpha$ in a $\subseteq$-chain ${(G_\alpha)}_{\alpha < \lambda}$ then clearly $F \in F \circ_{\mathbb{A}} \bigcup_\alpha G_\alpha$. Next, let $P(G \cap G') =1$ and assume for contradiction that there exists $a, a' \in F$, $b \in G$ and $b' \in G'$ such that $a * b \not\in F$ and $a' * b' \not\in F$. $b \lor b' \in G \cap G'$ so for $a'' = a \land a' \in F$ we have $a'' * (b \lor b') = (a'' * b) \lor (a'' * b') \in F$. Since $F$ is prime, either $a'' * b \in F$ or $a'' * b' \in F$. Hence, by monotonicity of $*$, either $a * b \in F$ or $a' * b' \in F$, a contradiction. Now consider the filter $\filter{\mtop}$. We note that this can only fail to be proper when $\mtop = \bot$, but in that case it can be shown that for all $a \in \mathbb{A}$, $a = \bot$, and thus $\mathbb{A}$ is degenerate and not a BI algebra. Given any $a \in F$ and $b \geq \mtop$, we have $a * b \geq a * \mtop = a \in F$, so $a * b \in F$. Since $P(\filter{\mtop}) = 1$, there exists a prime filter $F$ with $P(F) =1$, and so Unit Existence is satisfied.

Finally, for Coherence, assume $F_x \in F_y \circ F_e$ where $\mtop \in F_e$. Then for all $a \in F_y$, $a * \mtop = a \in F_x$, so $F_y \subseteq F_x$ as required.
\end{proof}

The representation theorem for (B)BI algebras now follows immediately from the analogous result for (I)LGL algebras and the fact that $\theta_{\mathbb{A}}(\mtop) = E_{\mathbb{A}}$.

\begin{thm}[Representation Theorem for (B)BI Algebras]\label{thm:birepresentation}
Every (B)BI algebra is isomorphic to a subalgebra of a complex algebra. Specifically, given a (B)BI algebra $\mathbb{A}$, the map $\theta_{\mathbb{A}}: \mathbb{A} \rightarrow Com^\vtiny{(B)BI}(Pr^\vtiny{(B)BI}(\mathbb{A}))$ defined $\theta_{\mathbb{A}}(a) = \{ F \in Pr^\vtiny{(B)BI}(A) \mid a \in F \}$ is an embedding. \qed
\end{thm}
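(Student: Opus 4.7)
The plan is to piggyback on the already-established representation theorem for (I)LGL algebras (Theorem~\ref{thm:ilglrep}) and handle only the one new piece of structure that distinguishes a (B)BI algebra from its underlying (I)LGL algebra, namely the multiplicative unit $\mtop$. Since by Definition~\ref{def:bialgebra} a (B)BI algebra is an (I)LGL algebra whose $*$ is commutative and associative with unit $\mtop$, the map $\theta_{\mathbb{A}}$ of this theorem coincides set-theoretically with the representation map of Theorem~\ref{thm:ilglrep}; all that needs extra checking is preservation of $\mtop$.

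First I would observe that Lemma~\ref{lem:biprimeframe} guarantees that $Pr^\vtiny{(B)BI}(\mathbb{A})$ is a genuine (B)BI frame and Lemma~\ref{lem:bicomplexalgebra} then ensures $Com^\vtiny{(B)BI}(Pr^\vtiny{(B)BI}(\mathbb{A}))$ is a (B)BI algebra, so the codomain of $\theta_{\mathbb{A}}$ is of the right type. Next, applying Theorem~\ref{thm:ilglrep} to the underlying (I)LGL reduct of $\mathbb{A}$ yields directly that $\theta_{\mathbb{A}}$ is injective and preserves the operations $\land, \lor, \to, \top, \bot, *, \wand, \dnaw$. In the (B)BI setting commutativity of $*$ forces $\wand = \dnaw$, so no further work is needed for these connectives.

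The only remaining obligation is to show that $\theta_{\mathbb{A}}$ sends the multiplicative unit to the designated unit set of the prime filter frame, that is, $\theta_{\mathbb{A}}(\mtop) = E_{\mathbb{A}}$. Unwinding the two definitions gives
\[
  \theta_{\mathbb{A}}(\mtop) = \{ F \in Pr(A) \mid \mtop \in F \} = E_{\mathbb{A}},
\]
so the equality is immediate by construction of $E_{\mathbb{A}}$.

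There is no genuine obstacle in the present proof: the non-trivial frame-theoretic content (verifying Associativity, Unit Existence, and Coherence for $Pr^\vtiny{(B)BI}(\mathbb{A})$ via the Prime Extension Lemma) has already been discharged in Lemma~\ref{lem:biprimeframe}, and the embedding of the residuated structure has already been discharged in Theorem~\ref{thm:ilglrep}. The present theorem is therefore essentially a bookkeeping step that records how the unit-preservation clause slots into those earlier results.
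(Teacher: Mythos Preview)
Your proposal is correct and matches the paper's own approach exactly: the paper states that the representation theorem for (B)BI algebras ``follows immediately from the analogous result for (I)LGL algebras and the fact that $\theta_{\mathbb{A}}(\mtop) = E_{\mathbb{A}}$,'' which is precisely the reduction you carry out. Your additional explicit invocation of Lemmas~\ref{lem:biprimeframe} and~\ref{lem:bicomplexalgebra} to confirm the codomain is well-typed is a sensible bit of bookkeeping that the paper leaves implicit.
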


\begin{cor}
For all (B)BI algebras $\mathbb{A}$, given an interpretation $\Interp{-}$, the valuation $\Valuation_{\Interp{-}}(p) = \theta_{\mathbb{A}}(\Interp{p})$ on $Pr^\vtiny{(B)BI}(\mathbb{A})$ is such that $\Interp{\varphi} \in F$ iff $F \vDash_{\Valuation_{\Interp{-}}} \varphi$. \qed
\end{cor}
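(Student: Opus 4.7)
The plan is to reduce this to the combination of two results already established: the representation theorem (Theorem~\ref{thm:birepresentation}), which gives us the embedding $\theta_{\mathbb{A}} : \mathbb{A} \hookrightarrow Com^{\vtiny{(B)BI}}(Pr^{\vtiny{(B)BI}}(\mathbb{A}))$, and the preceding Proposition, which states that for any frame valuation $\Valuation$ on a (B)BI frame, $x \vDash_{\Valuation} \varphi$ iff $x \in \Interp{\varphi}_{\Valuation}$. The key observation is that the valuation $\Valuation_{\Interp{-}}$ has been chosen precisely so that on atoms it agrees with $\theta_{\mathbb{A}} \circ \Interp{-}$.

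First I would prove, by structural induction on $\varphi$, the auxiliary claim
\[
    \Interp{\varphi}_{\Valuation_{\Interp{-}}} = \theta_{\mathbb{A}}(\Interp{\varphi}),
\]
where the left-hand side is the interpretation of $\varphi$ in the complex algebra $Com^{\vtiny{(B)BI}}(Pr^{\vtiny{(B)BI}}(\mathbb{A}))$ induced by $\Valuation_{\Interp{-}}$. The base case $\varphi = p$ holds by the very definition of $\Valuation_{\Interp{-}}$, and the constant cases $\top, \bot, \mtop$ hold because $\theta_{\mathbb{A}}$ sends each constant of $\mathbb{A}$ to the corresponding constant of its complex algebra (Theorem~\ref{thm:birepresentation}). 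Each inductive case then follows immediately from the fact that $\theta_{\mathbb{A}}$ is a (B)BI algebra homomorphism, so it commutes with every connective: for instance, $\Interp{\varphi * \psi}_{\Valuation_{\Interp{-}}} = \Interp{\varphi}_{\Valuation_{\Interp{-}}} \bullet_{Pr(\mathbb{A})} \Interp{\psi}_{\Valuation_{\Interp{-}}} = \theta_{\mathbb{A}}(\Interp{\varphi}) \bullet_{Pr(\mathbb{A})} \theta_{\mathbb{A}}(\Interp{\psi}) = \theta_{\mathbb{A}}(\Interp{\varphi * \psi})$ by the induction hypothesis and homomorphism property, and analogously for the other connectives.

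Once the auxiliary claim is in hand, the Corollary follows by a short chain of equivalences: by the Proposition, $F \vDash_{\Valuation_{\Interp{-}}} \varphi$ iff $F \in \Interp{\varphi}_{\Valuation_{\Interp{-}}}$; by the auxiliary claim, this is iff $F \in \theta_{\mathbb{A}}(\Interp{\varphi})$; and by the definition $\theta_{\mathbb{A}}(a) = \{F \in Pr^{\vtiny{(B)BI}}(\mathbb{A}) \mid a \in F\}$, this is iff $\Interp{\varphi} \in F$.

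There is no substantive obstacle: the genuine work was carried out in the representation theorem (where one had to verify that $\theta_{\mathbb{A}}$ respects $*, \wand$ and the unit $\mtop$ via the Prime Extension Lemma) and in the satisfaction/complex algebra proposition. The present corollary is essentially a bookkeeping exercise assembling those pieces; the only thing worth paying attention to is that the homomorphism property of $\theta_{\mathbb{A}}$ holds for \emph{every} (B)BI connective, including the multiplicative ones, so that the induction step for $*$ and $\wand$ goes through without needing any further frame-theoretic argument.
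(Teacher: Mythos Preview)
Your proposal is correct and follows exactly the route the paper intends: the corollary is stated without proof (marked \qed) because it is meant to follow immediately from the representation theorem together with the earlier Proposition that $x \vDash_{\Valuation} \varphi$ iff $x \in \Interp{\varphi}_{\Valuation}$, and your inductive auxiliary claim simply makes explicit the passage the paper leaves implicit.
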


\begin{cor}[Relational Soundness and Completeness]
For all formulas $\varphi$, $\psi$ of $\logicfont{BI}$, $\varphi \vdash \psi$ is provable in $\mathrm{BI}_{\mathrm{H}}$ iff $\varphi \vDash \psi$ in the relational semantics.  \qed%
\end{cor}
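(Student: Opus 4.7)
The proof will be a direct corollary of the machinery already assembled, in the same pattern used for the analogous (I)LGL result. The plan is to chain together algebraic soundness and completeness (Theorem~\ref{thm:bialgcomplete}), the satisfaction-preservation result between a frame and its complex algebra, and the embedding $\theta_{\mathbb{A}}$ delivered by the representation theorem (Theorem~\ref{thm:birepresentation}).

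For the soundness direction, I would assume $\varphi \vdash \psi$ is provable in $\mathrm{(B)BI}_{\mathrm{H}}$ and pick an arbitrary (B)BI frame $\mathcal{X}$ with valuation $\mathcal{V}$. By Lemma~\ref{lem:bicomplexalgebra} the complex algebra $Com^\vtiny{(B)BI}(\mathcal{X})$ is a (B)BI algebra, and $\mathcal{V}$ lifts to an interpretation $\Interp{-}_{\mathcal{V}}$ on it. Algebraic soundness gives $\Interp{\varphi}_{\mathcal{V}} \leq \Interp{\psi}_{\mathcal{V}}$, i.e.\ $\Interp{\varphi}_{\mathcal{V}} \subseteq \Interp{\psi}_{\mathcal{V}}$. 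Applying the proposition that $x \vDash_{\mathcal{V}} \varphi$ iff $x \in \Interp{\varphi}_{\mathcal{V}}$ then yields $\varphi \vDash \psi$ in the relational semantics.

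For the completeness direction, I would contrapose: assume $\varphi \vdash \psi$ is not provable. By Theorem~\ref{thm:bialgcomplete} there is some (B)BI algebra $\mathbb{A}$ and interpretation $\Interp{-}$ with $\Interp{\varphi} \not\leq \Interp{\psi}$. The embedding $\theta_{\mathbb{A}}$ of Theorem~\ref{thm:birepresentation} preserves order faithfully, so $\theta_{\mathbb{A}}(\Interp{\varphi}) \not\subseteq \theta_{\mathbb{A}}(\Interp{\psi})$; hence there is a prime filter $F \in Pr^\vtiny{(B)BI}(\mathbb{A})$ with $\Interp{\varphi} \in F$ and $\Interp{\psi} \notin F$. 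The corollary stated just before this one then gives $F \vDash_{\mathcal{V}_{\Interp{-}}} \varphi$ and $F \not\vDash_{\mathcal{V}_{\Interp{-}}} \psi$, refuting $\varphi \vDash \psi$.

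Since every step is supplied by a result already proved in this section, there is no real obstacle; the only thing to take care of is that $\mathcal{V}_{\Interp{-}}$ is genuinely a persistent valuation on the prime filter frame (which follows because each $\theta_{\mathbb{A}}(a)$ is upwards closed under $\subseteq$), and that the preceding corollary applies uniformly to both the \logicfont{BI} and \logicfont{BBI} cases. The argument is thus essentially identical to the (I)LGL corollary, specialised via the unit clause $\theta_{\mathbb{A}}(\mtop) = E_{\mathbb{A}}$ to handle $\mtop$.
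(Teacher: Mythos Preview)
Your proposal is correct and follows exactly the same approach as the paper, which treats this corollary as immediate from the chain of results you cite (algebraic completeness, the complex-algebra satisfaction proposition, and the embedding/valuation corollary preceding it). The paper offers no explicit argument beyond the \qed mark, and your unpacking of the two directions is precisely what that mark is meant to abbreviate.
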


Once again $Pr^\vtiny{(B)BI}$ and $Com^\vtiny{(B)BI}$ can be made into functors by setting $Pr^\vtiny{(B)BI}(f) = f^{-1}$ and $Com^\vtiny{(B)BI}(g) = g^{-1}$. 


To obtain a dual equivalence of categories we add topological structure to BI frames. This can be achieved by straightforwardly extending (I)LGL spaces with BBI (BI) frame structure and specifying a coherence condition for the unit set $E$.

\begin{defi}[BI Space]
A BI space is a structure $\mathcal{X} = (X, \mathcal{O}, \preccurlyeq, \circ, E)$ such that
\begin{enumerate}
\item $(X, \mathcal{O}, \preccurlyeq, \circ)$ is an ILGL space,
\item $(X, \preccurlyeq, \circ, E)$ is a BI frame, and
\item $E$ is clopen in $(X, \mathcal{O})$.
\end{enumerate}
A morphism of BI spaces is a continuous BI morphism, yielding a category $\mathrm{BISp}$. \qed%
\end{defi}

\begin{defi}[BBI Space]
A \emph{BBI space} is a structure $\mathcal{X} = (X, \mathcal{O}, \circ, E)$ such that
\begin{enumerate}
\item $(X, \mathcal{O}, \circ)$ is an LGL space,
\item $(X, \circ, E)$ is a BBI frame, and
\item $E$ is clopen in $(X, \mathcal{O})$.
\end{enumerate}
A morphism of BBI spaces is a continuous BBI morphism, yielding a category $\mathrm{BBISp}$. \qed%
\end{defi}

The duality theorems for \logicfont{BI} and \logicfont{BBI} follow essentially immediately from those for \logicfont{ILGL} and \logicfont{LGL}. The only additional structure that needs to be taken care of is the constant $\mtop$ and the unit set $E$. We define the functors and natural isomorphisms explicitly for their use in the Separation Logic duality. Hence for \logicfont{BI} we have $Pr^\vtiny{BI}: \mathrm{BIAlg} \rightarrow \mathrm{BISp}$ defined by $Pr^\vtiny{BI}(\mathbb{A}) = (Pr(A), \mathcal{O}_{\mathbb{A}}, \subseteq, \circ_{\mathbb{A}}, E_{\mathbb{A}})$ (where $\mathcal{O}_{\mathbb{A}}$ is as defined for \logicfont{ILGL}) and once again $Pr^\vtiny{BI}(f) = f^{-1}$; correspondingly, $Clop^\vtiny{BI}: \mathrm{BISp} \rightarrow \mathrm{BIAlg}$ is given by $Clop^\vtiny{BI}(\mathcal{X}) = (\mathcal{CL}_{\succcurlyeq}(\mathcal{X}), \cap, \cup, \Rightarrow, X, \emptyset, \bullet_{\mathcal{X}}, \Rres_{\mathcal{X}}, E)$ (where $\bullet_{\mathcal{X}}, \Rres_{\mathcal{X}}$ are the ILGL complex algebra operations) and, as in the case for \logicfont{ILGL}, $Clop^\vtiny{BI}(g) = g^{-1}$; $\theta$ and $\eta$ are given precisely as they are in \logicfont{ILGL} duality, relativized to $\mathrm{BIAlg}$ and $\mathrm{BISp}$. 

 Similarly, for \logicfont{BBI} we have $Pr^\vtiny{BBI}(\mathbb{A}) = (Pr(A), \mathcal{O}_{\mathbb{A}}, \circ_{\mathbb{A}}, E_{\mathbb{A}})$ (where $\mathcal{O}_{\mathbb{A}}$ is as defined for \logicfont{LGL}) and $Pr^\vtiny{BBI}(f) = f^{-1}$; $Clop^\vtiny{BBI}(\mathcal{X}) = (\mathcal{CL}(\mathcal{X}), \cap, \cup, \setminus, X, \emptyset, \bullet_{\mathcal{X}}, \Rres_{\mathcal{X}}, E)$ (where $\bullet_{\mathcal{X}}, \Rres_{\mathcal{X}}$ are the LGL complex algebra operations) and $Clop^\vtiny{BBI}(g) = g^{-1}$; $\theta$ and $\eta$ are given precisely as they are in \logicfont{LGL} duality, relativized to $\mathrm{BBIAlg}$ and $\mathrm{BBISp}$. 

That $E_{\mathbb{A}}$ is clopen in each instance can be seen by the fact that $E = \theta_{\mathbb{A}}(\mtop)$. By Esakia duality, every set of the form $\theta_{\mathbb{A}}(a)$ for some $a \in A$ is an upwards-closed clopen set of the prime filter space of $\mathbb{A}$. Similarly, the clopen sets of the prime filter space of a Boolean algebra $\mathbb{A}$  are the sets $\theta_{\mathbb{A}}(a)$ for $a \in A$ by Stone duality, so the analogous property for BBI spaces holds too. It is also easy to see that the components of $\eta$ are additionally isomorphic with respect to $E$. The duality theorems thus obtain.  We note that the duality theorem for \logicfont{BI} has been independently obtained by Jipsen \& Litak~\cite{JipsenLitak}.

\begin{thm}[Duality Theorem for \logicfont{(B)BI}]
$\theta$ and $\eta$ form a dual equivalence of categories between $\mathrm{(B)BIAlg}$ and $\mathrm{(B)BISp}$.\qed%
\end{thm}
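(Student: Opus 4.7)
The plan is to bootstrap directly from the already-established duality for \logicfont{(I)LGL}, treating the unit $\mtop$/$E$ as the only new piece of data that needs to be tracked through the functors and natural transformations. Since $\mathrm{(B)BIAlg}$ is a full subcategory of $\mathrm{(I)LGLAlg}$ (the extra structure is a constant required to satisfy equations) and $\mathrm{(B)BISp}$ is a non-full subcategory of $\mathrm{(I)LGLSp}$ (the morphisms must additionally respect $E$), the work decomposes into (i) checking the new functors restrict/corestrict correctly, (ii) checking the components of $\theta$ and $\eta$ transfer the unit data, and (iii) checking these components are morphisms in the richer category.

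First I would verify well-definedness of $Pr^{\vtiny{(B)BI}}$ and $Clop^{\vtiny{(B)BI}}$ on objects. For $Pr^{\vtiny{(B)BI}}(\mathbb{A})$, Lemma~\ref{lem:biprimeframe} shows the underlying (B)BI frame structure is correct, and the crucial observation $E_{\mathbb{A}} = \theta_{\mathbb{A}}(\mtop)$ combined with Esakia/Stone duality shows $E_{\mathbb{A}}$ is a(n upwards-closed) clopen set, hence $Pr^{\vtiny{(B)BI}}(\mathbb{A})$ is a (B)BI space. For $Clop^{\vtiny{(B)BI}}(\mathcal{X})$, Lemma~\ref{lem:bicomplexalgebra} already gives us a (B)BI algebra once we observe that the clopen $E$ lives in $\mathcal{CL}_{\succcurlyeq}(\mathcal{X})$ (respectively $\mathcal{CL}(\mathcal{X})$), which is guaranteed by the clopenness condition in the definition of (B)BI space. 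On morphisms, given a (B)BI algebra homomorphism $f$, the map $f^{-1}$ satisfies $F \in E_{\mathbb{A}}$ iff $\mtop \in F$ iff $f(\mtop) \in f^{-1}(F)^{-1}$-wait, more directly: since $f$ preserves $\mtop$, $\mtop \in f^{-1}(F)$ iff $f(\mtop) = \mtop \in F$, so $f^{-1}$ preserves and reflects membership in the unit set; dually for clopen sets of a (B)BI space the argument is symmetric.

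Next I would verify that $\theta_{\mathbb{A}}$ and $\eta_{\mathcal{X}}$ are morphisms in the enriched categories. For $\theta_{\mathbb{A}}: \mathbb{A} \to Clop^{\vtiny{(B)BI}}(Pr^{\vtiny{(B)BI}}(\mathbb{A}))$, the (I)LGL duality already makes this an embedding preserving all residuated lattice structure, so I only need to confirm $\theta_{\mathbb{A}}(\mtop) = E_{\mathbb{A}}$, which is immediate by unfolding the definitions of both sides. For $\eta_{\mathcal{X}}: \mathcal{X} \to Pr^{\vtiny{(B)BI}}(Clop^{\vtiny{(B)BI}}(\mathcal{X}))$, clause (7) of the definition of (B)BI morphism requires $x \in E$ iff $\eta_{\mathcal{X}}(x) \in E_{Clop^{\vtiny{(B)BI}}(\mathcal{X})}$; unfolding, the right-hand side says $E \in \eta_{\mathcal{X}}(x)$, which by definition of $\eta_{\mathcal{X}}$ holds iff $x \in E$, as required.

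Finally, since the (I)LGL duality theorem already establishes that $\theta$ and $\eta$ are natural isomorphisms with respect to the Heyting/Boolean and residuated multiplicative structure, together with the above observations we get naturality and componentwise isomorphism with respect to $E$ and $\mtop$ for free — $\theta_{\mathbb{A}}$ remains an isomorphism onto its image and the unit is preserved, while $\eta_{\mathcal{X}}$ is already a homeomorphic order-isomorphism and the two formulations of $E$ match pointwise. I do not anticipate a genuine obstacle here: the hardest part is purely bookkeeping, namely recognising that the equation $E_{\mathbb{A}} = \theta_{\mathbb{A}}(\mtop)$ simultaneously certifies (a) that $E_{\mathbb{A}}$ is clopen in $Pr^{\vtiny{(B)BI}}(\mathbb{A})$ and (b) that $\theta_{\mathbb{A}}$ is a (B)BI homomorphism. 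Once this is noted, the theorem reduces to the previously proved duality for \logicfont{(I)LGL}.
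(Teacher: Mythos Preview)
Your proposal is correct and follows essentially the same approach as the paper: both reduce the duality for \logicfont{(B)BI} to the already-established \logicfont{(I)LGL} duality, with the only additional work being to track the constant $\mtop$ and the unit set $E$ through the functors and natural transformations via the key identity $E_{\mathbb{A}} = \theta_{\mathbb{A}}(\mtop)$. The paper's treatment is in fact even terser than yours, simply noting that $E_{\mathbb{A}}$ is clopen because it equals $\theta_{\mathbb{A}}(\mtop)$ and that the components of $\eta$ are isomorphisms with respect to $E$, so your more explicit bookkeeping (e.g.\ checking clause (7) for $\eta_{\mathcal{X}}$) fills in details the paper leaves implicit.
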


\section{Separation Logic}\label{sec:separationlogic}
\subsection{Hyperdoctrines and Indexed Frames for Separation Logic}
We now extend the duality theorems for BI and BBI algebras to the algebraic and relational structures suitable for interpreting Separation Logic.
First, we must consider first-order (B)BI (\logicfont{FO(B)BI}). Hilbert-type proof systems $\mathrm{FO(B)BI}_{\mathrm{H}}$ are obtained by extending those
given for \logicfont{(B)BI} in Section~\ref{sec:preliminaries} with the usual rules for quantifiers (see, e.g.,~\cite{TroelSchwich}).
Second, to give the semantics for the additional structure of $\logicfont{FO(B)BI}$, we expand the definitions
from the propositional case with category-theoretic structure. As these semantic structures support it,
we consider a many-sorted first-order logic. We start on the algebraic side with BI hyperdoctrines.

\begin{defi}[(B)BI Hyperdoctrine (cf.~\cite{Biering2005})]\label{def:resourcehyperdoctrine}
A \emph{(B)BI hyperdoctrine} is a tuple
\[
(\algfont{P}: \catfont{C}^{op} \rightarrow \catfont{Poset}, {(=_X)}_{X \in Ob(\catfont{C})}, {(\exists X_{\Gamma}, \forall X_{\Gamma})}_{\Gamma, X \in Ob(\catfont{C})})
\]
such that:
	\begin{enumerate}
	\item $\catfont{C}$ is a category with finite products;
	\item $\algfont{P}: \catfont{C}^{op} \rightarrow \catfont{Poset}$ is a functor such that, for each object $X$ in $\catfont{C}$, $\algfont{P}(X)$ is a (B)BI algebra, and, for each morphism $f$ in $\catfont{C}$, $\algfont{P}(f)$ is a (B)BI algebra homomorphism;
	\item For each object $X$ in $\catfont{C}$ and each diagonal morphism
		$\Delta_X: X \rightarrow X \times X$ in $\catfont{C}$,  the element
		$=_{X} \in \algfont{P}(X \times X)$ is adjoint at $\top_{\mathbb{P}(X)}$. That is, for all $a \in \algfont{P}(X \times X)$,
		\[
			\top_{\mathbb{P}(X)} \leq \algfont{P}(\Delta_X)(a) \text{ iff } =_{X} \leq a \,;
		\]
	\item For each pair of objects $\Gamma, X$ in $\catfont{C}$ and each projection
		$\pi_{\Gamma, X}: \Gamma \times X \rightarrow \Gamma$ in
		$\catfont{C}$, $\exists X_{\Gamma}$ and $\forall X_{\Gamma}$ are left and right adjoint to $\mathbb{P}(\pi_{\Gamma, X})$. That is, they are monotone maps
		$\exists X_{\Gamma}: \algfont{P}(\Gamma \times X) \rightarrow \algfont{P}(\Gamma)$ and
		${\forall X_{\Gamma}: \algfont{P}(\Gamma \times X) \rightarrow \algfont{P}(\Gamma)}$
		such that, for all $a, b \in \algfont{P}(\Gamma)$,
	\[
	\begin{array}{lll}
	\exists X_{\Gamma}(a) \leq b & \text{iff} & a \leq \algfont{P}(\pi_{\Gamma, X})(b) \;\mbox{\rm and} \\
	\algfont{P}(\pi_{\Gamma, X})(b) \leq a & \text{iff} & b \leq \forall X_{\Gamma}(a).
	\end{array}
	\]
		This assignment of adjoints is additionally natural in $\Gamma$: given a morphism
	$s: \Gamma \rightarrow \Gamma'$, the following diagrams commute: \vspace{1mm}
\[
\begin{array}{lr}
	\begin{tikzcd}
	\algfont{P}(\Gamma' \times X) \arrow{r}{\algfont{P}(s \times id_X)} \arrow{d}[swap]{\exists X_{\Gamma'}} &
	\algfont{P}(\Gamma \times X) \arrow{d}{\exists X_{\Gamma}} \\
	\algfont{P}(\Gamma') \arrow{r}[swap]{\algfont{P}(s)} & \algfont{P}(\Gamma)
	\end{tikzcd} & \begin{tikzcd}
	\algfont{P}(\Gamma' \times X) \arrow{r}{\algfont{P}(s \times id_X)} \arrow{d}[swap]{\forall X_{\Gamma'}} &
	\algfont{P}(\Gamma \times X) \arrow{d}{\forall X_{\Gamma}} \\
	\algfont{P}(\Gamma') \arrow{r}[swap]{\algfont{P}(s)} & \algfont{P}(\Gamma)
	\end{tikzcd}
\end{array} \]
	\end{enumerate} \qed%
\end{defi} 

\noindent
(B)BI hyperdoctrines were first formulated by Biering et al.~\cite{Biering2005} to prove the existence of models of higher-order variants of Separation Logic. There it was shown that the standard model of Separation Logic could be seen as a BBI hyperdoctrine, and thus safely extended with additional structure in the domain $\mathrm{C}^{op}$ to directly define higher-order constructs like lists, trees, finite sets and relations inside the logic. The present work strengthens this result to a dual equivalence of categories. Other algebraic models of Separation Logic, like those based on Boolean quantales~\cite{DANG2011221} or formal power series~\cite{DongolGS15}, can be seen as particular instantiations of BBI hyperdoctrines.

To specify an interpretation  $\llbracket - \rrbracket$ of \logicfont{FO(B)BI} in a (B)BI hyperdoctrine $\algfont{P}$
we assign each type $X$ an object
$\llbracket X \rrbracket$ of $\catfont{C}$, and for each context of variables $\Gamma = \{ v_1: X_1, \ldots, v_n: X_n \}$
we have
$\llbracket \Gamma \rrbracket = \llbracket X_1 \rrbracket \times \cdots \times \llbracket X_n \rrbracket$. Each function
symbol $f: X_1 \times \cdots X_n \rightarrow X$ is assigned a morphism
$\llbracket f \rrbracket: \llbracket X_1 \rrbracket \times \cdots \llbracket X_n \rrbracket \rightarrow \llbracket X \rrbracket$.
This allows us to inductively assign to every term of type $X$ in context $\Gamma$ a morphism
$\llbracket t \rrbracket : \llbracket \Gamma \rrbracket \rightarrow \llbracket X \rrbracket$ in the standard way (see~\cite{Pitts}).
We additionally assign, for each $m$-ary predicate symbol $P$ of type $X_1, \ldots, X_m$,
$\llbracket P \rrbracket \in \algfont{P}(\llbracket X_1 \rrbracket \times \cdots \times \llbracket X_m \rrbracket)$.
Then the structure of the hyperdoctrine allows us to extend $\llbracket - \rrbracket$ to \logicfont{FO}(\logicfont{B})\logicfont{BI} formulae $\varphi$ in context $\Gamma$ as follows:
\[
\arraycolsep=0.5pt
\def\arraystretch{1.2}
\begin{array}{llllrlll}
\Interp{Pt_1\ldots t_m} &= \mathbb{P}(\langle \llbracket t_1 \rrbracket, \ldots, \llbracket t_m \rrbracket \rangle)(\llbracket P \rrbracket) & \llbracket \varphi \land \psi \rrbracket &= \llbracket \varphi \rrbracket \land_{\algfont{P}(\llbracket \Gamma \rrbracket)} \llbracket \psi \rrbracket   &  \Interp{\top} &= \top_{\algfont{P}(\Interp{\Gamma})}  \\
\Interp{t=_X t'} &= \algfont{P}(\langle \llbracket t \rrbracket, \llbracket t' \rrbracket \rangle)(=_{\llbracket X \rrbracket}) &\llbracket \varphi \lor \psi \rrbracket &= \llbracket \varphi \rrbracket \lor_{\algfont{P}(\llbracket \Gamma \rrbracket)} \llbracket \psi \rrbracket & \Interp{\bot} &= \bot_{\algfont{P}(\Interp{\Gamma})} \\
\llbracket \varphi \rightarrow \psi \rrbracket & = \llbracket \varphi \rrbracket \rightarrow_{\algfont{P}(\llbracket \Gamma \rrbracket)} \llbracket \psi \rrbracket & \Interp{\varphi * \psi} & = \Interp{\varphi} *_{\algfont{P}(\llbracket \Gamma \rrbracket)} \Interp{\psi} & \Interp{\mtop} &= \mtop_{\algfont{P}(\Interp{\Gamma})} \\
\Interp{\varphi \wand \psi} & = \Interp{\varphi} \wand_{\algfont{P}(\llbracket \Gamma \rrbracket)} \Interp{\psi}
& \Interp{\exists v:X. \varphi} &= \exists \Interp{X}_{\Interp{\Gamma}} (\Interp{\varphi}) & \Interp{\forall v:X. \varphi} &= \forall \Interp{X}_{\Interp{\Gamma}} (\Interp{\varphi}).
\end{array}
\]

Substitution of terms is given by $\llbracket \varphi(t/x) \rrbracket = \mathbb{P}(\llbracket t \rrbracket)(\llbracket \varphi \rrbracket)$. $\varphi$ is satisfied by an interpretation $\llbracket - \rrbracket$ if $\llbracket \varphi \rrbracket = \top_{\mathbb{P}(\Interp{\Gamma})}$. $\varphi$ is valid if it is satisfied by all interpretations. 
A standard Lindenbaum-Tarski style construction is sufficient to prove soundness and completeness in both cases.

\begin{thm}\cite{Pitts,Biering2005}
For all \logicfont{FO(B)BI} formulas $\varphi$, $\psi$ in context $\Gamma$, $\varphi \vdash^{\Gamma} \psi$ is provable in $\mathrm{FO(B)BI}_{\mathrm{H}}$  iff, for all (B)BI hyperdoctrines $\mathbb{P}$ and all interpretations $\Interp{-}$, $\Interp{\varphi} \leq_{\mathbb{P}(\Interp{\Gamma})} \Interp{\psi}$.  \qed%
\end{thm}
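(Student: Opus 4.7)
The plan is to prove soundness and completeness separately, in both cases leveraging the modular structure of a (B)BI hyperdoctrine as a Poset-valued functor whose fibres are (B)BI algebras and whose reindexing maps are (B)BI algebra homomorphisms.

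For soundness, I would proceed by induction on the derivation of $\varphi \vdash^{\Gamma} \psi$ in $\mathrm{FO(B)BI}_{\mathrm{H}}$. The propositional rules (the rules of $\mathrm{(B)BI}_{\mathrm{H}}$ inherited from Fig.~\ref{fig:hilbert-lgl} and Fig.~\ref{fig:hilbert-bbi}) are validated at each fibre $\mathbb{P}(\llbracket \Gamma \rrbracket)$ by Theorem~\ref{thm:bialgcomplete}, since every such fibre is a (B)BI algebra. The quantifier rules are validated by the adjunctions $\exists \Interp{X}_{\Interp{\Gamma}} \dashv \mathbb{P}(\pi_{\Interp{\Gamma}, \Interp{X}}) \dashv \forall \Interp{X}_{\Interp{\Gamma}}$; the substitution rule is validated by the fact that $\mathbb{P}(f)$ is a (B)BI algebra homomorphism, together with naturality of the adjoints in $\Gamma$ (the Beck--Chevalley squares) to handle substitution under a binder; and the equality rules (reflexivity and Leibniz replacement) follow from the adjoint characterization of $=_X$ at $\top_{\mathbb{P}(X)}$ with respect to the diagonal.

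For completeness, I would construct a canonical term-model (B)BI hyperdoctrine $\mathbb{P}^c$ from the syntax of $\mathrm{FO(B)BI}_{\mathrm{H}}$. Take the base category $\catfont{C}$ to have contexts $\Gamma$ as objects and tuples of terms modulo provable equality as morphisms, with finite products given by context concatenation. Define $\mathbb{P}^c(\Gamma)$ to be the Lindenbaum--Tarski (B)BI algebra of formulas in context $\Gamma$ quotiented by provable interderivability in $\mathrm{FO(B)BI}_{\mathrm{H}}$, with reindexing $\mathbb{P}^c(\vec{t})$ acting by simultaneous substitution. Define $=_X$ as the class of $v_1 =_X v_2$ in $\mathbb{P}^c(X \times X)$, and $\exists X_\Gamma$, $\forall X_\Gamma$ as the classes of $\exists v\colon X.\varphi$ and $\forall v\colon X.\varphi$ respectively. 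One then checks each clause of Definition~\ref{def:resourcehyperdoctrine}: finite products hold in $\catfont{C}$, each fibre is a (B)BI algebra by construction, reindexing is a (B)BI algebra homomorphism because substitution commutes with all propositional connectives, adjointness of equality translates into the rules $\top \vdash^X v =_X v$ and Leibniz replacement, and the quantifier adjunctions with Beck--Chevalley are just the quantifier introduction/elimination rules plus soundness of substitution under binders (with the usual care over variable capture). With $\mathbb{P}^c$ in hand, take $\llbracket - \rrbracket^c$ to send each symbol to its own class; then $\llbracket \varphi \rrbracket^c \leq \llbracket \psi \rrbracket^c$ is by definition $\varphi \vdash^{\Gamma} \psi$, so any entailment valid in all hyperdoctrines is in particular provable.

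The main obstacle will be the routine but delicate bookkeeping in the canonical construction: verifying the Beck--Chevalley squares for the quantifier adjoints (which amounts to the standard fact that substitution commutes with quantification modulo renaming of bound variables), ensuring the presheaf $\mathbb{P}^c$ is well-defined on equivalence classes of substitutions, and handling the interaction between the multiplicative connectives and substitution (which is unproblematic since $*, \wand, \mtop$ do not bind variables, and substitution is a (B)BI algebra homomorphism on the Lindenbaum--Tarski algebra). Since this construction and its verification are entirely standard for BI hyperdoctrines (see~\cite{Pitts,Biering2005}) and the propositional completeness at each fibre is already provided by Theorem~\ref{thm:bialgcomplete}, the result follows without further complication.
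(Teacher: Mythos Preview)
Your proposal is correct and matches the paper's approach: the paper does not give a detailed proof but simply remarks that ``a standard Lindenbaum-Tarski style construction is sufficient to prove soundness and completeness in both cases'' and cites \cite{Pitts,Biering2005}, which is exactly the term-model hyperdoctrine construction you outline. The details you sketch (soundness by induction on derivations using the fibrewise (B)BI structure and the adjunctions, completeness via the syntactic hyperdoctrine with Beck--Chevalley verified from the quantifier rules) are the standard ones deferred to those references.
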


It is also worth stating a simple lemma that can be obtained as an immediate consequence of the adjointness properties of $\exists X_{\Gamma}$ and $\forall X_{\Gamma}$ as it will be invoked frequently in proofs.

\begin{lem}
Given a (B)BI hyperdoctrine $\mathbb{P}: \mathrm{C}^{op} \rightarrow \mathrm{Poset}$, for all $a, b \in \mathbb{P}(\Gamma)$ the following hold:
\begin{enumerate}
\item $a \leq \mathbb{P}(\pi_{\Gamma, X})(\exists X_{\Gamma}(a))$ and $\exists X_{\Gamma}(\mathbb{P}(\pi_{\Gamma, X})(b)) \leq b$;
\item $b \leq \forall X_{\Gamma}(\mathbb{P}(\pi_{\Gamma, X})(b))$ and $\mathbb{P}(\pi_{\Gamma, X})(\forall X_{\Gamma}(a)) \leq a$;
\item $\exists X_{\Gamma}(\bot) = \bot$ and $\forall X_{\Gamma}(\top) = \top$. \qed%
\end{enumerate}
\end{lem}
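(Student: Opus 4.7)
The plan is to derive each of the three items directly from the defining adjunction conditions $\exists X_{\Gamma}(a) \leq b \iff a \leq \mathbb{P}(\pi_{\Gamma, X})(b)$ and $\mathbb{P}(\pi_{\Gamma, X})(b) \leq a \iff b \leq \forall X_{\Gamma}(a)$ listed in Definition~\ref{def:resourcehyperdoctrine}, together with the reflexivity of the order on each $\mathbb{P}(\Gamma)$. This is purely formal manipulation of adjoint pairs on posets; none of the additional (B)BI structure, the equality predicates, or the naturality squares are required.

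For item~(1), I would first instantiate the $\exists X_{\Gamma} \dashv \mathbb{P}(\pi_{\Gamma, X})$ adjunction at $b := \exists X_{\Gamma}(a)$: since $\exists X_{\Gamma}(a) \leq \exists X_{\Gamma}(a)$ trivially holds, the right-to-left direction of the biconditional gives the unit inequality $a \leq \mathbb{P}(\pi_{\Gamma, X})(\exists X_{\Gamma}(a))$. Dually, instantiating at $a := \mathbb{P}(\pi_{\Gamma, X})(b)$ and using $\mathbb{P}(\pi_{\Gamma, X})(b) \leq \mathbb{P}(\pi_{\Gamma, X})(b)$ yields the counit $\exists X_{\Gamma}(\mathbb{P}(\pi_{\Gamma, X})(b)) \leq b$. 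Item~(2) is obtained by the mirror-image argument using the $\mathbb{P}(\pi_{\Gamma, X}) \dashv \forall X_{\Gamma}$ adjunction: setting $a := \mathbb{P}(\pi_{\Gamma, X})(b)$ gives $b \leq \forall X_{\Gamma}(\mathbb{P}(\pi_{\Gamma, X})(b))$, and setting $b := \forall X_{\Gamma}(a)$ gives $\mathbb{P}(\pi_{\Gamma, X})(\forall X_{\Gamma}(a)) \leq a$.

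For item~(3), I would argue from the well-known fact that left (right) adjoints preserve bottoms (tops). Concretely: from $\bot \leq \mathbb{P}(\pi_{\Gamma, X})(\bot_{\mathbb{P}(\Gamma)})$, which holds because $\bot$ is the least element of $\mathbb{P}(\Gamma \times X)$, the adjunction yields $\exists X_{\Gamma}(\bot) \leq \bot_{\mathbb{P}(\Gamma)}$; combined with the reverse inequality (again because $\bot$ is least) we obtain $\exists X_{\Gamma}(\bot) = \bot$. The case $\forall X_{\Gamma}(\top) = \top$ is symmetric: $\mathbb{P}(\pi_{\Gamma, X})(\top_{\mathbb{P}(\Gamma)}) \leq \top$ holds because $\top$ is greatest, so the second adjunction forces $\top \leq \forall X_{\Gamma}(\top)$, and the opposite inequality is again automatic.

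There is no real obstacle here; the lemma is essentially a bookkeeping exercise making explicit the unit, counit, and bottom/top preservation properties of a Galois connection. If anything warrants care, it is simply being careful to invoke the correct direction of each biconditional and to remember which of $\exists$ and $\forall$ is left versus right adjoint to reindexing. The proof could reasonably be compressed to a single sentence per item, or even omitted with a pointer to standard categorical logic references such as~\cite{Pitts}.
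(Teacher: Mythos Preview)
Your proposal is correct and is exactly what the paper intends: the lemma is stated without proof (just a \qed), having been introduced as ``an immediate consequence of the adjointness properties of $\exists X_{\Gamma}$ and $\forall X_{\Gamma}$.'' Your derivation of the unit/counit inequalities by instantiating the defining biconditionals at the obvious reflexive instances, and of item~(3) by the preservation of least/greatest elements by left/right adjoints, is the standard argument the paper is gesturing at.
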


On the `relational' side we introduce new structures: \emph{indexed (B)BI frames}. This definition is adapted from the notion of indexed Stone space presented by Coumans~\cite{Coumans2010} as a topological dual for Boolean hyperdoctrines. In contrast to the duality presented there, we prove the duality for the more general intuitionistic case and additionally consider (typed) equality and universal quantification. They also appear to have some relation to a more general formulation of Shirasu's \emph{metaframes}~\cite{S98}, another type of indexed frame introduced to interpret predicate superintuitionistic and modal logics, but we defer an investigation of this connection to another occasion.

\begin{defi}\label{def:indresourceframe}
An \emph{indexed (B)BI frame} is a functor
$\functorfont{R}: \catfont{C} \rightarrow \catfont{BIFr}$ such that
	\begin{enumerate}
	\item $\catfont{C}$ is a category with finite products;
	\item For all objects $\Gamma, \Gamma'$ and $X$ in $\catfont{C}$, all morphisms
		$s: \Gamma \rightarrow \Gamma'$ and all product projections
		$\pi_{\Gamma, X}$, 
		 for the following commutative square
\[
		\begin{tikzcd}
	\functorfont{R}(\Gamma \times X) \arrow{r}{\functorfont{R}(\pi_{\Gamma, X})} \arrow{d}{\functorfont{R}(s \times id_X)} & \functorfont{R}(\Gamma) \arrow{d}[swap]{\functorfont{R}(s)} \\
	\functorfont{R}(\Gamma' \times X) \arrow{r}[swap]{\functorfont{R}(\pi_{\Gamma', X})} & \functorfont{R}(\Gamma')
	\end{tikzcd} \]
\begin{itemize}
\item[(a)] (for indexed BI frames) the (Pseudo Epi) property holds: $\mathcal{R}(\pi_{\Gamma', X})(y) \preccurlyeq \mathcal{R}(s)(x)$ implies there exists $z$ such that:
		$\mathcal{R}(\pi_{\Gamma, X})(z) \preccurlyeq x$ and
		$y \preccurlyeq \mathcal{R}(s \times id_X)(z)$;
\item[(b)] (for indexed BBI frames) the quasi-pullback property holds: the induced map
\[
    \functorfont{R}(\Gamma \times X) \rightarrow \functorfont{R}(\Gamma) \times_{\functorfont{R}(\Gamma')} \functorfont{R}(\Gamma' \times X)
\]
is an epimorphism.
\end{itemize}
	\end{enumerate}
Given an arbitrary indexed BI frame $\functorfont{R}: \catfont{C} \rightarrow \catfont{(B)BIFr}$
and an object $X$ we denote the BI frame at $X$ by $\functorfont{R}(X) = (\functorfont{R}(X),  \preccurlyeq_{\functorfont{R}(X)}, \circ_{\functorfont{R}(X)}, E_{\functorfont{R}(X)})$. Analogously, we denote the BBI frame at $X$ by $\functorfont{R}(X) = (\functorfont{R}(X), \circ_{\functorfont{R}(X)}, E_{\functorfont{R}(X)})$ in the case of an indexed BBI frame. \qed%
\end{defi}

Although it may not look like it yet, condition (2) ensures that an interpretation of quantifiers based on the projections coheres correctly with the appropriate changes in context. The relation between the definition for indexed BI and BBI frames may not seem entirely clear at first, but unpacking what it means for the square to be a quasi-pullback should clarify: if $\mathcal{R}(\pi_{\Gamma', X})(y) = \mathcal{R}(s)(x)$ then there exists $z$ such that:
		$\mathcal{R}(\pi_{\Gamma, X})(z) = x$ and
		$y = \mathcal{R}(s \times id_X)(z)$.

A Kripke-style semantics can be given for \logicfont{FO(B)BI} on indexed (B)BI frames. For \logicfont{FOBI}, an interpretation $\Interp{-}$ is given in precisely the same way as for BI hyperdoctrines, except for the key-difference that each $m$-ary predicate symbol $P$ of type $X_1, \ldots, X_m$ is assigned to an upwards closed subset $\Interp{P} \in \mathcal{P}_{\preccurlyeq}(\mathcal{R}(\llbracket X_1 \rrbracket \times \cdots \times \llbracket X_m \rrbracket))$. Similarly,  an interpretation $\Interp{-}$ for $\logicfont{FOBBI}$ is given in the same way as it is for BBI hyperdoctrines, except that, for every $m$-ary predicate symbol $P$ of type $X_1, \ldots, X_m$, $P$ is assigned to a subset $\Interp{P} \in \mathcal{P}(\mathcal{R}(\llbracket X_1 \rrbracket \times \cdots \times \llbracket X_m \rrbracket))$.  Then for formulas $\varphi$ of \logicfont{FO(B)BI} in context $\Gamma$ with $x \in \functorfont{R}(\Interp{\Gamma})$ the satisfaction relation $\vDash^{\Gamma}$ is inductively defined in Fig.~\ref{fig:sat-IRF}. There, $Ran(\mathcal{R}(\Delta_{\llbracket X \rrbracket})) =  \{ y \mid \exists z(\mathcal{R}(\Delta_{\llbracket X \rrbracket})(z) = y) \}$. We note that bound variables are renamed to be fresh throughout, in an order determined by quantifier depth.

The familiar persistence property of propositional intuitionistic logics also holds for satisfaction on indexed BI frames. For atomic predicate formulas this is by design, with the assignment of predicate symbols to upwards closed subsets akin to a persistent valuation. For formulas of the form $t =_X t'$ this follows from the fact that $\mathcal{R}(\Delta_X)$ is a BI morphism and hence order preserving. The rest of the clauses follow by an inductive argument, the most involved of which is for formulas of the form $\exists v_{n+1}:X\varphi$. Suppose $x, \Interp{-} \vDash^{\Gamma} \exists v_{n+1}:X \varphi$ and $y \succcurlyeq_{\mathcal{R}(\Interp{\Gamma})} x$. Then by definition there exists $x'$ such that $\mathcal{R}(\pi_{\llbracket \Gamma \rrbracket, \llbracket X \rrbracket})(x') = x \preccurlyeq_{\mathcal{R}(\Interp{\Gamma})} y$ and $x', \llbracket - \rrbracket  \vDash^{\Gamma \cup \{ v_{n+1}\!:\!X \}} \varphi$. Since $\mathcal{R}(\pi_{\llbracket \Gamma \rrbracket, \llbracket X \rrbracket})$ is a BI morphism, there exists $y'$ such that $y' \succcurlyeq_{\mathcal{R}(\Interp{\Gamma} \times \Interp{X})} x'$ and $\mathcal{R}(\pi_{\llbracket \Gamma \rrbracket, \llbracket X \rrbracket})(y') = y$. By the inductive hypothesis, $y', \llbracket - \rrbracket  \vDash^{\Gamma \cup \{ v_{n+1}\!:\!X \}} \varphi$ and so $y, \llbracket - \rrbracket  \vDash^{\Gamma} \exists v_{n+1}:X\varphi$.

\begin{figure}
\centering
\hrule
\vspace{1mm}
 \setlength\tabcolsep{1pt}
\setlength\extrarowheight{3pt}
\begin{tabular}{c c c l l r c c c c r r c}
$x, \Interp{-}$ & $\vDash^{\Gamma}$ & $Pt_1\ldots t_m$ & iff & \myalign{l}{$\functorfont{R}(\langle \llbracket t_1 \rrbracket, \ldots, \llbracket t_m \rrbracket \rangle)(x)
\in \llbracket P \rrbracket$} & &
$x, \Interp{-}$ & $\vDash^{\Gamma}$ & $\top$ \\
$x, \Interp{-}$ & $\vDash^{\Gamma}$ & $t=_X t'$ & iff & \myalign{l}{$\functorfont{R}(\langle \llbracket t \rrbracket, \llbracket t' \rrbracket \rangle)(x) \in
Ran(\functorfont{R}(\Delta_{\llbracket X \rrbracket}))$} & &
$x, \Interp{-}$ & $\not\vDash^{\Gamma}$ & $\bot$ \\
$x, \Interp{-}$ & $\vDash^{\Gamma}$ & $\varphi \land \psi$ & iff & \myalign{l}{$x, \Interp{-} \vDash^{\Gamma} \varphi$ and $x, \Interp{-} \vDash^{\Gamma} \psi$} \\
$x, \Interp{-}$ & $\vDash^{\Gamma}$ & $\varphi \lor \psi$ & iff & \myalign{l}{$x, \Interp{-} \vDash^{\Gamma} \varphi$ or $x, \Interp{-} \vDash^{\Gamma} \psi$}
\\
$x, \Interp{-}$ & $\vDash^{\Gamma}$ & $\varphi \rightarrow \psi$ & iff & \multicolumn{8}{l}{for all $x' \succcurlyeq_{\mathcal{R}(\Interp{\Gamma})} x$, $x', \Interp{-}$ $\vDash^{\Gamma}$ $\varphi$ implies $x', \Interp{-} \vDash^{\Gamma} \psi$} \\
$x, \Interp{-}$ & $\vDash^{\Gamma}$ & $\mtop$ & iff & \myalign{l}{$x \in E_{\mathcal{R}(\Interp{\Gamma})}$} \\
$x, \Interp{-}$ & $\vDash^{\Gamma}$ & $\varphi * \psi$ & iff & \multicolumn{8}{l}{there exists $x' \preccurlyeq x$ s.t. $x' \in y \circ_{\mathcal{R}(\Interp{\Gamma})} z$, $y, \Interp{-} \vDash^{\Gamma} \varphi$ and  $z, \Interp{-} \vDash^{\Gamma} \psi$} \\
$x, \Interp{-}$ & $\vDash^{\Gamma}$ & $\varphi \wand\psi$ & iff & \multicolumn{8}{l}{for all $x' \succcurlyeq x$ s.t. $z \in x' \circ_{\mathcal{R}(\Interp{\Gamma})} y$,  $y, \Interp{-} \vDash^{\Gamma} \varphi$ implies $z, \Interp{-} \vDash^{\Gamma} \psi$} \\
$x, \Interp{-}$ & $\vDash^{\Gamma}$ & $\exists v_{n+1}:X \varphi$ & iff & \multicolumn{8}{l}{there exists $x' \in \functorfont{R}(\llbracket \Gamma \rrbracket \times \llbracket X \rrbracket)$
s.t. $\functorfont{R}(\pi_{\llbracket \Gamma \rrbracket, \llbracket X \rrbracket})(x') = x$
and} \\
& & & & $x', \llbracket - \rrbracket  \vDash^{\Gamma \cup \{ v_{n+1}\!:\!X \}} \varphi$ \\
$x, \Interp{-}$ & $\vDash^{\Gamma}$ & $\forall v_{n+1}:X \varphi$ & iff & \multicolumn{8}{l}{for all $x' \in \functorfont{R}(\llbracket \Gamma \rrbracket \times \llbracket X \rrbracket)$,
$\functorfont{R}(\pi_{\llbracket \Gamma \rrbracket, \llbracket X \rrbracket})(x') \succcurlyeq_{\mathcal{R}(\Interp{\Gamma})} x$,
implies} \\
&&&&  $x', \llbracket - \rrbracket  \vDash^{\Gamma \cup \{ v_{n+1}:X\}} \varphi$
\end{tabular}
\caption{Satisfaction on indexed (B)BI frames for \logicfont{FO(B)BI}. \logicfont{FOBBI} replaces $\preccurlyeq$ with $=$.}
\vspace{1mm}
\hrule%
\label{fig:sat-IRF}
\end{figure}

\subsection{Pointer Logic as an Indexed Frame}\label{subsec:pointerframe}

Although at first sight it may not seem so, indexed frames and the semantics based
upon them are a generalization of the standard store--heap semantics of Separation Logic. 

Consider the BI frame $\mathrm{Heap}^\vtiny{BI} = (H, \uplus, \sqsubseteq, H)$, where
$H$ is the set of heaps, $\sqsubseteq$ is heap extension, and $\uplus$ is defined by $h_2 \in h_0 \uplus h_1$ iff
$h_0\# h_1$ and $h_0 \cdot h_1 = h_2$. This is the BI frame corresponding to the partial monoid of heaps. We define an indexed BI frame $\mathrm{Store}^\vtiny{BI}: \catfont{Set} \rightarrow \catfont{BIFr}$ on objects
by $\mathrm{Store}^\vtiny{BI}(X) = {(X \times H, \uplus_X,\sqsubseteq_X, X \times H )}$, where $(x_2, h_2) \in (x_0, h_0) \uplus_X (x_1, h_1)$ iff $x_0=x_1=x_2$ and $h_2 \in h_0 \uplus h_1$, and $(x_0, h_0) \sqsubseteq_X (x_1, h_1)$ iff $x_0 = x_1$ and $h_0 \sqsubseteq h_1$. On morphisms, set $\mathrm{Store}^\vtiny{BI}(f: X \rightarrow Y)(x, h) = (f(x), h)$.
It is straightforward to see this defines a functor: for arbitrary $X$, $\mathrm{Store}(X)$
inherits the BI frame properties from $\mathrm{Heap}^\vtiny{BI}$ and for arbitrary $f: X \rightarrow Y$, $\mathrm{Store}(f)$
is trivially a BI morphism as it is identity on the structure that determines the back and forth conditions.
The property (Pseudo Epi) is also trivially satisfied so this defines an indexed BI frame.

For BBI pointer logic, we instead start with the BBI frame $\mathrm{Heap}^\vtiny{BBI} = (H, \uplus, \{ [] \})$ where $[]$ is the empty heap. Then $\mathrm{Store}^\vtiny{BBI}$ is defined in essentially the same way, with $\mathrm{Store}^\vtiny{BBI}(X) = (X \times H, \uplus_X, X \times \{ [] \} )$ and $\mathrm{Store}^\vtiny{BBI}(f)(x, h) = (f(x), h)$. This defines an indexed BBI frame.

We now describe the interpretations $\llbracket - \rrbracket$ on $\mathrm{Store}^\vtiny{(B)BI}$ that yield the standard models of Separation Logic.
We have one type $\mathrm{Val}$ and we set $\llbracket \mathrm{Val} \rrbracket = \mathbb{Z}$, with the arithmetic operations
$\llbracket + \rrbracket, \llbracket - \rrbracket: \llbracket \mathrm{Val} \rrbracket^2 \rightarrow \llbracket \mathrm{Val} \rrbracket$
defined as one would expect. Term morphisms $\llbracket t \rrbracket: \llbracket \mathrm{Val} \rrbracket^n \rightarrow \llbracket \mathrm{Val} \rrbracket$
in context $\Gamma = \{v_1, \ldots v_n \}$ are then defined as usual,
with each constant $n$ assigned the morphism
\begin{tikzcd}
\llbracket n \rrbracket: \llbracket \Gamma \rrbracket \arrow{r} & \{ * \} \arrow{r}{n} & \llbracket \mathrm{Val} \rrbracket.
\end{tikzcd} As one would expect, the key difference between the two interpretations is in the interpretation of the points-to predicate. For Intuitionistic Separation Logic, the points-to predicate $\mapsto$ is assigned
\[ \llbracket \mapsto \rrbracket = \{ ((a, a'), h) \mid a \in dom(h) \text{ and } h(a) = a' \}
\in \mathcal{P}_{\sqsubseteq_{\Interp{\mathrm{Val}}^2}} (\mathrm{Store}^\vtiny{BI}(\llbracket \mathrm{Val} \rrbracket^2)).\]
This set is clearly upwards closed with respect to the order $\sqsubseteq_{\Interp{\mathrm{Val}}^2}$ so this is a well-defined interpretation. For Classical Separation Logic, $\mapsto$ is instead assigned
\[ \Interp{\mapsto} = \{ ((a, a'), h) \mid \{ a \} = dom(h) \text{ and } h(a) = a' \} \in \mathcal{P}(\mathrm{Store}^\vtiny{BBI}(\Interp{\mathrm{Val}}^2)). \]

In the indexed (B)BI frame $\mathrm{Store}^\vtiny{(B)BI}: \catfont{Set} \rightarrow \catfont{(B)BIFr}$ with the interpretations just defined, a store is represented as an $n$-place vector of values over $\llbracket \mathrm{Val} \rrbracket$.
That is, the store $s = \{(v_1, a_1), \ldots, (v_n, a_n) \}$ is given by the element
$(a_1, \ldots, a_n) \in \llbracket \mathrm{Val} \rrbracket^n$. By a simple inductive argument we have the following result:

\begin{thm}\label{theorem:slequiv}
For all formulas $\varphi$ of (B)BI pointer logic, all stores $s = \{ (v_1, a_1), \dots, \!(v_n, a_n) \}$ and all
heaps $h$, $s, h \vDash \varphi \text{ iff } ((a_1, \ldots, a_n), h), \llbracket - \rrbracket \vDash^{\Gamma} \varphi$. \qed%
\end{thm}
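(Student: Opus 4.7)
I would prove this by structural induction on the formula $\varphi$, fixing the notation $s = \{(v_1, a_1), \ldots, (v_n, a_n)\}$, $\vec{a} = (a_1, \ldots, a_n)$, and $\Gamma = \{v_1\!:\!\mathrm{Val}, \ldots, v_n\!:\!\mathrm{Val}\}$, and treating the BI and BBI cases uniformly wherever possible. The engine of the proof is the observation that for any term $t$ built from $\Gamma$, the morphism $\llbracket t\rrbracket : \llbracket\mathrm{Val}\rrbracket^n \to \llbracket\mathrm{Val}\rrbracket$ computes exactly the denotation $\dbrace{t}s$ when applied to $\vec{a}$, which follows by a subinduction on $t$ using the definitions of $\llbracket n\rrbracket$, $\llbracket +\rrbracket$, $\llbracket -\rrbracket$, and the pairing/projection in $\catfont{Set}$.

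For the atomic cases, $\top$ and $\bot$ are immediate. The clause for $E = E'$ unfolds as $\mathrm{Store}^\vtiny{(B)BI}(\langle\llbracket E\rrbracket, \llbracket E'\rrbracket\rangle)(\vec{a}, h) \in Ran(\mathrm{Store}^\vtiny{(B)BI}(\Delta_{\llbracket\mathrm{Val}\rrbracket}))$, which by the term lemma reduces to $\dbrace{E}s = \dbrace{E'}s$. For $E \mapsto F$ in the BI case, I would unfold $\llbracket\mapsto\rrbracket$ to see membership is precisely the condition $\dbrace{E}s \in dom(h)$ with $h(\dbrace{E}s) = \dbrace{F}s$; the BBI case is identical but with the stricter domain condition, matching the classical clause. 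For $\mathrm{Emp}$, the BI side uses $E_{\mathrm{Store}^\vtiny{BI}(\llbracket\Gamma\rrbracket)} = \llbracket\Gamma\rrbracket \times H$, trivially satisfied, matching the fact that every heap extends $[]$; in BBI, $E_{\mathrm{Store}^\vtiny{BBI}(\llbracket\Gamma\rrbracket)} = \llbracket\Gamma\rrbracket \times \{[]\}$, matching $h = []$.

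The additive cases $\wedge$, $\vee$ are routine. For $\rightarrow$, I would use that $(\vec{a}, h) \sqsubseteq_{\llbracket\Gamma\rrbracket} (\vec{a}', h')$ forces $\vec{a} = \vec{a}'$, so quantification over extensions reduces to heap extension keeping the store fixed. The multiplicative cases hinge on the definition of $\uplus_X$: $(\vec{a}, h) \in (\vec{a}', h') \uplus_{\llbracket\Gamma\rrbracket} (\vec{a}'', h'')$ forces $\vec{a} = \vec{a}' = \vec{a}''$ and $h \in h' \uplus h''$, so the existential/universal over splits in the frame semantics lines up exactly with splits of the heap in the standard clauses, modulo the inductive hypothesis applied at $s$.

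The quantifier cases are the main obstacle and require the most care. For $\exists v_{n+1}\!:\!\mathrm{Val}.\,\varphi$, the projection $\mathrm{Store}^\vtiny{(B)BI}(\pi_{\llbracket\Gamma\rrbracket, \llbracket\mathrm{Val}\rrbracket}) : \llbracket\mathrm{Val}\rrbracket^{n+1} \times H \to \llbracket\mathrm{Val}\rrbracket^n \times H$ forgets the $(n+1)$-th component of the value vector while preserving the heap; hence elements $x'$ with $\mathrm{Store}^\vtiny{(B)BI}(\pi)(x') = (\vec{a}, h)$ are exactly the points $((\vec{a}, b), h)$ for $b \in \llbracket\mathrm{Val}\rrbracket$. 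Under the correspondence between the store $[s \mid v_{n+1} \mapsto b]$ and the tuple $(\vec{a}, b)$, the inductive hypothesis applied in context $\Gamma \cup \{v_{n+1}\!:\!\mathrm{Val}\}$ gives the desired equivalence. The $\forall$ case is dual; in the BI setting I additionally need that $(\vec{a}', h') \sqsupseteq_{\llbracket\Gamma\rrbracket \times \llbracket\mathrm{Val}\rrbracket} ((\vec{a}, b), h)$ still forces the value-vector part to agree, which holds by definition of the componentwise order on $\mathrm{Store}^\vtiny{BI}$. Substitution of terms in $\varphi$, which is needed to make the correspondence with variable renaming rigorous, is handled by $\llbracket\varphi(t/x)\rrbracket = \mathbb{P}(\llbracket t\rrbracket)(\llbracket\varphi\rrbracket)$, which transfers to the frame side via functoriality of $\mathrm{Store}^\vtiny{(B)BI}$.
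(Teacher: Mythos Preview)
Your approach is essentially the same as the paper's---a structural induction on $\varphi$ driven by a term-evaluation lemma, with the quantifier cases handled via the action of the projection under $\mathrm{Store}$. The paper's sketch is very brief but highlights exactly the same ingredients.

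There is one genuine gap in your treatment of the multiplicative connectives in the \logicfont{BI} case. The indexed-frame clauses for $*$ and $\wand$ (Fig.~10) are the ``strong'' clauses: for $*$ one quantifies over $x' \preccurlyeq x$ with $x' \in y \circ z$, and for $\wand$ over $x' \succcurlyeq x$. In the \logicfont{BBI} case $\preccurlyeq$ is equality and these collapse, so your argument goes through as stated. In the \logicfont{BI} case, however, the pointer-logic clause for $*$ splits $h$ itself, whereas the indexed-frame clause allows splitting some $h' \sqsubseteq h$. To reconcile these you need the upwards closure of heap composition with respect to $\sqsubseteq$ (if $h' = h_1 \cdot h_2$ and $h' \sqsubseteq h$ then $h = h_1' \cdot h_2'$ for some $h_1' \sqsupseteq h_1$, $h_2' \sqsupseteq h_2$) together with persistence from the inductive hypothesis. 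Dually, $\wand$ needs downwards closure. The paper flags this explicitly: ``That the clauses for the multiplicatives $*$ and $\wand$ are equivalent is a consequence of the upwards and downwards closure of heap composition with respect to heap extension.'' A closely related point arises in your $\forall$ case for \logicfont{BI}: the indexed-frame clause quantifies over $x'$ with $\mathcal{R}(\pi)(x') \succcurlyeq x$, which in the Store model introduces an extra quantification over $h' \sqsupseteq h$; persistence is what lets you reduce this to the pointer-logic clause at $h$ itself. Once you add these closure/persistence steps, the argument is complete.
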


After verifying that terms are evaluated to the same elements as the standard model in both representations, the equivalence of the clauses for atomic formulas can be computed directly. That the clauses for the multiplicatives $*$ and $\wand$ are equivalent is a consequence of the upwards and downwards closure of heap composition with respect to heap extension, as discussed in Section~\ref{sec:bunchedimplications}. Finally, the equivalence of the quantifier clauses is down to the representation of stores as vectors and the action of the product projections under the functor $\mathrm{Store}$. The notions of indexed (B)BI frame and its associated semantics are therefore a natural generalization of the standard Separation Logic model.

\subsection{Duality for (B)BI Hyperdoctrines}\label{subsec:bihyperdoctrines}

We now extend the results given for (B)BI algebras to (B)BI hyperdoctrines. For such results to make sense, both (B)BI hyperdoctrines and indexed (B)BI frames need to be equipped with a notion of morphism to form categories. Our definition of hyperdoctrine morphism adapts that for \emph{coherent} hyperdoctrines~\cite{COUMANS20121940}.

\begin{defi}[(B)BI Hyperdoctrine Morphism]
Given a pair of (B)BI hyperdoctrines $\mathbb{P}: \mathrm{C}^{op} \rightarrow \mathrm{Poset}$ and
$\mathbb{P}': \mathrm{D}^{op} \rightarrow \mathrm{Poset}$, a \emph{(B)BI hyperdoctrine morphism}
$(\mathrm{K}, \tau): \mathbb{P} \rightarrow \mathbb{P}'$ is a pair $(\mathrm{K}, \tau)$ satisfying the following properties:
	\begin{enumerate}
	\item $\mathrm{K}: \mathrm{C} \rightarrow \mathrm{D}$ is a finite product preserving functor;
	\item $\tau: \mathbb{P} \rightarrow \mathbb{P}' \circ K$ is a natural transformation;
	\item For all objects $X$ in $\mathrm{C}$: $\tau_{X \times X}(=_X) ={\ } ='_{K(X)}$;
	\item For all objects $\Gamma$ and $X$ in $\mathrm{C}$, the following squares commute:
		\[ \begin{array}{ll}
		 \begin{tikzcd}
		\mathbb{P}(\Gamma \times X) \arrow{r}{\tau_{\Gamma \times X}}
		\arrow{d}[swap]{\exists X_{\Gamma}} & \mathbb{P'}(K(\Gamma) \times K(X))
		\arrow{d}{\exists' {K(X)}_{K(\Gamma)}} \\
		\mathbb{P}(\Gamma) \arrow{r}{\tau_{\Gamma}} & \mathbb{P'}(K(\Gamma))
		\end{tikzcd} &
		\begin{tikzcd}
		\mathbb{P}(\Gamma \times X) \arrow{r}{\tau_{\Gamma \times X}}
		\arrow{d}[swap]{\forall X_{\Gamma}} & \mathbb{P'}(K(\Gamma) \times K(X))
		\arrow{d}{\forall' {K(X)}_{K(\Gamma)}} \\
		\mathbb{P}(\Gamma) \arrow{r}{\tau_{\Gamma}} & \mathbb{P'}(K(\Gamma))
		\end{tikzcd}
		\end{array} \]
	\end{enumerate}
The composition of BI hyperdoctrine morphisms $(K, \tau): \mathbb{P} \rightarrow \mathbb{P}'$ and $(K', \tau'): \mathbb{P}' \rightarrow \mathbb{P}''$ is given by $(K' \circ K, \tau'_{K(-)} \circ \tau)$. This yields a category $\mathrm{BIHyp}$. \qed%
\end{defi}
For indexed (B)BI frames the definition of morphism splits into two because of the weakening of equality to a preorder on the intuitionistic side. It is straightforward to show that the notion of indexed BI frame morphism collapses to that for indexed BBI frames when the preorders $\preccurlyeq$ are substituted for $=$.

\begin{defi}[Indexed BI Frame Morphism]
Given indexed BI frames $\mathcal{R}: \mathrm{C} \rightarrow \mathrm{BIFr}$ and
$\mathcal{R}': \mathrm{D} \rightarrow \mathrm{BIFr}$, an \emph{indexed BI frame morphism}
$(L, \lambda): \mathcal{R} \rightarrow \mathcal{R}'$ is a pair $(L, \lambda)$ such that:
	\begin{enumerate}
	\item $L: D \rightarrow C$ is a finite product preserving functor;
	\item $\lambda: \mathcal{R} \circ L \rightarrow \mathcal{R}'$ is a natural transformation;
	\item (Lift Property) If there exists $x$ and $y$
		such that $\mathcal{R}'(\Delta_X)(y) \preccurlyeq \lambda_{X \times X}(x)$ then there exists
		$y'$ such that $\mathcal{R}(\Delta_{L(X)})(y')
		\preccurlyeq x$;
	\item (Morphism Pseudo Epi)  If there exists $x$ and
		$y$ with
		$\mathcal{R}'(\pi_{\Gamma, X})(x) \preccurlyeq \lambda_{\Gamma}(y) $ then there exists
		$z$ such that
		$x \preccurlyeq\lambda_{\Gamma \times X}(z)$ and
		$\mathcal{R}(\pi_{L(\Gamma), L(X)})(z) \preccurlyeq y$.
	\end{enumerate}
The composition of indexed BI frame morphisms $(L', \lambda'): \mathcal{R}' \rightarrow \mathcal{R}''$ and $(L, \lambda): \mathcal{R} \rightarrow \mathcal{R}'$ is given by $(L \circ L', \lambda' \circ \lambda_{L'(-)})$. This yields a category $\mathrm{IndBIFr}$. \qed%
\end{defi}

\begin{defi}[Indexed BBI Frame Morphism]
For indexed BBI frames $\mathcal{R}: \mathrm{C} \rightarrow \mathrm{BBIFr}$ and
$\mathcal{R}': \mathrm{D} \rightarrow \mathrm{BBIFr}$, an \emph{indexed BBI frame morphism} $(L, \lambda): \mathcal{R} \rightarrow \mathcal{R}'$ is a pair $(L, \lambda)$ satisfying (1) and (2) of the previous definition as well as
	\begin{enumerate}[align=left]
	\item[($3'$)] (Lift Property$'$) if there exist $x$ and
		$y$ such that
		$\lambda_{X \times X}(x) =  \functorfont{R'}(\Delta_X)(y)$, then there exists
		$y'$ such that $\functorfont{R}((\Delta_{L(X)}))(y') = x$, and
	\item[($4'$)] (Quasi-Pullback) for all objects $\Gamma$ and $X$ in $\catfont{C}$, the following square
		is a quasi-pullback:
		\[\begin{tikzcd}
		\functorfont{R}(L(\Gamma) \times L(X)) \arrow{r}{\lambda_{\Gamma \times X}}
		\arrow{d}{\functorfont{R}(\pi_{L(\Gamma), L(X)})} & \functorfont{R}(\Gamma \times X)
		\arrow{d}{\functorfont{R'}(\pi_{\Gamma, X})} \\
		\functorfont{R}(L(\Gamma)) \arrow{r}{\lambda_{\Gamma}} & \functorfont{R}(\Gamma)
		\end{tikzcd}\]
	\end{enumerate}
The composition of indexed BBI frame morphisms $(L', \lambda'): \mathcal{R}' \rightarrow \mathcal{R}''$ and $(L, \lambda): \mathcal{R} \rightarrow \mathcal{R}'$ is given by $(L \circ L', \lambda' \circ \lambda_{L'(-)})$. This yields a category $\mathrm{Ind(B)BIFr}$. \qed%
\end{defi}

We can now show that the `algebraic' and `relational' semantics of \logicfont{FO(B)BI} correspond to each other by defining functorial transformations analogous to the complex algebras and prime filter frames of Section~\ref{sec:propintduality}. We show these are related by a representation theorem which is then extended to a duality via the introduction of topology.

To obtain complex hyperdoctrines, we straightforwardly compose an indexed frame with the appropriate complex algebra functor from (B)BI\@.

\begin{defi}[Complex (B)BI Hyperdoctrine]\label{def:complexhyperdoctrine}
Given an indexed BI frame $\mathcal{R}: \mathrm{C} \rightarrow \mathrm{BIFr}$, the \emph{complex hyperdoctrine} of $\mathcal{R}$, $ComHyp^{\vtiny{BI}}(\mathcal{R})$, is given by
$Com^{\vtiny{BI}}(\mathcal{R}(-)): \mathrm{C}^{op} \rightarrow \mathrm{Poset}$, together with $Ran(\mathcal{R}(\Delta_X))$ as $=_X$, ${\mathcal{R}(\pi_{\Gamma, X})}^{*}$ as $\exists X_{\Gamma}$, and ${\mathcal{R}(\pi_{\Gamma, X})}_{*}$ as $\forall X_{\Gamma}$, where
	\[ \begin{array}{ll}
	{\mathcal{R}(\pi_{\Gamma, X})}^{*}(A) &= \{ x \mid \text{there exists } y \in A: \mathcal{R}(\pi_{\Gamma, X})(y) \preccurlyeq x \} \text{ and}\\
	{\mathcal{R}(\pi_{\Gamma, X})}_{*}(A) &= \{ x \mid \text{for all } y, \text{ if } x \preccurlyeq \mathcal{R}(\pi_{\Gamma, X})(y) \text{ then } y \in A \}.
	\end{array} \]
For indexed BBI frames, the definitions of ${\mathcal{R}(\pi_{\Gamma, X})}^{*}$ and ${\mathcal{R}(\pi_{\Gamma, X})}_{*}$ are as above, except with $\preccurlyeq$ replaced with $=$. \qed%
\end{defi}

Given that the complex algebra operations thus far have matched the corresponding semantic clauses on frames, one might have expected $\exists X_{\Gamma}$ to be given by the direct image $\mathcal{R}(\pi_{\Gamma, X})$. Using the fact that $\mathcal{R}(\pi_{\Gamma, X})$ is a BI morphism it can be shown that ${\mathcal{R}(\pi_{\Gamma, X})}^{*}$ is in fact identical to $\mathcal{R}(\pi_{\Gamma, X})$ so this is indeed the case. We use its presentation as ${\mathcal{R}(\pi_{\Gamma, X})}^{*}$ as it simplifies some proofs that follow.

\begin{lem}\label{lemma:comlexisirf}
Given an indexed (B)BI frame $\functorfont{R}: \catfont{C} \rightarrow \catfont{(B)BIFr}$, the
complex hyperdoctrine $ComHyp^{\vtiny{(B)BI}}(\mathcal{R})$ is a (B)BI hyperdoctrine.
\end{lem}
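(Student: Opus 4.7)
The plan is to verify the four clauses of Definition~\ref{def:resourcehyperdoctrine} for $ComHyp^{\vtiny{(B)BI}}(\mathcal{R})$ in turn. Clause~(1) is supplied directly by $\mathcal{R}$, whose base category has finite products by hypothesis. For Clause~(2), each fibre $Com^{\vtiny{(B)BI}}(\mathcal{R}(X))$ is a (B)BI algebra by Lemma~\ref{lem:bicomplexalgebra}, and $\mathcal{R}(f)^{-1}$ is a (B)BI algebra homomorphism by the morphism clause of the functor $Com^{\vtiny{(B)BI}}$ (Lemma~\ref{lem:ILGLmorphisms} together with its straightforward (B)BI extension); contravariant functoriality is immediate from $(g \circ f)^{-1} = f^{-1} \circ g^{-1}$.

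Clause~(3) requires only unpacking: since $\top = \mathcal{R}(X)$ and $\mathbb{P}(\Delta_X)(A) = \mathcal{R}(\Delta_X)^{-1}(A)$, the condition $\top \leq \mathbb{P}(\Delta_X)(A)$ reads ``$\mathcal{R}(\Delta_X)(z) \in A$ for every $z$'', which is exactly $Ran(\mathcal{R}(\Delta_X)) \subseteq A$. For the adjunction portion of Clause~(4), the inclusion $\mathcal{R}(\pi_{\Gamma, X})^*(A) \subseteq B$ unfolds to ``every $x \succcurlyeq \mathcal{R}(\pi_{\Gamma, X})(y)$ with $y \in A$ lies in $B$'', which, by upwards-closure of $B$, is equivalent to $A \subseteq \mathcal{R}(\pi_{\Gamma, X})^{-1}(B)$; a dual unpacking handles $\mathcal{R}(\pi_{\Gamma, X})_*$, and collapsing $\preccurlyeq$ to $=$ handles the BBI case.

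The main obstacle is naturality of the quantifier adjoints in $\Gamma$, which is exactly the reason (Pseudo Epi) / quasi-pullback appears in Definition~\ref{def:indresourceframe}. To verify $\mathcal{R}(\pi_{\Gamma, X})^* \circ \mathcal{R}(s \times id_X)^{-1} = \mathcal{R}(s)^{-1} \circ \mathcal{R}(\pi_{\Gamma', X})^*$ on an upwards-closed set $A \subseteq \mathcal{R}(\Gamma' \times X)$, the ``$\subseteq$'' direction follows from functoriality of $\mathcal{R}$ (giving the square $\mathcal{R}(\pi_{\Gamma', X}) \circ \mathcal{R}(s \times id_X) = \mathcal{R}(s) \circ \mathcal{R}(\pi_{\Gamma, X})$) together with the fact that the BI morphism $\mathcal{R}(s)$ preserves $\preccurlyeq$. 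The non-trivial ``$\supseteq$'' direction is where (Pseudo Epi) does the essential work: given $x$ with a witness $y \in A$ satisfying $\mathcal{R}(\pi_{\Gamma', X})(y) \preccurlyeq \mathcal{R}(s)(x)$, (Pseudo Epi) supplies $z$ with $\mathcal{R}(\pi_{\Gamma, X})(z) \preccurlyeq x$ and $y \preccurlyeq \mathcal{R}(s \times id_X)(z)$, so upwards-closure of $A$ yields $\mathcal{R}(s \times id_X)(z) \in A$ and hence $z$ witnesses membership in the left-hand side. The naturality square for $\forall X_\Gamma$ is obtained dually from the same commutativity; in the BBI case preorders collapse to equalities and the quasi-pullback condition plays precisely the role that (Pseudo Epi) plays here.
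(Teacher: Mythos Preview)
Your proof is correct and follows essentially the same approach as the paper: both identify naturality of $\exists X_\Gamma$ (and dually $\forall X_\Gamma$) as the substantive verification, obtain the forward inclusion from functoriality of $\mathcal{R}$ together with order-preservation of BI morphisms, and use (Pseudo~Epi) (respectively the quasi-pullback property) together with upwards-closure of $A$ for the reverse inclusion. You additionally spell out clauses~(1)--(3) and the adjunction half of~(4), which the paper leaves implicit, but the core argument is the same.
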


\begin{proof}
We concentrate on the verifications relating to ${\mathcal{R}(\pi_{\Gamma, X})}^{*}$ and ${\mathcal{R}(\pi_{\Gamma, X})}_{*}$ of a BI complex hyperdoctrine. It is straightforward to see these map upwards-closed sets to upwards-closed sets and are monotone with respect to the subset ordering $\subseteq$. The adjointness properties follow from the definitions so it just remains to prove naturality.

We give the case for $\exists X_{\Gamma}$. Given a morphism $s: \Gamma \rightarrow \Gamma'$ in $\mathrm{C}$ and an element $A \in Com^\vtiny{BI}(\mathcal{R}(\Gamma' \times X))$, we must show
${\mathcal{R}(\pi_{\Gamma, X})}^{*}({\mathcal{R}(s \times id_X)}^{-1}(A)) =
{\mathcal{R}(s)}^{-1}({\mathcal{R}(\pi_{\Gamma', X})}^{*}(A))$. Suppose
$x \in \mathcal{R}^{*}(\pi_{\Gamma, X})({\mathcal{R}(s \times id_X)}^{-1}(A))$: then there exists $y$ such that $\mathcal{R}(\pi_{\Gamma, X})(y) \preccurlyeq x$ and $\mathcal{R}(s \times id_X)(y) \in A$. We have $\mathcal{R}(\pi_{\Gamma', X})(\mathcal{R}(s \times id_X)(y)) =
\mathcal{R}(s)(\mathcal{R}(\pi_{\Gamma, X})(y)) \preccurlyeq \mathcal{R}(s)(x)$. Hence
$x \in {\mathcal{R}(s)}^{-1}({\mathcal{R}(\pi_{\Gamma', X})}^{*}(A))$, as required.

Conversely, assume $x \in {\mathcal{R}(s)}^{-1}({\mathcal{R}(\pi_{\Gamma', X})}^{*}(A))$.  Then there exists $y \in A$ such that $\mathcal{R}(\pi_{\Gamma', X})(y) \preccurlyeq \mathcal{R}(s)(x)$. Then by (Psuedo Epi), there exists $z$ such that $\mathcal{R}(\pi_{\Gamma, X})(z) \preccurlyeq x$ and $y \preccurlyeq \mathcal{R}(s \times id_X)(z)$. By upwards-closure of $A$, $\mathcal{R}(s \times id_X)(z) \in A$. Hence we have
$x \in {\mathcal{R}(\pi_{\Gamma, X})}^{*}({\mathcal{R}(s \times id_X)}^{-1}(A))$, as required.

The proof for BBI complex hyperdoctrines follows immediately by substituting every instance of $\preccurlyeq$ with $=$ in the above argument, where the quasi pullback property allows us to assume the existence of $z$ such that $\mathcal{R}(\pi_{\Gamma, X})(z) = x$ and $y = \mathcal{R}(s \times {id}_X)(z)$ from  $\mathcal{R}(\pi_{\Gamma', X})(y) = \mathcal{R}(s)(x)$.
\end{proof}

\begin{defi}[Indexed Prime Filter (B)BI Frame]\label{def:slultrafilter}
Given a (B)BI hyperdoctrine
$\algfont{P}$,
the \emph{indexed prime filter frame}, $IndPr^{\vtiny{(B)BI}}(\mathbb{P})$, is given by $Pr^\vtiny{(B)BI}(\algfont{P}(-))$. \qed
\end{defi}

\begin{lem}\label{lemma:ultrafilterisirf}
Given a (B)BI hyperdoctrine
$\algfont{P}: \catfont{C}^{op} \rightarrow \catfont{Poset}$,
the indexed prime filter frame $IndPr^{\vtiny{(B)BI}}(\mathbb{P})$ is an indexed (B)BI frame.
\end{lem}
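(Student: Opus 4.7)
The plan is to verify the two obligations of Definition~\ref{def:indresourceframe} for $\mathcal{R} = Pr^{\vtiny{(B)BI}} \circ \mathbb{P}$. Functoriality is immediate: each $\mathbb{P}(f)$ is a (B)BI algebra homomorphism by Definition~\ref{def:resourcehyperdoctrine}(2), and the contravariant functor $Pr^{\vtiny{(B)BI}}$ turns this into a (B)BI frame morphism $\mathbb{P}(f)^{-1}$, with composition and identities preserved automatically. The substantive work is (Pseudo Epi) for the BI case; the BBI (quasi-pullback) case then follows by replacing each $\preccurlyeq$ below with $=$, which is legitimate because prime filters on a Boolean algebra are ultrafilters and so the relevant $\subseteq$-inclusions collapse to equalities.

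Unpacking (Pseudo Epi): the hypothesis is that prime filters $F_x \in Pr(\mathbb{P}(\Gamma))$ and $F_y \in Pr(\mathbb{P}(\Gamma' \times X))$ satisfy $\mathbb{P}(\pi_{\Gamma', X})^{-1}(F_y) \subseteq \mathbb{P}(s)^{-1}(F_x)$, and I must produce a prime filter $F_z \in Pr(\mathbb{P}(\Gamma \times X))$ containing $\mathbb{P}(s \times id_X)(F_y)$ and disjoint from $\mathbb{P}(\pi_{\Gamma, X})(\overline{F_x})$. To apply Lemma~\ref{lem:primeextension}, let $F$ be the filter generated by $\mathbb{P}(s \times id_X)(F_y)$ and $I$ the ideal generated by $\mathbb{P}(\pi_{\Gamma, X})(\overline{F_x})$, and define the binary predicate
\[
P(F', I') = \begin{cases} 1 & \text{if } F \subseteq F', \ I \subseteq I', \text{ and } F' \cap I' = \emptyset \\ 0 & \text{otherwise} \end{cases}
\]
on proper filters $F'$ and proper ideals $I'$ of $\mathbb{P}(\Gamma \times X)$. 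The chain condition is routine; the meet conditions follow from distributivity, since $c_0 \in F'_0 \cap I'$ and $c_1 \in F'_1 \cap I'$ yield $c_0 \vee c_1 \in F'_0 \cap F'_1 \cap I'$ by upward closure and join-closure of $I'$, and dually for ideals.

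The crucial step is showing $F \cap I = \emptyset$, from which properness of both is automatic. Any common $c$ can be bracketed as $\mathbb{P}(s \times id_X)(b) \leq c \leq \mathbb{P}(\pi_{\Gamma, X})(a)$ for some $b \in F_y$ and $a \in \overline{F_x}$, using that $\mathbb{P}(s \times id_X)$ preserves meets and $F_y$ is meet-closed, while $\mathbb{P}(\pi_{\Gamma, X})$ preserves joins and the prime ideal $\overline{F_x}$ is join-closed. Applying $\exists X_{\Gamma}$, naturality $\exists X_{\Gamma} \circ \mathbb{P}(s \times id_X) = \mathbb{P}(s) \circ \exists X_{\Gamma'}$ together with the counit inequality $\exists X_{\Gamma}(\mathbb{P}(\pi_{\Gamma, X})(a)) \leq a$ gives $\mathbb{P}(s)(\exists X_{\Gamma'}(b)) \leq a$. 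Dually, the unit inequality $b \leq \mathbb{P}(\pi_{\Gamma', X})(\exists X_{\Gamma'}(b))$ together with upward closure of $F_y$ places $\exists X_{\Gamma'}(b) \in \mathbb{P}(\pi_{\Gamma', X})^{-1}(F_y) \subseteq \mathbb{P}(s)^{-1}(F_x)$, so $\mathbb{P}(s)(\exists X_{\Gamma'}(b)) \in F_x$, forcing $a \in F_x$ by upward closure --- contradicting $a \in \overline{F_x}$.

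This disjointness calculation is the hard part: it is the point at which all of the hyperdoctrine data --- adjointness, naturality of $\exists X_\Gamma$ in the context, and contravariant functoriality of $\mathbb{P}$ --- must be orchestrated. Once it is in place, $P(F, I) = 1$ and the Prime Extension Lemma supplies prime $F^{pr} \supseteq F$ and prime $I^{pr} \supseteq I$ with $P(F^{pr}, I^{pr}) = 1$, so $F^{pr}$ serves as the required witness $F_z$.
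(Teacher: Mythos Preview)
Your argument is correct and tracks the paper's proof closely. The core computation---using the unit inequality $b \leq \mathbb{P}(\pi_{\Gamma', X})(\exists X_{\Gamma'}(b))$, the hypothesis $\mathbb{P}(\pi_{\Gamma', X})^{-1}(F_y) \subseteq \mathbb{P}(s)^{-1}(F_x)$, naturality of $\exists$, and the counit to conclude $a \in F_x$---is exactly what the paper does, and you have identified it as the crux.

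The only structural difference is in how the Prime Extension Lemma is invoked. The paper uses a \emph{unary} prime predicate on filters,
\[
P(F) = 1 \quad\text{iff}\quad \mathbb{P}(\pi_{\Gamma, X})^{-1}(F) \subseteq F_x \ \text{and}\ F_y \subseteq \mathbb{P}(s \times id_X)^{-1}(F),
\]
and verifies it directly on the filter $[\mathbb{P}(s \times id_X)(F_y))$. You instead use a \emph{binary} predicate on filter--ideal pairs with a disjointness condition. Your approach carries an extra moving part (the prime ideal $I^{pr}$ is produced but never used except via $I \subseteq I^{pr}$), whereas the paper's unary predicate bakes the $F_x$-constraint into the predicate itself rather than externalising it as an ideal. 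Both routes are sound and the hard step is shared; the paper's is marginally leaner.
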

\begin{proof}
We first show the Pseudo Epi property is satisfied in the \logicfont{FOBI} case. Assume we have objects $\Gamma, \Gamma'$ and $X$ in $\mathrm{C}$ and a morphism
$s: \Gamma \rightarrow \Gamma'$. Let prime filters $F_x$ and $F_y$ be such that
${\mathbb{P}(\pi_{\Gamma', X})}^{-1}(F_y) \subseteq {\mathbb{P}(s)}^{-1}(F_x)$. It is easy to see that
\[\PrimePredicate{F}{{\mathbb{P}(\pi_{\Gamma, X})}^{-1}(F) \subseteq F_x \text{ and } F_y \subseteq {\mathbb{P}(s \times id_X)}^{-1}(F)}\]
is a prime predicate. The only non-trivial verification is showing $P(F \cap F') = 1$ implies $P(F) =1$ or $P(F') =1$. Suppose $P(F \cap F') =1, P(F) = 0$ and $P(F') =0$. Necessarily there exists $a$ and $b$ such that $\mathbb{P}(\pi_{\Gamma, X})(a) \in F$, $\mathbb{P}(\pi_{\Gamma, X})(b) \in F'$ and $a, b \not\in F_x$. Then $\mathbb{P}(\pi_{\Gamma, X})(a) \lor \mathbb{P}(\pi_{\Gamma, X})(b) = \mathbb{P}(\pi_{\Gamma, X})(a \lor b) \in F \cap F'$ so $a \lor b \in F_x$. However $F_x$ is prime, so $a \in F_x$ or $b \in F_x$, a contradiction.

Consider the filter $F = \filter{\mathbb{P}(s \times id_X)(F_y)}$ and suppose for contradiction it is not proper. 
This entails there exists $a \in F_y$ such that $\mathbb{P}(s \times id_X)(a) = \bot$. By adjointness, $\exists X_{\Gamma}(\bot) = \bot$, so
$\mathbb{P}(s)(\exists X_{\Gamma'}(a)) = \exists X_{\Gamma}(\mathbb{P}(s \times id_X)(a)) = \bot$ by naturality.
This entails $\exists X_{\Gamma'}(a) \not\in {\mathbb{P}(s)}^{-1}(F_x)$ so
$\exists X_{\Gamma'}(a) \not\in {\mathbb{P}(\pi_{\Gamma', X})}^{-1}(F_y)$ by assumption. However, by adjointness and filterhood, $\mathbb{P}(\pi_{\Gamma', X})(\exists X_{\Gamma'}(a)) \in F_y$, a contradiction.

Clearly ${\mathbb{P}(s \times id_X)}^{-1}(F) \supseteq F_y$. To see that the other required inclusion holds, suppose
$a \in {\mathbb{P}(\pi_{\Gamma, X})}^{-1}(F)$. Then there exists $b \in F_y$ such that
$\mathbb{P}(s \times id_X)(b) \leq \mathbb{P}(\pi_{\Gamma, X})(a)$. By adjointness
$\exists X_{\Gamma}(\mathbb{P}(s \times id_X)(b)) \leq a$ and so by naturality
$\mathbb{P}(s)(\exists X_{\Gamma'}(b)) \leq a$. Since $\mathbb{P}(\pi_{\Gamma', X})(\exists X_{\Gamma'}(b)) \in F_y$, we have $\exists X_{\Gamma'}(b) \in {\mathbb{P}(\pi_{\Gamma', X})}^{-1}(F_y) \subseteq {\mathbb{P}(s)}^{-1}(F_x)$. Thus by filterhood, $a \in F_x$. Thus $P(F) =1$ and by the prime extension lemma we have a prime $F$ with $P(F) =1$, as required.

For \logicfont{FOBBI}, we instead start with the assumption of prime filters $F_x$ and $F_y$ such that ${\mathbb{P}(\pi_{\Gamma', X})}^{-1}(F_y) = {\mathbb{P}(s)}^{-1}(F_x)$. This is sufficient to once again prove the existence of a prime filter $F$ satisfying ${\mathbb{P}(s \times id_X)}^{-1}(F) \supseteq F_y$ and ${\mathbb{P}(\pi_{\Gamma, X})}^{-1}(F) \subseteq F_x$. However, maximality of prime filters on Boolean algebras collapses the inclusions to equalities --- ${\mathbb{P}(s \times id_X)}^{-1}(F) = F_y$ and ${\mathbb{P}(\pi_{\Gamma, X})}^{-1}(F) = F_x$ --- so the quasi pullback property holds.
\end{proof}

We now lift the representation theorem for (B)BI algebras to (B)BI hyperdoctrines, making essential use of the natural transformation $\theta$ used in the (B)BI duality theorems of the previous section.

\begin{thm}[Representation Theorem for (B)BI Hyperdoctrines]\label{thm:hyperdoctrinerep}
Every (B)BI hyperdoctrine $\mathbb{P}: \mathrm{C}^{op} \rightarrow \mathrm{Poset}$ can be embedded in a complex (B)BI hyperdoctrine. That is, $\Theta_{\mathbb{P}}: \mathbb{P} \rightarrow Com^{\vtiny{(B)BI}}Pr^{\vtiny{(B)BI}}(\mathbb{P}(-))$, defined $(Id_{\mathrm{C}}, \theta_{\mathbb{P}(-)})$, is a monomorphism. 
\end{thm}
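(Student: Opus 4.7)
The plan is to verify each of the four axioms in the definition of a (B)BI hyperdoctrine morphism for $(Id_{\mathrm{C}}, \theta_{\mathbb{P}(-)})$ and then deduce that it is monic from the fact that each $\theta_{\mathbb{P}(\Gamma)}$ is an embedding by the (B)BI representation theorem (Theorem~\ref{thm:birepresentation}). Axiom (1) is trivial since $Id_{\mathrm{C}}$ preserves all limits. Axiom (2), the naturality of $\theta_{\mathbb{P}(-)}$ at a morphism $s: \Gamma \to \Gamma'$ in $\mathrm{C}$, reduces to showing $\theta_{\mathbb{P}(\Gamma)}(\mathbb{P}(s)(a)) = (Pr^{\vtiny{(B)BI}}(\mathbb{P}(s)))^{-1}(\theta_{\mathbb{P}(\Gamma')}(a))$, and this unfolds immediately from $Pr^{\vtiny{(B)BI}}(\mathbb{P}(s)) = \mathbb{P}(s)^{-1}$ and the chain of equivalences $\mathbb{P}(s)(a) \in F \Leftrightarrow a \in \mathbb{P}(s)^{-1}(F)$.

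For the quantifier squares (axiom (4)) I invoke the Prime Extension Lemma. For $\exists X_{\Gamma}$, given $F \in \theta_{\mathbb{P}(\Gamma)}(\exists X_{\Gamma}(a))$ the goal is to produce a prime filter $G$ on $\mathbb{P}(\Gamma \times X)$ with $a \in G$ and $\mathbb{P}(\pi_{\Gamma, X})^{-1}(G) \subseteq F$; I seed with $\filter{a}$ (which is proper since $\exists X_{\Gamma}(\bot) = \bot$ ensures $a \neq \bot$) and check that $P(G) = 1 \Leftrightarrow a \in G \text{ and } \mathbb{P}(\pi_{\Gamma, X})(\overline{F}) \cap G = \emptyset$ defines a prime predicate, the only nontrivial verification being the intersection clause, which follows from primeness of $F$ applied to $\mathbb{P}(\pi_{\Gamma, X})(b \vee b')$. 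The reverse inclusion is immediate from the adjunction unit $a \leq \mathbb{P}(\pi_{\Gamma, X})(\exists X_{\Gamma}(a))$ together with upward-closure of $F$. The case for $\forall X_{\Gamma}$ is dual: given $\forall X_{\Gamma}(a) \notin F$ I seed with $\filter{\mathbb{P}(\pi_{\Gamma, X})(F)}$ and demand $a \notin G$, deriving properness and the prime predicate property from $\mathbb{P}(\pi_{\Gamma, X})(b) \leq a \Leftrightarrow b \leq \forall X_{\Gamma}(a)$.

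The genuinely delicate step, which I expect to be the main obstacle, is axiom (3): $\theta_{\mathbb{P}(X \times X)}(=_X) = Ran(Pr^{\vtiny{(B)BI}}(\mathbb{P}(\Delta_X)))$. The inclusion $(\supseteq)$ is painless---if $F = \mathbb{P}(\Delta_X)^{-1}(G)$ then the adjointness at $a = {=_X}$ gives $\mathbb{P}(\Delta_X)(=_X) = \top \in G$, hence $=_X \in F$. For $(\subseteq)$ I must, given a prime filter $F$ containing $=_X$, construct a prime filter $G$ on $\mathbb{P}(X)$ with $F = \mathbb{P}(\Delta_X)^{-1}(G)$. I apply the Prime Extension Lemma to the predicate $P(G) = 1 \Leftrightarrow \mathbb{P}(\Delta_X)(F) \subseteq G \text{ and } \mathbb{P}(\Delta_X)(\overline{F}) \cap G = \emptyset$, seeded with $G_0 = \filter{\mathbb{P}(\Delta_X)(F)}$. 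The key calculation is: if $c \in F$ and $\mathbb{P}(\Delta_X)(c) \leq \mathbb{P}(\Delta_X)(b)$, then since $\mathbb{P}(\Delta_X)$ is a (B)BI algebra homomorphism we obtain $\mathbb{P}(\Delta_X)(c \to b) = \top$, so by the adjointness $=_X \leq c \to b$, whence $=_X \wedge c \leq b$, and since $=_X \wedge c \in F$ we get $b \in F$. Specialising to $b = \bot$ furnishes properness of $G_0$, and the general case gives $P(G_0) = 1$. The primeness check for $P$ again exploits primeness of $F$ on the join $b \vee b'$. In the BBI case the argument is identical, with maximality of prime filters on a Boolean algebra automatically promoting the resulting inclusion $\mathbb{P}(\Delta_X)^{-1}(G) \subseteq F$ to the required equality.

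Finally, for monicness, suppose $(L_1, \sigma_1), (L_2, \sigma_2): \mathbb{Q} \to \mathbb{P}$ satisfy $\Theta_{\mathbb{P}} \circ (L_1, \sigma_1) = \Theta_{\mathbb{P}} \circ (L_2, \sigma_2)$. Unpacking the composition rule $(Id_{\mathrm{C}} \circ L_i, \theta_{\mathbb{P}(L_i(-))} \circ \sigma_i)$ yields $L_1 = L_2 =: L$ and $\theta_{\mathbb{P}(L(-))} \circ \sigma_1 = \theta_{\mathbb{P}(L(-))} \circ \sigma_2$; since each $\theta_{\mathbb{P}(L(\Gamma))}$ is injective by the representation theorem, $\sigma_1 = \sigma_2$, so $\Theta_{\mathbb{P}}$ is a monomorphism.
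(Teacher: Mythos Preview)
Your proposal is correct. The main divergence from the paper is in the treatment of axiom (3). The paper uses the weaker prime predicate $P(G)=1$ iff $\mathbb{P}(\Delta_X)^{-1}(G)\subseteq F$, seeded with the filter $\{\top\}$, to obtain a prime $G$ satisfying only the inclusion $\mathbb{P}(\Delta_X)^{-1}(G)\subseteq F$; it then invokes the already-established BI morphism back-condition (clause (2) of Definition~\ref{defn:ilglmorphism}, which holds of $\mathbb{P}(\Delta_X)^{-1}$ by Lemma~\ref{lem:ILGLmorphisms}) to promote this inclusion to the required equality in the BI case. You instead bake the full equality $\mathbb{P}(\Delta_X)^{-1}(G)=F$ into the prime predicate and verify it directly, exploiting that $\mathbb{P}(\Delta_X)$ preserves $\to$ together with the adjointness of $=_X$ at $\top$. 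Your route is more self-contained---it does not rely on the frame-morphism machinery---while the paper's route is shorter precisely because it reuses that machinery. (Your closing remark about maximality ``promoting the resulting inclusion'' is therefore superfluous in your own argument, since you already obtain equality outright.) For $\forall X_\Gamma$ you dualise on the filter side, seeding with $\filter{\mathbb{P}(\pi_{\Gamma,X})(F)}$; the paper's explicit treatment (deferred to the later lemma on indexed (B)BI spaces) dualises on the ideal side, seeding with $\ideal{a}$. Both are routine. The remaining parts---naturality, the $\exists X_\Gamma$ square, and monicness---match the paper essentially verbatim.
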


\begin{proof}
Clearly, by the representation theorem for (B)BI algebras, each component of $\Theta_{\mathbb{P}}$ is mono, and hence $\Theta_{\mathbb{P}}$ is mono. It remains to show that $\Theta_{\mathbb{P}}$ is a (B)BI hyperdoctrine morphism. That ${Id}_{\mathrm{C}}$ preserves finite products is immediate and that $\theta_{\mathbb{P}(-)}: \mathbb{P} \rightarrow Com^{\vtiny{(B)BI}}Pr^{\vtiny{(B)BI}}(\mathbb{P}(-))$ is a natural transformation is given by (B)BI duality. 

First we show that property (3) of (B)BI hyperdoctrine morphisms holds. We must verify that $\theta_{\mathbb{P}(X \times X)}(=_{X}) = Ran({\mathbb{P}(\Delta_X)}^{-1})$ for any object $X$ of $\mathrm{C}$. First suppose $F = {\mathbb{P}(\Delta_X)}^{-1}(G)$ for some prime filter $G$. By adjointness of $=_X$ at $\top$ we have that $\mathbb{P}(\Delta_X)(=_{X}) = \top \in G$. Hence $=_{X} \in F$. Conversely, assume $=_{X} \in F$. Straightforwardly we have that
\[\PrimePredicate{G}{{\mathbb{P}(\Delta_X)}^{-1}(G) \subseteq F}\]
defines a prime predicate. By the adjointness property of $=_{X}$ we have that ${\mathbb{P}(\Delta_X)}^{-1}(\{ \top \}) \subseteq F$. Hence there exists prime $G$ with ${\mathbb{P}(\Delta_X)}^{-1}(G) \subseteq F$ by the prime extension lemma. For the case of \logicfont{FOBI}, since ${\mathbb{P}(\Delta_X)}^{-1}$ is a BI morphism there then exists $G' \supseteq G$ with ${\mathbb{P}(\Delta_X)}^{-1}(G') = F$; for the case of \logicfont{FOBBI}, maximality of prime filters means ${\mathbb{P}(\Delta_X)}^{-1}(G) = F$. In both cases, $F \in Ran({\mathbb{P}(\Delta_X)}^{-1})$ as required.

For property (4), we verify the naturality diagram for $\exists X_{\Gamma}$: the verification of $\forall X_{\Gamma}$ is similar. The verification reduces to showing that, given a prime filter $F$ of $\mathbb{P}(\Gamma)$ and $a \in \mathbb{P}(\Gamma \times X)$, $\exists X_{\Gamma}(a) \in F$ iff there exists $G$ such that $a \in G$ and ${\mathbb{P}(\pi_{\Gamma, X})}^{-1}(G) \subseteq F$: for $\logicfont{FOBI}$ this corresponds precisely to commutativity of the diagram, and for $\logicfont{FOBBI}$ we can conclude ${\mathbb{P}(\pi_{\Gamma, X})}^{-1}(G) = F$ by maximality of prime filters, yielding commutativity of the appropriate diagram for that case.

First assume $\exists X_{\Gamma}(a) \in F$. It is straightforward to see that
\[ \PrimePredicate{G}{{\mathbb{P}(\pi_{\Gamma, X})}^{-1}(G) \subseteq F}\] defines a prime predicate. Consider $G = \filter{a}$. This is proper, as otherwise $a = \bot$, which would entail $\exists X_{\Gamma}(a) = \bot \in F$, contradicting that $F$ is a prime (and thus proper) filter. Let $\mathbb{P}(\pi_{\Gamma, X})(b) \geq a$. Then by adjointness, $\exists X_{\Gamma}(a) \leq b \in F$. Hence $P(G) =1$ and so there exists a prime filter $G$ with $P(G) =1$, as required. Now assume $a \in G$ and ${\mathbb{P}(\pi_{\Gamma, X})}^{-1}(G) \subseteq F$. By adjointness $a \leq \mathbb{P}(\pi_{\Gamma, X})(\exists X_{\Gamma}(a)) \in G$, so $\exists X_{\Gamma}(a) \in {\mathbb{P}(\pi_{\Gamma, X})}^{-1}(G) \subseteq F$ as required.
\end{proof}

Just as in the propositional case, the representation theorem yields completeness for the indexed frame semantics. Given any interpretation on an indexed (B)BI frame $\Interp{-}$, we automatically have an interpretation for the complex hyperdoctrine as predicate symbols are interpreted as (upwards-closed) subsets; that is, elements of complex algebras of (B)BI frames.  A simple inductive argument shows that satisfaction coincides for these models.

\begin{prop}
Given an indexed (B)BI frame $\mathcal{R}$ and an interpretation $\Interp{-}$, for all \logicfont{FO(B)BI} formulae $\varphi$ in context $\Gamma$ and $x \in \mathcal{R}(\Interp{\Gamma})$, $x, \Interp{-} \vDash^{\Gamma} \varphi$ iff $x \in \Interp{\varphi}$. \qed%
\end{prop}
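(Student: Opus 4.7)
The plan is to proceed by induction on the structure of the formula $\varphi$, with the inductive hypothesis that for all subformulae $\psi$ in their appropriate context $\Gamma'$ and all $y \in \mathcal{R}(\Interp{\Gamma'})$, $y, \Interp{-} \vDash^{\Gamma'} \psi$ iff $y \in \Interp{\psi}$. The two base cases --- atomic predicate formulae and equality --- unfold directly. For $Pt_1 \ldots t_m$, recall that in the complex hyperdoctrine the functor $\mathbb{P}$ is $Com^{\vtiny{(B)BI}}(\mathcal{R}(-))$, which sends a morphism $f$ to the preimage $f^{-1}$; thus $x \in \mathbb{P}(\langle \Interp{t_1}, \ldots, \Interp{t_m}\rangle)(\Interp{P})$ unwinds precisely to $\mathcal{R}(\langle \Interp{t_1}, \ldots, \Interp{t_m}\rangle)(x) \in \Interp{P}$, matching the frame clause. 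For $t =_X t'$, the complex hyperdoctrine interprets $=_X$ as $Ran(\mathcal{R}(\Delta_X))$ (Definition~\ref{def:complexhyperdoctrine}), so $x \in \Interp{t =_X t'} = \mathcal{R}(\langle \Interp{t}, \Interp{t'}\rangle)^{-1}(Ran(\mathcal{R}(\Delta_X)))$ immediately rephrases as the frame-side clause.

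For the propositional connectives --- including the multiplicatives $*$, $\wand$, and constant $\mtop$ --- the inductive step reduces to the purely propositional correspondence between frames and their complex algebras, since the algebra at the index $\Interp{\Gamma}$ is by construction the (B)BI complex algebra of $\mathcal{R}(\Interp{\Gamma})$. Concretely, for each such connective the operation in the complex hyperdoctrine is definitionally the corresponding operation from $Com^{\vtiny{(B)BI}}(\mathcal{R}(\Interp{\Gamma}))$, and the frame clauses in Figure~\ref{fig:sat-IRF} match those of Figures~\ref{fig:sat-LGL-frame} and~\ref{fig:sat-bi}; so the analogous propositional proposition already stated for (B)BI applies verbatim.

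The quantifier cases are where the real content lies. For $\exists v_{n+1}{:}X.\varphi$, by definition $\Interp{\exists v_{n+1}{:}X.\varphi} = {\mathcal{R}(\pi_{\Interp{\Gamma}, \Interp{X}})}^{*}(\Interp{\varphi})$, which for BI is the set of $x$ admitting some $y \in \Interp{\varphi}$ with $\mathcal{R}(\pi_{\Interp{\Gamma}, \Interp{X}})(y) \preccurlyeq x$, whereas the frame clause demands an $x'$ with $\mathcal{R}(\pi_{\Interp{\Gamma}, \Interp{X}})(x') = x$. By the inductive hypothesis $\Interp{\varphi}$ is upwards-closed (this is exactly the persistence argument already carried out in the text), and $\mathcal{R}(\pi_{\Interp{\Gamma}, \Interp{X}})$ is a BI morphism, so its back condition (Definition~\ref{defn:ilglmorphism}(2)) lets us lift any $y$ below $x$ to some $x' \succcurlyeq y$ with $\mathcal{R}(\pi_{\Interp{\Gamma}, \Interp{X}})(x') = x$, and upwards-closure of $\Interp{\varphi}$ gives $x' \in \Interp{\varphi}$; the reverse direction is trivial. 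The BBI case is simpler, since $\preccurlyeq$ collapses to $=$. The case for $\forall v_{n+1}{:}X.\varphi$ is dual and uses the definition of ${\mathcal{R}(\pi_{\Interp{\Gamma}, \Interp{X}})}_{*}$ directly against the frame clause; no appeal to the back condition is needed here.

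The main obstacle is precisely this calibration in the existential case: reconciling the equality in the satisfaction clause of Figure~\ref{fig:sat-IRF} with the preorder in the definition of ${\mathcal{R}(\pi_{\Gamma,X})}^{*}$. Everything else is a matter of routine unfolding, but this step is what forces the definition of indexed BI frame to build in the back condition for projection morphisms (and, dually, why indexed BBI frame morphisms are required to preserve the quasi-pullback structure).
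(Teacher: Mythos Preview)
Your proof is correct and is precisely the inductive argument the paper intends (the paper itself offers no proof beyond the phrase ``a simple inductive argument''); in particular, the paper has already observed, immediately after Definition~\ref{def:complexhyperdoctrine}, that ${\mathcal{R}(\pi_{\Gamma,X})}^{*}$ coincides with the direct image because $\mathcal{R}(\pi_{\Gamma,X})$ is a BI morphism, so your treatment of the existential case matches exactly. One small quibble with your closing commentary: the back condition is not something ``built into'' the definition of indexed BI frame for this purpose---it comes for free because $\mathcal{R}$ targets $\mathrm{(B)BIFr}$ and every $\mathcal{R}(f)$ is a BI morphism---and the Pseudo Epi/quasi-pullback property is there to secure \emph{naturality} of the quantifier adjoints (Lemma~\ref{lemma:comlexisirf}), not this proposition.
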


Similarly, given an interpretation $\Interp{-}$ on a (B)BI hyperdoctrine, we can define the interpretation $\widetilde{\Interp{-}}$ by setting $\widetilde{\Interp{P}} = \theta_{\mathbb{P}(\Interp{X_1} \times \cdots \times \Interp{X_m})}(\Interp{P})$ for each predicate symbol of type $X_1, \ldots, X_m$. As a corollary of the representation theorem we obtain the following proposition.

\begin{prop}
Given a (B)BI hyperdoctrine $\mathbb{P}$ and an interpretation $\Interp{-}$, for all \logicfont{FO(B)BI} formulae $\varphi$ in context $\Gamma$ and prime filters $F$ of $\mathbb{P}(\Interp{\Gamma})$, $\Interp{\varphi} \in F$ iff $F, \widetilde{\Interp{-}} \vDash^{\Gamma} \varphi$. \qed%
\end{prop}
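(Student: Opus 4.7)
The plan is to reduce the statement to the prior proposition (the indexed-frame satisfaction lemma) by proving, via structural induction on $\varphi$, the identity
\[
\theta_{\mathbb{P}(\Interp{\Gamma})}(\Interp{\varphi}) = \widetilde{\Interp{\varphi}}
\]
inside the complex hyperdoctrine $Com^{\vtiny{(B)BI}}Pr^{\vtiny{(B)BI}}(\mathbb{P}(-))$. Once this is in hand, the biconditional is immediate: by the definition of $\theta$ we have $\Interp{\varphi} \in F$ iff $F \in \theta_{\mathbb{P}(\Interp{\Gamma})}(\Interp{\varphi}) = \widetilde{\Interp{\varphi}}$, and the previous proposition (satisfaction on indexed frames coincides with membership in the complex algebra) gives $F \in \widetilde{\Interp{\varphi}}$ iff $F, \widetilde{\Interp{-}} \vDash^{\Gamma} \varphi$. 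So the real work is just verifying that $\theta$ respects all the logical structure when applied to $\Interp{\varphi}$.

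For the base cases, atomic predicates are built into the definition of $\widetilde{\Interp{P}} := \theta_{\mathbb{P}(\Interp{X_1} \times \cdots \times \Interp{X_m})}(\Interp{P})$, so the atomic case reduces to naturality of $\theta$ along the tupling morphism $\langle \Interp{t_1}, \ldots, \Interp{t_m}\rangle$ --- a fact already supplied by (B)BI duality (Theorem on (B)BI duality). For equality formulas $t =_X t'$, the required identity is exactly property (3) of a (B)BI hyperdoctrine morphism applied to $\Theta_{\mathbb{P}}$, which was verified in the proof of Theorem thm:hyperdoctrinerep using adjointness of $=_X$ at $\top$ together with the Prime Extension Lemma.

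The propositional and multiplicative cases go through routinely because each component $\theta_{\mathbb{P}(\Interp{\Gamma})}$ is a (B)BI algebra embedding by Theorem thm:birepresentation, hence commutes with $\wedge, \vee, \rightarrow, \top, \bot, *, \wand$ and $\mtop$; the inductive hypothesis therefore lifts through these connectives by a one-line calculation in each case.

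The main obstacle is the quantifier case. Here I need to show
\[
\theta_{\mathbb{P}(\Interp{\Gamma})}(\exists\Interp{X}_{\Interp{\Gamma}}(\Interp{\varphi}))
= {\mathcal{R}(\pi_{\Interp{\Gamma}, \Interp{X}})}^{*}\bigl(\theta_{\mathbb{P}(\Interp{\Gamma \times X})}(\Interp{\varphi})\bigr),
\]
and the dual identity for $\forall$. These are precisely the commuting squares required by property (4) of a (B)BI hyperdoctrine morphism for $(\Id_{\mathrm{C}}, \theta_{\mathbb{P}(-)})$, and they were the non-trivial content of Theorem thm:hyperdoctrinerep's proof: the $\subseteq$-direction follows from adjointness $a \leq \mathbb{P}(\pi_{\Gamma, X})(\exists X_{\Gamma}(a))$ and filter upward-closure, while the $\supseteq$-direction requires constructing, for any prime filter $F$ containing $\exists X_{\Gamma}(\Interp{\varphi})$, a prime filter $G$ with $\Interp{\varphi} \in G$ and $\mathbb{P}(\pi_{\Gamma, X})^{-1}(G) \subseteq F$ (equality in the BBI case by maximality); this is obtained from the Prime Extension Lemma applied to the prime predicate $P(G) = 1$ iff $\mathbb{P}(\pi_{\Gamma, X})^{-1}(G) \subseteq F$, starting from the principal filter $\filter{\Interp{\varphi}}$. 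Invoking these naturality squares, the inductive hypothesis lifts through $\exists$ and $\forall$, completing the induction.
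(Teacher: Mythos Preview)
Your proposal is correct and is exactly the intended argument: the paper states the proposition as an immediate corollary of the representation theorem (Theorem~\ref{thm:hyperdoctrinerep}), and what you have written is precisely the unpacking of that corollary --- since $\Theta_{\mathbb{P}} = (Id_{\mathrm{C}}, \theta_{\mathbb{P}(-)})$ is a (B)BI hyperdoctrine morphism, an induction on formulae shows $\theta_{\mathbb{P}(\Interp{\Gamma})}(\Interp{\varphi}) = \widetilde{\Interp{\varphi}}$, after which the previous proposition finishes the job. The inductive steps you identify (naturality for atoms, property~(3) for equality, the (B)BI embedding for connectives, property~(4) for quantifiers) are exactly the content of the representation theorem's proof.
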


\begin{thm}[Soundness and Completeness for Indexed (B)BI frames]
For all \logicfont{FO(B)BI} formulae $\varphi$ in context $\Gamma$, $\varphi \vdash^{\Gamma} \psi$ is provable iff $\varphi \vDash^{\Gamma} \psi$. \qed%
\end{thm}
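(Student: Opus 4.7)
The plan is to chain together the three pieces already in hand: (i) algebraic soundness and completeness for \logicfont{FO(B)BI} against (B)BI hyperdoctrines, (ii) the proposition stating $x, \Interp{-} \vDash^{\Gamma} \varphi$ iff $x \in \Interp{\varphi}$ for any interpretation on an indexed (B)BI frame, and (iii) the proposition stating, for any interpretation $\Interp{-}$ on a (B)BI hyperdoctrine $\mathbb{P}$, the induced interpretation $\widetilde{\Interp{-}}$ on $IndPr^{\vtiny{(B)BI}}(\mathbb{P})$ satisfies $\Interp{\varphi} \in F$ iff $F, \widetilde{\Interp{-}} \vDash^{\Gamma} \varphi$. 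Lemmas~\ref{lemma:comlexisirf} and~\ref{lemma:ultrafilterisirf} guarantee that $ComHyp^{\vtiny{(B)BI}}$ and $IndPr^{\vtiny{(B)BI}}$ transport structure correctly between the two sides.

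For the soundness direction, I would start from $\varphi \vdash^{\Gamma} \psi$ provable and take an arbitrary indexed (B)BI frame $\mathcal{R}: \mathrm{C} \to \mathrm{(B)BIFr}$ with interpretation $\Interp{-}$. Since each predicate symbol of type $X_1, \ldots, X_m$ is interpreted as an (upwards-closed) subset of $\mathcal{R}(\Interp{X_1} \times \cdots \times \Interp{X_m})$, which is exactly an element of the complex algebra $Com^{\vtiny{(B)BI}}(\mathcal{R}(\Interp{X_1} \times \cdots \times \Interp{X_m}))$, the same assignment $\Interp{-}$ is also an algebraic interpretation on the complex hyperdoctrine $ComHyp^{\vtiny{(B)BI}}(\mathcal{R})$. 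Algebraic soundness then yields $\Interp{\varphi} \subseteq \Interp{\psi}$ in $Com^{\vtiny{(B)BI}}(\mathcal{R}(\Interp{\Gamma}))$, and (ii) converts this into $x, \Interp{-} \vDash^{\Gamma} \varphi$ implies $x, \Interp{-} \vDash^{\Gamma} \psi$ for every $x \in \mathcal{R}(\Interp{\Gamma})$.

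For completeness, I would argue by contraposition: assume $\varphi \not\vdash^{\Gamma} \psi$. By algebraic completeness (the cited theorem from~\cite{Pitts,Biering2005}) there exists a (B)BI hyperdoctrine $\mathbb{P}: \mathrm{C}^{op} \to \mathrm{Poset}$ and an interpretation $\Interp{-}$ on $\mathbb{P}$ with $\Interp{\varphi} \not\leq_{\mathbb{P}(\Interp{\Gamma})} \Interp{\psi}$. Since $\mathbb{P}(\Interp{\Gamma})$ is a (B)BI algebra, the Prime Extension Lemma (applied to the obvious prime predicate picking out a proper filter containing $\Interp{\varphi}$ and disjoint from the ideal $\ideal{\Interp{\psi}}$) yields a prime filter $F \in Pr^{\vtiny{(B)BI}}(\mathbb{P}(\Interp{\Gamma})) = IndPr^{\vtiny{(B)BI}}(\mathbb{P})(\Interp{\Gamma})$ with $\Interp{\varphi} \in F$ and $\Interp{\psi} \notin F$. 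Applying (iii) to the induced interpretation $\widetilde{\Interp{-}}$ gives $F, \widetilde{\Interp{-}} \vDash^{\Gamma} \varphi$ but $F, \widetilde{\Interp{-}} \not\vDash^{\Gamma} \psi$, witnessing $\varphi \not\vDash^{\Gamma} \psi$.

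There is no real obstacle here: the substantive work has already been done in the representation theorem and the two correspondence propositions, and this theorem is essentially a bookkeeping corollary. The only point requiring a little care is checking that a valuation on $\mathcal{R}$ gives a genuine algebraic interpretation on $ComHyp^{\vtiny{(B)BI}}(\mathcal{R})$ — but this is immediate from the definition of the complex hyperdoctrine, since predicates are assigned to the right sort of (upwards-closed) subsets on both sides, and the quantifiers ${\mathcal{R}(\pi_{\Gamma, X})}^{*}, {\mathcal{R}(\pi_{\Gamma, X})}_{*}$ and the equality object $Ran(\mathcal{R}(\Delta_X))$ in $ComHyp^{\vtiny{(B)BI}}(\mathcal{R})$ are defined precisely so that the semantic clauses of Fig.~\ref{fig:sat-IRF} match the algebraic ones.
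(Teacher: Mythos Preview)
Your proposal is correct and follows exactly the approach the paper intends: the theorem is marked with \qed\ and stated as an immediate corollary of the two preceding propositions together with algebraic soundness and completeness for hyperdoctrines, which is precisely the chain you assemble. The only step you make explicit that the paper leaves implicit is invoking the Prime Extension Lemma to extract a separating prime filter from $\Interp{\varphi} \not\leq \Interp{\psi}$, but this is the standard move and entirely routine.
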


From here it is straightforward to set an assignment of morphisms to make the assignment of complex hyperdoctrines and indexed prime filter frames functorial. Given a (B)BI hyperdoctrine morphism $(K, \tau): \mathbb{P} \rightarrow \mathbb{P}'$, $IndPr^{\vtiny{(B)BI}}(K, \tau) = (K, \tau^{-1})$. Similarly, given an indexed (B)BI frame morphism $(L, \lambda): \mathcal{R} \rightarrow \mathcal{R}'$, $ComHyp^{\vtiny{(B)BI}}(L, \lambda) = (L, \lambda^{-1})$.

\begin{lem}
The functors $ComHyp^{\vtiny{(B)BI}}$ are well defined.
\end{lem}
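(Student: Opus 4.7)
Since Lemma~\ref{lemma:comlexisirf} already establishes that $ComHyp^{\vtiny{(B)BI}}(\mathcal{R})$ is a (B)BI hyperdoctrine for every indexed (B)BI frame $\mathcal{R}$, the remaining task is to show that, given an indexed (B)BI frame morphism $(L, \lambda): \mathcal{R} \rightarrow \mathcal{R}'$, the pair $(L, \lambda^{-1})$ is a (B)BI hyperdoctrine morphism $ComHyp^{\vtiny{(B)BI}}(\mathcal{R}') \rightarrow ComHyp^{\vtiny{(B)BI}}(\mathcal{R})$, and that this assignment preserves identities and composition. The direction reverses, reflecting the contravariance already seen in the propositional $Com^{\vtiny{(B)BI}}$ and $Clop^{\vtiny{(B)BI}}$.

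First I would dispatch the easy parts. Product preservation of $L$ is given by the definition of an indexed (B)BI frame morphism; each component $\lambda_X^{-1}$ is a (B)BI algebra homomorphism because $\lambda_X$ is a (B)BI frame morphism, by the same verification used to define $Com^{\vtiny{(B)BI}}$ on propositional morphisms; and naturality of $\lambda^{-1}$ as a natural transformation $ComHyp^{\vtiny{(B)BI}}(\mathcal{R}') \rightarrow ComHyp^{\vtiny{(B)BI}}(\mathcal{R}) \circ L$ follows from naturality of $\lambda$ by taking preimages componentwise.

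The two main verifications are preservation of equality and commutativity of the quantifier naturality squares. For equality, I would show $\lambda_{X \times X}^{-1}(Ran(\mathcal{R}'(\Delta_X))) = Ran(\mathcal{R}(\Delta_{L(X)}))$. The inclusion $\supseteq$ follows from naturality of $\lambda$ at $\Delta_X$, using $L(\Delta_X) = \Delta_{L(X)}$ since $L$ preserves products. For the reverse inclusion, given $\lambda_{X \times X}(x) = \mathcal{R}'(\Delta_X)(w)$, in the BI case the Lift Property yields $y'$ with $\mathcal{R}(\Delta_{L(X)})(y') \preccurlyeq x$; upwards-closure of $Ran(\mathcal{R}(\Delta_{L(X)}))$ (itself a consequence of the back-and-forth property of the BI morphism $\mathcal{R}(\Delta_{L(X)})$) then gives $x \in Ran(\mathcal{R}(\Delta_{L(X)}))$. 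In the BBI case the stronger Lift Property$'$ yields $x = \mathcal{R}(\Delta_{L(X)})(y')$ directly. For the quantifier squares $\lambda_\Gamma^{-1} \circ {\mathcal{R}'(\pi_{\Gamma, X})}^* = {\mathcal{R}(\pi_{L(\Gamma), L(X)})}^* \circ \lambda_{\Gamma \times X}^{-1}$ and the analogous identity for ${(-)}_*$, I would chase elements: one inclusion in each case uses only naturality of $\lambda$ at $\pi_{\Gamma, X}$ together with monotonicity of $\lambda_\Gamma$, while the reverse direction is exactly where Morphism Pseudo Epi (respectively, Quasi-Pullback for BBI) is needed, to lift an element witnessing the right-hand side to one witnessing the left. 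The dual ${(-)}_*$ case additionally invokes the back-and-forth condition on $\lambda_\Gamma$.

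Functoriality should then follow immediately: $(\lambda' \circ \lambda_{L'(-)})^{-1} = \lambda_{L'(-)}^{-1} \circ {\lambda'}^{-1}$ matches the composition rule for (B)BI hyperdoctrine morphisms, and identity indexed-frame morphisms are sent to identity hyperdoctrine morphisms. The hardest part will be the bookkeeping: making sure each of the indexed-frame-morphism axioms (Lift Property and Morphism Pseudo Epi, together with their BBI variants) is paired with the correct hyperdoctrine-morphism requirement in the direction-reversed setting. Once this pairing is clear, each verification is a short element chase that closely mirrors the analogous checks carried out in Lemma~\ref{lemma:comlexisirf} for well-definedness on objects.
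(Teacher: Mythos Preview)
Your proposal is correct and follows essentially the same route as the paper: the easy parts (product preservation of $L$, each $\lambda_X^{-1}$ a (B)BI homomorphism via $Com^{\vtiny{(B)BI}}$, naturality inherited from $\lambda$), then the equality identity $\lambda_{X\times X}^{-1}(Ran(\mathcal{R}'(\Delta_X)))=Ran(\mathcal{R}(\Delta_{L(X)}))$ using naturality for one inclusion and the Lift Property plus the back-and-forth clause of $\mathcal{R}(\Delta_{L(X)})$ for the other, and finally the $\exists$-square by element chase with naturality/monotonicity in one direction and Morphism Pseudo Epi (Quasi-Pullback for BBI) in the other. The paper leaves the $\forall$-square and the routine functoriality checks to the reader, so your brief remarks on those go slightly beyond what is written but are in the same spirit.
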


\begin{proof}
Let $(L, \lambda)$ be a indexed (B)BI frame morphism. First note that by definition $L$ is a finite product preserving functor. We also have that each component $\lambda_X: \mathcal{R}(LX) \rightarrow \mathcal{R}'(X)$ is a (B)BI morphism. Hence by functorality of $Com^{\vtiny(B)BI}$, each $\lambda^{-1}_X: Com^{\vtiny{(B)BI}}(\mathcal{R}'(X)) \rightarrow Com^{\vtiny{(B)BI}}(\mathcal{R}(LX))$ is a (B)BI algebra homomorphism, and naturality is inherited from $\lambda$.

Next we must verify that $\lambda^{-1}_{X \times X}(Ran(\mathcal{R}'(\Delta_X))) = Ran(\mathcal{R}(\Delta_{LX}))$. The right-to-left inclusion follows immediately from naturality of $\lambda$. For the left-to-right, suppose $\lambda_{X \times X}(x) \in Ran(\mathcal{R}'(\Delta_X))$. Then there exists $y$ such that $\lambda_{X \times X}(x) = \mathcal{R}'(\Delta_X)(y)$. In the case for \logicfont{FOBI}, by the lift property, there exists $y'$ such that $\mathcal{R}(\Delta_{LX})(y') \preccurlyeq x$. Since $\mathcal{R}(\Delta_{LX})$ is a BI morphism, there thus exists $x'$ such that $y' \preccurlyeq x'$ and $\mathcal{R}(\Delta_{LX})(x') = x$ as required. For \logicfont{FOBBI} we are given such an $x'$ immediately by the respective lift property.

Finally we verify the commutative diagram for $\exists X_{\Gamma}$, leaving the similar verification for $\forall X_{\Gamma}$ to the reader. We must show that ${\mathcal{R}(\pi_{L\Gamma, LX})}^{*}\lambda^{-1}_{\Gamma \times X}(A) = \lambda^{-1}_{\Gamma}{\mathcal{R}'(\pi_{\Gamma, X})}^{*}(A)$ for $A \in Com^{\vtiny{(B)BI}}(\mathcal{R}'(\Gamma \times X))$. First consider the case of \logicfont{FOBI}. Suppose $x \in {\mathcal{R}(\pi_{L\Gamma, LX})}^{*}\lambda^{-1}_{\Gamma \times X}(A)$. Then there exists $y$ with $\lambda_{\Gamma \times X}(y) \in A$ and $\mathcal{R}(\pi_{L\Gamma, LX})(y) \preccurlyeq x$. Since $\lambda$ is a natural transformation and its components are order-preserving we have $\mathcal{R}'(\pi_{\Gamma, X})(\lambda_{\Gamma \times X}(y)) = \lambda_{\Gamma}\mathcal{R}(\pi_{L\Gamma, LX})(y) \preccurlyeq \lambda_{\Gamma}(x)$, so $x \in  \lambda^{-1}_{\Gamma}{\mathcal{R}'(\pi_{\Gamma, X})}^{*}(A)$. Now, suppose $x \in \lambda^{-1}_{\Gamma}{\mathcal{R}'(\pi_{\Gamma, X})}^{*}(A)$. Then $\mathcal{R}'(\pi_{\Gamma, X})(y) \preccurlyeq \lambda_{\Gamma}(x)$ for $y \in A$. By the Morphism Pseudo Epi property, there exists $z$ such that $y \preccurlyeq \lambda_{\Gamma \times X}(z)$ and $\mathcal{R}(\pi_{L\Gamma, LX})(z) \preccurlyeq x$. $A$ is an upwards-closed set so $\lambda_{\Gamma \times X}(z) \in A$, hence $x \in  {\mathcal{R}(\pi_{L\Gamma, LX})}^{*}\lambda^{-1}_{\Gamma \times X}(A)$ as required. For the case of \logicfont{FOBBI} the same argument applies, where $\preccurlyeq$ is substituted for $=$ and the other Morphism Pseudo Epi property is applied to find a sufficient $z$ in the right-to-left direction.
\end{proof}

\begin{lem}
The functors $IndPr^{\vtiny{(B)BI}}$ are well defined.
\end{lem}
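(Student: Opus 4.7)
The plan is to verify, for every (B)BI hyperdoctrine morphism $(K, \tau): \mathbb{P} \rightarrow \mathbb{P}'$, that the pair $(K, \tau^{-1})$ is an indexed (B)BI frame morphism from $IndPr^{\vtiny{(B)BI}}(\mathbb{P}')$ to $IndPr^{\vtiny{(B)BI}}(\mathbb{P})$. Finite-product preservation of $K$ is immediate, each component $\tau_X^{-1}$ is a (B)BI frame morphism by functoriality of $Pr^{\vtiny{(B)BI}}$ on algebra homomorphisms (the (B)BI analogue of Lemma~\ref{lem:ILGLmorphisms}), and naturality of $\tau^{-1}$ is inherited from that of $\tau$ by set-theoretic inverse-image reversal. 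It remains to establish the Lift Property and (Morphism) Pseudo Epi in the BI case, together with Lift Property$'$ and Quasi-Pullback in the BBI case. I will treat BI explicitly; the BBI versions follow because maximality of Boolean prime filters tightens every $\subseteq$ that appears in the BI argument into $=$.

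For the Lift Property, given prime filters $x$ of $\mathbb{P}'(K(X) \times K(X))$ and $y$ of $\mathbb{P}(X)$ with $\mathbb{P}(\Delta_X)^{-1}(y) \subseteq \tau_{X \times X}^{-1}(x)$, the first step is to observe that $\mathbb{P}(\Delta_X)(=_X) = \top \in y$ by the adjointness-at-top characterisation of $=_X$, so the hypothesis together with hyperdoctrine morphism property (3) gives $=_{K(X)} = \tau_{X \times X}(=_X) \in x$. Next, apply Lemma~\ref{lem:primeextension} to the predicate $P(y') = 1$ iff $\mathbb{P}'(\Delta_{K(X)})^{-1}(y') \subseteq x$, seeded by the proper filter $\{\top\}$; its preimage under $\mathbb{P}'(\Delta_{K(X)})$ is the principal filter $\filter{=_{K(X)}}$, which sits inside $x$ by the observation just made. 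Primality of $P$ is discharged by the usual $\lor$-primality trick: failures on each of $y'_0, y'_1$ would manufacture $\mathbb{P}'(\Delta_{K(X)})(b'_0 \lor b'_1) \in y'_0 \cap y'_1$, whence $b'_0 \lor b'_1 \in x$ and primality of $x$ yields a contradiction.

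For Morphism Pseudo Epi, assume prime filters $x$ of $\mathbb{P}(\Gamma \times X)$ and $y$ of $\mathbb{P}'(K(\Gamma))$ satisfying $\mathbb{P}(\pi_{\Gamma, X})^{-1}(x) \subseteq \tau_\Gamma^{-1}(y)$, and prime-extend $z_0 := \filter{\tau_{\Gamma \times X}[x]}$ subject to the predicate $P(z) = 1$ iff $z \supseteq \tau_{\Gamma \times X}[x]$ and $\mathbb{P}'(\pi_{K(\Gamma), K(X)})^{-1}(z) \subseteq y$. The engine is the $\exists$-commutation square of hyperdoctrine morphism property (4): given $\tau_{\Gamma \times X}(a) \leq \mathbb{P}'(\pi_{K(\Gamma), K(X)})(b')$ with $a \in x$, adjointness yields $\tau_\Gamma(\exists X_\Gamma(a)) = \exists K(X)_{K(\Gamma)}(\tau_{\Gamma \times X}(a)) \leq b'$; and since $a \leq \mathbb{P}(\pi_{\Gamma, X})(\exists X_\Gamma(a))$ places $\exists X_\Gamma(a)$ in $\mathbb{P}(\pi_{\Gamma, X})^{-1}(x) \subseteq \tau_\Gamma^{-1}(y)$, we obtain $\tau_\Gamma(\exists X_\Gamma(a)) \in y$ and thus $b' \in y$. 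Specialising to $b' = \bot$ shows $z_0$ is proper, the general form confirms the second conjunct of $P$, and primality of $P$ is discharged by the same $\lor$-trick as before; Lemma~\ref{lem:primeextension} then delivers the required prime $z$. This step --- matching the correct initial filter and prime predicate to the $\exists$-square of the hyperdoctrine morphism --- is the main obstacle; the dual verifications involving $\forall$ feed back only through naturality of $\tau^{-1}$ and follow symmetrically from the analogous $\forall$-square.
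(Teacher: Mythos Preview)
Your proof is correct and follows essentially the same approach as the paper: both arguments dispatch properties (1)--(2) by functoriality of $Pr^{\vtiny{(B)BI}}$ and inherited naturality, handle the Lift Property by pushing $=_X$ through to $=_{K(X)}'\in x$ via hyperdoctrine-morphism property (3) and prime-extending from $\{\top\}$, and handle Morphism Pseudo Epi by prime-extending the filter generated by $\tau_{\Gamma\times X}[x]$, using the $\exists$-commutation square (property (4)) both to show properness and to verify the second conjunct of the predicate. Your closing remark about ``dual verifications involving $\forall$'' is superfluous---the definition of indexed (B)BI frame morphism has no separate $\forall$-clause, so nothing further is required---but this does not affect correctness.
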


\begin{proof}
Let $(K, \tau)$ be a (B)BI hyperdoctrine morphism. As in the previous lemma, we automatically obtain properties (1) and (2) for $(K, \tau^{-1})$ from the definition and the complex algebra functor $Com^{\vtiny{(B)BI}}$. For properties (3) and (4) we verify the case for \logicfont{FOBI}, obtaining the case for \logicfont{FOBBI} as a special case.

First we consider the Lift Property. Suppose ${\mathbb{P}(\Delta_X)}^{-1}(G) \subseteq \tau^{-1}_{X \times X}(F)$. It is simple to see that
\[ \PrimePredicate{G'}{{\mathbb{P}'(\Delta_{KX})}^{-1}(G') \subseteq F} \] is a prime predicate. By Theorem~\ref{thm:hyperdoctrinerep} we have that $=_X \in {\mathbb{P}(\Delta_X)}^{-1}(G) \subseteq \tau^{-1}_{X \times X}(F)$ so $\tau_{X \times X}(=_X) ={\ } ='_{KX} \in F$. By adjointness it then follows that ${\mathbb{P}'(\Delta_{KX})}^{-1}(\{ \top \}) \subseteq F$, as $\mathbb{P}(\Delta_{KX})(a) = \top$ entails $='_{KX} \leq a$. By the prime extension lemma there thus exists prime $G$ with $P(G) =1$, as required.  For the case of \logicfont{FOBBI}, maximality entails ${\mathbb{P}'(\Delta_{KX})}^{-1}(G') = F$.

Next, the Morphism Pseudo Epi property. Suppose ${\mathbb{P}(\pi_{\Gamma, X})}^{-1}(F) \subseteq \tau^{-1}_{\Gamma}(G)$. We can once again define a prime predicate
\[ \PrimePredicate{G'}{F \subseteq \tau^{-1}_{\Gamma \times X}(G') \text{ and } {\mathbb{P}'(\pi_{K\Gamma, KX})}^{-1}(G') \subseteq G} \] that we can use to prove the existence of the appropriate prime filter. Consider the filter $G' = \filter{\tau_{\Gamma \times X}(F)}$. This is proper, otherwise there exists $a\in F$ such that $\tau_{\Gamma \times X}(a) = \bot$. By property (4) of BI hyperdoctrine morphism, this would entail $\tau_{\Gamma}(\exists X_{\Gamma}(a)) = \exists' KX_{K\Gamma}\tau_{\Gamma \times X}(a) = \exists' KX_{K\Gamma}(\bot) = \bot$. Since $a \in F$, by adjointness $\mathbb{P}(\pi_{\Gamma, X})(\exists X_{\Gamma}(a)) \in F$. Hence $\tau_{\Gamma}(\exists X_{\Gamma}(a)) = \bot \in G$ by assumption, contradicting that $G$ is a prime filter.

Clearly $F \subseteq \tau^{-1}_{\Gamma \times X}(G')$. Further, let $b \in {\mathbb{P}'(\pi_{K\Gamma, KX})}^{-1}(G')$. Then there exists $a \in F$ such that $\mathbb{P}'(\pi_{K\Gamma, KX})(b) \geq \tau_{\Gamma \times X}(a)$. By adjointness it follows that $\exists KX_{K\Gamma}( \tau_{\Gamma \times X}(a)) \leq b$, and by property (4) of BI hyperdoctrine morphisms we have $\tau_{\Gamma} (\exists X_{\Gamma}(a)) \leq b$. Since $a \in F$ we have that $\mathbb{P}(\pi_{\Gamma, X})(\exists X_{\Gamma}(a)) \in F$ by adjointness and upwards closure, hence $\exists X_{\Gamma}(a) \in {\mathbb{P}(\pi_{\Gamma, X})}^{-1}(F) \subseteq \tau^{-1}_{\Gamma}(G)$. It follows that $\tau_{\Gamma}(\exists X_{\Gamma}(a)) \in G$, and so $b \in G$. Thus $P(G) =1$ and so there exists a prime filter $G$ with $P(G) = 1$ by the prime extension lemma. In the case for \logicfont{FOBBI}, by maximality of prime filters these inclusions become equalities, and this yields a witness for the Quasi-Pullback property.
\end{proof}

At this stage topology must be introduced to yield a dual equivalence.

\begin{defi}[Indexed (B)BI Space]
An \emph{indexed  (B)BI space} is a functor $\mathcal{R}: \mathrm{C} \rightarrow \mathrm{(B)BISp}$ such that
\begin{enumerate}
\item $U \circ \mathcal{R}: \mathrm{C} \rightarrow \mathrm{(B)BIFr}$ is an indexed (B)BI frame, where $U: \mathrm{(B)BISp} \rightarrow \mathrm{(B)BIFr}$ is the functor that forgets topological structure.
\item For each object $X$ in $\mathrm{C}$, $Ran(\mathcal{R}(\Delta_X))$ is clopen;
\item For each pair of objects $\Gamma$ and $X$ in $\mathrm{C}$,
		${\mathcal{R}(\pi_{\Gamma, X})}^{*}$ and
		${\mathcal{R}(\pi_{\Gamma, X})}_{*}$ map (upwards-closed) clopen sets to (upwards-closed) clopen sets. \qed%
\end{enumerate}
\end{defi}

\noindent
In the case for \logicfont{FOBBI} it is possible to weaken condition (3) to ${\mathcal{R}(\pi_{\Gamma, X})}^{*}$ being an open map and ${\mathcal{R}(\pi_{\Gamma, X})}_{*}$ a closed map. This is because $\mathcal{R}(\pi_{\Gamma, X})$ is a continuous map between a compact and a Hausdorff space, and so the direct image $\mathcal{R}(\pi_{\Gamma, X}) = {\mathcal{R}(\pi_{\Gamma, X})}^{*}$ is a closed map automatically. We also have that ${\mathcal{R}(\pi_{\Gamma, X})}_{*}$ is an open map by definition. In the intuitionistic case the same reasoning applies for ${\mathcal{R}(\pi_{\Gamma, X})}^{*}$ (using its equivalence with the direct image) but it is not clear how to make the analogous case for ${\mathcal{R}(\pi_{\Gamma, X})}_{*}$. Nonetheless, this definition of indexed (B)BI space gives us what we need.

\begin{lem}
Given a (B)BI hyperdoctrine $\mathbb{P}$, the indexed prime filter space $IndPr^{\vtiny{(B)BI}}(\mathbb{P})$ is an indexed (B)BI space.
\end{lem}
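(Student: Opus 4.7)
The plan is to verify the three conditions in the definition of an indexed (B)BI space for $\mathcal{R} = IndPr^{\vtiny{(B)BI}}(\mathbb{P})$ equipped with the Esakia/Stone topology $\mathcal{O}_{\mathbb{P}(-)}$ inherited fibrewise from (B)BI duality. Condition (1) is already in hand: forgetting topology gives exactly the indexed prime filter frame, and this was shown to be an indexed (B)BI frame in Lemma~\ref{lemma:ultrafilterisirf}. So the content lies in conditions (2) and (3), both of which I would obtain by translating the naturality and adjointness computations in the proof of Theorem~\ref{thm:hyperdoctrinerep} into statements about the topology.

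For condition (2), I would recall from the argument for property (3) of (B)BI hyperdoctrine morphisms inside the proof of Theorem~\ref{thm:hyperdoctrinerep} that $\theta_{\mathbb{P}(X \times X)}(=_X) = Ran(\mathbb{P}(\Delta_X)^{-1}) = Ran(\mathcal{R}(\Delta_X))$. Since $\mathcal{O}_{\mathbb{P}(X \times X)}$ is generated by the basic (upwards-closed) clopens of the form $\theta_{\mathbb{P}(X\times X)}(a)$ for $a \in \mathbb{P}(X \times X)$ (and, in the \logicfont{FOBBI} case, these are all the clopens; in the \logicfont{FOBI} case, all upwards-closed clopens by Esakia duality), the set $Ran(\mathcal{R}(\Delta_X))$ is (upwards-closed) clopen as required.

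For condition (3), the central observation I would isolate is the identity $\theta_{\mathbb{P}(\Gamma)}(\exists X_{\Gamma}(a)) = {\mathcal{R}(\pi_{\Gamma, X})}^{*}(\theta_{\mathbb{P}(\Gamma \times X)}(a))$ and its dual $\theta_{\mathbb{P}(\Gamma)}(\forall X_{\Gamma}(a)) = {\mathcal{R}(\pi_{\Gamma, X})}_{*}(\theta_{\mathbb{P}(\Gamma \times X)}(a))$. Both are essentially the commutativity of the naturality squares for $\Theta_{\mathbb{P}}$ in the proof of Theorem~\ref{thm:hyperdoctrinerep}, once one unwraps the definitions of ${\mathcal{R}(\pi_{\Gamma, X})}^{*}$ and ${\mathcal{R}(\pi_{\Gamma, X})}_{*}$ from Definition~\ref{def:complexhyperdoctrine}: $F \in {\mathcal{R}(\pi_{\Gamma, X})}^{*}(\theta(a))$ iff there is a prime filter $G \ni a$ with $\mathbb{P}(\pi_{\Gamma, X})^{-1}(G) \preccurlyeq F$, which is precisely the condition shown equivalent to $\exists X_{\Gamma}(a) \in F$ in that proof (and similarly for $\forall X_{\Gamma}$, using adjointness). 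Given these identities, an arbitrary (upwards-closed) clopen on the base space has the form $\theta_{\mathbb{P}(\Gamma \times X)}(a)$, so its image under either of the two operators is again of the form $\theta_{\mathbb{P}(\Gamma)}(b)$ for some $b \in \mathbb{P}(\Gamma)$, hence (upwards-closed) clopen.

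The main obstacle is being sure the $\exists$/$\forall$ side of the naturality argument of Theorem~\ref{thm:hyperdoctrinerep} carries over cleanly in both the \logicfont{FOBI} and \logicfont{FOBBI} settings; specifically, that the inclusion $\mathbb{P}(\pi_{\Gamma, X})^{-1}(G) \subseteq F$ extracted there is precisely the $\preccurlyeq$-condition in ${\mathcal{R}(\pi_{\Gamma, X})}^{*}$, and that maximality of prime filters in the \logicfont{FOBBI} case reduces this inclusion to equality so that the definition of ${\mathcal{R}(\pi_{\Gamma, X})}^{*}$ (with $\preccurlyeq$ replaced by $=$) still matches. Once this is checked, the $\forall$ case follows by the dual residuation argument, and condition (3) --- and hence the claim --- follows.
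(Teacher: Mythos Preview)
Your proposal is correct and follows essentially the same line as the paper: reduce condition (2) to the identity $Ran(\mathcal{R}(\Delta_X)) = \theta_{\mathbb{P}(X\times X)}(=_X)$, and condition (3) to the identities ${\mathcal{R}(\pi_{\Gamma,X})}^{*}(\theta_{\mathbb{P}(\Gamma\times X)}(a)) = \theta_{\mathbb{P}(\Gamma)}(\exists X_\Gamma(a))$ and ${\mathcal{R}(\pi_{\Gamma,X})}_{*}(\theta_{\mathbb{P}(\Gamma\times X)}(a)) = \theta_{\mathbb{P}(\Gamma)}(\forall X_\Gamma(a))$, each of which makes the target set (upwards-closed) clopen by Esakia/Stone duality. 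The only difference is presentational: you cite these identities from the proof of Theorem~\ref{thm:hyperdoctrinerep}, whereas the paper re-derives them in place via the same prime-predicate constructions (and in particular spells out the $\forall$-case with a prime ideal argument that was only indicated as ``similar'' in the earlier proof).
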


\begin{proof}
Given Lemma~\ref{lemma:ultrafilterisirf} the only verifications left are of properties (2) and (3). We immediately obtain (2) by noting once again that $Ran({\mathbb{P}(\Delta_X)}^{-1}) = \theta_{\mathbb{P}(X \times X)}(=_X)$, a clopen set by (B)BI duality. Utilising (B)BI duality once more, we have that every (upwards-closed) clopen set of $Pr^{\vtiny{(B)BI}}(\mathbb{P}(Y))$ is of the form $\theta_{\mathbb{P}(Y)}(a)$ for some $a \in \mathbb{P}(Y)$. We thus demonstrate that ${({\mathbb{P}(\pi_{\Gamma, X})}^{-1})}^{*}(\theta_{\mathbb{P}(\Gamma \times X)}(a)) = \theta_{\mathbb{P}(\Gamma)}(\exists X_{\Gamma}(a))$ and ${({\mathbb{P}(\pi_{\Gamma, X})}^{-1})}_{*}(\theta_{\mathbb{P}(\Gamma \times X)}(a)) = \theta_{\mathbb{P}(\Gamma)}(\forall X_{\Gamma}(a))$. 

First assume $F \in {({\mathbb{P}(\pi_{\Gamma, X})}^{-1})}^{*}(\theta_{\mathbb{P}(\Gamma \times X)}(a))$. Then there exists $F'$ such that $a \in F'$ and ${\mathbb{P}(\pi_{\Gamma, X})}^{-1}(F') \subseteq F$. By adjointness $a \leq \mathbb{P}(\pi_{\Gamma, X})(\exists X_{\Gamma}(a))$ so $\exists X_{\Gamma}(a) \in
{\mathbb{P}(\pi_{\Gamma, X})}^{-1}(F') \subseteq F$ so $F \in \theta_{\mathbb{P}(\Gamma)}(\exists X_{\Gamma}(a))$ as required. Conversely, suppose $\exists X_{\Gamma}(a) \in F$. It is easy to see that
\[ \PrimePredicate{G}{{\mathbb{P}(\pi_{\Gamma, X})}^{-1}(G) \subseteq F \text{ and } a \in G }\]
is a prime predicate. Consider the filter $G = \filter{a}$. $F$ is proper as $a \not= \bot$ (otherwise $\exists X_{\Gamma}(\bot) = \bot \in F$), and by adjointness, if $a \leq \mathbb{P}(\pi_{\Gamma, X})(b)$, it follows that $\exists X_{\Gamma}(a) \leq b \in F$. Hence $P(G) =1$ and by the prime extension lemma there exists a prime $G$ with $P(G) =1$ as required. In the case for \logicfont{FOBI} we're done; in the case for \logicfont{FOBBI}, maximality of prime filters makes the inclusion an equality.

For the other equality, first assume we have $F$ with $\forall X_{\Gamma}(a) \in F$ and let $F \subseteq {\mathbb{P}(\pi_{\Gamma, X})}^{-1}(G)$. Then $\mathbb{P}(\pi_{\Gamma, X})(\forall X_{\Gamma}(a)) \in G$, and by adjointness and upwards closure of $G$ we have $a \in G$. In the other direction, assume $\forall X_{\Gamma}(a) \not\in F$. We show there exists a prime filter $G$ such that $F \subseteq {\mathbb{P}(\pi_{\Gamma, X})}^{-1}(G)$ and $a \not\in G$. First note that for proper ideals $I$,
\[ \PrimePredicate{I}{F \subseteq {\mathbb{P}(\pi_{\Gamma, X})}^{-1}(\overline{I}) \text{ and } a \in I} \] is a prime predicate. Consider $I = \ideal{a}$. This is proper as $a \not= \top$, as otherwise $\forall X_{\Gamma}(a) = \top \not\in F$, contradicting that $F$ is a filter. Suppose $b \in F$. Then $\mathbb{P}(\pi_{\Gamma, X}(b)) \not\leq a$ as otherwise by adjointness $b \leq \forall X_{\Gamma}(a) \in F$. Thus $P(I) = 1$ and so there exists a prime ideal $I$ such that $P(I) =1$. The prime filter $G = \overline{I}$ gives the required witness to the inclusion. Once again, in the case of \logicfont{FOBBI} maximality ensures the inclusion of prime filters is equality.
\end{proof}

In the other direction, composing an indexed (B)BI space $\mathcal{R}$ with the clopen algebra functor $Clop^{\vtiny{(B)BI}}$ yields the clopen hyperdoctrine $ClopHyp^{\vtiny{(B)BI}}(\mathcal{R})$. Conditions (2) and (3) of indexed (B)BI space ensure that the assignment of $Ran(\mathcal{R}(\Delta_X))$ as $=_X$, ${\mathcal{R}(\pi_{\Gamma, X})}^{*}$ as $\exists X_{\Gamma}$, and ${\mathcal{R}(\pi_{\Gamma, X})}_{*}$ as $\forall X_{\Gamma}$ is well defined, and Lemma~\ref{lemma:comlexisirf} suffices to show that they satisfy the required properties. The definition of indexed (B)BI space morphism is given by taking that for indexed (B)BI frames.  Then the assignment of morphisms given by the indexed prime filter frame and complex hyperdoctrine functors works the same way as before.

It remains to specify the natural isomorphisms that form the dual equivalence of categories. We already have $\Theta: Id_{\mathrm{(B)BIHyp}} \rightarrow ClopHyp^{\vtiny{(B)BI}}IndPr^{\vtiny{(B)BI}}$ from the representation theorem. We additionally define $\mathrm{H}: Id_{\mathrm{Ind(B)BISp}} \rightarrow IndPr^{\vtiny{(B)BI}}ClopHyp^{\vtiny{(B)BI}}$ by $\mathrm{H}_{\mathcal{R}} = (Id_C, \eta_{\mathcal{R}(-)})$, where $\eta: Id_{\mathrm{(B)BISp}} \rightarrow Pr^{\vtiny{(B)BI}}Clop^{\vtiny{(B)BI}}_{\succcurlyeq}$ is the natural isomorphism given by (B)BI duality. It is straightforward (using the underlying dualities) to show that the components of $\mathrm{H}$ are indeed indexed (B)BI space morphisms, and that $\Theta$ and $\eta$ are natural isomorphisms. 

\begin{thm}[Duality Theorem for \logicfont{FO}(\logicfont{B})\logicfont{BI}]
$\Theta$ and $\mathrm{H}$ form a dual equivalence of categories between $\mathrm{(B)BIHyp}$ and $\mathrm{Ind(B)BISp}$. \qed%
\end{thm}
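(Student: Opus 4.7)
The plan is to reduce the theorem almost entirely to the propositional (B)BI duality established in the previous section, applied componentwise via the functor $\mathbb{P}: \catfont{C}^{op} \to \catfont{Poset}$ (resp.\ $\mathcal{R}: \catfont{C} \to \catfont{(B)BISp}$). The representation theorem already identified $\Theta_{\mathbb{P}} = (Id_{\catfont{C}}, \theta_{\mathbb{P}(-)})$ as a monic (B)BI hyperdoctrine morphism, and we have defined $\mathrm{H}_{\mathcal{R}} = (Id_{\catfont{C}}, \eta_{\mathcal{R}(-)})$. So the remaining work is threefold: (i) show $\mathrm{H}_{\mathcal{R}}$ is actually an indexed (B)BI space morphism; (ii) verify naturality of $\Theta$ and $\mathrm{H}$ in their source variable; and (iii) promote the pointwise isomorphism statements delivered by the propositional duality to isomorphisms in $\mathrm{(B)BIHyp}$ and $\mathrm{Ind(B)BISp}$.

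For (i), the first component $Id_{\catfont{C}}$ trivially preserves finite products. Naturality of $\eta_{\mathcal{R}(-)}: \mathcal{R} \to IndPr^{\vtiny{(B)BI}}ClopHyp^{\vtiny{(B)BI}}(\mathcal{R})$ as a transformation is immediate from the naturality of $\eta$ in the underlying (B)BI duality together with functoriality of $\mathcal{R}$. The Lift Property and Morphism Pseudo Epi (in the intuitionistic case) or Quasi-Pullback (in the Boolean case) are where one might expect trouble, but they simplify dramatically here: since each $\eta_{\mathcal{R}(X)}$ is a bijection that preserves and reflects the order and ternary relation (as an iso of (B)BI spaces), any element one needs to produce can be taken as the $\eta$-preimage of the supplied witness, and both the order and composition conditions transfer across $\eta_{\mathcal{R}(X)}$ automatically. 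The conditions on $Ran(\mathcal{R}(\Delta_X))$ being clopen and on ${\mathcal{R}(\pi_{\Gamma, X})}^{*}, {\mathcal{R}(\pi_{\Gamma, X})}_{*}$ preserving clopens in the \emph{codomain} $IndPr^{\vtiny{(B)BI}}ClopHyp^{\vtiny{(B)BI}}(\mathcal{R})$ have already been verified in the lemma preceding the theorem.

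For (ii), given a (B)BI hyperdoctrine morphism $(K, \tau): \mathbb{P} \to \mathbb{P}'$, the naturality square for $\Theta$ unpacks at each object $X$ of $\catfont{C}$ into the square asserting $\theta_{\mathbb{P}'(K(X))} \circ \tau_X = (\tau_X)^{-1\,-1} \circ \theta_{\mathbb{P}(X)}$, which is precisely the naturality of $\theta$ in the propositional (B)BI duality applied to the (B)BI algebra homomorphism $\tau_X$. Naturality of $\mathrm{H}$ against an indexed (B)BI space morphism $(L, \lambda)$ is symmetric, reducing componentwise to the naturality of $\eta$. For (iii), pointwise invertibility of $\theta_{\mathbb{P}(X)}$ and $\eta_{\mathcal{R}(X)}$ is exactly the content of propositional (B)BI duality. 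To assemble these pointwise isomorphisms into inverses in $\mathrm{(B)BIHyp}$ and $\mathrm{Ind(B)BISp}$, one checks that $(Id_{\catfont{C}}, \theta^{-1}_{\mathbb{P}(-)})$ and $(Id_{\catfont{C}}, \eta^{-1}_{\mathcal{R}(-)})$ satisfy the structural morphism conditions. Since $\theta_{\mathbb{P}(X)}$ is a (B)BI algebra isomorphism that preserves equality and commutes with $\exists$ and $\forall$ (by the representation theorem), its inverse does too; analogously for $\eta^{-1}$ with respect to the Lift and Pseudo Epi/Quasi-Pullback conditions.

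The hard part I anticipate is the careful verification for $\mathrm{H}_{\mathcal{R}}$ that the Lift Property interacts correctly with $Ran(\mathcal{R}'(\Delta_X))$ in the BI case: the given hypothesis involves a preorder $\preccurlyeq$ rather than equality, so one must chase the fact that $\eta_{\mathcal{R}(X \times X)}$ is an order-isomorphism back to a witness for $Ran(\mathcal{R}(\Delta_{L(X)}))$. This is largely bookkeeping once one exploits that the (B)BI duality is full, not just faithful, but it is where the intuitionistic case genuinely diverges from the Boolean one and therefore deserves the most care.
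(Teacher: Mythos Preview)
Your proposal is correct and follows precisely the approach the paper indicates: the paper states that ``it is straightforward (using the underlying dualities) to show that the components of $\mathrm{H}$ are indeed indexed (B)BI space morphisms, and that $\Theta$ and $\eta$ are natural isomorphisms,'' and marks the theorem with \qed\ without further argument. Your three-part decomposition---verifying that $\mathrm{H}_{\mathcal{R}}$ is a morphism by exploiting that each $\eta_{\mathcal{R}(X)}$ is a (B)BI space isomorphism, reducing naturality of $\Theta$ and $\mathrm{H}$ componentwise to the naturality of $\theta$ and $\eta$ in the propositional duality, and assembling pointwise inverses into categorical inverses---is exactly the unpacking the paper leaves implicit.
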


Given that the structures we have defined comprise the most general classes of bunched logic model, to what extent can we restrict the dualities to a subclass of models that correspond specifically to the standard model of Separation Logic? There are a number of properties that the (classical) memory model satisfies implicitly, given as follows by Brotherston \& Villard~\cite{Brotherston2014hybrid}:

\medskip

\begin{tabular}{ll}
Partial deterministic: & $w, w' \in w_1 \circ w_2$ implies $w = w'$ \\
Cancellativity: & $w \circ w_1 \cap w \circ w_2$ implies $w_1 = w_2$ \\
Indivisible Units: & $(w \circ w') \cap E \neq \emptyset$ \\
Disjointness: & $w \circ w \neq \emptyset$ implies $w \in E$ \\
Divisibility: & for every $w \not\in E$, there are $w_1, w_2 \not\in E$ such that $w \in w_1 \circ w_2$ \\
Cross Split: & whenever $(t \circ u) \cap (v \circ w) \neq \emptyset$, there exist $tv, tw, uv, uw$ such \\
                    & that $t \in tv \circ tw, u \in uv \circ uw, v \in tv \circ uv$ and $w \in tw \circ uw$. \\
\end{tabular}

\medskip

\noindent Here we do not consider their property Single Unit as it is only satisfied by the propositional heap model, and not by the predicate store-heap models. Brotherston \& Villard show that while Divisibility and Indivisible Units are definable in \logicfont{BBI}, Partial Deterministic, Cancellativity and Disjointness are not, with Cross Split conjectured to be similarly undefinable.

It is straightforward to restrict the dual adjunction and duality theorems to the algebras satisfying axioms corresponding to Indivisible Units ($\mtop \land (a * b) \leq a$) and Divisibility ($\neg \mtop \leq \neg \mtop * \neg \mtop$). However, in the other cases the undefinability results preclude this possibility: no algebraic axiom can possibly pick out these classes of model.

The extent to which the remaining properties define different notions of validity has been partially investigated by Larchey-Wendling \& Galmiche~\cite{Larchey-Wendling2014}. In particular, they show that the formula $\mathcal{I} * \mathcal{I} \rightarrow \mathcal{I}$, where $\mathcal{I} = \neg (\top \wand \neg \mtop)$, distinguishes partial deterministic models from non-deterministic models. In summary, this situation isn't totally benign.

Brotherston \& Villard's solution is to give a conservative extension of \logicfont{BBI} in the spirit of hybrid logic~\cite{AC2006} called \logicfont{HyBBI}. The additional expressivity of nominals and satisfaction operators allows the logic to pick out specific states of the model, making the axiomatization of the remaining properties possible. Thus, in order to precisely capture the concrete model of Separation Logic we would have to extend the techniques of the present work to hybrid extensions of bunched logics. This would also enable us to connect our work to the extensive tableaux proof theory given for bunched logics~\cite{GMP05,Larchey-Wendling16,LSM,epistemic} via the close connection between hybrid extensions and labelled proof systems. In a first step in this direction, in other work we have given a modular labelled tableaux proof theory that systematically captures validity in all classes of memory model defined by combinations of the \emph{separation properties} given in the literature~\cite{Docherty2018d}. Although duality theorems for axiomatic extensions of hybrid logic with one unary modality have been given~\cite{CR2017}, the generalization required to achieve such a result for hybrid bunched logics is beyond the scope of this paper. We defer such an investigation to another occasion.

\section{Modal and Multiplicative Extensions}\label{sec:extension}

To conclude, we adapt the results of Section~\ref{sec:bunchedimplications} to the range of logics extending \logicfont{BI} and \logicfont{BBI}.  Applications of these logics include reasoning about deny-guarantee permissions, concurrency, timed petri nets, and --- via the interpretation of heap intersection operations --- complex resource sharing.
In particular we detail: the family of \emph{separating} modal logics that extend \logicfont{BBI} with resource-offset modalities; the \emph{De Morgan} bunched logics that extend \logicfont{(B)BI} with a De Morgan negation, facilitating the definition of multiplicatives corresponding to disjunction and falsum~\cite{CBI}; the family of \emph{sub-classical} bunched logics that sit intermediate between \logicfont{(B)BI} and the classical bunched logics;
and finally \logicfont{CKBI}, a new logic suggested by algebraic interpretations of (a basic version of) Concurrent Separation Logic~\cite{exchange}. In doing so we give an exhaustive semantic analysis of modalities and multiplicative connectives in the bunched logic setting, while indicating how the framework can potentially be applied to Concurrent Separation Logic.


\subsection{Separating Modal Logics}
First we consider separating modal logics. These logics extend BBI with resource modalities $\Diamond_r$  and include Courtault et al.'s~\cite{LSM} logic of separating modalities and Galmiche et al.'s~\cite{epistemic} epistemic resource logic. In that work, the logics are introduced semantically and given a tableaux proof theory with countermodel extraction. In models of these logics, for each $\Diamond_r$ in the signature, $r$ is assigned to a resource $\realization{r}$. $\Diamond_r \varphi$ is then interpreted as stating that there exists a resource $x$ that can be composed with the \emph{local resource} $\realization{r}$ to access a state satisfying $\varphi$. We generalise this to a schema for defining separating modal logics.

Let $\mathrm{Prop}$ be a set of atomic propositions, ranged over by $\mathrm{p}$. The set of all formulae of separating modal logic $\mathrm{Form}_{\mathrm{SML}}$ is generated by the grammar
\[ \varphi ::= \mathrm{p} \mid \top \mid \bot \mid \mtop \mid \varphi \land \varphi \mid \varphi \lor \varphi \mid \varphi \rightarrow \varphi \mid \varphi * \varphi \mid \varphi \wand \varphi \mid \Diamond \varphi \] where additive negation is defined by $\neg \varphi := \varphi \rightarrow \bot$ and the necessity modality is defined by $\Box \varphi := \neg \Diamond \neg \varphi$. For each formula $\varphi$ of \logicfont{SML}, a \emph{separating modality} $\Diamond_{\varphi}$ is defined by $\Diamond_{\varphi} \psi := \neg(\varphi \wand \neg \Diamond \psi)$. The additional rules to be added to the Hilbert system are those governing that $\Diamond$ is a normal modality, together with any axioms that may define the character of the modality: for example, in the logic of separating modalities $\Diamond$ is an S4 modality, thus satisfying the axioms $\varphi \vdash \Diamond \varphi$ and $\Diamond \Diamond \varphi \vdash \Diamond \varphi$, whereas in epistemic resource logic $\Diamond$ is an S5 modality, thus satisfying the S4 axioms plus $\Diamond \Box \varphi \vdash \Box \varphi$.

\logicfont{SML} is interpeted on structures that extend BBI frames with an accessibility relation that we call SML frames. Examples of these frames given in the literature include models of the producer-consumer problem, timed petri nets and a range of security scenarios.

\begin{defi}[SML Frame]
An \emph{SML frame} is a structure $\mathcal{X} = (X, \circ, E, R)$ such that $(X, \circ, E)$ is a BBI frame and $R$ a binary relation on $X$. \qed%
\end{defi}

If $\Diamond$ is axiomatised by modal axioms with frame correspondents, the SML frame must also satisfy those frame correspondents. For example, for an S4 modality, $R$ must be reflexive and transitive; for an S5 modality, $R$ must additionally be symmetric. A SML frame together with a valuation $\mathcal{V}$ gives a SML model $\mathcal{M}$, and for such a model the satisfaction relation $\vDash_{\mathcal{V}} \subseteq X \times \mathrm{Form}_{\mathrm{SML}}$ is inductively generated by the clauses for BBI, together with the clause for $\Diamond$ given in Figure~\ref{fig:sat-sml}. This figure also includes the satisfaction clause for $\Diamond_{\varphi}$, obtained directly from the definition $\Diamond_{\varphi} \psi := \neg (\varphi \wand \neg \Diamond \psi)$. Intuitively, this clause states that $\Diamond_\varphi \psi$ is true at a resource $x$ iff $x$ can be composed with a resource satisfying $\varphi$, with that composition having access to a state at which $\psi$ is true. If $\mathrm{Prop}$ contains atoms $r$ that are assigned to a single state $\realization{r} \in X$ by $\mathcal{V}$, the clause for $\Diamond_r$ is precisely that given in the primitive satisfaction clauses for the logic of separating modalities and epistemic resource logic.

\begin{figure}
\centering
\hrule
\vspace{1mm}
 \setlength\tabcolsep{3pt}
\setlength\extrarowheight{4pt}
\begin{tabular}{c c c c l r c}
$x$ & $\vDash_{\Valuation}$ & $\Diamond \varphi$ & iff & there exists $y$ such that $Rxy$ and $y \vDash_{\Valuation} \varphi$ \\
$x$ & $\vDash_{\Valuation}$ & $\Diamond_\varphi \psi$ & iff & there exists $w, y, z$ such that $z \in x \circ y$, $y \vDash_{\Valuation} \varphi$, $Rzw$ and $w \vDash_{\Valuation} \psi$.
\end{tabular}
\caption{Satisfaction for \logicfont{SML}}
\vspace{1mm}
\hrule%
\label{fig:sat-sml}
\end{figure}

The separating modalities $\Diamond_\varphi$ inherit the property of being normal from $\Diamond$, and are thus well behaved. First note that $x \vDash \Diamond_\varphi \bot$ never holds, as there is no state at which $w \vDash \bot$. It is thus equivalent to $\bot$. For the distribution of $\Diamond_\varphi$ over $\lor$, we note that (\emph{cf.} Proposition~\ref{prop-alg-prop}), $\varphi \wand (\psi_1 \land \psi_2)$ is equivalent to $(\varphi \wand \psi_1) \land (\varphi \wand \psi_2)$. By applying De Morgan laws and the distribution of $\Diamond$ over $\lor$, it is easily seen that $\Diamond_\varphi(\psi_1 \lor \psi_2) := \neg (\varphi \wand \neg \Diamond (\psi_1 \lor \psi_2))$ is logically equivalent to $\neg (\varphi \wand \neg \Diamond \psi_1) \lor \neg (\varphi \wand \neg \Diamond \psi_2)$, or, $\Diamond_\varphi \psi_1 \lor \Diamond_\varphi \psi_2$. They do not, however, necessarily inherit any additional axioms from $\Diamond$: for example, $\Diamond$ being an S4 modality does not entail that $\Diamond_\varphi$ is an S4 modality.

It follows that extending the duality theorems to the family of separating modal logics can be achieved by extending BBI duality with the structure governing normal operators and modal correspondence in the standard modal logic duality~\cite{goldblatt,correspondence}. This also applies to non-separating modal bunched logics which simply add a diamond modality to \logicfont{(B)BI}~\cite{modalbi,modaldmbi} (using \emph{intuitionistic} modal logic duality~\cite{Wolter1998} where appropriate).

\subsection{De Morgan Bunched Logics}\label{subsec:cbi}

Brotherston \& Calcagno~\cite{CBI} introduce the logic Classical BI (\logicfont{CBI}) which extends $\logicfont{BBI}$ with a De Morgan negation $\mneg $. By substituting $*$ and $\mtop$ into the De Morgan laws relating $\land$ to $\lor$ and $\top$ to $\bot$, this yields multiplicative connectives corresponding to disjunction and falsum. Although it is shown that the Separation Logic heap model is not a model of \logicfont{CBI}, a number of interesting applications are suggested ranging from deny-guarantee permissions to regular languages. Brotherston~\cite{displayed} also gives a display calculus for De Morgan BI (\logicfont{DMBI}\footnote{This should not be confused with the modal bunched logic \logicfont{DMBI} of Courtault \& Galmiche~\cite{modaldmbi}.}), the evident intuitionistic variant of \logicfont{CBI} that instead extends \logicfont{BI}, but no Kripke semantics or completeness proof can be found in the literature.


Let $\mathrm{Prop}$ be a set of atomic propositions, ranged over by $\mathrm{p}$. The set of all formulae of the De Morgan bunched logics $\mathrm{Form}_{\mathrm{DMBI}}$ is generated by the grammar \[ \varphi ::= \mathrm{p} \mid \top \mid \bot \mid \mtop \mid \varphi \land \varphi \mid \varphi \lor \varphi \mid \varphi \rightarrow \varphi \mid \mneg \varphi \mid \varphi * \varphi \mid \varphi \wand \varphi, \] where additive negation is given by $\neg \varphi := \varphi \rightarrow \bot$, multiplicative falsum is given by $\mbot := \mneg \mtop$ and multiplicative disjunction is given by $\varphi \mor \psi := \mneg (\mneg \varphi * \mneg \psi)$.

Figure~\ref{fig:hilbert-dmbi} gives Hilbert rules that need to be added to the proof systems of the bunched implication logics to obtain systems for the De Morgan bunched logics: to get \logicfont{DMBI}, the Hilbert system for \logicfont{BI} is extended with 19.\ and 20.; to get \logicfont{CBI}, 19.\ and 20.\ are instead added to the Hilbert system for \logicfont{BBI}. These logics can be interpreted on the following frame structures by extending the semantic clauses for \logicfont{(B)BI} with that from Fig.~\ref{fig:sat-dmbi}. In the case of \logicfont{DMBI} this interpretation is persistent. 

\begin{figure}
\hrule
    \vspace{1mm}
\setlength\tabcolsep{7pt}
\setlength\extrarowheight{15pt}
\centering
\begin{tabular}{lclc}
19. & $\cfrac{}{\mneg \mneg \varphi \dashv\vdash \varphi}$ &
20. & $\cfrac{}{\mneg \varphi \dashv\vdash \varphi \wand \mbot}$
\end{tabular}

\caption{Hilbert rules for De Morgan bunched logics.}
\vspace{1mm}
\hrule%
\label{fig:hilbert-dmbi}
\end{figure}

\begin{defi}[DMBI/CBI Frame]\label{def:dmbiframe}
A DMBI frame is a tuple $\mathcal{X} = (X, \succcurlyeq, \circ, E, -)$ where $(X, \succcurlyeq, \circ, E)$ is a BI frame and $-:X \rightarrow X$ is an operation satisfying the following conditions (with outermost universal quantification omitted for readability):
\[
\begin{array}{rlrr}
\text{(Dual)} & x \succcurlyeq y \rightarrow -y \succcurlyeq -x & \text{(Involutive)} & --x = x       \\
\text{(Compatability)} & z \in x \circ y \rightarrow -x \in -z \circ y.
\end{array}
\]
A CBI frame is a DMBI frame for which the order $\succcurlyeq$ is equality $=$. \qed%
\end{defi}

The definition of CBI frame here looks different to the notion given by Brotherston \& Calcagno~\cite{CBI} but is equivalent. There, a (multi-unit) CBI model is a tuple $(X, \circ, E, -, \infty)$ such that $(X, \circ, E)$ is a BBI frame, with $-: X \rightarrow X$ and $\infty \subseteq X$ satisfying, for all $x \in X$, $-x$ is the unique element such that $\infty \cap (-x \circ x) \neq \emptyset$. (Involutive) and (Compatability) are then proved as consequences of this definition in their Proposition 2.3 (1) and (3). As they discuss, the choice of $\infty$ is fixed by the choice of $-$, and it can easily be seen that defining $\infty = \{-e \mid e  \in E\}$ on our CBI frames yields their CBI models. We choose our presentation as it simplifies proofs.

\begin{figure}
\centering
\hrule
\vspace{1mm}
\setlength\tabcolsep{3pt}
\setlength\extrarowheight{2pt}
\begin{tabular}{c c c c l r c c c c r r c}
$x$ & $\vDash_{\Valuation}$ & $\mneg \varphi$ & iff & \myalign{l}{$-x \not\vDash_{\Valuation} \varphi$} & &
\end{tabular}
\caption{Satisfaction for De Morgan bunched logics.}
\vspace{1mm}
\hrule%
\label{fig:sat-dmbi}
\end{figure}

\begin{defi}[DMBI/CBI Algebra]\hfill
\begin{enumerate}
\item A \emph{DMBI algebra} is an algebra $\mathbb{A} = (A, \land, \lor, \rightarrow, \top, \bot, *, \wand, \mtop, \mbot)$
such that $(A, \land, \lor, \rightarrow, \top, \bot, *, \wand, \mtop)$ is a BI algebra and, defining $\mneg a := a \wand \mbot$, $\mneg \mneg a = a$ and $\mneg \mtop = \mbot$.
\item A \emph{CBI algebra} is a DMBI algebra $\mathbb{A} = (A, \land, \lor, \rightarrow, \top, \bot, *, \wand, \mtop, \mbot)$ in which $(A, \land, \lor, \rightarrow, \top, \bot, *, \wand, \mtop)$ is a BBI algebra. \qed%
\end{enumerate}
\end{defi}

We collect a number of useful properties of these algebras in the following proposition.

\begin{prop}\label{prop:mnegprop}
Let $\mathbb{A}$ be a DMBI or CBI algebra with $a, b, c \in \mathbb{A}$ and $X \subseteq X$. Then the following hold.
\begin{enumerate}
\item If $\bigvee X$ exists, then $\bigwedge \mneg X$ exists and $\mneg \bigvee X = \bigwedge \mneg X$;
\item If $a \leq b$ then $\mneg b \leq \mneg a$;
\item If $\bigwedge X$ and $\bigvee \mneg X$ exist then $\mneg \bigwedge X = \bigvee \mneg X$;
\item $a * b \leq c$ iff $b * \mneg c \leq \mneg a$. \qed%
\end{enumerate}
\end{prop}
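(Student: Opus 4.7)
The plan is to prove all four parts by systematically exploiting the definition $\mneg a = a \wand \mbot$, together with involutivity and the residuation laws already established in Proposition~\ref{prop-alg-prop}. None of the parts requires a genuinely new idea; the work is in arranging the laws correctly.

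For (1), I would simply instantiate Proposition~\ref{prop-alg-prop}(4), which gives $\bigwedge_{x \in X}(x \wand z) = (\bigvee X) \wand z$ whenever $\bigvee X$ exists. Taking $z = \mbot$ immediately yields $\mneg \bigvee X = (\bigvee X) \wand \mbot = \bigwedge_{x \in X}(x \wand \mbot) = \bigwedge \mneg X$, in particular establishing that the right-hand meet exists. For (2), if $a \leq b$ then the usual monotonicity of the residual in its second argument (combined with contravariance in the first, which is a direct consequence of residuation) gives $b \wand \mbot \leq a \wand \mbot$, i.e.\ $\mneg b \leq \mneg a$. For (3), I would apply (1) to the family $\mneg X$: this gives $\mneg \bigvee \mneg X = \bigwedge \mneg \mneg X = \bigwedge X$, using involutivity pointwise. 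Applying $\mneg$ to both sides and invoking involutivity once more produces $\bigvee \mneg X = \mneg \bigwedge X$, as required.

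For (4), the chain of equivalences is
\begin{align*}
b * \mneg c \leq \mneg a
&\iff b * (c \wand \mbot) \leq a \wand \mbot  \\
&\iff a * \bigl(b * (c \wand \mbot)\bigr) \leq \mbot  \\
&\iff (a * b) * (c \wand \mbot) \leq \mbot  \\
&\iff a * b \leq (c \wand \mbot) \wand \mbot  \\
&\iff a * b \leq \mneg \mneg c = c,
\end{align*}
where the first and last steps unfold and refold the definition of $\mneg$ (plus involutivity), the second and fourth steps are residuation of $*$ with $\wand$, and the third step uses associativity of $*$.

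The only minor subtlety worth flagging is that parts (1) and (3) must be stated as equalities of elements \emph{plus} existence of the corresponding meet or join; in (1) the existence of $\bigwedge \mneg X$ comes for free from Proposition~\ref{prop-alg-prop}(4), while in (3) existence of $\bigvee \mneg X$ is assumed in the hypothesis precisely because one cannot in general obtain it from the existence of $\bigwedge X$ alone in a non-complete lattice. Beyond that bookkeeping there is no real obstacle: each part reduces to one or two applications of results already in hand.
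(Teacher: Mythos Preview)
Your proof is correct; the paper itself omits the proof of this proposition (marking it with a bare \qed), so there is no approach to compare against. Each of your four arguments is sound: part~(1) is exactly the instantiation of Proposition~\ref{prop-alg-prop}(4) at $z = \mbot$, part~(2) is antitonicity of $\wand$ in its first argument, part~(3) is (1) applied to $\mneg X$ plus involutivity, and the residuation-and-associativity chain in part~(4) is valid because $*$ is commutative in a (B)BI algebra.
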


\noindent
As a result of this proposition, we have that for any DMBI or CBI algebra $\mathbb{A}$, the fragment $(A, \land, \lor, \mneg, \top, \bot)$ is a De Morgan algebra~\cite{Moisil1935}. Thus $\mneg$ is a dual automorphism on the underlying bounded distributive lattice of $\mathbb{A}$. Now an algebraic interpretation of DMBI or CBI on a DMBI or CBI algebra extends one on the underlying BI or BBI algebra by additionally setting $\Interp{\mneg a} = \mneg \Interp{a}$. That this is sound and complete follows straightforwardly from the additional De Morgan bunched logic Hilbert rules matching the defining properties of $\mneg$.

\begin{defi}[DMBI/CBI Morphism]
A DMBI (CBI) morphism $g: \mathcal{X} \rightarrow \mathcal{X}'$ is a BI (BBI) morphism satisfying the additional property (8) $g(-x) = -g(x)$. This forms a category $\mathrm{DMBIFr}$ ($\mathrm{CBIFr}$). \qed%
\end{defi}

We now extend (B)BI duality systematically to obtain it for DMBI and CBI\@.

\begin{defi}[Complex DMBI/CBI Algebra]
Given a DMBI (CBI) frame $\mathcal{X}$, the complex algebra of $\mathcal{X}$, $Com^\vtiny{DMBI}(\mathcal{X})$ ($Com^\vtiny{CBI}(\mathcal{X})$) is given by extending $Com^\vtiny{BI}(\mathcal{X})$ ($Com^{\vtiny{BBI}}(\mathcal{X})$) with the set $U = \{ x \in \mathcal{X} \mid -x \not\in E \}$. \qed%
\end{defi}

\begin{lem}\label{lem:dmbicomplexalgebra}
Given a DMBI (CBI) frame $\mathcal{X}$, $Com^\vtiny{DMBI}(\mathcal{X})$ ($Com^\vtiny{CBI}(\mathcal{X})$) is a DMBI (CBI) algebra.
\end{lem}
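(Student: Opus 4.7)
The proof plan is to first verify that $U$ is indeed an element of the carrier of the BI (BBI) complex algebra, then to establish that the induced multiplicative negation $\mneg A := A \wand_{\mathcal{X}} U$ coincides with the set-theoretic expression $\{x \mid -x \notin A\}$. Once this identification is in place, the two axiomatic conditions defining a DMBI (CBI) algebra on top of a BI (BBI) algebra will fall out immediately.

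First I would check that $U$ lies in $\mathcal{P}_{\succcurlyeq}(X)$; for CBI this is vacuous since $\preccurlyeq$ is equality, while for DMBI it uses that $-$ reverses the order by (Dual) together with upwards closure of $E$: if $x \in U$ and $y \succcurlyeq x$, then $-y \preccurlyeq -x$, and were $-y \in E$ then $-x$ would lie in $E$ by (Closure), contradicting $x \in U$. By Lemma~\ref{lem:bicomplexalgebra} the rest of the BI (BBI) algebra structure is inherited from $Com^{\vtiny{BI}}(\mathcal{X})$ ($Com^{\vtiny{BBI}}(\mathcal{X})$), so it remains only to verify the two De Morgan conditions.

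The central step is to prove, for every $A$ in the carrier, the identity
\[
    A \wand\nolimits_{\mathcal{X}} U = \{ x \in X \mid -x \notin A \}.
\]
For the $\supseteq$ inclusion I would argue contrapositively: assume $x$ lies in the left-hand side but $-x \in A$; take $y := -x$ and, using Unit Existence and Commutativity, pick $e \in E$ with $-x \in e \circ -x$; then (Compatibility) together with (Involutive) yields $z := -e \in x \circ (-x)$ with $-z = e \in E$, so $z \notin U$, contradicting $x \in A \wand U$. For the $\subseteq$ inclusion, given $-x \notin A$, consider arbitrary $w, y, z$ with $x \preccurlyeq w$, $z \in w \circ y$ and $y \in A$; if $-z$ were in $E$, then (Compatibility) with Commutativity would give $-w \in y \circ -z$, and Coherence would yield $-w \succcurlyeq y$; applying (Dual) to $x \preccurlyeq w$ gives $-w \preccurlyeq -x$, so $y \preccurlyeq -x$ and upwards closure of $A$ forces $-x \in A$, a contradiction.

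From this identity both remaining properties are immediate: (Involutive) follows since $\mneg \mneg A = \{x \mid -(-x) \notin \{y \mid -y \notin A\}\} = \{x \mid x \in A\} = A$, using $--x = x$; and $\mneg E = \{x \mid -x \notin E\} = U$ by the very definition of $U$, establishing $\mneg \mtop = \mbot$ at the level of the algebra. The main obstacle is establishing the set-theoretic description of $\wand U$ in the intuitionistic case, where the pre-order forces careful interaction between (Compatibility), (Dual), Coherence and the upwards closure of $A$; once this is done, the CBI case is simply the specialization to $\preccurlyeq\,=\,=$.
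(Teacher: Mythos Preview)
Your proposal is correct and follows essentially the same approach as the paper: establish that $U$ is upwards-closed, then prove the key identity $A \Rres_{\mathcal{X}} U = \{x \mid -x \notin A\}$, from which both DMBI axioms drop out. One minor slip: your labels for the two inclusions are swapped --- the argument you call ``$\supseteq$'' (assuming $x \in A \Rres_{\mathcal{X}} U$ with $-x \in A$ and deriving a contradiction) actually establishes $\subseteq$, and vice versa --- but the content of each direction is right and matches the paper's reasoning.
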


\begin{proof}
First we note that $U$ is an upwards-closed set. Suppose $u \in U$ and $u' \succcurlyeq u$. Since $-u \not\in E$, and $E$ is upwards-closed, we must have that $-u' \not\in E$ as $-u \succcurlyeq -u'$.
On the complex algebra we define the multiplicative negation by $\sim_{\mathcal{X}} A := A \Rres_{\mathcal{X}} U$, as guided by the definition of DMBI/CBI algebra. We must show that $\sim_{\mathcal{X}} \sim_{\mathcal{X}} A = A$ and $\sim_{\mathcal{X}} E = U$, and this follows immediately if $\sim_{\mathcal{X}} A = \{ a \mid -a \not\in A\}$; we verify this identity.

First assume $a \in \sim_{\mathcal{X}} A$. Let $e \in E$ be such that $a \in a \circ e$ by the frame axiom Unit Existence. Then by Compatibility, $-e \in a \circ -a$ and if $- a \in A$, we would have $- e \in U$, a contradiction as $--e = e \in E$. Now assume $a$ is such that $-a \not\in A$. Let $a' \succcurlyeq a$ with $b \in A$ and $c \in a' \circ b$. We assume for contradiction that $c \not\in U$. Then $-c \in E$ and by Compatibility we have $-a' \in b \circ -c$. By the frame axiom Coherence $-a' \succcurlyeq b$, and by upwards-closure of $A$, $-a \in A$; a contradiction. Hence $c \in U$.
\end{proof}

%

\begin{defi}[Prime Filter DMBI/CBI Frame]
Given a DMBI (CBI) algebra $\mathbb{A}$, the prime filter frame of $\mathbb{A}$, $Pr^\vtiny{DMBI}(\mathbb{A})$ $(Pr^{\vtiny{CBI}}(\mathbb{A}))$ is given by extending $Pr^\vtiny{BI}(\mathbb{A})$ $(Pr^{\vtiny{BBI}}(\mathbb{A}))$ with the operation $-_{\mathbb{A}}F := \overline{\mneg F}$. \qed
\end{defi}

That this is well defined follows from the fact $\mneg $ is a dual automorphism on the underlying bounded distributive lattice: this entails that, given a prime filter $F$, $\mneg F$ is a prime ideal, and thus $\overline{\mneg F}$ is a prime filter.

\begin{lem}\label{lem:dmbiprimefilter}
Given a DMBI (CBI) algebra $\mathbb{A}$, the prime filter frame $Pr^\vtiny{DMBI}(\mathbb{A})$ ($Pr^\vtiny{CBI}(\mathbb{A})$) is a DMBI (CBI) frame. 
\end{lem}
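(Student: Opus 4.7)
\medskip

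\noindent\textbf{Proof proposal.} The strategy is to leverage Lemma~\ref{lem:biprimeframe} (respectively its BBI analogue): it already supplies every BI/BBI frame axiom for the underlying structure, so we only need to verify the three new frame conditions (Dual), (Involutive), and (Compatibility) for the operator $-_{\mathbb{A}}$. The key preliminary observation is a more workable description of $-_{\mathbb{A}} F$: since $\mneg \mneg a = a$ in any DMBI/CBI algebra, the set-theoretic identity
\[
    -_{\mathbb{A}} F \;=\; \overline{\mneg F} \;=\; \{ a \in A \mid \mneg a \notin F \}
\]
holds. (That the right-hand side is a prime filter is already noted before the lemma, using that $\mneg$ is a dual automorphism of the underlying bounded distributive lattice.) All three verifications will be carried out through this reformulation.

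First I would dispatch (Involutive) and (Dual). For (Involutive) we simply chase: $a \in -_{\mathbb{A}}-_{\mathbb{A}} F$ iff $\mneg a \notin -_{\mathbb{A}} F$ iff $\mneg \mneg a \in F$ iff $a \in F$. For (Dual), if $F \subseteq G$ and $a \in -_{\mathbb{A}} G$ then $\mneg a \notin G$, so $\mneg a \notin F$ by contrapositive, giving $a \in -_{\mathbb{A}} F$. Both are routine one-line arguments.

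The substantive step is (Compatibility): assuming $F_z \in F_x \circ_{\mathbb{A}} F_y$, we must show $-_{\mathbb{A}} F_x \in -_{\mathbb{A}} F_z \circ_{\mathbb{A}} F_y$. Unfolding Definition~\ref{defn:ilglprime}, this amounts to proving that for every $c \in -_{\mathbb{A}} F_z$ and $b \in F_y$, $c * b \in -_{\mathbb{A}} F_x$; equivalently, $\mneg(c * b) \notin F_x$. Suppose for contradiction that $\mneg(c * b) \in F_x$. The hypothesis $F_z \in F_x \circ_{\mathbb{A}} F_y$ then yields $\mneg(c * b) * b \in F_z$. But applying Proposition~\ref{prop:mnegprop}(4) to the trivial inequality $c * b \leq c * b$ gives $b * \mneg(c * b) \leq \mneg c$, and commutativity of $*$ (BI part of Definition~\ref{def:bialgebra}) rewrites this as $\mneg(c * b) * b \leq \mneg c$. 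Upward closure of $F_z$ then forces $\mneg c \in F_z$, contradicting $c \in -_{\mathbb{A}} F_z$. This is the main obstacle in the proof, but it reduces to a single appeal to the residuation-style property (4) of $\mneg$ plus commutativity.

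Finally, the CBI case is obtained by the same argument carried out inside $Pr^{\vtiny{BBI}}(\mathbb{A})$: the underlying order $\preccurlyeq$ collapses to equality by maximality of prime filters on Boolean algebras, so (Dual) becomes automatic, (Involutive) is unchanged, and the contradiction derivation for (Compatibility) goes through verbatim.
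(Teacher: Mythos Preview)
Your proof is correct and follows essentially the same approach as the paper's: both verify (Dual), (Involutive), and (Compatibility) directly, with the Compatibility argument proceeding by contradiction via Proposition~\ref{prop:mnegprop}(4). Your explicit element $\mneg(c*b)$ is precisely the paper's witness $a \in F_x$ with $c*d \leq \mneg a$, so the two arguments are the same up to notation.
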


\begin{proof}
We must check that the three DMBI frame axioms hold. If $F' \supseteq F$ then $\mneg F' \supseteq \mneg F$ and so $\overline{\mneg F} \supseteq \overline{\mneg F'}$, as required for Dual. $\mneg \mneg a = a$ straightforwardly entails that Involutive is satisfied. Finally we verify Compatibility. Assume $F_z \in F_x \circ_{\mathbb{A}} F_y$ and let $c \in -_{\mathbb{A}} F_z$ and $d \in F_y$. For contradiction, suppose $c * d \not\in -_{\mathbb{A}} F_x$. Then there necessarily exists $a \in F_x$ such that $c * d \leq \mneg a$. By Proposition~\ref{prop:mnegprop} this entails $a * d \leq \mneg c$. Since $a \in F_x$ and $d \in F_y$ we have $a * d \in F_z$, and thus $\mneg c \in F_z$. However, $c \in -_{\mathbb{A}} F_z$ entails $c \not\in \mneg F_z$, so $\mneg c \not\in F_z$, a contradiction. Thus $c * d \in -_{\mathbb{A}} F_x$ as required.
\end{proof}

The representation theorem now follows from the easy verification that $\theta_{\mathbb{A}}(\mbot) = \{ F \mid \mtop \not\in -_{\mathbb{A}} F \}$, from which soundness and completeness of the frame semantics is an immediate corollary. In the case of \logicfont{DMBI}, this is the first existing completeness result.

\begin{thm}[Representation Theorem for DMBI/CBI Algebras]\label{thm:dmbirep}
Every DMBI (CBI) algebra is isomorphic to a subalgebra of a complex algebra. Specifically, given a DMBI (CBI) algebra $\mathbb{A}$, the map $\theta_{\mathbb{A}}: \mathbb{A} \rightarrow Com^\vtiny{DMBI}(Pr^\vtiny{DMBI}(\mathbb{A}))$ ($\theta_{\mathbb{A}}: \mathbb{A} \rightarrow Com^\vtiny{CBI}(Pr^\vtiny{CBI}(\mathbb{A}))$) defined $\theta_{\mathbb{A}}(a) = \{ F \in Pr(\mathbb{A}) \mid a \in F \}$ is an embedding.  \qed
\end{thm}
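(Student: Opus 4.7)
The plan is to bootstrap directly from the (B)BI representation theorem (Theorem~\ref{thm:birepresentation}), which already establishes that $\theta_{\mathbb{A}}$ is injective and a homomorphism with respect to the entire (B)BI signature $(\land, \lor, \rightarrow, \top, \bot, *, \wand, \mtop)$. Since the carrier of a DMBI (CBI) algebra is the same as that of its (B)BI reduct, injectivity transfers at no cost, so only the additional DMBI/CBI structure --- the constant $\mbot$ (equivalently, the operation $\mneg$) --- needs to be verified. Moreover, because $\mneg a$ is definable as $a \wand \mbot$ in any DMBI/CBI algebra and the complex-algebra multiplicative negation is $\sim_{\mathcal{X}} A := A \Rres_{\mathcal{X}} U$, respect for $\mneg$ will follow automatically from respect for $\wand$ (already in hand) together with the single identity $\theta_{\mathbb{A}}(\mbot) = U$.

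The whole proof thus reduces to verifying this identity. Unpacking $U$ in the complex algebra of $Pr^{\vtiny{DMBI}}(\mathbb{A})$, and using $-_{\mathbb{A}} F = \overline{\mneg F}$ and $E_{\mathbb{A}} = \{ G \mid \mtop \in G \}$, we get $U = \{ F \mid -_{\mathbb{A}} F \not\in E_{\mathbb{A}} \} = \{ F \mid \mtop \not\in \overline{\mneg F} \} = \{ F \mid \mtop \in \mneg F \}$. Now $\mtop \in \mneg F$ iff there exists $a \in F$ with $\mneg a = \mtop$, which by applying $\mneg$ to both sides and using involutivity $\mneg \mneg a = a$ is equivalent to $\mbot = \mneg \mtop \in F$. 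Hence $U = \{ F \mid \mbot \in F \} = \theta_{\mathbb{A}}(\mbot)$, as required. The CBI case is identical, the underlying order on prime filters collapsing to equality by maximality of prime filters on Boolean algebras.

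I anticipate essentially no obstacles: all of the hard combinatorial work --- the existence of witnessing prime filters for the residuated-product structure of $*$, $\wand$ and the unit $\mtop$ --- has already been discharged in Theorem~\ref{thm:birepresentation} via the Prime Extension Lemma, and the behaviour of $\mneg$ is entirely determined by the dual-automorphism condition $\mneg \mneg a = a$ on the underlying distributive lattice. The only verification requiring a small remark is that $-_{\mathbb{A}} F := \overline{\mneg F}$ is well-defined on prime filters, but this follows from $\mneg$ being a dual lattice automorphism, so that $\mneg F$ is a prime ideal whenever $F$ is a prime filter, as was noted when the prime filter DMBI/CBI frame was introduced. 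The two corollaries --- the induced valuation correspondence and relational soundness and completeness for DMBI/CBI (the former being new) --- then follow exactly as in the (B)BI case.
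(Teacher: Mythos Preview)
Your proposal is correct and takes essentially the same approach as the paper, which simply states that the representation theorem follows from the easy verification that $\theta_{\mathbb{A}}(\mbot) = \{ F \mid \mtop \not\in -_{\mathbb{A}} F \}$; you have spelled out exactly this verification, reducing it via $-_{\mathbb{A}} F = \overline{\mneg F}$ and involutivity of $\mneg$ to $\theta_{\mathbb{A}}(\mbot) = U$, and correctly noted that the remaining homomorphism and injectivity properties are inherited from Theorem~\ref{thm:birepresentation}.
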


\begin{cor}[Relational Soundness and Completeness]
For all formulas $\varphi$, $\psi$ of \logicfont{DMBI} (\logicfont{CBI}), $\varphi \vdash \psi$ is provable in $\mathrm{DMBI}_{\mathrm{H}}$ ($\mathrm{CBI}_{\mathrm{H}}$) iff $\varphi \vDash \psi$ in the relational semantics.  \qed%
\end{cor}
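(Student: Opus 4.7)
The plan is to build directly on the representation theorem for (B)BI algebras (Theorem~\ref{thm:birepresentation}), since a DMBI (CBI) algebra is by definition a BI (BBI) algebra enriched with the constant $\mbot$ (equivalently, the unary operation $\mneg a := a \wand \mbot$). The (B)BI representation theorem already gives that $\theta_{\mathbb{A}}$ is an injective homomorphism with respect to the signature $(\land, \lor, \rightarrow, \top, \bot, *, \wand, \mtop)$, so the only new work is to verify that $\theta_{\mathbb{A}}$ commutes with the multiplicative negation (or, equivalently, maps $\mbot$ to the distinguished set $U_{Pr^\vtiny{DMBI}(\mathbb{A})} = \{ F \mid \mtop \notin -_{\mathbb{A}} F\}$ used in the complex algebra).

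First I would record the useful identity $a \in -_{\mathbb{A}} F \iff \mneg a \notin F$, which is immediate from the definition $-_{\mathbb{A}} F = \overline{\mneg F}$ together with $\mneg \mneg a = a$. As a corollary, $\mtop \in -_{\mathbb{A}} F \iff \mbot = \mneg \mtop \notin F$, and hence $U_{Pr^\vtiny{DMBI}(\mathbb{A})} = \{ F \mid \mbot \in F \} = \theta_{\mathbb{A}}(\mbot)$. This is the crucial ``connecting'' computation; from here the proof that $\theta_{\mathbb{A}}$ respects $\mneg$ is routine: unfolding the complex-algebra definition of $\mneg_{Pr^\vtiny{DMBI}(\mathbb{A})}$ established in Lemma~\ref{lem:dmbicomplexalgebra} (namely $\mneg_{\mathcal{X}} A = \{ x \mid -x \notin A \}$) one gets $F \in \mneg_{Pr^\vtiny{DMBI}(\mathbb{A})} \theta_{\mathbb{A}}(a) \iff a \notin -_{\mathbb{A}} F \iff \mneg a \in F \iff F \in \theta_{\mathbb{A}}(\mneg a)$. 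Since $\mbot$ and $\mneg$ are interdefinable in the signature, this together with the (B)BI representation theorem yields that $\theta_{\mathbb{A}}$ is a DMBI (CBI) algebra homomorphism; injectivity is inherited from the underlying (B)BI case.

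The only mildly subtle point lies in justifying the equivalence $a \in \mneg F \iff \mneg a \in F$ used above, which depends on $\mneg$ being a dual automorphism of the underlying bounded distributive lattice (Proposition~\ref{prop:mnegprop}); in particular, one needs that $\mneg F = \{ \mneg b \mid b \in F\}$ is exactly the prime ideal of elements whose multiplicative negation lies in $F$. This is essentially a book-keeping verification once one has Proposition~\ref{prop:mnegprop} in hand, so I do not anticipate any genuinely hard step: unlike the proof for (I)LGL, no new Prime Extension argument is required, because the existence of enough prime filters is already guaranteed by Theorem~\ref{thm:birepresentation} and the operation $-_{\mathbb{A}}$ is defined directly (not existentially) from the prime filter. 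The relational soundness and completeness corollary is then immediate by transporting an interpretation $\Interp{-}$ to the valuation $\Valuation(p) = \theta_{\mathbb{A}}(\Interp{p})$ on $Pr^\vtiny{DMBI}(\mathbb{A})$ and inducting on formula structure, with the new inductive case for $\mneg$ handled precisely by the identity just established.
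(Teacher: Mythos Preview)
Your proposal is correct and follows essentially the same approach as the paper: the paper reduces the DMBI/CBI representation theorem to the single verification $\theta_{\mathbb{A}}(\mbot) = \{ F \mid \mtop \notin -_{\mathbb{A}} F \}$ and then states the corollary as immediate, which is exactly the ``connecting computation'' you carry out (your additional explicit check that $\theta_{\mathbb{A}}$ commutes with $\mneg$ is redundant once $\mbot$ is handled, since $\mneg a = a \wand \mbot$ and $\wand$ is already preserved by the (B)BI representation theorem, but it does no harm).
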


These assignments are once again made functorial by sending morphisms to their inverse image. To obtain a dual equivalence of categories we introduce topology.

\begin{defi}[DMBI/CBI Space]
A \emph{DMBI space} is a structure $\mathcal{X} = (X, \mathcal{O}, \succcurlyeq, \circ, E, -)$ such that
\begin{enumerate}
\item $(X, \mathcal{O}, \succcurlyeq, \circ, E)$ is a BI space,
\item $(X, \succcurlyeq, \circ, E, -)$ is a DMBI frame, and
\item $-$ is a continuous map.
\end{enumerate}
A \emph{CBI} space is a DMBI space for which $\succcurlyeq$ is equality. \qed%
\end{defi}

\noindent
As usual, morphisms for these spaces are given by continuous DMBI (CBI) morphisms. Now we have the functor $Pr^\vtiny{DMBI}: \mathrm{DMBIAlg} \rightarrow \mathrm{DMBISp}$ defined by $Pr^\vtiny{DMBI}(\mathbb{A}) = (Pr(\mathbb{A}), \mathcal{O}_{\mathbb{A}}, \supseteq, \circ_{\mathbb{A}}, E_{\mathbb{A}}, -_{\mathbb{A}})$ and $Pr^\vtiny{DMBI}(f) = f^{-1}$. 
Continuity of $-_{\mathbb{A}}$ can be verified on the subbase elements of $\mathcal{O}_{\mathcal{A}}$, and this holds because ${(-_{\mathbb{A}})}^{-1}[\theta_{\mathbb{A}}(a)] = \overline{\theta_{\mathbb{A}}(\mneg a)}$ and ${(-_{\mathbb{A}})}^{-1}[\overline{\theta_{\mathbb{A}}(a)}] = \theta_{\mathbb{A}}(\mneg a)$. In the other direction, we have the functor $Clop^\vtiny{DMBI}_{\succcurlyeq}: \mathrm{DMBISp} \rightarrow \mathrm{DMBIAlg}$ defined $Clop^\vtiny{DMBI}(\mathcal{X}) = (\mathcal{CL}_{\succcurlyeq}(\mathcal{X}), \cap, \cup, \Rightarrow_{\mathcal{X}}, X, \emptyset, \bullet_{\mathcal{X}}, \Rres_{\mathcal{X}}, E, U)$ and $Clop^{\vtiny{DMBI}}_{\succcurlyeq}(g) = g^{-1}$. That $U \in \mathcal{CL}_{\succcurlyeq}(\mathcal{X})$ follows from the fact that $U = \overline{-E}$. $E$ is clopen, so $-E$ is clopen by continuity and so too is $\overline{-E}$. Further, $E$ is upwards-closed, so $-E$ is downwards-closed, meaning $\overline{-E}$ is upwards-closed. The analogous definitions then give the required structure for \logicfont{CBI}. We once again consider the collection of maps $\eta_{\mathcal{X}}(x) = \{ C \in \mathcal{CL}_{\succcurlyeq}(\mathcal{X}) \mid x \in C \}$ to complete the duality.

\begin{thm}[Duality Theorem for \logicfont{DMBI}/\logicfont{CBI}]
$\theta$ and $\eta$ form a dual equivalence of categories between $\mathrm{DMBIAlg}$ ($\mathrm{CBIAlg}$ and $\mathrm{DMBISp}$ ($\mathrm{CBISp}$).
\end{thm}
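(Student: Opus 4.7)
The plan is to obtain the duality theorem by bootstrapping from the already-established duality for \logicfont{(B)BI}. Since $\mathrm{DMBIAlg}$ and $\mathrm{DMBISp}$ are obtained from $\mathrm{(B)BIAlg}$ and $\mathrm{(B)BISp}$ just by adjoining the involution $\mneg$/$-$ (and derived constant $\mbot$/$U$), and morphisms in the extended categories are precisely those (B)BI morphisms that additionally respect this extra structure, it suffices to show that (i) the functors $Pr^\vtiny{DMBI}$ and $Clop^\vtiny{DMBI}_\succcurlyeq$ are well-defined on objects and morphisms (already essentially done in Lemmas~\ref{lem:dmbicomplexalgebra} and~\ref{lem:dmbiprimefilter} and the remarks immediately preceding the theorem), and (ii) the natural transformations $\theta$ and $\eta$ inherited from \logicfont{(B)BI} duality restrict to DMBI/CBI morphisms and remain componentwise isomorphic on this additional structure.

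For $\theta$, I would first observe that by Theorem~\ref{thm:dmbirep} each $\theta_{\mathbb{A}}$ is already a (B)BI isomorphism onto the clopen subalgebra of $Com^\vtiny{DMBI}(Pr^\vtiny{DMBI}(\mathbb{A}))$, so only preservation of $\mneg$ (equivalently of $\mbot$) remains. Unfolding the definitions: $\sim_{Pr^\vtiny{DMBI}(\mathbb{A})} \theta_{\mathbb{A}}(a) = \{F \mid -_\mathbb{A} F \notin \theta_\mathbb{A}(a)\} = \{F \mid a \notin \overline{\mneg F}\} = \{F \mid a \in \mneg F\}$, and using the Involutive identity $\mneg\mneg a = a$ together with the fact that $F$ is a filter, this equals $\{F \mid \mneg a \in F\} = \theta_{\mathbb{A}}(\mneg a)$. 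Naturality in $\mathbb{A}$ is inherited from (B)BI duality since the component maps are unchanged. This simultaneously secures the representation theorem and shows $\theta$ is a natural isomorphism in the extended category.

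For $\eta$, the crucial computation is $\eta_{\mathcal{X}}(-x) = -_{Clop^\vtiny{DMBI}(\mathcal{X})} \eta_{\mathcal{X}}(x)$. Using the identity $\sim_{\mathcal{X}} C = \{y \mid -y \notin C\}$ from the proof of Lemma~\ref{lem:dmbicomplexalgebra}, one has $C \in -_{Clop^\vtiny{DMBI}(\mathcal{X})} \eta_\mathcal{X}(x)$ iff $\sim_\mathcal{X} C \notin \eta_\mathcal{X}(x)$ iff $x \notin \sim_\mathcal{X} C$ iff $-x \in C$ iff $C \in \eta_\mathcal{X}(-x)$, where the Involutive axiom of the underlying DMBI frame is used to reverse the negation of the negation at the appropriate step. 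Continuity of $-_{\mathbb{A}}$ in $Pr^\vtiny{DMBI}(\mathbb{A})$ is checked on subbase elements using $(-_\mathbb{A})^{-1}[\theta_\mathbb{A}(a)] = \overline{\theta_\mathbb{A}(\mneg a)}$ and dually, as noted before the theorem statement; the same device shows that $\eta_{\mathcal{X}}$, already a homeomorphism from \logicfont{(B)BI} duality, commutes with $-$ continuously. The CBI case is then obtained simply by replacing $\succcurlyeq$ with $=$, which collapses the Dual axiom to the condition that $-$ is a bijection and trivialises the upwards-closure considerations; no new argument is needed.

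The only conceptually delicate point is the step showing $\mneg a \in F$ iff $a \in \mneg F$ for prime $F$: this uses that $\mneg$ is a dual automorphism of the underlying distributive lattice (Proposition~\ref{prop:mnegprop}), so the set-theoretic image $\mneg F$ is itself a prime ideal and the complement $\overline{\mneg F}$ is a prime filter — without this, $-_\mathbb{A}$ would not even be well-defined on $Pr(\mathbb{A})$. I expect this to be the main obstacle only insofar as one must be careful that all the bookkeeping involving complement, involution, and the operation $\sim$ on complex algebras lines up; once the identity $\sim_\mathcal{X} C = \{y \mid -y \notin C\}$ from Lemma~\ref{lem:dmbicomplexalgebra} is in hand, everything reduces to short direct verifications, and the duality follows.
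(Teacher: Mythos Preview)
Your proof is correct and, for the verification that $\eta_{\mathcal{X}}$ commutes with $-$, genuinely more economical than the paper's. The paper establishes $-_{Clop^\vtiny{DMBI}_\succcurlyeq(\mathcal{X})}\eta_{\mathcal{X}}(x) = \eta_{\mathcal{X}}(-x)$ by unpacking the left-hand side as $\overline{\{C' \Rres_{\mathcal{X}} U \mid C' \in \eta_{\mathcal{X}}(x)\}}$ and then verifying each inclusion by direct appeal to the frame axioms (Unit Existence, Compatibility, Coherence, Dual), including an auxiliary reparametrisation of upwards-closed clopen sets as images $-C$ of downwards-closed clopen sets and a computation that $-C = \overline{C}\Rres_{\mathcal{X}} U$. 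You instead factor the whole thing through the identity $\sim_{\mathcal{X}} C = \{y \mid -y \notin C\}$ already proved inside Lemma~\ref{lem:dmbicomplexalgebra}, together with the involutivity of $\sim_{\mathcal{X}}$, which reduces the verification to a four-line chain of equivalences. This is cleaner and makes the dependence on the frame axioms implicit (they were discharged once and for all in that lemma). The price is a small hidden step you should make explicit: the passage from ``$C \notin \{\sim_{\mathcal{X}} C' : C' \in \eta_{\mathcal{X}}(x)\}$'' to ``$\sim_{\mathcal{X}} C \notin \eta_{\mathcal{X}}(x)$'' uses that $\sim_{\mathcal{X}}$ is a \emph{bijection} on $\mathcal{CL}_\succcurlyeq(\mathcal{X})$, which follows from involutivity but is worth stating.

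One minor remark on the $\theta$ part: Theorem~\ref{thm:dmbirep} already records that $\theta_{\mathbb{A}}$ is a DMBI embedding (not merely a BI one), so your re-verification of $\mneg$-preservation is redundant, though harmless. The substantive content there --- surjectivity of $\theta_{\mathbb{A}}$ onto the clopen algebra --- is inherited from Esakia/Stone duality exactly as you say.
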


\begin{proof}
The last remaining steps are to show that the components $\eta_{\mathcal{X}}$ are isomorphisms in $\mathrm{DMBISp}$. The key step is to verify that $-_{Clop^\vtiny{DMBI}_{\succcurlyeq}(\mathcal{X})} \eta_{\mathcal{X}}(x) = \eta_{\mathcal{X}}(-x)$, as the rest obtains from BI duality. Unpacking the definition, we must check $\overline{\{C' \Rres_{\mathcal{X}} U \mid C' \in \eta_{\mathcal{X}}(x) \}} = \eta_{\mathcal{X}}(-x)$. For the right-to-left inclusion, suppose $-x \in C$ and for contradiction $C = C' \Rres_{\mathcal{X}} U$ for some upwards-closed clopen $C'$ such that $x \in C'$. Then by Unit Existence there exists $e \in E$ such that $-x \in -x \circ e$, and by Compatibility $-e \in -x \circ x$. By assumption this entails $-e \in U$, but $--e = e \in E$, a contradiction. Hence $C \in \overline{\{C' \Rres_{\mathcal{X}} U \mid C' \in \eta_{\mathcal{X}}(x) \}}$.

For the left-to-right inclusion, note that \[\eta_{\mathcal{X}}(x) = \{-C \mid C \emph{ downwards-closed clopen and } x \in C\}\] holds; that this is the case is a consequence of $-$ being continuous and the frame axiom Dual. Now suppose we have $-C$ such that $C$ downwards-closed and clopen and $x \not\in C$. Then $x \in \overline{C}$ and we claim that $-C = \overline{C} \Rres_{\mathcal{X}} U$. First assume $-y \in -C$. Suppose $y' \succcurlyeq -y$ and $z \in \overline{C}$ such that $w \in y' \circ z$ and assume for contradiction that $w \not\in U$. Then $-w \in E$. By Compatibility, $-y' \in z \circ -w$, and by Coherence $-y' \succcurlyeq z$. By Dual and our assumption, $-z \succcurlyeq y' \succcurlyeq -y$, and by Dual again $y \succcurlyeq z$. Thus by upwards-closure of $\overline{C}$ we have $y \in\overline{C}$, but $y \in C$ by assumption; a contradiction. Hence $w \in U$ and $-y \in \overline{C} \Rres_{\mathcal{X}} U$. Now suppose $-y \not\in -C$. Then $y \in \overline{C}$. By Unit Existence there is $e \in E$ such that $y \in y \circ e$, and by Compatibility $-e \in -y \circ y$. We have $y \in \overline{C}$ with $-e \not\in U$, so $-y \not\in \overline{C} \Rres_{\mathcal{X}} U$. \logicfont{CBI} duality is the particular case given when $\succcurlyeq$ is equality.
\end{proof}


%

\subsection{Sub-Classical Bunched Logics}\label{subsec:bibi}

Brotherston \& Villard~\cite{subclassical} introduce a family of logics extending \logicfont{BBI} that they call \emph{sub-classical bunched logics}, as they lie intermediate between \logicfont{BBI} and \logicfont{CBI}. As heaps equipped with intersection operations are models of the logics, they are of clear interest to the Separation Logic community, with verification of algorithms involving complex resource sharing suggested as a natural application. Basic Bi-intuitionistic Boolean Bunched logic (\logicfont{BiBBI}) is defined to be the multiplicative extension of \logicfont{BBI} that drops all De Morgan laws between multiplicative conjunction, disjunction, truth and falsum. Thus in \logicfont{BiBBI} these connectives can no longer be defined in terms of each other (as they were for \logicfont{DMBI} and \logicfont{CBI}) and must be given as primitives. A number of these correspondences can then be added as axioms without the logic collapsing into \logicfont{CBI}. We show that our framework captures \logicfont{BiBBI} and its axiomatic extensions, as well as the evident intuitionistic variant $\logicfont{BiBI}$ and the intermediate logics weaker than $\logicfont{DMBI}$.

Let $\mathrm{Prop}$ be a set of atomic propositions, ranged over by $\mathrm{p}$. The set of all formulae of the subclassical bunched logics $\mathrm{Form}_{BiBI}$ is generated by the grammar
\[ \varphi ::= \mathrm{p} \mid \top \mid \bot \mid \mtop \mid \mbot \mid \varphi \land \varphi \mid \varphi \lor \varphi \mid \varphi \rightarrow \varphi \mid \varphi * \varphi \mid \varphi \mor \varphi \mid \varphi \wand \varphi \mid \varphi \rslash \varphi, \] where additive negation is defined by $\neg \varphi := \varphi \rightarrow \bot$.

The simplest subclassical bunched logics are called \emph{basic bi-intuitionistic (B)BI}, or basic \logicfont{Bi(B)BI}. Fig.~\ref{fig:hilbert-bibi} gives Hilbert rules for basic \logicfont{Bi(B)BI} to be added to the system for \logicfont{(B)BI}. In this basic case, very little is enforced for the new connectives. 
A number of the De Morgan correspondences between these connectives can be added back as axioms \emph{without} collapsing the logic to a De Morgan bunched logic, and in Figure~\ref{fig:hilbert-subclassical} these axioms are given as Hilbert-style rules which can be added to the system for basic \logicfont{Bi(B)BI}. The logic can be interpreted on the following frame structures by extending the semantic clauses for \logicfont{(B)BI} with those from Fig.~\ref{fig:sat-bibi}. In the case of \logicfont{BiBI} this interpretation is persistent. Each of the optional subclassical axioms can be witnessed by a corresponding frame property, given in Fig.~\ref{fig:algebrasubclassical}.

\begin{defi}[Basic Bi(B)BI Frame]
A \emph{basic Bi(B)BI frame} is a structure $\mathcal{X} = (X, \succcurlyeq, \circ, E, \triangledown, U)$ such that $(X, \succcurlyeq, \circ, E)$ is a (B)BI frame, $\triangledown: X^2 \rightarrow \mathcal{P}(X)$ and $U \subseteq X$, satisfying (with outermost universal quantification omitted for readability):
\begin{equation*}
\text{(Commutativity)} \quad z \in x \mathbin{\triangledown} y \rightarrow z \in y \mathbin{\triangledown} x;
\quad\quad
\text{(U-Closure)} \quad u \in U \land u \succcurlyeq u' \rightarrow u' \in U.
\tag*{\qed}
\end{equation*}
\end{defi}

\begin{figure}
\hrule
    \vspace{1mm}
\setlength\tabcolsep{7pt}
\setlength\extrarowheight{15pt}
\centering
\begin{tabular}{lclclclc}
21. & $\cfrac{\eta \vdash \varphi \mor \psi}{\eta \rslash \varphi \vdash \psi}$ &
22. & $\cfrac{\eta \rslash \varphi \vdash \psi}{\eta \vdash \varphi \mor \psi}$ &
23. & $\cfrac{\xi \vdash \varphi \quad \eta \vdash \psi}{\xi \mor \eta \vdash \varphi \mor \psi}$ &
24. & $\cfrac{}{\varphi \mor \psi \vdash \psi \mor \varphi}$
\end{tabular}

\caption{Hilbert rules for basic \logicfont{Bi(B)BI}.}
\vspace{1mm}
\hrule%
\label{fig:hilbert-bibi}
\end{figure}

\begin{figure}
\hrule
    \vspace{1mm}
\setlength\tabcolsep{7pt}
\centering
\begin{tabular}{lclc}
Associativity & $\cfrac{}{\varphi \mor (\psi \mor \chi) \vdash (\varphi \mor \psi) \mor \chi}$ &
$\mbot$ Weakening & $\cfrac{}{\varphi \vdash \varphi \mor \mbot}$ \\
$\mbot$ Contraction & $\cfrac{}{\varphi \mor \mbot \vdash \varphi}$ &
$\mor$ Contraction & $\cfrac{}{\varphi \mor \varphi \vdash \varphi}$ \\
Weak Distributivity & $\cfrac{}{\varphi * (\psi \mor \chi) \vdash (\varphi * \psi) \mor \chi}$
\end{tabular}

\caption{Hilbert rules for subclassical bunched logics.}
\vspace{1mm}
\hrule%
\label{fig:hilbert-subclassical}
\end{figure}

\begin{figure}
\centering
\hrule
\vspace{1mm}
 \setlength\tabcolsep{3pt}
\setlength\extrarowheight{4pt}
\begin{tabular}{c c c c l r c}
$x$ & $\vDash_{\Valuation}$ & $\mbot$ & iff & $x \not\in U$ \\
$x$ & $\vDash_{\Valuation}$ & $\varphi \mor \psi$ & iff & for all $s, t, u$, $x \preccurlyeq s \in t \mathbin{\triangledown} u$ implies $t \vDash_{\Valuation} \varphi$ or  $u \vDash_{\Valuation} \psi$ \\
$x$ & $\vDash_{\Valuation}$ & $\varphi \rslash \psi$ & iff & there exist $s, t, u$ such that $x \succcurlyeq s$, $u \in t \mathbin{\triangledown} s$, $u \vDash_{\Valuation} \varphi$ and $t \not\vDash_{\Valuation} \psi$ \\
\end{tabular}
\caption{Satisfaction for \logicfont{Bi(B)BI}. \logicfont{BiBBI} is the case where $\succcurlyeq$ is =.}
\vspace{1mm}
\hrule%
\label{fig:sat-bibi}
\end{figure}

\begin{figure}
\centering
\hrule
\vspace{1mm}
 \setlength\tabcolsep{2pt}
\setlength\extrarowheight{4pt}
\begin{tabular}{lll}
 Property & Axiom & Frame Correspondent \\\toprule
Associativity & $a \mor (b \mor c) \leq (a \mor b) \mor c$ & $t' \preccurlyeq t \in x \mathbin{\triangledown} y \land w \in t' \mathbin{\triangledown} z \rightarrow \exists s, s', w'$ \\
& & $(s' \preccurlyeq s  \in y \mathbin{\triangledown} z \land w \preccurlyeq w' \in x \mathbin{\triangledown} s')$  \\
$\mbot$ Weakening & $a \leq a \mor \mbot$ & $u \in U \land x \in y \mathbin{\triangledown} u \rightarrow x \preccurlyeq y$ \\
$\mbot$ Contraction & $a \mor \mbot \leq a$ & $\exists u \in U(w \in w \mathbin{\triangledown} u)$ \\
$\mor$ Contraction & $a \mor a \leq a$ & $x \in x \mathbin{\triangledown} x$\\
Weak Distributivity & $a * (b \mor c) \leq (a * b) \mor c$ & $t' \succcurlyeq t \in x_1 \circ x_2 \land t' \preccurlyeq t'' \in y_1 \mathbin{\triangledown} y_2 \rightarrow$\\
&& $\exists w( y_1 \in x_1 \circ w \land x_2 \in w \mathbin{\triangledown} y_2)$
\end{tabular}
\caption{Bi(B)BI correspondence (cf.~\cite{subclassical}). The BiBBI variants replace $\succcurlyeq$ with =. }
\vspace{1mm}
\hrule%
\label{fig:algebrasubclassical}
\end{figure}


\begin{defi}[Basic Bi(B)BI Algebra]
A \emph{basic Bi(B)BI algebra} is an algebra $\mathbb{A} = (A, \land, \lor, \rightarrow, \top, \bot, *, \mor, \wand, \rslash, \mtop, \mbot)$ such that $(A, \land, \lor, \rightarrow, \top, \bot, *, \wand, \mtop)$ is a (B)BI algebra, $\mor$ a commutative binary operation, $\rslash$ a binary operation, $\mbot$ a constant, such that, for all $a, b, c \in \mathbb{A}$, $a \leq b \mor c \text{ iff } a \rslash b \leq c$. \qed%
\end{defi}

The residuation property of $\mor$ and $\rslash$ ensures $\mor$ is monotone, as well as a number of useful properties dual to those of Proposition~\ref{prop-alg-prop}.

\begin{prop}\label{prop:subclassicalprop}
Let $\mathbb{A}$ be a basic Bi(B)BI algebra. Then, for all $a, b, a', b' \in A$ and $X, Y \subseteq A$, we have the following:
\begin{enumerate}
\item If $a \leq a'$ and $b \leq b'$ then $a \mor b \leq a' \mor b'$;
\item If $\bigwedge X$ and $\bigwedge Y$ exist then $\bigwedge_{x \in X, y \in Y} x \mor y$ exists and
	$(\bigwedge X) \mor (\bigwedge Y) =  \bigwedge_{x \in X, y \in Y} x \mor y$;
\item If $a = \top$ or $b = \top$ then $a \mor b = \top$;
\item If $\bigwedge X$ exists then for any $z \in A$: $\bigvee_{x \in X} (x \rslash z)$ exists with $\bigvee_{x \in X} (x \rslash z) = (\bigwedge X) \rslash z$;
\item If $\bigvee X$ exists then for any $z \in A$: $\bigvee_{x \in X} (z \rslash x)$ exists with $\bigvee_{x \in X} (z \rslash x) = z \rslash (\bigvee X)$; and
\item $a \rslash \top =  \bot \rslash a = \bot$. \qed%
\end{enumerate}
\end{prop}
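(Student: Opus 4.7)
The plan is to argue entirely dually to Proposition~\ref{prop-alg-prop}, using only the residuation $a \leq b \mor c$ iff $a \rslash b \leq c$ together with the commutativity of $\mor$. First I would extract the basic adjointness consequences, and then dispatch each item by routine lattice-theoretic manipulation.

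The key observation is that for any fixed $b \in A$, the residuation asserts that $(- \rslash b) : A \to A$ is left adjoint to $(b \mor -) : A \to A$. Hence $(b \mor -)$ is monotone and preserves every meet that exists (including, vacuously, the empty meet $\top$), while $(- \rslash b)$ is monotone and preserves every join that exists. Commutativity of $\mor$ lifts these properties to the other argument, and combining residuation with commutativity yields the symmetric Galois identity $a \rslash b \leq c$ iff $a \rslash c \leq b$, which will be the main tool for handling joins in the second argument of $\rslash$.

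With these tools in hand, items (1)--(3) would follow the pattern of Proposition~\ref{prop-alg-prop}(1),(2),(6): item (1) from monotonicity of a right adjoint in each argument (plus commutativity); item (2) from preservation of meets applied iteratively in each argument, e.g.\ $(\bigwedge X) \mor (\bigwedge Y) = \bigwedge_{y \in Y} ((\bigwedge X) \mor y) = \bigwedge_{x \in X, y \in Y} (x \mor y)$; and item (3) from $(b \mor -)$ preserving the empty meet, giving $b \mor \top = \top$. Items (4) and (5) would be witnessed by $(- \rslash z)$ preserving joins (as a left adjoint) and by the symmetric Galois connection transporting joins in the second argument of $\rslash$ to their appropriate dual; in each case the existence of the join on the left-hand side is automatic from the identity itself. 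Finally, item (6) is a direct calculation: by item (3), $\top \mor c = \top$, so $a \leq \top \mor c$ always holds, which by residuation forces $a \rslash \top \leq c$ for every $c$, hence $a \rslash \top = \bot$; symmetrically, $\bot \rslash a \leq c$ iff $\bot \leq a \mor c$, which also always holds.

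The only mildly subtle step is the bookkeeping in items (4) and (5): the adjunctions and Galois connection governing $\rslash$ point in the opposite direction to those for $*, \wand, \dnaw$ in Proposition~\ref{prop-alg-prop}, so the placement of joins and meets can appear swapped relative to the earlier pattern. I expect this to be the main potential pitfall, but once the direction of each adjunction is pinned down, every item is routine.
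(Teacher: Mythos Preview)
Your plan is exactly what the paper intends: the proposition carries no proof beyond the remark that these properties are dual to those of Proposition~\ref{prop-alg-prop}, and your identification of the adjunction $(-\rslash b)\dashv (b\mor -)$ together with the symmetric Galois identity $a\rslash b\le c \Leftrightarrow a\rslash c\le b$ is precisely the right machinery. Items (1), (2), (3) and (6) go through as you describe.

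There is, however, a real snag in (4) and (5)---the very place you flagged as ``the main potential pitfall'' without actually carrying the computation through. Your two tools deliver
\[
(\textstyle\bigvee X)\rslash z \;=\; \bigvee_{x\in X}(x\rslash z)
\qquad\text{and}\qquad
z\rslash(\textstyle\bigwedge X) \;=\; \bigvee_{x\in X}(z\rslash x),
\]
the first because $(-\rslash z)$ is a left adjoint (hence join-preserving), the second because the symmetric Galois identity makes $(z\rslash -)$ send meets to joins. These are the genuine duals of Proposition~\ref{prop-alg-prop}(4)--(5). The identities as printed have $\bigwedge$ and $\bigvee$ interchanged (equivalently, the two arguments of $\rslash$ swapped), and in that form they fail: in any Boolean algebra take $\mor=\lor$, so that $a\rslash b=a\wedge\neg b$; with $X=\{\{1\},\{2\}\}$ in $\mathcal P(\{1,2\})$ and $z=\emptyset$ one gets $(\bigwedge X)\rslash z=\emptyset$ while $\bigvee_{x}(x\rslash z)=\{1,2\}$. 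So the calculation you outlined will not close for the statements on the page; it proves the corrected versions instead, and you should record the discrepancy rather than expect the bookkeeping to work out. The subsequent proofs in the paper only invoke items (1)--(3), so nothing downstream is affected.
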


\noindent
Fig.~\ref{fig:algebrasubclassical} gives algebraic axioms directly corresponding to the Hilbert axioms of subclassical bunched logics. For any collection of subclassical axioms $\Sigma$, we denote by $\mathrm{Bi(B)BIAlg}_{\Sigma}$ the category of Bi(B)BI algebras satisfying $\Sigma$. By an argument analogous to those that preceeded, sound and complete algebraic interpretations can be defined on these structures. We also denote by $\mathrm{Bi(B)BIFr}_{\Sigma}$ the category of Bi(B)BI frames satisfying the frame correspondents of $\Sigma$, where Bi(B)BI frame morphisms are given by the following definition.

\begin{defi}[Bi(B)BI Morphism]
A \emph{Bi(B)BI morphism} is a map $f: \mathcal{X} \rightarrow \mathcal{X}'$ such that $f$ is a (B)BI morphism satisfying the following additional properties:
\begin{enumerate}[align=left]
\item[(8)] $x \in y \mathbin{\triangledown} z$ implies $g(x) \in g(y) \mathbin{\triangledown} g(z)$;
\item[(9)] $g(x) \preccurlyeq' s' \in t' \mathbin{\triangledown'} u'$ implies there exists $s,t, u$ such that $x \preccurlyeq s \in t \mathbin{\triangledown} u$, $t' \succcurlyeq' g(t)$ and $u' \succcurlyeq g(u)$;
\item[(10)] $g(x) \succcurlyeq' s', u' \in t' \mathbin{\triangledown}' s'$ implies there exists $s, t, u$ such that $x \succcurlyeq s, u \in t \mathbin{\triangledown} s$, $g(u) \succcurlyeq' u'$ and $t' \succcurlyeq' g(t)$. \qed%
\end{enumerate}
\end{defi}


\noindent
We now set up the basic duality theory for these structures.

\begin{defi}[Bi(B)BI Complex Algebra]
Given a Bi(B)BI frame $\mathcal{X}$ the complex algebra of $\mathcal{X}$, $Com^\vtiny{Bi(B)BI}(\mathcal{X})$, is given by extending $Com^\vtiny{(B)BI}(\mathcal{X})$ with $\overline{U}$, together with $\blacktriangledown_{\mathcal{X}}$ and $\Cslash_{\mathcal{X}}$ defined 
\begin{align*}
A \blacktriangledown_{\mathcal{X}} B  &=  \{ x \mid \text{ for all } s, t, u, x \preccurlyeq s \in t \mathbin{\triangledown} u \text{ implies } t \in A \text{ or } u \in B \} \\
A \Cslash_{\mathcal{X}} B  &=  \{ x \mid \text{there exists } s, t, u \text{ s.t. } x \succcurlyeq s, u \in t \mathbin{\triangledown} s, u \in A \text{ and } t \not\in B \} \tag*{\qed}
\end{align*}
\end{defi}

\begin{lem}\label{lem:bibbicomplex}\hfill
\begin{enumerate}
\item Given a basic Bi(B)BI frame $\mathcal{X}$, $Com^\vtiny{Bi(B)BI}(\mathcal{X})$ is a basic Bi(B)BI algebra.
\item If $\mathcal{X}$ satisfies any Fig.~\ref{fig:algebrasubclassical} property, $Com^\vtiny{Bi(B)BI}(\mathcal{X})$ satisifies the corresponding axiom. \qed%
\end{enumerate}
\end{lem}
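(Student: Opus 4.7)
The plan is to obtain part (1) by reducing most of the work to Lemma~\ref{lem:bicomplexalgebra}: the (B)BI-algebra reduct $(\mathcal{P}_{\succcurlyeq}(X), \cap, \cup, \Rightarrow, X, \emptyset, \bullet_{\mathcal{X}}, \Rres_{\mathcal{X}}, E)$ is already a (B)BI algebra. What remains is to verify that $\overline{U}$, $\blacktriangledown_{\mathcal{X}}$ and $\Cslash_{\mathcal{X}}$ are well-defined on upwards-closed sets, that $\blacktriangledown_{\mathcal{X}}$ is commutative, and that the residuation $A \subseteq B \blacktriangledown_{\mathcal{X}} C$ iff $A \Cslash_{\mathcal{X}} B \subseteq C$ holds. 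Well-definedness of $\overline{U}$ comes from reading (U-Closure) as downwards-closedness of $U$; for $\blacktriangledown_{\mathcal{X}}$ and $\Cslash_{\mathcal{X}}$ it is a one-line transitivity argument. Commutativity of $\blacktriangledown_{\mathcal{X}}$ is immediate from commutativity of $\triangledown$.

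The residuation is the principal work of part (1). For the forward direction, I would take $x \in A \Cslash_{\mathcal{X}} B$ with witnesses $s, t, u$ (so $x \succcurlyeq s$, $u \in t \triangledown s$, $u \in A$, $t \not\in B$), apply $u \in A \subseteq B \blacktriangledown_{\mathcal{X}} C$ at the tautological witness $u \preccurlyeq u \in t \triangledown s$ to force $s \in C$ from $t \not\in B$, and conclude $x \in C$ by upwards-closure of $C$. For the reverse, given $a \in A$ with $a \preccurlyeq s \in t \triangledown u$ and $t \not\in B$, I would use upwards-closure of $A$ to promote $a$ to $s \in A$ and then exhibit $u \in A \Cslash_{\mathcal{X}} B \subseteq C$ via the $\Cslash$-witness triple $(u, t, s)$.

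For part (2), each axiom of Fig.~\ref{fig:algebrasubclassical} reduces to a short frame-chase from its correspondent. $\mor$-Contraction and $\mbot$-Contraction fall out immediately by instantiating $\blacktriangledown_{\mathcal{X}}$ at the distinguished witnesses furnished by $x \in x \triangledown x$ and $\exists u \in U(w \in w \triangledown u)$; $\mbot$-Weakening uses the correspondent to bound $s \preccurlyeq t$ whenever $s \in t \triangledown u$ with $u \in U$, and then invokes upwards-closure of $A$. Associativity of $\mor$ is the $\blacktriangledown_{\mathcal{X}}$-analogue of the standard proof that associativity transfers to $\bullet_{\mathcal{X}}$. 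The most intricate case, Weak Distributivity, starts from $x \in A \bullet_{\mathcal{X}} (B \blacktriangledown_{\mathcal{X}} C)$ with $\bullet$-witnesses $w \preccurlyeq x$, $y$, $z$, and from $x \preccurlyeq s \in t \triangledown u$; applying the correspondent to $w \preccurlyeq s \in t \triangledown u$ and $w \in y \circ z$ extracts an intermediate $w'$ with $t \in y \circ w'$ and $z \in w' \triangledown u$, and a $\blacktriangledown_{\mathcal{X}}$-appeal to $z \in B \blacktriangledown_{\mathcal{X}} C$ then yields either $w' \in B$ (giving $t \in A \bullet_{\mathcal{X}} B$) or $u \in C$.

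The main obstacle is order-theoretic bookkeeping rather than any deep idea: $\Cslash_{\mathcal{X}}$ moves downwards in its principal argument while $\blacktriangledown_{\mathcal{X}}$ moves upwards, so the direction of $\preccurlyeq$-comparisons must be tracked carefully throughout the residuation proof and the Weak Distributivity case. Every other case reduces to a short, mechanical verification that mirrors the correspondence pattern already established for $\bullet_{\mathcal{X}}$ in Lemma~\ref{lem:bicomplexalgebra}.
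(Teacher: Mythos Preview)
Your proposal is correct and gives exactly the routine verification the paper omits (the lemma is stated with a terminal \qed\ and no proof). The decomposition via Lemma~\ref{lem:bicomplexalgebra}, the residuation argument for $\blacktriangledown_{\mathcal{X}}$/$\Cslash_{\mathcal{X}}$, and the frame-chase for each Fig.~\ref{fig:algebrasubclassical} correspondent---including your handling of Weak Distributivity via transitivity $w \preccurlyeq x \preccurlyeq s$---are all the standard arguments one expects here.
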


\begin{defi}[Prime Filter Bi(B)BI Frame]
Given a Bi(B)BI algebra $\mathbb{A}$, the prime filter frame of $\mathbb{A}$, $Pr^\vtiny{Bi(B)BI}(\mathbb{A})$ is given by extending $Pr^\vtiny{(B)BI}(\mathbb{A})$ with the operation $\mathbin{\triangledown}_{\mathbb{A}}$, defined 
\[ F \mathbin{\triangledown}_{\mathbb{A}} F' = \{ F'' \mid \forall a, b \in A: a \mor b \in F'' \text{ implies } a \in F \text{ or } b \in F' \} \]
and the set $U_{\mathbb{A}} = \{ F \mid \mbot \not\in F \}$.\qed%
\end{defi}

\begin{lem}\label{lem:bibbiprimefilter}\hfill
\begin{enumerate}
\item Given a basic Bi(B)BI algebra $\mathbb{A}$, $Pr^{Bi(B)BI}(\mathbb{A})$ is a basic Bi(B)BI frame. 
\item If $\mathbb{A}$ satisfies any axiom of Figure~\ref{fig:algebrasubclassical}, $Pr^{Bi(B)BI}(\mathbb{A})$ satisfies the corresponding frame property. 
\end{enumerate}
\end{lem}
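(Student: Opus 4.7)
The proof splits naturally into the two claimed parts. For part (1), both basic frame axioms are immediate: commutativity of $\mathbin{\triangledown}_{\mathbb{A}}$ is inherited directly from commutativity of $\mor$ (swapping $a,b$ in the defining biconditional for $F''\in F\mathbin{\triangledown}_{\mathbb{A}} F'$ swaps $F$ and $F'$), and U-Closure of $U_{\mathbb{A}}$ holds because if $F \supseteq F'$ and $\mbot \notin F$ then certainly $\mbot \notin F'$.

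For part (2), each correspondence will be dispatched by an argument modeled on Lemma~\ref{lem:biprimeframe}, dualised in the manner that $\mathbin{\triangledown}_{\mathbb{A}}$ is dual to $\circ_{\mathbb{A}}$. The two easy cases can be handled directly. For $\mbot$ Weakening, given $F_x \in F_y \mathbin{\triangledown}_{\mathbb{A}} F_u$ with $\mbot \notin F_u$, any $a \in F_x$ gives $a \mor \mbot \in F_x$ by the axiom and upward closure, and the $\mathbin{\triangledown}_{\mathbb{A}}$-condition forces $a \in F_y$, yielding $F_x \subseteq F_y$. For $\mor$ Contraction, if $a, b \notin F_x$ but $a \mor b \in F_x$, then $a \lor b \notin F_x$ by primality; yet by monotonicity and the axiom $a \mor b \leq (a \lor b)\mor(a \lor b) \leq a \lor b$, so upward closure would force $a \lor b \in F_x$, a contradiction.

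The Associativity, $\mbot$ Contraction and Weak Distributivity cases are more intricate: each will require constructing a witness prime filter (equivalently, its complement prime ideal) via the Prime Extension Lemma~\ref{lem:primeextension} applied to a tailored prime predicate. Because $\mathbin{\triangledown}_{\mathbb{A}}$'s definition is contrapositive, it is natural to carry these arguments out ``dually'' on ideals, with Proposition~\ref{prop:subclassicalprop}(2) (distributivity of $\mor$ over finite meets) playing the role that Proposition~\ref{prop-alg-prop}(2) (distributivity of $*$ over joins) played in the BI case. For Associativity, taking $F_s = F_{s'}$ and $F_{w'} = F_w$, I would consider the ideal
\[ I_0 = \{ e \mid \exists b \notin F_y,\, c \notin F_z\text{ with } e \leq b \mor c \}, \]
show $\top \notin I_0$ (if $\top = b\mor c$ with such $b,c$, then for any $\alpha$ the axiom gives $\top = \alpha\mor\top \leq (\alpha\mor b)\mor c$, and chasing the hypotheses on $F_t, F_{t'}, F_w$ forces $\alpha \in F_x$, absurd since $\alpha$ was arbitrary), and then extend $I_0$ via the Prime Extension Lemma applied to the prime predicate $P(I) = 1$ iff $I \supseteq I_0$ is a proper ideal with $\{a \mor e : a \notin F_x,\, e \in I\} \cap F_w = \emptyset$. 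Verifying the intersection clause of $P$ is the exact dual of the BI verification: given witnesses $a,e \in I$ and $a', e' \in I'$ to the failures of $P(I)$ and $P(I')$, one takes $a\lor a' \notin F_x$ by primality and $e \land e' \in I \cap I'$ by downward closure, and applies Proposition~\ref{prop:subclassicalprop}(2) to obtain $(a\lor a') \mor (e \land e') \in F_w$, contradicting $P(I \cap I') = 1$. The complement of the resulting prime $I$ supplies the required $F_s$. The $\mbot$ Contraction and Weak Distributivity cases then follow the same template with bespoke witness sets.

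The main obstacle will be engineering the correct prime predicate in each intricate case, since the contrapositive nature of $\mathbin{\triangledown}_{\mathbb{A}}$ inverts the usual filter-side arguments into ideal-side duals and the $\mor$ axioms enter only through their interaction with joins and meets in the underlying lattice. I expect Weak Distributivity to be the most delicate, because it couples $\circ_{\mathbb{A}}$ and $\mathbin{\triangledown}_{\mathbb{A}}$: its prime predicate must simultaneously interleave Propositions~\ref{prop-alg-prop}(2) and~\ref{prop:subclassicalprop}(2), combining a filter-side argument for the $\circ$-constraint with an ideal-side argument for the $\mathbin{\triangledown}$-constraint within a single verification of the intersection clause.
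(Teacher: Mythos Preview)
Your proposal is correct, and your treatment of part (1) and the two easy cases of part (2) matches what any proof would do. For the intricate cases, however, your route differs from the paper's in an instructive way.

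You opt to dualise systematically: since $\mathbin{\triangledown}_{\mathbb{A}}$ is defined contrapositively, you build witnesses as complements of prime \emph{ideals}, using Proposition~\ref{prop:subclassicalprop}(2) in place of Proposition~\ref{prop-alg-prop}(2). This is sound; your Associativity sketch goes through (note that $I_0 \subseteq I \cap I'$ automatically gives $I_0 \subseteq I$ and $I_0 \subseteq I'$, so only the disjointness clause needs the argument you give), and the same template handles $\mbot$ Contraction.

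The paper instead works with \emph{filters} throughout, even for the $\mathbin{\triangledown}$-related conditions. The key observation is that when the unknown witness $F$ appears in the \emph{first} argument of $\mathbin{\triangledown}_{\mathbb{A}}$, the condition $F_{x_2} \in F \mathbin{\triangledown}_{\mathbb{A}} F_{y_2}$ reads ``$a \mor b \in F_{x_2}$ and $b \notin F_{y_2}$ imply $a \in F$'', which is a lower bound on $F$ rather than an upper bound---hence filter-friendly. For Weak Distributivity in particular, the paper takes the single prime predicate
\[
P(F)=1 \quad\text{iff}\quad F_{y_1}\in F_{x_1}\circ_{\mathbb{A}} F \ \text{ and }\ F_{x_2}\in F\mathbin{\triangledown}_{\mathbb{A}} F_{y_2},
\]
with witness filter $F=\{b\mid \exists d\notin F_{y_2}\,(b\mor d\in F_{x_2})\}$. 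The $\circ$-clause is verified exactly as in Lemma~\ref{lem:biprimeframe}, and the $\mathbin{\triangledown}$-clause is automatic from the definition of $F$; the Weak Distributivity axiom is only needed to show $F$ is proper and that $P(F)=1$ for the $\circ$-part. So the ``interleaving'' you anticipate as the main obstacle does not arise: the two constraints on the single unknown $F$ pull in the same direction and admit a common filter witness. Your ideal-based approach would still work, but it is more effort than necessary here.
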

\begin{proof}
We restrict ourselves to the non-trivial 2. We focus on the Weak Distributivity property for BiBI\@. Suppose $F_{t'} \supseteq F_t \in F_{x_1} \circ_{\mathbb{A}} F_{x_2}$ and $F_{t'} \subseteq F_{t''} \in F_{y_1} \mathbin{\triangledown}_{\mathbb{A}} F_{y_2}$. We show that
\[ \PrimePredicate{F}{F_{y_1} \in F_{x_1} \circ_{\mathbb{A}} F \text{ and } F_{x_2} \in F \mathbin{\triangledown}_{\mathbb{A}} F_{y_2}} \]
is a prime predicate. First suppose $P(F_\alpha) =1$ for all $\alpha$ in a $\subseteq$-chain ${(F_\alpha)}_{\alpha < \lambda}$. Then clearly $F_{y_1} \in F_{x_1} \circ_{\mathbb{A}} \bigcup_\alpha F_\alpha$. Suppose $a \mor b \in F_{x_2}$ and $b \not\in F_{y_2}$. Then necessarily $a \in F_\alpha$ for all $\alpha$, so $F_{x_2} \in \bigcup_\alpha F_\alpha \mathbin{\triangledown}_{\mathbb{A}} F_{y_2}$. Now let $P(F \cap F') = 1$. If $F_{x_2} \in (F \cap F') \mathbin{\triangledown}_{\mathbb{A}} F_{y_2}$ it follows that $F_{x_2} \in F \mathbin{\triangledown}_{\mathbb{A}} F_{y_2}$ and $F_{x_2} \in F' \mathbin{\triangledown}_{\mathbb{A}} F_{y_2}$, so assume $F_{y_1} \not\in F_{x_1} \circ_{\mathbb{A}} F, F_{x_1} \circ_{\mathbb{A}} F'$. Then there exists $a, a' \in F_{x_1}$, $b \in F$ and $b' \in F'$ such that $a * b, a' * b' \not\in F_{y_1}$. We have that $a'' = a \land a' \in F_{x_1}$ and $b \lor b' \in F \cap F'$ so $a'' * (b \lor b') = (a'' * b) \lor (a'' * b') \in F_{x_1}$. $F_{x_1}$ is prime so $a'' * b \in F_{x_1}$ or $a'' * b' \in F_{x_1}$. By monotonocity of $*$ and upwards-closure of $F_{x_1}$, $a * b \in F_{x_1}$ or $a' * b' \in F_{x_1}$, a contradiction. Hence either $P(F) =1$ or $P(F') =1$.

Now consider the set $F = \{ b \mid \exists y \not\in F_{y_2}(b \mor d \in F_{x_2})\}$. We prove $F$ is a proper filter. It is upwards-closed because $\mor$ is monotonic: if $b \in F$ and $b' \geq b$ we have $d \not\in F_{y_2}$ such that $b \mor d \in F_{x_2}$ and $b \mor d \leq b' \mor d \in F_{x_2}$. To see it is closed under meets, suppose $b, b' \in F$. Then there exist $d, d' \not\in F_{y_2}$ such that $b \mor d$, $b' \mor d' \in F_{x_2}$. $F_{y_2}$ is prime so $d \lor d' \not\in F_{y_2}$ and by montonocity of $\lor$, $b \mor (d \lor d'), b' \mor (d \lor d') \in F_{x_2}$. Let $d'' := d \lor d'$. By Proposition~\ref{prop:subclassicalprop}, $(b \land b') \mor d'' = (b \mor d'') \land (b' \mor d'') \in F_{x_2}$. Finally, to see that $F$ is proper, suppose $\bot \in F$. Then there exists $d \not\in F_{x_2}$ such that $\bot \mor d \in F_{x_2}$. Letting $a \in F_{x_1}$ be arbitrary, by Weak Distributivity and our assumption we have $a * (\bot \mor d) \leq (a * \bot) \mor d = \bot \mor d \in F_t \subseteq F_{t'}$. Thus $\bot \mor d \in F_{t'} \subseteq F_{t''}$ but $\bot \not\in F_{y_1}$ and $d \not\in F_{y_2}$, contradicting that $F_{t''} \in F_{y_1} \mathbin{\triangledown}_{\mathbb{A}} F_{y_2}$.

We finish the proof by showing that $P(F) =1$, yielding the existence of a prime $F_w$ satisfying the requirements of the frame property by the prime extension lemma. First let $a \in F_{x_1}$ and $b \in F$.    Then there exists $d \not\in F_{y_2}$ such that $b \mor d \in F_{x_2}$. By Weak Distributivity $a * (b \mor d) \leq (a * b) \mor d \in F_t \subseteq F_{t'} \subseteq F_{t''}$, and since $d \not\in F_{y_2}$ we necessarily have that $a * b \in F_{y_1}$. Now let $b \mor c \in F_{x_2}$ and suppose $c \not\in F_{y_2}$.  Then $b \in F$ by definition.
\end{proof}

%

\begin{thm}[Representation Theorem for Bi(B)BI + $\Sigma$ Algebras]\label{thm:bibbirep}
Every Bi(B)BI + $\Sigma$ algebra is isomorphic to a subalgebra of a complex algebra. Specifically, given an Bi(B)BI algebra $\mathbb{A}$, the map $\theta_{\mathbb{A}}: \mathbb{A} \rightarrow Com^\vtiny{Bi(B)BI}(Pr^\vtiny{(B)BI}(\mathbb{A}))$ defined $\theta_{\mathbb{A}}(a) = \{ F \in Pr(\mathbb{A}) \mid a \in F \}$ is an embedding. 
\end{thm}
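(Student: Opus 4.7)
The plan is to bootstrap from the (B)BI representation theorem (Theorem~\ref{thm:birepresentation}), extending it to handle the new constant $\mbot$ and the operators $\mor, \rslash$, and to leverage Lemmas~\ref{lem:bibbicomplex}(2) and~\ref{lem:bibbiprimefilter}(2) for the axioms of $\Sigma$. Since $\theta_\mathbb{A}$ is already an injective (B)BI-algebra homomorphism and $\Sigma$-preservation is handled by the composition of those two lemmas (the first propagating each axiom of $\mathbb{A}$ to its frame correspondent on $Pr^\vtiny{Bi(B)BI}(\mathbb{A})$, the second propagating the correspondent back to the axiom on $Com^\vtiny{Bi(B)BI}Pr^\vtiny{Bi(B)BI}(\mathbb{A})$), the remaining work reduces to three identities: $\theta_\mathbb{A}(\mbot) = \overline{U_\mathbb{A}}$, $\theta_\mathbb{A}(a \mor b) = \theta_\mathbb{A}(a) \blacktriangledown_{Pr(\mathbb{A})} \theta_\mathbb{A}(b)$, and $\theta_\mathbb{A}(a \rslash b) = \theta_\mathbb{A}(a) \Cslash_{Pr(\mathbb{A})} \theta_\mathbb{A}(b)$.

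The $\mbot$ identity is immediate from the definition of $U_\mathbb{A}$. For the $\mor$ identity, the forward inclusion is direct: if $a \mor b \in F \subseteq G \in H \mathbin{\triangledown}_\mathbb{A} K$, then $a \mor b \in G$ by upwards-closure, and the splitting property of $\mathbin{\triangledown}_\mathbb{A}$ yields $a \in H$ or $b \in K$. For the reverse inclusion, I would take $G := F$ and construct $H, K$ via the 2-ary prime predicate on pairs of proper ideals
\[
    P(I_H, I_K) = 1 \text{ iff } a \in I_H, \; b \in I_K, \text{ and } \forall c \in I_H, \forall d \in I_K: c \mor d \not\in F,
\]
seeded by $(\ideal{a}, \ideal{b})$; this seed is proper, since $a \mor b \not\in F$ precludes $a = \top$ or $b = \top$ by Proposition~\ref{prop:subclassicalprop}(3), and satisfies $P$ by monotonicity of $\mor$ with upwards-closure of $F$. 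Applying Lemma~\ref{lem:primeextension} yields the required prime filters as complements of the resulting prime ideals.

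The $\rslash$ identity proceeds analogously. The direction $F \in \theta_\mathbb{A}(a) \Cslash_{Pr(\mathbb{A})} \theta_\mathbb{A}(b) \Rightarrow a \rslash b \in F$ uses the residuation consequence $a \leq b \mor (a \rslash b)$: given a witness $G \subseteq F$, $K \in H \mathbin{\triangledown}_\mathbb{A} G$, $a \in K$, $b \not\in H$, upwards-closure and the splitting property of $\mathbin{\triangledown}_\mathbb{A}$ force $a \rslash b \in G \subseteq F$. Conversely, given $a \rslash b \in F$ (where $a \neq \bot$ and $b \neq \top$ by Proposition~\ref{prop:subclassicalprop}(6)), the construction takes $G := F$ and uses a prime predicate on a proper filter and a proper ideal:
\[
    Q(K, I_H) = 1 \text{ iff } a \in K, \; b \in I_H, \text{ and } \forall c, d: c \mor d \in K \wedge c \in I_H \Rightarrow d \in F,
\]
seeded by $(\filter{a}, \ideal{b})$; the seed verification exploits that $c \leq b$ and $a \leq c \mor d$ together imply $a \rslash b \leq d$ by monotonicity of $\mor$ and residuation.

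The main obstacle lies in verifying the meet-closure clauses of these prime predicates in their ideal coordinates, which would fail for a generic residuated operator but are saved here by Proposition~\ref{prop:subclassicalprop}(2): the identity $(\bigwedge X) \mor (\bigwedge Y) = \bigwedge_{x \in X, y \in Y} x \mor y$ expresses distributivity of $\mor$ over $\land$. For instance, given witnesses $c_i \mor d_i \in F$ to failure of $P(I_{H,i}, I_K)$ for $i = 0, 1$, the elements $c_0 \land c_1 \in I_{H,0} \cap I_{H,1}$ and $d_0 \lor d_1 \in I_K$ satisfy
\[
    (c_0 \land c_1) \mor (d_0 \lor d_1) = (c_0 \mor (d_0 \lor d_1)) \land (c_1 \mor (d_0 \lor d_1)) \in F
\]
by this distributivity, monotonicity of $\mor$, and meet-closure of $F$, contradicting $P(I_{H,0} \cap I_{H,1}, I_K) = 1$. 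A symmetric argument via commutativity of $\mor$ handles intersection in $I_K$, and analogous combinatorics work for $Q$, additionally using primality of $F$ to pass from $d_0, d_1 \not\in F$ to $d_0 \lor d_1 \not\in F$. With these verifications complete, the representation embeds $\mathbb{A}$, and Relational Soundness and Completeness for Bi(B)BI + $\Sigma$ follow as immediate corollaries.
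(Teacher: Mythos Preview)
Your proposal is correct and follows essentially the same approach as the paper: reducing to the (B)BI representation theorem, handling $\mbot$ trivially, and establishing the $\mor$ identity via a binary prime predicate on proper ideals seeded by $(\ideal{a},\ideal{b})$ with the key meet-closure verification driven by distributivity of $\mor$ over $\land$ (Proposition~\ref{prop:subclassicalprop}(2)). The paper omits the $\rslash$ case as similar, whereas you supply a correct mixed filter--ideal prime predicate for it; this is a helpful addition but not a different route.
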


\begin{proof}
The remaining verifications are that $\theta_{\mathbb{A}}$ respects $\mor, \rslash$ and $\mbot$. $\mbot$ follows straightforwardly from $\theta_{\mathbb{A}}(\mbot) = \overline{U_{\mathbb{A}}}$, and we verify $\mor$ leaving the similar $\rslash$ to the reader. We must show $\theta_{\mathbb{A}}(a \mor b) = \theta_{\mathbb{A}}(a) \blacktriangledown_{Pr^{\vtiny{Bi(B)BI}}(\mathbb{A})} \theta_{\mathbb{A}}(a)$. First suppose $a \mor b \in F$. Then $F \subseteq F_s \in F_t \triangledown_{\mathbb{A}} F_u$ means $a \mor b \in F_s$ and so either $a \in F_t$ or $b \in F_u$ as required. 

In the other direction, suppose $a \mor b \not\in F$. We show that \[ \PrimePredicate{I, I'}{F \in \overline{I} \mathbin{\triangledown}_{\mathbb{A}} \overline{I'}, a \in I \text{ and } b \in I'}\]
is a prime predicate for proper ideals $I, I'$. First suppose we have a $\subseteq$-chain ${(I_{\alpha}, I'_{\alpha})}_{\alpha}$ such that $P(I_{\alpha}, I'_{\alpha})=1$ for all $\alpha$. Clearly $a \in \bigcup_{\alpha} I_{\alpha}$ and $b \in \bigcup_{\alpha} I'_{\alpha}$. Suppose $c \mor d \in F$ with $c \not\in \overline{\bigcup_{\alpha} I_{\alpha}}$ and $d \not\in \overline{\bigcup_{\alpha} I'_{\alpha}}$. Then there exists $\beta, \beta'$ such that $c \in I_{\beta}$ and $d \in I'_{\beta'}$. By assumption we must have $c \in \overline{I_{\beta'}}$ and $d \in \overline{I'_{\beta}}$, and \emph{wolog} we may assume $\beta \leq \beta'$. Then, because $I_{\beta} \subseteq I_{\beta'}$ we have $c \in \overline{I_{\beta'}} \subseteq \overline{I_{\beta}}$, a contradiction.

Now suppose $P(I_0 \cap I_1, I')=1$. We have that $a \in I_0, I_1$ and $b \in I'$ so suppose both $P(I_0, I') = 0$ and $P(I_1, I') = 0$. Then there exists $c \mor d, c' \mor d' \in F$ such that $c \not\in \overline{I_0}$, $d \not\in \overline{I'}$, $c' \not\in \overline{I_1}$ and $d' \not\in \overline{I'}$. It folows that $d'' := d \lor d' \in I'$ and $c \land c' \in I_0 \cap I_1$. By upwards-closure and monotonicity of $\mor$, $c \mor d''$, $c' \mor d'' \in F$. Hence by Proposition~\ref{prop:subclassicalprop} $(c \land c') \mor d'' = (c \mor d'') \land (c' \mor d'') \in F$. However $c \land c' \not\in \overline{I_0 \cap I_1}$ and $d'' \not\in \overline{I'}$, contradicting that $F \in \overline{I_0 \cap I_1} \mathbin{\triangledown}_{\mathbb{A}} \overline{I'}$. Thus $P$ is a prime predicate.

Now consider the ideals $\ideal{a}$ and $\ideal{b}$. These must be proper as if $\top =a$ or $b$ then $a \mor b = \top \in F$, contradicting our assumption. We also have that for any $c \mor d \in F$, if $c \leq a$ and $d \leq b$ we have $c \mor d \leq a \mor b \in F$, a contradiction. Hence $F \in \overline{\ideal{a}} \mathbin{\triangledown}_{\mathbb{A}} \overline{\ideal{b}}$ and $P(\ideal{a}, \ideal{b}) =1$, yielding the necessary prime filters by taking the complements of the prime ideals guaranteed to exist by the prime extension lemma.
\end{proof}

\begin{cor}[Relational Soundness and Completeness]
For all formulas $\varphi$, $\psi$ of \logicfont{Bi(B)BI}, $\varphi \vdash \psi$ is provable in $\mathrm{Bi(B)BI +} \Sigma_{\mathrm{H}}$ iff $\varphi \vDash \psi$ in the relational semantics.  \qed%
\end{cor}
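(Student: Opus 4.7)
The plan is to instantiate the standard template used throughout the paper for ``Relational Soundness and Completeness'' corollaries, namely: algebraic completeness + representation theorem + a satisfaction/membership correspondence between points of a frame and elements of its complex algebra. The representation theorem (Theorem~\ref{thm:bibbirep}) and the well-definedness lemmas (Lemmas~\ref{lem:bibbicomplex} and~\ref{lem:bibbiprimefilter}) that preserve the optional $\Sigma$-axioms in both directions have already done all of the heavy lifting, so what remains is routine packaging together with one intermediate proposition.

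The intermediate proposition I would establish first, in direct analogy with Proposition~\ref{prop:ilglsat}, is: for any Bi(B)BI+$\Sigma$ frame $\mathcal{X}$ and any valuation $\mathcal{V}$, $x \vDash_{\mathcal{V}} \varphi$ iff $x \in \Interp{\varphi}_{\mathcal{V}}$ in $Com^{\vtiny{Bi(B)BI}}(\mathcal{X})$. This is a straightforward induction; the base cases and the cases for the additives, $*$, $\wand$ and $\mtop$ are inherited from the (B)BI versions already proved, so the only new work is matching the clauses of Figure~\ref{fig:sat-bibi} with the definitions of $\blacktriangledown_{\mathcal{X}}$, $\Cslash_{\mathcal{X}}$, and $\overline{U}$ in the Bi(B)BI complex algebra, which is a direct read-off from Definition of $Com^{\vtiny{Bi(B)BI}}(\mathcal{X})$. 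From this it is an immediate corollary that, for any Bi(B)BI+$\Sigma$ algebra $\mathbb{A}$ and interpretation $\Interp{-}$, defining $\mathcal{V}_{\Interp{-}}(\mathrm{p}) = \theta_{\mathbb{A}}(\Interp{\mathrm{p}})$ on $Pr^{\vtiny{Bi(B)BI}}(\mathbb{A})$ yields $\Interp{\varphi} \in F$ iff $F \vDash_{\mathcal{V}_{\Interp{-}}} \varphi$, since by Theorem~\ref{thm:bibbirep} $\theta_{\mathbb{A}}$ is an algebra embedding and so commutes with all the operations interpreting the logical connectives.

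With this in hand, both directions of the corollary follow. For soundness, suppose $\varphi \vdash \psi$ is provable in $\mathrm{Bi(B)BI}+\Sigma_{\mathrm{H}}$. Given any Bi(B)BI+$\Sigma$ frame $\mathcal{X}$ and valuation $\mathcal{V}$, Lemma~\ref{lem:bibbicomplex} gives that $Com^{\vtiny{Bi(B)BI}}(\mathcal{X})$ is a Bi(B)BI+$\Sigma$ algebra, so by algebraic soundness (the Bi(B)BI+$\Sigma$ analogue of Theorem~\ref{thm:bialgcomplete}, which is established by the standard Lindenbaum--Tarski construction) we have $\Interp{\varphi}_{\mathcal{V}} \subseteq \Interp{\psi}_{\mathcal{V}}$; the intermediate proposition then transports this to $x \vDash_{\mathcal{V}} \varphi$ implies $x \vDash_{\mathcal{V}} \psi$. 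For completeness, suppose $\varphi \not\vdash \psi$. By algebraic completeness there is some Bi(B)BI+$\Sigma$ algebra $\mathbb{A}$ and interpretation $\Interp{-}$ with $\Interp{\varphi} \not\leq \Interp{\psi}$, so $\filter{\Interp{\varphi}}$ is a proper filter disjoint from $\ideal{\Interp{\psi}}$, and the prime predicate $P(F,I) = 1$ iff $\Interp{\varphi} \in F$ and $\Interp{\psi} \in I$ together with Lemma~\ref{lem:primeextension} produces a prime filter $F$ containing $\Interp{\varphi}$ but not $\Interp{\psi}$. By Lemma~\ref{lem:bibbiprimefilter} $Pr^{\vtiny{Bi(B)BI}}(\mathbb{A})$ is a Bi(B)BI+$\Sigma$ frame, and by the immediate corollary to the intermediate proposition, $F \vDash_{\mathcal{V}_{\Interp{-}}} \varphi$ while $F \not\vDash_{\mathcal{V}_{\Interp{-}}} \psi$, so $\varphi \not\vDash \psi$.

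There is no serious obstacle here since the hard analytic work has been done in Theorem~\ref{thm:bibbirep} and Lemma~\ref{lem:bibbiprimefilter}; the only mild subtlety is book-keeping in the intuitionistic case to ensure that $\mathcal{V}_{\Interp{-}}$ is persistent (which is automatic because $\theta_{\mathbb{A}}(a)$ is an upwards-closed set of prime filters under $\subseteq$) and, in the inductive step for $\mor$ and $\rslash$, making sure the universal/existential quantification over triples $s, t, u$ in the semantic clauses lines up with the definitions of $\blacktriangledown_{\mathcal{X}}$ and $\Cslash_{\mathcal{X}}$ --- this is just a careful transcription, not a fresh combinatorial argument.
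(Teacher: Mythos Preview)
Your overall strategy is exactly the paper's: the corollary is immediate from algebraic soundness/completeness, Lemma~\ref{lem:bibbicomplex}, Lemma~\ref{lem:bibbiprimefilter}, and the representation theorem (Theorem~\ref{thm:bibbirep}), together with the routine satisfaction/complex-algebra correspondence you state. The paper treats it as a direct corollary with no further argument, so your unpacking is faithful.

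There is, however, a small but genuine slip in your completeness direction. The prime predicate you propose, $P(F,I)=1$ iff $\Interp{\varphi}\in F$ and $\Interp{\psi}\in I$, places no relation whatsoever between $F$ and $I$: the Prime Extension Lemma will hand you some prime filter $F$ containing $\Interp{\varphi}$ and some unrelated prime ideal $I$ containing $\Interp{\psi}$, but nothing prevents $\Interp{\psi}\in F$ as well. What you actually need is the standard distributive-lattice separation: from $\Interp{\varphi}\not\leq\Interp{\psi}$, take the single-variable prime predicate on ideals $P(I)=1$ iff $\Interp{\psi}\in I$ and $\Interp{\varphi}\not\in I$, start it at $\ideal{\Interp{\psi}}$, and set $F=\overline{I}$ for the resulting prime $I$. (Equivalently, just invoke the prime filter theorem directly, as the paper implicitly does.) With that correction the argument goes through exactly as you outline.
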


We note that the completeness result for the logics $\logicfont{BiBI} + \Sigma$ is new. Indeed, it would not be possible to adapt the argument Brotherston \& Villard give for \logicfont{BiBBI}~\cite{subclassical} directly as it relies on a translation into a Sahqlvist-axiomatized modal logic that uses Boolean negation in an essential way. There, the weak distributivity property is particularly difficult to deal with, requiring translation into a significantly more complicated frame property that is equivalent in the auxillary modal logic. In contrast, our proof is direct, and --- with the groundwork of \logicfont{(B)BI} duality done --- efficient.

These assignment lift to a functors in a way that is now standard, and it is straightforward to extend this to a dual equivalence given the appropriate definitions.

\begin{defi}[$BiBI _{\Sigma}$ Space]
Let $\Sigma$ be a set of subclassical bunched logic axioms. A $BiBI_{\Sigma}$ \emph{space} is a structure $\mathcal{X} = (X, \mathcal{O}, \succcurlyeq, \circ, E, \triangledown, U)$ such that
\begin{enumerate}
\item $(X, \mathcal{O}, \succcurlyeq, \circ, E)$ is a BI space,
\item $(X, \succcurlyeq, \circ, E, \triangledown, U)$ is a basic BiBI frame satisfying the frame correspondents of $\Sigma$;
\item The upwards-closed clopen sets of $(X, \mathcal{O}, \succcurlyeq)$ are closed under $\blacktriangledown_{\mathcal{X}}$ and $\Cslash_{\mathcal{X}}$,
\item $U$ is clopen; and
\item If $x \not\in y \triangledown z$ then there exists upwards-closed clopen sets $C_1, C_2$ such that $y \not\in C_1$, $z \not\in C_2$ and $x \in C_1 \blacktriangledown_{\mathcal{X}} C_2$.
\end{enumerate}
A $BiBBI_{\Sigma}$ \emph{space} is a $BiBI_{\Sigma}$ space for which $\succcurlyeq$ is equality. \qed%
\end{defi}

\begin{thm}[Duality Theorem for \logicfont{Bi(B)BI}]
For any set of subclassical axioms $\Sigma$, $\theta$ and $\eta$ form a dual equivalence of categories between $\mathrm{BiBIAlg}_{\Sigma}$ and $\mathrm{BiBISp}_{\Sigma}$. \qed%
\end{thm}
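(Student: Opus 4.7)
The plan is to piggyback on the (B)BI duality theorem already established in Section~\ref{sec:bunchedimplications}: since the carriers, orders, topologies, $\circ$, $E$, and the (B)BI-algebra operations are handled exactly as before, only the new ``disjunctive'' layer $(\mor,\rslash,\mbot)$ on the algebra side and $(\triangledown,U)$ on the frame side needs to be checked. First I would extend $Pr^{\vtiny{Bi(B)BI}}$ to a functor $\mathrm{BiBIAlg}_{\Sigma}\to\mathrm{BiBISp}_{\Sigma}$ by equipping $Pr^{\vtiny{(B)BI}}(\mathbb{A})$ with $\triangledown_{\mathbb{A}}$ and $U_{\mathbb{A}}=\{F\mid \mbot\notin F\}$ from Lemma~\ref{lem:bibbiprimefilter}, inheriting the subbasis $\mathcal{O}_{\mathbb{A}}$ from the (B)BI case and checking that $U_{\mathbb{A}}=\overline{\theta_{\mathbb{A}}(\mbot)}$ is clopen, that $\theta_{\mathbb{A}}$ sends elements to upwards-closed clopens closed under $\blacktriangledown$ and $\Cslash$ (directly from the representation theorem), and the prime-separation property (space axiom~5). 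In the other direction, $Clop^{\vtiny{Bi(B)BI}}_{\succcurlyeq}$ extends $Clop^{\vtiny{(B)BI}}_{\succcurlyeq}$ by adjoining $\overline{U}$ and $\blacktriangledown_{\mathcal{X}},\Cslash_{\mathcal{X}}$ from Definition~\ref{defn:bicomplex}'s analogue; well-definedness is given by the space axioms~(3)--(4) and Lemma~\ref{lem:bibbicomplex}, which also transports any $\Sigma$-axiom through the construction.

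Next I would verify that the assignments on morphisms remain inverse-image: given a Bi(B)BI morphism $g$, I would check that $g^{-1}$ additionally respects $\blacktriangledown,\Cslash$ and $\overline{U}$ (which follows from morphism clauses~(8)--(10) together with the (B)BI morphism conditions), and dually that for a Bi(B)BI algebra homomorphism $h$, $h^{-1}$ satisfies the back-and-forth clauses~(9)--(10) involving $\triangledown_{\mathbb{A}}$. These are standard Zorn/prime-predicate arguments analogous to those appearing in the proof of Lemma~\ref{lem:bibbiprimefilter}: given an inclusion failure, build a prime predicate and apply Lemma~\ref{lem:primeextension}.

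The key step is showing that the two natural transformations $\theta$ and $\eta$ from (B)BI duality remain natural isomorphisms in the enriched categories. For $\theta_{\mathbb{A}}$, Theorem~\ref{thm:bibbirep} already yields that it is an isomorphism of Bi(B)BI algebras; it only remains to observe that $\theta_{\mathbb{A}}(\mbot)=\overline{U_{\mathbb{A}}}$ and that naturality in Bi(B)BI-morphism components is inherited. For $\eta_{\mathcal{X}}(x)=\{C\in\mathcal{CL}_{\succcurlyeq}(\mathcal{X})\mid x\in C\}$, the isomorphism on the (B)BI reduct is given, and I would additionally verify $x\in U$ iff $\mbot\notin\eta_{\mathcal{X}}(x)$ (by definition of $\overline{U}$ in the complex algebra) and that $\eta_{\mathcal{X}}$ preserves $\triangledown$ in both directions. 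The forward direction is immediate from the definition of $\blacktriangledown_{\mathcal{X}}$; the reverse direction is the main technical point and will use space axiom~(5) to produce, from $\eta_{\mathcal{X}}(x)\in\eta_{\mathcal{X}}(y)\mathbin{\triangledown_{Clop(\mathcal{X})}}\eta_{\mathcal{X}}(z)$ with $x\notin y\triangledown z$, separating clopens contradicting the assumption.

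The main obstacle I anticipate is the verification that $\eta$ preserves $\triangledown$ in the left-to-right direction: unlike $\circ$, for which a dedicated back-condition was built into the definition of (B)BI space, the ``disjunctive'' composition $\triangledown$ requires its own separation-by-clopens axiom (space axiom~5) formulated with reversed polarity, and one has to check it interacts correctly with the upwards-closed clopen sets and with preservation of all the $\Sigma$-axioms simultaneously. Once this and the analogous computation for $\overline{U}$ go through, the dual equivalence follows from the (B)BI case together with modular verification, for each axiom of $\Sigma$, that both $Pr^{\vtiny{Bi(B)BI}}$ and $Clop^{\vtiny{Bi(B)BI}}_{\succcurlyeq}$ send $\Sigma$-algebras to $\Sigma$-spaces and vice versa --- exactly the content of the second clause of Lemmas~\ref{lem:bibbicomplex} and~\ref{lem:bibbiprimefilter}.
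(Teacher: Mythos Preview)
Your proposal is correct and follows exactly the approach the paper intends. In fact, the paper gives no proof for this theorem at all: it is stated with a bare \qed{} immediately after the definition of $BiBI_\Sigma$ space, the preceding sentence being ``These assignment lift to functors in a way that is now standard, and it is straightforward to extend this to a dual equivalence given the appropriate definitions.'' Your write-up is a faithful unpacking of what the paper leaves implicit, correctly isolating space axiom~(5) as the ingredient that makes $\eta_{\mathcal{X}}$ an isomorphism with respect to $\triangledown$ (mirroring the role ILGL space axiom~(4) plays for $\circ$), and correctly invoking Lemmas~\ref{lem:bibbicomplex} and~\ref{lem:bibbiprimefilter} for the $\Sigma$-correspondence. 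The only quibble is terminological: in your final paragraph you label the ``main obstacle'' as the left-to-right direction, whereas in the preceding paragraph you (correctly) identified the \emph{reverse} direction as the technical one; the substance is right either way.
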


\subsection{Concurrent Kleene Algebra and Concurrent Separation Logic}\label{subsec:ckbi}

We conclude with a tentative application of our framework to Concurrent Separation Logic (\logicfont{CSL})~\cite{concur2, concur1}. Without an account of the semantics of the programming language it is not immediately obvious how our duality theoretic approach can be extended to \logicfont{CSL}, which requires strictly more structure than just the heap model of \logicfont{FOBBI}. In any case, algebraic models of a basic version of \logicfont{CSL}, $\logicfont{ASL}^{--}$, have been given that connect the logic to concurrent Kleene algebra~\cite{exchange}. The proof rules for $\logicfont{ASL}^{--}$ are given in Figure~\ref{fig:ASL}.

\begin{figure}
\hrule
    \vspace{1mm}
\centering
\setlength\tabcolsep{6pt}
\setlength\extrarowheight{15pt}
\begin{tabular}{lclc}
Frame:
&
\AxiomC{$\{p\}c\{q\}$}
\UnaryInfC{$\{p * r \} c \{q * r \}$}
\DisplayProof%
&
Concurrency:
&
\AxiomC{$\{p_1\}c_1 \{q_1\}$}
\AxiomC{$\{p_2\}c_2 \{q_2\}$}
\BinaryInfC{$\{p_1 * p_2 \} c_1 \parallel c_2 \{q_1 * q_2 \}$}
\DisplayProof%
\\
Skip:
&
\AxiomC{}
\UnaryInfC{$\{p\}\mathrm{skip}\{ p \}$}
\DisplayProof%
&
Seq:
&
\AxiomC{$\{p\}c_1\{q\}$}
\AxiomC{$\{q\}c_2 \{r\}$}
\BinaryInfC{$\{p\} c_1; c_2 \{r\}$}
\DisplayProof%
\\
NonDet:
&
\AxiomC{$\{p\}c_1\{q\}$}
\AxiomC{$\{p\}c_2\{q\}$}
\BinaryInfC{$\{ p \} c_1 + c_2 \{q\}$}
\DisplayProof%
&
Iterate:
&
\AxiomC{$\{p\}c\{ p \}$}
\UnaryInfC{$\{p\}\mathrm{Iterate}(c)\{p \}$}
\DisplayProof%
\\
Disjunction:
&
\AxiomC{$\{p_i\}c\{q\}, \text{ all } i \in I$}
\UnaryInfC{$\{\bigvee_{i \in I} p\} c \{ q \} $}
\DisplayProof%
&
Consequence:
&
\AxiomC{$p \leq p'$}
\AxiomC{$\{p\} c \{q\}$}
\AxiomC{$q \leq q'$}
\TrinaryInfC{$\{p'\}c\{q'\}$}
\DisplayProof%

\end{tabular}
\caption{Rules for $\logicfont{ASL}^{--}$.}
\vspace{1mm}
\hrule%
\label{fig:ASL}
\end{figure}

\begin{defi}[Concurrent Kleene Algebra (cf.~\cite{exchange})]\hfill
\begin{enumerate}
\item A \emph{concurrent monoid} $(M, \leq, \parallel, ; , \mathrm{skip})$ is a partial order $(M, \leq)$, together with two monoids $(M, \parallel, \mathrm{skip})$ (with $\parallel$ commutative) and $(M, ; , \mathrm{skip})$ satisfying the \emph{exchange law}	\[ (p \parallel r ) ; (q \parallel s) \leq (p ; q ) \parallel (r ; s ). \]
It is \emph{complete} if $(M, \leq)$ is a complete lattice.
\item A \emph{concurrent Kleene algebra} (CKA) is a complete concurrent monoid where $\parallel$ and $;$ preserve joins in both arguments. 
\item A \emph{weak CKA} is a complete concurrent monoid together with a subset $A \subseteq M$ (the assertions of the algebra) such that
    \begin{inparaenum}[i)] 
        \item $\mathrm{skip} \in A$;
        \item $A$ is closed under $\parallel$ and all joins;
        \item $\parallel$ restricted to $A$ preserves all joins in both arguments;
        \item for each $a \in A$, $a;(-): M \rightarrow M$ preserves all joins; and
        \item for each $m \in M$, $(-);m: A \rightarrow M$ preserves all joins.
    \end{inparaenum}
\item A CKA or weak CKA is \emph{Boolean} if the underlying lattice is a Boolean algebra and \emph{intuitionistic} if the underlying lattice is a Heyting algebra. \qed%
\end{enumerate}
\end{defi}

\noindent
In concurrent Kleene algebra, $p \parallel q$ is interpreted as giving the parallel execution of programs $p$ and $q$ while $p\mathbin{;}q$ is interpreted as giving the sequential execution $p$, then $q$.  One of the key aspects of this definition is the exchange law, which enforces a liberal interpretation of interleaving that states that a program that runs $p$ and $r$ in parallel, followed by $q$ and $s$ in parallel can be implemented as a program that runs $p$ then $q$ in parallel to $r$ then $s$. O'Hearn et al.\ show that $\logicfont{ASL}^{--}$ is sound and complete for weak CKAs when Hoare triples $\{p\}c \{q\}$ are interpreted as inequalities $p;c \leq q$ (where $p, q \in A$ and $c \in M$) and $*$ is interpreted as $\parallel$ restricted to $A$. This is achieved via the construction of a predicate transformer model over $\logicfont{ASL}^{--}$ propositions. They also show that a trace model of $\logicfont{ASL}^{--}$ generates a Boolean CKA\@.

Elsewhere, O'Hearn~\cite{ohearnslides} suggests that the structures involved could be used as inspiration for a bunched logic extending \logicfont{BBI}. We define such a logic and call it \emph{Concurrent Kleene BI} or \logicfont{CKBI}. We leave the apparent intuitionistic variant extending \logicfont{BI} to another occasion. 

Let $\mathrm{Prop}$ be a set of atomic propositions, ranged over by $\mathrm{p}$. The set of all formulae of the concurrent Kleene bunched logic $\mathrm{Form}_{CKBI}$ is generated by the grammar
\[ \varphi ::= \mathrm{p} \mid \top \mid \bot \mid \mtop \mid \varphi \land \varphi \mid \varphi \lor \varphi \mid \varphi \rightarrow \varphi \mid \varphi * \varphi \mid \varphi \mathbin{;} \varphi \mid \varphi \wand \varphi \mid \varphi \rseq \varphi \mid \varphi \lseq \varphi \] where additive negation is defined by $\neg \varphi := \varphi \rightarrow \bot$.

Figure~\ref{fig:hilbert-ckbi} gives rules that can be added to the system for \logicfont{BBI} to obtain a system for \logicfont{CKBI}. These rules essentially dictate that the $(*, \wand)$-free fragment of \logicfont{CKBI} is non-commutative \logicfont{BBI}. \logicfont{CKBI} is interpreted on structures extending BBI frames called CKBI frames by extending \logicfont{BBI}'s semantics by the clauses in Fig.~\ref{fig:sat-ckbi}.

\begin{defi}[CKBI Frame]
A \emph{CKBI frame} is a structure $\mathcal{X} = (X, \circ, E, \triangleright)$ such that $(X, \circ, E)$ is a BBI frame and $\triangleright: X^2 \rightarrow \mathcal{P}(X)$ a binary operation satisfying (with outermost quantification omitted for readability):

\vspace{0.5em}
\adjustbox{raise=1.4cm}{
\begin{tabular}{llr}
(Unit Existence\textsubscript{L}) & $\exists e \in E(x \in e \triangleright x)$; \\
(Unit Existence\textsubscript{R}) & $\exists e \in E(x \in x \triangleright e)$; \\
(Coherence\textsubscript{L}) & $e \in E \land x \in e \triangleright y \rightarrow x = y$; \\
(Coherence\textsubscript{R}) & $e \in E \land x \in y \triangleright e \rightarrow x = y$; \\
(Associativity) & $\exists t(t \in x \triangleright y \land w \in t \triangleright z) \leftrightarrow \exists t'(t' \in y \triangleright z \land w \in x \triangleright t')$ \\
\text{(Exchange)} & $t \in w \circ y \land s \in x \circ z \land u \in t \triangleright s \rightarrow$ \\
& $\exists r, v(r \in w \triangleright x \land v \in y \triangleright z \land u \in r \circ u)$
\end{tabular}
}
\qed%
\end{defi}

The traces model of $\logicfont{ASL}^{--}$ can be seen as (the complex algebra of) a CKBI frame, where $\circ$ is interleaving, $\triangleright$ is concatenation and $E$ is the singleton set containing the empty trace. Another example is given by pomsets~\cite{pomset}, with $\circ$ given by the parallel pomset composition, $\triangleright$ the series pomset composition, and $E$ the singleton set containg the empty pomset.

\begin{figure}
\centering
\hrule
\setlength\tabcolsep{4pt}
\setlength\extrarowheight{15pt}
\scalebox{0.87}{
\begin{tabular}{lclclcl}
25. & $\cfrac{\xi\vdash\varphi\quad\eta\vdash\psi}{\xi\mathbin{;}\eta\vdash\varphi\mathbin{;}\psi}$  &
26. & $\cfrac{\eta\mathbin{;}\varphi\vdash\psi}{\eta\vdash\varphi\rseq\psi}$ & 27. & $\cfrac{\xi\vdash\varphi\rseq\psi\quad\eta\vdash\varphi}{\xi\mathbin{;}\eta\vdash\psi}$ \\
28. & $\cfrac{\eta\mathbin{;}\varphi\vdash\psi}{\varphi\vdash\eta\lseq\psi}$ & 29. & $\cfrac{\xi\vdash\varphi\lseq\psi\quad\eta\vdash\varphi}{\eta\mathbin{;}\xi\vdash\psi}$ & 30. & $\cfrac{}{\mtop \mathbin{;} \varphi \dashv \vdash \varphi}$  \\
31. & $\cfrac{}{\varphi \mathbin{;} \mtop \dashv \vdash \varphi}$ & 34. & $\cfrac{}{\varphi \mathbin{;} (\psi \mathbin{;} \chi) \dashv \vdash (\varphi \mathbin{;} \psi) \mathbin{;} \chi}$ & 35. & $\cfrac{}{(\varphi * \psi) \mathbin{;} (\chi * \xi) \vdash (\varphi \mathbin{;} \chi) * (\psi \mathbin{;} \xi)}$
\end{tabular}}
\caption{Hilbert rules for concurrent Kleene bunched logic.}
\vspace{1mm}
\hrule%
  \label{fig:hilbert-ckbi}
\end{figure}

\begin{figure}
\centering
\hrule
\vspace{1mm}
 \setlength\tabcolsep{3pt}
\setlength\extrarowheight{4pt}
\begin{tabular}{c c c c l r c c c c r r c}
$x$ & $\vDash_{\Valuation}$ & $\varphi \mathbin{;} \psi$ & iff & \multicolumn{8}{l}{there exists $y, z$ s.t. $x \in y \triangleright z$, $y \vDash_{\Valuation} \varphi$ and  $z \vDash_{\Valuation} \psi$} \\
$x$ & $\vDash_{\Valuation}$ & $\varphi \rseq \psi$ & iff & \multicolumn{8}{l}{for all $y, z$ s.t. $z \in x \triangleright y$: $y \vDash_{\Valuation} \varphi$ implies $z \vDash_{\Valuation} \psi$} \\
$x$ & $\vDash_{\Valuation}$ & $\varphi \lseq \psi$ & iff & \multicolumn{8}{l}{for all $y, z$ s.t. $z \in y \circ x$:  $y \vDash_{\Valuation} \varphi$ implies $z \vDash_{\Valuation} \psi$}
\end{tabular}
\caption{Satisfaction for concurrent Kleene bunched logic.}
\vspace{1mm}
\hrule%
\label{fig:sat-ckbi}
\end{figure}

\begin{defi}[CKBI Algebra]
A \emph{CKBI algebra} is an algebra \[\mathbb{A} = (A, \land, \lor, \rightarrow, \top, \bot, *, \wand, \mtop, \mseq, \rseq, \lseq)\] such that $(A, \land, \lor, \rightarrow, \top, \bot, *, \wand, \mtop)$ is a BBI algebra and $(A, \mseq, \mtop)$ a monoid, satisfying, for all $a, b, c, d \in \mathbb{A}$,
\begin{enumerate}
\item $a \mseq b \leq c \text{ iff } a \leq b \rseq c \text{ iff } b \leq a \lseq c$, and
\item Exchange: $(a * b)\mseq(c * d) \leq (a \mseq c) * (b \mseq d)$. \qed%
\end{enumerate}
\end{defi}

\noindent
In effect, a CKBI algebra is a BBI algebra in which there are \emph{two} coexisting monoidal residuated structures sharing a unit: one commutative (corresponding to concurrent execution) and one non-commutative (corresponding to sequential execution). As such the corresponding properties of Proposition~\ref{prop-alg-prop} hold for $\mseq, \rseq$ and $\lseq$. In the terminology of O'Hearn et al.~\cite{exchange}, a CKBI algebra is a Boolean CKA extended with the residuals corresponding to $*$ and $\mseq$. The evident algebraic soundness and completeness theorem for these algebras is easily proved.

It is straightforward to adapt the results of Section~\ref{sec:bunchedimplications} to these structures. Let $\mathbb{A}$ be a CKBI algebra. Then the prime filter frame of $\mathbb{A}$, $Pr^\vtiny{CKBI}(\mathbb{A})$, is given by extending the prime filter frame of the underlying BBI algebra with the operation $\triangleright_{\mathbb{A}}$, defined 
\[
    F \triangleright_{\mathbb{A}} F' = \{ F'' \mid \forall a \in F, \forall b \in F': a \mseq b \in F'' \}.
\]
In the other direction, the complex algebra of a CKBI frame $\mathcal{X}$, $Com^\vtiny{CKBI}(\mathcal{X})$, is given by extending the complex algebra of the underlying BBI frame with the operation
\[
    A \mseq_{\mathcal{X}}B = \{ z \mid \exists x\in A, y \in B(z \in x \triangleright y)\}
\]
and its associated adjoints. The respective results for CKBI algebras follow straightforwardly from the case for BBI\@: the key remaining step is the correspondence between the algebraic Exchange axiom and the frame property Exchange. 

\begin{lem}\hfill
\begin{enumerate}
\item Given a CKBI algebra $\mathbb{A}$, the prime filter frame $Pr^\vtiny{CKBI}(\mathbb{A})$ is a CKBI frame. 
\item Given a CKBI frame $\mathcal{X}$, the complex algebra $Com^\vtiny{CKBI}(\mathcal{X})$ is a CKBI algebra.
\end{enumerate}
\end{lem}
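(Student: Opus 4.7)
The plan is to reduce the statement to three components: the BBI-fragment structure, the non-commutative monoid/residuation structure governing $(\triangleright, E)$ (resp.\ $(\mseq, \mtop)$), and the single axiom of Exchange. The BBI structure is handled immediately by Lemmas~\ref{lem:biprimeframe} and~\ref{lem:bicomplexalgebra}. The non-commutative sequential structure is verified by arguments completely analogous to those used for BBI in Lemma~\ref{lem:biprimeframe}/Lemma~\ref{lem:bicomplexalgebra}, but now split into left/right versions (so one proves both Unit Existence\textsubscript{L,R}, both Coherence\textsubscript{L,R}, and non-commutative Associativity on each side). The prime predicate argument goes through in essentially the same form, dropping every use of commutativity of $*$. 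Only Exchange genuinely requires new work.

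For part (2), Exchange in $Com^\vtiny{CKBI}(\mathcal{X})$ follows by a short chase. Suppose $u \in (A * B) \mseq_\mathcal{X} (C * D)$. Unwinding the definitions yields $a \in A$, $b \in B$, $c \in C$, $d \in D$ and intermediates $t \in a \circ b$, $s \in c \circ d$ with $u \in t \triangleright s$. The frame property Exchange then hands us $r, v$ with $r \in a \triangleright c$, $v \in b \triangleright d$ and $u \in r \circ v$. Hence $r \in A \mseq_\mathcal{X} C$ and $v \in B \mseq_\mathcal{X} D$, so $u \in (A \mseq_\mathcal{X} C) * (B \mseq_\mathcal{X} D)$.

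For part (1), Exchange is the main obstacle, but it is dispatched by the Prime Extension Lemma with the predicate
\[
P(G, H) = 1 \;\text{iff}\; G \in F_w \triangleright_\mathbb{A} F_x,\; H \in F_y \triangleright_\mathbb{A} F_z \;\text{and}\; F_u \in G \circ_\mathbb{A} H.
\]
Verifying that $P$ is a prime predicate is routine from the primeness of $F_u, F_w, F_x, F_y, F_z$ together with monotonicity and distributivity of $*$ and $\mseq$ (mirroring the argument used for Associativity in Lemma~\ref{lem:biprimeframe}). To instantiate $P$, I would take the generated filters $G_0 = \filter{\{ p \mseq r \mid p \in F_w, r \in F_x\}}$ and $H_0 = \filter{\{ q \mseq s \mid q \in F_y, s \in F_z\}}$; these are closed under meets by monotonicity of $\mseq$ and the fact that $F_w, F_x, F_y, F_z$ are filters.

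The delicate point is proving $G_0$ and $H_0$ are proper, and this is precisely where the algebraic Exchange axiom enters. If $\bot \in G_0$, there exist $p \in F_w, r \in F_x$ with $p \mseq r = \bot$. Fix any $q \in F_y, s \in F_z$; then $p * q \in F_t$ and $r * s \in F_s$, so $(p * q) \mseq (r * s) \in F_u$. But by Exchange,
\[
(p * q) \mseq (r * s) \;\leq\; (p \mseq r) * (q \mseq s) \;=\; \bot * (q \mseq s) \;=\; \bot,
\]
using Proposition~\ref{prop-alg-prop}(3), contradicting properness of $F_u$; properness of $H_0$ is symmetric. That $P(G_0, H_0) = 1$ then follows directly from the definitions, with the clause $F_u \in G_0 \circ_\mathbb{A} H_0$ again obtained from the Exchange inequality applied to generators. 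The Prime Extension Lemma yields prime filters $F_r \supseteq G_0$ and $F_v \supseteq H_0$ with $P(F_r, F_v) = 1$, which are exactly the witnesses required by the frame property Exchange.
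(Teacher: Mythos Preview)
Your proposal is correct and follows essentially the same approach as the paper: both reduce to the BBI case plus the non-commutative monoid structure (handled by reusing the earlier arguments), and both dispatch Exchange via the same binary prime predicate, instantiated with the filters generated by $\{p \mseq r : p \in F_w, r \in F_x\}$ and $\{q \mseq s : q \in F_y, s \in F_z\}$, with properness coming from the algebraic Exchange law exactly as you describe. Part (2) is likewise the same direct element chase.
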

\begin{proof}
We focus on the correspondence between the Exchange properties of the respective structures.
\begin{enumerate}
\item Suppose we have prime filters of $\mathbb{A}$ satisfying $F_{wy} \in F_w \circ_{\mathbb{A}} F_y, F_{xz} \in F_x \circ_{\mathbb{A}} F_z$ and $F_t \in F_{wy} \triangleright_{\mathbb{A}} F_{xz}$. Using similar arguments to those given in previous results, it can be seen that \[ \PrimePredicate{F, G}{F \in F_w \triangleright_{\mathbb{A}} F_x, G \in F_y \triangleright_{\mathbb{A}} F_z \text{ and } F_t \in F \circ_{\mathbb{A}} G} \]
is a prime predicate on proper filters $F$ and $G$. Consider the sets $F = \{ c \mid \exists a \in F_w, b \in F_x(a \mseq b \leq c) \}$ and $G = \{ c \mid \exists a \in F_y, b \in F_z(a \mseq b \leq c) \}$. Both sets are obviously upwards-closed and closed under meets as monotonicity of $\mseq$ gives that $a\mseq b \leq c$ and $a' \mseq b' \leq c'$ implies $(a \land a') \mseq (b \land b') \leq c \land c'$. Hence $F$ and $G$ are filters. They are also proper: suppose for contradiction that $\bot \in F$. Then there exists $a \in F_w$ and $b \in F_x$ such that $a \mseq b =\bot$. Let $c \in F_y$ and $d \in F_z$ be arbitrary. By assumption we have that $a * c \in F_{wy}$, $b * d \in F_{xz}$ and so $(a * c)\mseq (b * d) \in F_t$. By Exchange and upwards-closure of filters, $(a \mseq b) * (c \mseq d) = \bot * (c \mseq d) = \bot \in F_t$, a contradiction. The same argument suffices to show $G$ is proper.

Clearly $F \in F_w \triangleright_{\mathbb{A}} F_x$ and $G \in F_y \triangleright_{\mathbb{A}} F_z$. Further, $F_t \in F \circ_{\mathbb{A}} G$: let $c \geq a \mseq b$ and $c' \geq a' \mseq b'$ for $a \in F_w, b \in F_x, a' \in F_y$ and $b' \in F_z$. By monotonicity of $*$ and Exchange, $(a * a') \mseq (b * b') \leq (a \mseq b) * (a' \mseq b') \leq c * c'$. It then follows that $c * c' \in F_t$, since by assumption $a * a' \in F_{wy}$ and $b * b' \in F_{xz}$, so $(a * a')\mathbin{;} (b * b') \in F_t$. Hence by the prime extension lemma there exist prime $F$ and $G$ satisfying these properties, and so the frame property Exchange is satisfied on  $Pr^\vtiny{CKBI}(\mathbb{A})$. 

\item Suppose $t \in (A \bullet_{\mathcal{X}} C) \mseq_{\mathcal{X}} (B \bullet_{\mathcal{X}} D)$. Then there exist $w, x, y, z, wy, xz$ such that $wy \in w \circ y, xz \in x \circ z$ and $t \in wy \triangleright xz$. The frame property Exchange then ensures there are witnesses to the fact that $t \in (A \mseq_{\mathcal{X}} B) \bullet_{\mathcal{X}} (C \mseq_{\mathcal{X}} D)$. \qedhere
\end{enumerate} 
\end{proof}

\noindent
We immediately obtain the following representation theorem from the representation theorem for BBI algebras.

\begin{thm}[Representation Theorem for CKBI Algebras]
Every CKBI algebra is isomorphic to a subalgebra of a complex algebra. Specifically, given a CKBI algebra $\mathbb{A}$, the map $\theta_{\mathbb{A}}: \mathbb{A} \rightarrow Com^\vtiny{CKBI}(Pr^\vtiny{CKBI}(\mathbb{A}))$ defined $\theta_{\mathbb{A}}(a) = \{ F \in Pr^\vtiny{CKBI}(\mathbb{A}) \mid a \in F \}$ is an embedding.  \qed
\end{thm}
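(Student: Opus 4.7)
The plan is to leverage the representation theorem for BBI algebras (Theorem~\ref{thm:birepresentation}) and reduce the statement to checking that $\theta_{\mathbb{A}}$ respects the sequential operations $\mseq$, $\rseq$, and $\lseq$. Every CKBI algebra $\mathbb{A}$ has an underlying BBI algebra, and $Pr^\vtiny{CKBI}(\mathbb{A})$ was defined by extending $Pr^\vtiny{BBI}(\mathbb{A})$ with the relation $\triangleright_{\mathbb{A}}$. Theorem~\ref{thm:birepresentation} therefore already gives that $\theta_{\mathbb{A}}$ is injective and a homomorphism with respect to $\land, \lor, \rightarrow, \top, \bot, *, \wand, \mtop$, and the preceding lemma guarantees that $Pr^\vtiny{CKBI}(\mathbb{A})$ is indeed a CKBI frame (so $Com^\vtiny{CKBI}(Pr^\vtiny{CKBI}(\mathbb{A}))$ is a CKBI algebra and the codomain of $\theta_{\mathbb{A}}$ is well typed). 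All that is left to establish is
\[
\theta_{\mathbb{A}}(a \mseq b) = \theta_{\mathbb{A}}(a) \mseq_{Pr^\vtiny{CKBI}(\mathbb{A})} \theta_{\mathbb{A}}(b),\qquad
\theta_{\mathbb{A}}(a \rseq b) = \theta_{\mathbb{A}}(a) \rseq_{Pr^\vtiny{CKBI}(\mathbb{A})} \theta_{\mathbb{A}}(b),
\]
and the analogous identity for $\lseq$.

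The key observation is that $(\mseq, \rseq, \lseq)$ forms a residuated triple structurally identical to $(*, \wand, \dnaw)$ in an (I)LGL algebra, and enjoys the same $\bigvee$-distributivity and monotonicity properties (the analogues of Proposition~\ref{prop-alg-prop}). Hence the three required identities can be proved by directly transcribing the arguments of Theorem~\ref{thm:ilglrep} with $*$ replaced by $\mseq$, $\wand$ by $\rseq$, $\dnaw$ by $\lseq$, and $\circ_{\mathbb{A}}$ by $\triangleright_{\mathbb{A}}$. Concretely, for $\mseq$ the easy inclusion $\theta_{\mathbb{A}}(a) \mseq_{Pr^\vtiny{CKBI}(\mathbb{A})} \theta_{\mathbb{A}}(b) \subseteq \theta_{\mathbb{A}}(a \mseq b)$ is immediate from the definition of $\triangleright_{\mathbb{A}}$; for the reverse inclusion, given $a \mseq b \in F$ one shows that the predicate $P(G, I) = 1$ iff $\overline{I} \in G \triangleright_{\mathbb{A}} F$ (with appropriate side conditions involving $a \in G$ and $b \in I$, exactly mirroring the ILGL $\wand$ case) is a prime predicate, and then applies the Prime Extension Lemma (Lemma~\ref{lem:primeextension}) with the generating filter $\filter{a}$ and ideal $\ideal{b}$. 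For $\rseq$ and $\lseq$ the analogous dual constructions apply.

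The main obstacle, as such, is almost purely bookkeeping: one has to confirm that each invocation of residuation, join-distribution, and monotonicity in the ILGL proof uses only properties that $\mseq$ shares with $*$. Since these properties were separated out in Proposition~\ref{prop-alg-prop} and $(\mseq, \rseq, \lseq)$ satisfies the exact same residuation law ($a \mseq b \leq c$ iff $a \leq b \rseq c$ iff $b \leq a \lseq c$), the transcription is mechanical. Notably, the Exchange axiom plays no role in this verification --- it was needed only to ensure $Pr^\vtiny{CKBI}(\mathbb{A})$ satisfies the frame Exchange property (handled already in the preceding lemma) --- so nothing new concerning the interaction of $*$ and $\mseq$ is required here. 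This is the sense in which the theorem "immediately" follows from the BBI representation theorem.
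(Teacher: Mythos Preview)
Your proposal is correct and matches the paper's own treatment: the paper states the theorem with no proof beyond noting that it follows immediately from the BBI representation theorem, and your reduction to the BBI case plus a transcription of the ILGL argument for the residuated triple $(\mseq,\rseq,\lseq)$ is exactly the intended reasoning. One small slip: in your sketch of the $\mseq$ case the concrete predicate you wrote (and the remark that it ``mirrors the ILGL $\wand$ case'') is off --- for $\mseq$ you want $F \in G \mathbin{\triangleright_{\mathbb{A}}} H$ with $a\in G$, $b\in H$, so the argument mirrors the $*$ case rather than the $\wand$ case --- but this is a bookkeeping typo, not a gap in the approach.
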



\begin{cor}[Relational Soundness and Completeness]
For all formulas $\varphi$, $\psi$ of \logicfont{CKBI}, $\varphi \vdash \psi$ is provable in $\mathrm{CKBI}_{\mathrm{H}}$ iff $\varphi \vDash \psi$ in the relational semantics. \qed%
\end{cor}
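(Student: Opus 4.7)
The plan is to prove this corollary as a straightforward instantiation of the Lindenbaum–Tarski / representation pattern that has been deployed for each bunched logic treated so far in the paper. Since CKBI algebras form the algebraic semantics for the Hilbert system, I will first note (and would verify by the standard Lindenbaum–Tarski construction, exactly as for Theorems~\ref{thm:algilglcomplete} and~\ref{thm:bialgcomplete}) algebraic soundness and completeness: $\varphi \vdash \psi$ is provable in $\mathrm{CKBI}_{\mathrm{H}}$ iff $\Interp{\varphi} \leq \Interp{\psi}$ on every CKBI algebra under every interpretation. The additional Hilbert rules for $\mseq, \rseq, \lseq$ and Exchange correspond directly to the defining equations/inequalities of a CKBI algebra, so constructing the term algebra modulo interderivability yields a CKBI algebra that refutes any unprovable sequent, exactly as in the (B)BI case.

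For soundness of the frame semantics, I would invoke the lemma just proved: given a CKBI frame $\mathcal{X}$, the complex algebra $Com^{\vtiny{CKBI}}(\mathcal{X})$ is a CKBI algebra. Any valuation $\mathcal{V}$ on $\mathcal{X}$ induces an interpretation $\Interp{-}_{\mathcal{V}}$ on $Com^{\vtiny{CKBI}}(\mathcal{X})$, and a routine induction on formula structure — in which the new inductive cases for $\varphi \mseq \psi$, $\varphi \rseq \psi$ and $\varphi \lseq \psi$ are immediate from the definitions of $\mseq_{\mathcal{X}}, \Rres_{\mathcal{X}}^{\mseq}, \Lres_{\mathcal{X}}^{\mseq}$ — establishes that $x \vDash_{\mathcal{V}} \varphi$ iff $x \in \Interp{\varphi}_{\mathcal{V}}$. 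Combined with algebraic soundness this gives one direction of the corollary.

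For completeness, I would apply the representation theorem for CKBI algebras. Suppose $\varphi \not\vdash \psi$ in $\mathrm{CKBI}_{\mathrm{H}}$. By algebraic completeness there is a CKBI algebra $\mathbb{A}$ and interpretation $\Interp{-}$ with $\Interp{\varphi} \not\leq \Interp{\psi}$, so the filter $\filter{\Interp{\varphi}}$ is disjoint from the ideal $\ideal{\Interp{\psi}}$. A standard application of the prime extension lemma yields a prime filter $F$ with $\Interp{\varphi} \in F$ and $\Interp{\psi} \notin F$. Taking on $Pr^{\vtiny{CKBI}}(\mathbb{A})$ the valuation $\mathcal{V}_{\Interp{-}}(\mathrm{p}) = \theta_{\mathbb{A}}(\Interp{\mathrm{p}})$, the representation theorem together with the soundness direction above yields $\Interp{\chi} \in G$ iff $G \vDash_{\mathcal{V}_{\Interp{-}}} \chi$ for every prime filter $G$ and every formula $\chi$; here the only inductive cases beyond the BBI case are the three new connectives, and these are covered because $\theta_{\mathbb{A}}$ is a homomorphism on $\mseq, \rseq, \lseq$ by the representation theorem. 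Consequently $F \vDash \varphi$ but $F \not\vDash \psi$, contradicting $\varphi \vDash \psi$.

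There is essentially no genuine obstacle: the hard work has already been done in establishing algebraic completeness and the representation theorem for CKBI algebras. The only point to be slightly careful about is ensuring the inductive step for the new connectives $\mseq, \rseq, \lseq$ in the correspondence between algebraic and frame satisfaction genuinely goes through — but this is immediate from the definition of $\mseq_{\mathcal{X}}$ (and its residuals) in $Com^{\vtiny{CKBI}}(\mathcal{X})$ mirroring the satisfaction clauses of Figure~\ref{fig:sat-ckbi}, in exactly the same way that $\bullet_{\mathcal{X}}$ mirrors $*$ in earlier sections.
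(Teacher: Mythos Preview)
Your proposal is correct and follows exactly the pattern the paper uses (implicitly, since the corollary is stated with a bare \qed): algebraic soundness and completeness via the Lindenbaum--Tarski construction, soundness of the frame semantics via the complex algebra lemma, and completeness via the representation theorem for CKBI algebras together with the valuation $\mathcal{V}_{\Interp{-}}(\mathrm{p}) = \theta_{\mathbb{A}}(\Interp{\mathrm{p}})$. The only new ingredients over the \logicfont{BBI} case are the connectives $\mseq, \rseq, \lseq$, and you correctly observe that their inductive cases are handled by the definition of $\mseq_{\mathcal{X}}$ and its residuals in the complex algebra together with $\theta_{\mathbb{A}}$ being a homomorphism on these operations by the representation theorem.
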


The extension to a dual equivalence of categories is now a simple task given the preceeding material: we leave it to the reader as an exercise.We remain agnostic about the extent to which \logicfont{CKBI} can be used as a simplified version of \logicfont{CSL}. The fact that it essentially supplies a Kripke semantics formulation of CKAs suggests that it may have uses as a logic for reasoning about concurrency more generally. We defer a thorough investigation of these ideas to another occasion.

\section{Conclusions and Further Work}\label{sec:conclusions}

We have given a systematic treatment of Stone-type duality for
the structures that interpret bunched logics, starting with the weakest systems,
recovering the familiar \logicfont{BI} and \logicfont{BBI}, and concluding with both the classical and intuitionistic variants of Separation Logic. Our results
encompass all the known existing algebraic approaches to Separation
Logic and prove them sound with respect to the standard store-heap semantics.
As corollaries, we uniformly recover soundness and completeness
theorems for the systems we consider. These results are extended to the bunched logics with additional modalities and multiplicatives corresponding to negation, disjunction and falsum --- \logicfont{DMBI}, \logicfont{CBI}  and the full range of sub-classical bunched logics --- as well as
\logicfont{CKBI}, a new logic inspired by the algebraic structures that interpret (a basic version of) Concurrent Separation Logic.  
We believe this treatment will simplify completeness arguments for future bunched logics by providing a modular framework within which existing results can be extended. This is demonstrated with our results on \logicfont{SML}, \logicfont{DMBI}, \logicfont{BiBI} and \logicfont{CKBI} which are all new to the literature. More generally, the notion of indexed frame and its associated completeness argument can easily be adapted for a wide range of non-classical predicate logics.

These results are exhaustive of the bunched logics that can be found in the literature, with the exception of Brotherston \& Villard's hybrid extension of \logicfont{BBI}, \logicfont{HyBBI}. Capturing this logic would require new and more sophisticated techniques to handle nominals and satisfaction operators in the setting of substructural connectives and we defer this investigation to another occassion. In particular, in the present work it is shown how complex algebra operations, morphisms and topological coherence conditions for the Kripke frame structure that corresponds to multiplicative variants of each of the propositional connectives should be handled, as well as the correspondence between the defining axioms of those existing bunched logics and the frame properties of the Kripke models that interpret them. While we do not give a characterisation of precisely which axiomatic extensions to bunched logics automatically yield a sound and complete Kripke semantics (in the sense of Sahlqvist's theorem~\cite{Sahlqvist1975} for modal logic), these results are a first step towards that goal: for example, through an adaptation of Sambin \& Vaccaro's~\cite{Sambin1989} topological proof of Sahlqvist's theorem, which utilises modal duality theory in an essential way. Although there is great modularity and uniformity in the present work --- each logic's results are able to be verified on just the structure new to that logic and the construction of the correct prime predicate uniformly facilitates each argument --- this can clearly be pushed further. This work is a first step towards (and a necessary foundation for) the investigation of this issue.  

We identify four areas of interest for further work. The first is applications of the duality theory to the metatheory of bunched logics. Preliminary work in this direction appears in the PhD thesis of the first author~\cite{DochertyThesis}, and includes a Goldblatt-Thomason style theorem that characterises the classes of bunched logic model that are definable by bunched logic formulae as well as the negative resolution of the open problem of Craig interpolation for bunched logics. An obvious further result --- which is in some sense prefigured by the results in this paper --- is the identification of a Sahlqvist-like fragment of bunched logic axioms which are guaranteed by their syntactic shape to produce sound and complete bunched logics, in the sense described above. 

Second, extending our approach to account for the operational semantics of program execution given by Hoare triples.  As a consequence, we aim
to interpret computational approaches to the Frame Rule such
as bi-abduction~\cite{bi-abduction} within our semantics and investigate if algebraic or topological methods can be brought to bear on these important aspects for implementations of Separation Logic. We believe the evident extension of our framework with duality-theoretic approaches to Hoare logic such as those of Abramsky~\cite{abramsky} or Brink \& Rewitzky's~\cite{Brink2001} could facilitate this. 
We also believe connections can be made with the monadic approach to predicate transformers and Hoare logic described by Jacobs~\cite{hoaremonad}. That work is suggestive of the existence of a Separation Logic monad operating on categories of semantic structures for (\logicfont{FO})(\logicfont{B})\logicfont{BI}.
A related approach that combines these two suggestions would be to use the dualities of the present work to generate state-and effect triangles for Separation Logic, which would enable us to investigate healthiness for predicate transformers associated with pointer manipulating programs, in a similar fashion to the work of Hino et al.~\cite{HKHJ16}.

A third area of investigation concerns the interfacing of resource semantics with coalgebraic models of stateful systems. Previous work has given a sound and complete coalgebraic semantics for \logicfont{(B)BI}~\cite{DP15} in line with coalgebraic generalizations of modal logic, but we are interested in another direction: the use of coalgebra as a mathematical foundation for transition systems. Separation Logic has shown the power of resource semantics for modelling real world phenomena when extended with suitable dynamics. Another approach utilizing process algebras generated by resource semantics has shown one way in which this idea can be applied to a more general class of distributed systems~\cite{AP2016, CP09}. We believe an analysis of coalgebras definable over the category of (B)BI frames would provide a general mathematical foundation for both these instances and more. A well-developed line of work in coalgebraic logic has produced general machinery for generating sound and complete coalgebraic logics from dual adjunctions under favourable conditions~\cite{JS10,KP2011}. We thus believe such an analysis, combined with the dual adjunctions given in the present work, would yield a powerful framework for modelling, specifying and reasoning about resource-sensitive transition systems, with obvious wide applicability.

A fourth line of research would be the expansion of the results here to give a general treatment of categorical structures for non-classical predicate logics. The results given on (B)BI hyperdoctrines straightforwardly apply to any hyperdoctrine with target algebras that have a duality theorem and include the Heyting or Boolean connectives. To what extent can this treatment generalize existing semantic approaches to non-classical predicate logics that fit this criteria? For example, the various categorical semantics given for predicate modal logic on sheaves~\cite{AK12}, metaframes~\cite{S98} and modal hyperdoctrines~\cite{BG2006}. Our results could also be generalized in two ways. First, for hyperdoctrines with weaker-than-Heyting target algebras and their corresponding dual indexed frames, allowing us to encompass predicate substructural logics, predicate relevant logics and predicate positive logics. Second, to more exotic notions of quantification. As one example, further extending our framework to encompass the breadth of the bunched logic literature would require an account of multiplicative quantification, an area which has only partially been explored algebraically. Collinson et al.~\cite{polymorphism} give hyperdoctrines defined on monoidal categories to give semantics to a bunched polymorphic lambda calculus.  There, the right adjoint of $\mathbb{P}(\pi_{\Gamma, X})$ must satisfy a compatibility condition with the monoidal structure in order to interpret multiplicative universal quantification in the calculus, and --- with some subtle technical tweaks and additional structure --- it is possible to give left adjoints that can interpret multiplicative existential quantification. Is there a unification of this approach with the present work that adapts these ideas to the more general logical setting? We believe our framework provides the mathematical foundation to explore these ideas.

\bibliographystyle{plain}

\end{document}